\newif\ifllncs
\DeclareMathAlphabet{\mathbbold}{U}{bbold}{m}{n}
\title{On One-Shot Signatures, Quantum vs Classical Binding, and Obfuscating Permutations\ifllncs{\thanks{The current document contains a proceedings version. The full version of this work can be found at \cite{shmueli2025one}.}}\fi}
\author{Omri Shmueli \and Mark Zhandry} 
\date{} 
\institute{NTT Research} 
\author[1]{Omri Shmueli}
\author[1,2]{Mark Zhandry}
\affil[1]{NTT Research}
\affil[2]{Stanford University}
\date{}
\newtheorem{theorem}{Theorem}
\newtheorem{lemma}[theorem]{Lemma}
\newtheorem{remark}[theorem]{Remark}
\newtheorem{corollary}[theorem]{Corollary}
\newtheorem{definition}[theorem]{Definition}
\newtheorem{proposition}[theorem]{Proposition}
\newtheorem{question}[theorem]{Question}
\newtheorem{construction}[theorem]{Construction}
\newcommand{\Z}{\mathbb{Z}}
\newcommand{\bv}{\mathbf{b}}
\newcommand{\rv}{\mathbf{r}}
\newcommand{\vv}{\mathbf{v}}
\newcommand{\xv}{\mathbf{x}}
\newcommand{\zv}{\mathbf{z}}
\newcommand{\Am}{\mathbf{A}}
\newcommand{\Id}{\mathbf{I}}
\newcommand{\As}{\mathcal{A}}
\newcommand{\Bs}{\mathcal{B}}
\newcommand{\Ds}{\mathcal{D}}
\newcommand{\Ms}{\mathcal{M}}
\newcommand{\Os}{\mathcal{O}}
\newcommand{\Ps}{\mathcal{P}}
\newcommand{\Qs}{\mathcal{Q}}
\newcommand{\Ss}{\mathcal{S}}
\newcommand{\Xs}{\mathcal{X}}
\newcommand{\Ys}{\mathcal{Y}}
\newcommand{\negl}{\mathsf{negl}}
\newcommand{\poly}{\mathsf{poly}}
\newcommand{\polylog}{\mathsf{polylog}}
\newcommand{\colspan}{\mathsf{ColSpan}}
\newcommand{\prp}{{\Pi}}
\newcommand{\merge}{{\sf M}}
\newcommand{\punc}{{\sf Punc}}
\newcommand{\eval}{{\sf Eval}}
\newcommand{\prf}{{\sf F}}
\newcommand{\prg}{{\sf PRG}}
\newcommand{\permute}{{\sf Permute}}
\newcommand{\iO}{{\sf iO}}
\newcommand{\gen}{{\sf Gen}}
\newcommand{\hash}{{\sf Hash}}
\newcommand{\sign}{{\sf Sign}}
\newcommand{\ver}{{\sf Ver}}
\newcommand{\setup}{{\sf Setup}}
\newcommand{\aux}{\mathsf{aux}}
\newcommand{\crs}{{\sf CRS}}
\newcommand{\pk}{{\sf pk}}
\newcommand{\sk}{{\sf sk}}
\newcommand{\transposition}[2]{{(#1\;\;#2)}}
\newcommand{\neighborswap}[1]{\transposition{#1}{#1+1}}
\newif\ifnotes
\newcommand{\omri}[1]{$\ll$\textsf{\color{blue} Omri: { #1}}$\gg$}
\newcommand{\omri}[1]{}
\newcommand{\ceil}[1]{\lceil {#1} \rceil}
\newcommand{\Hyb}{\mathsf{Hyb}}
\newcommand{\Oracle}{\mathcal{O}}
\newcommand{\Nat}{\mathbb{N}}
\newcommand{\bbZ}{\mathbb{Z}}
\newcommand{\Img}{\text{Img}}
\newcommand{\PRP}{\mathsf{PRP}}
\newcommand{\PRF}{\mathsf{PRF}}
\newcommand{\Adv}{\mathcal{A}}
\newcommand{\AdvB}{\mathcal{B}}
\newcommand{\secp}{\lambda}
\newcommand{\Obf}{\mathsf{O}}
\newcommand{\td}{\mathsf{td}}
\newcommand{\LF}{\mathsf{LF}}
\newcommand{\LFGen}{\LF\mathsf{.KeyGen}}
\newcommand{\LFF}{\LF\mathsf{.F}}
\newcommand{\hashL}{L} 
\newcommand{\hashQ}{Q} 
\newcommand{\linspan}{\mathsf{Span}}
\newcommand{\matA}{\mathbf{A}}
\newcommand{\matB}{\mathbf{B}}
\newcommand{\matC}{\mathbf{C}}
\newcommand{\vecA}{\mathbf{a}}
\newcommand{\vecB}{\mathbf{b}}
\newcommand{\vecC}{\mathbf{c}}
\newcommand{\vecD}{\mathbf{d}}
\newcommand{\vecE}{\mathbf{e}}
\newcommand{\vecF}{\mathbf{f}}
\newcommand{\vecR}{\mathbf{r}}
\newcommand{\vecS}{\mathbf{s}}
\newcommand{\vecT}{\mathbf{t}}
\newcommand{\vecU}{\mathbf{u}}
\newcommand{\vecV}{\mathbf{v}}
\newcommand{\vecW}{\mathbf{w}}
\newcommand{\vecX}{\mathbf{x}}
\newcommand{\vecZ}{\mathbf{z}}
\newcommand{\ket}[1]{|{#1}\rangle}
\begin{document}
\maketitle

\begin{abstract}
\noindent
One-shot signatures (OSS) were defined by Amos, Georgiou, Kiayias, and Zhandry (STOC'20). These allow for signing exactly one message, after which the signing key self-destructs, preventing a second message from ever being signed. While such an object is impossible classically, Amos et al observe that OSS may be possible using quantum signing keys by leveraging the no-cloning principle. OSS has since become an important conceptual tool with many applications in decentralized settings and for quantum cryptography with classical communication. OSS are also closely related to separations between classical-binding and collapse-binding for post-quantum hashing and commitments. Unfortunately, the only known OSS construction due to Amos et al. was only justified in a classical oracle model, and moreover their justification was ultimately found to contain a fatal bug. Thus, the existence of OSS, even in a classical idealized model, has remained open. 

\medskip 

We give the first standard-model OSS, with provable security assuming (sub-exponential) indistinguishability obfuscation (iO) and LWE. This also gives the first standard-model separation between classical and collapse-binding post-quantum commitments/hashing, solving a decade-old open problem. Along the way, we also give the first construction with unconditional security relative to a classical oracle. To achieve our standard-model construction, we develop a notion of permutable pseudorandom permutations (permutable PRPs), and show how they are useful for translating oracle proofs involving random permutations into obfuscation-based proofs. In particular, obfuscating permutable PRPs gives a trapdoor one-way permutation that is \textit{full-domain}, solving another decade-old-problem of constructing this object from (sub-exponential) iO and one-way functions.
\end{abstract}

\ifllncs
\pagestyle{plain}
\else

\thispagestyle{empty}
\newpage
\tableofcontents
\newpage
\thispagestyle{empty}

\fi

\section{Introduction}\label{sec:intro}

One-shot signatures (OSS) were originally proposed by Amos, Georgiou, Kiayias, and Zhandry \cite{STOC:AGKZ20}. Here, we have a signer who generates a verification key/signing key pair, publishes the verification key, but keeps the signing key secret. Later, using the secret signing key, the signer can sign any message of their choosing. However, doing so \emph{provably} destroys their signing key, rendering it impossible to sign even a second message relative to the same verification key. Note that the signer has full control over generating the verification key, signing key, message, and signature, and even with all this control they can never produce a second signature.

Such a protocol is clearly impossible classically. Using quantum protocols, however, it is conjectured by~\cite{STOC:AGKZ20} that an OSS is possible. Namely, the setup procedure now produces a \emph{quantum} signing key, and the (quantum) signing algorithm is such that it requires measuring the key, which destroys it and prevents subsequent signatures. Verification keys, messages, signatures, and even the verification algorithm itself remain classical. Such an object is not trivially impossible, since the measurement principle of quantum mechanics means that computing the signature may irreversibly destroy the signing key.

OSS yields many applications that are not otherwise known: smart contracts without any blockchain \cite{Sattath22}, overcoming lower-bounds in consensus protocols, a solution to the blockchain scalability problem \cite{coladangelo2020quantum}, quantum money with classical communication, and more. Because of these advantages, OSS has gained significant interest within the blockchain community, were a major focus of recent workshops~\cite{QSigWorkshop,QuantumMoneyWorkshop}, and it has been claimed that practical OSS would ``completely change the endgame of blockchains''~\cite{Drake23}. See Section~\ref{sec:motivation} for more discussion.

Unfortunately, the status of OSS has been unclear. OSS lives atop a hierarchy of notions relating to quantum money, with even milder versions being notoriously difficult to construct.~\cite{STOC:AGKZ20} observe that prior work of~\cite{FOCS:AmbRosUnr14,EC:Unruh16} can be used to give OSS relative to a \emph{quantum} oracle\footnote{That is, an oracle that performs general unitary transformations.}, but there was no known way to actually instantiate this oracle, even heuristically.~\cite{STOC:AGKZ20} then propose a construction relative to a \emph{classical} oracle\footnote{That is, an oracle implementing a classical function but which can be queried in superposition.}, which can then be heuristically instantiated using (post-quantum) indistinguishability obfuscation (iO). To date, this was the only plausible candidate construction.

To justify the plausibility of their construction,~\cite{STOC:AGKZ20} prove the security of their scheme in the classical oracle model. Unfortunately, their proof turned out to have a bug discovered by~\cite{Bartusek23}, which appears fatal and the proof in~\cite{STOC:AGKZ20} has been retracted. See Section~\ref{sec:bug} for discussion. This bug does not indicate an actual attack, but has left the existence of OSS uncertain.

\paragraph{Post-quantum hashing and commitments.} Even before quantum computers will enable powerful quantum protocols, they will pose a threat to current classical cryptography. Sometimes, even if the underlying building blocks are replaced with ``post-quantum'' equivalents, security vulnerabilities may remain. One notable example is the security of commitments: as observed by Unruh~\cite{EC:Unruh16}, the classical notion of binding for commitments is insufficient against quantum attacks. That is, even if the security game for commitments is ``upgraded'' to allow the adversary to run a quantum computer but is otherwise unchanged, this is not enough to guarantee meaningful security in a quantum world. Instead a stronger, inherently quantum, notion of \emph{collapse}-binding is needed. When instantiating commitments from hashing, (post-quantum) collision-resistance is likewise not enough, and instead a stronger notion of \emph{collapsing} is needed.

One may wonder if classical-binding / collision-resistance actually \emph{implies} collapse-binding / collapsing. Unruh argues that this is likely \emph{not} the case, by noting that a prior work of~\cite{FOCS:AmbRosUnr14} gives counterexample relative to a \emph{quantum} oracle. However, there is no known standard-model separation, or even a classical-oracle separation. It is entirely consistent with existing results that classical binding implies collapse-binding in a non-relativizing way. As evidence for this, known \emph{positive} results~\cite{AC:Unruh16,C:Zhandry22c} have given collapse-binding commitments/hashes from essentially all of the same post-quantum assumptions known to imply classical binding. Nevertheless, our understanding of the relationship between classical-binding and quantum collapsing-binding is far from complete.

Interestingly,~\cite{STOC:AGKZ20,TQC:DalSpo23} shows that a separation between classical and collapse-binding actually \emph{implies} a one-shot signature, showing that these questions are closely linked.\footnote{The first connection between such a separation and unclonable cryptography was due to~\cite{EC:Zhandry19b}, who shows that a separation implies the weaker object called quantum lightning. These works improved the implication to OSS.} This is in fact how the construction in~\cite{STOC:AGKZ20} works.

\paragraph{One-way permutations from obfuscation?} The concept of cryptographically useful \emph{program obfuscation} dates to the pioneering work of Diffie-Hellman~\cite{DifHel76}. Obfuscation allows for embedding cryptographic secrets into public software. The proposal of~\cite{DifHel76}, though not phrased in this language, is the following: Obfuscate a pseudorandom permutation (PRP) to get a \emph{trapdoor} one-way permutation. More generally, obfuscation heuristically translates security using oracles into standard-model security, by actually giving out the obfuscated code of the oracles.

It ultimately took over 30 years for cryptographically useful general-purpose obfuscation to emerge \cite{FOCS:GGHRSW13}. Unfortunately, by then it was shown that obfuscation cannot in general translate oracles into the standard model~\cite{C:BGIRSV01}. For example, obfuscating an arbitrary PRP provably cannot guarantee any meaningful security. Instead, the community has settled on a much weaker but precise notion of \emph{indistinguishability obfuscation} (iO), which says that the obfuscations of programs with the same functionality are computationally indistinguishable~\cite{C:BGIRSV01}. 

Though much weaker than ideal, numerous techniques have been developed to use iO. Most revolve around the use of pseudorandom \emph{functions} (PRFs). While PRFs, just like PRPs, in general cannot be obfuscated, a strengthening known as a \emph{puncturable} PRF~\cite{CCS:KPTZ13,AC:BonWat13,PKC:BoyGolIva14} actually does give provable guarantees when using iO~\cite{STOC:SahWat14}, leading to numerous positive results.

Despite many successes, some major open questions remain. Notably, Diffie and Hellman's original proposal of obtaining a trapdoor permutation by obfuscating a PRP remains open. In fact, to the best of our knowledge, there are no known positive iO results that obfuscate PRPs to achieve \emph{any} goal. One key challenge is that there has been no construction of a puncturable \emph{PRP} or analogous object, and some evidence suggests that they may be \emph{impossible}~\cite{TCC:BonKimWu17}. 

Note that~\cite{TCC:BitPanWic16,EC:GPSZ17} construct trapdoor permutations using iO, though these permutations are not \emph{full-domain}, i.e., where if $n$ is the number of bits needed to represent any element in the domain (formally, the permutation domain $\chi$ is such that $\chi \subseteq \{ 0, 1 \}^{n}$), then we have the equality $\chi = \{0,1\}^n$. These works use a very different structure and do not simply obfuscate a PRP. Moreover, these solutions have a major drawback that the domain $\chi$ of the permutation is a sparse set in $\{ 0, 1 \}^{n}$ that cannot be directly sampled from nor efficiently recognized. This complicates their use in applications (see for example~\cite{TCC:CanLic18}) and it is unknown how to use them in quantum settings \cite{ITCS:MorYam23}, where one further wants to efficiently generate quantum superpositions over subsets of the domain. An open question is if full-domain (trapdoor) OWPs are possible from iO, whether by obfuscating a PRP or by other means. There is some indication that such a OWP may in fact be impossible~\cite{TCC:AshSeg16}.

\subsection{Results and Paper Organization}
There are two parts to our work.

\medskip
\noindent
\paragraph{Part 1: Classical Oracle OSS.} The first part proves Theorems \ref{thm:main1} and \ref{thm:main2} given below. It includes an oracle construction of a non-collapsing hash function and an unconditional proof of collision resistance. A proof overview is given in Section~\ref{sec:overview} and the full proof is in \ifllncs{the full version \cite{shmueli2025one}}\else{Section \ref{sec:oss_oracle}}\fi.
\begin{theorem} \label{thm:main1}
Relative to a classical oracle, secure OSS exists.
\end{theorem}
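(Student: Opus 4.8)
The plan is to reduce Theorem~\ref{thm:main1} to a single self-contained construction. As recalled in the introduction, \cite{STOC:AGKZ20,TQC:DalSpo23} give a black-box, hence relativizing, transformation from any hash function that is collision-resistant but \emph{not} collapsing to a secure OSS; so it suffices to exhibit a classical oracle relative to which such a hash exists. I would take the oracle to be a pair $(H,\mathsf{Check})$. Here $H\colon\{0,1\}^n\to\{0,1\}^{n-k}\cup\{\bot\}$ is a ``coset-regular'' hash: one samples $\approx 2^{n-k-1}$ pairwise-disjoint cosets $v_y+W_y$ of \emph{independent, uniformly random} $k$-dimensional subspaces $W_y\subseteq\mathbb F_2^{\,n}$, sets $H(x)=y$ for $x\in v_y+W_y$, and $H(x)=\bot$ on the remaining inputs; and $\mathsf{Check}(y,d)$ returns $1$ iff $d\neq 0$ and $d\in W_y^{\perp}$. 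I would take $k=n/2$ and $n=\secp$ (any $n,k$ with $k,\,n-k=\omega(\log\secp)$ works). A single fiber of $H$ is a coset of a secret subspace; the subspace is what $\mathsf{Check}$ tests the dual of.

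The non-collapsing attack is the textbook Fourier-sampling trick and needs no heavy machinery. The distinguisher prepares $\sum_x\ket{x}\ket{H(x)}$ with one query and measures the second register; with constant probability this yields a real label $y$ and leaves the first register in the uniform superposition over the coset $v_y+W_y$, which it outputs as a commitment together with its (valid) openings. After the challenger either returns this register untouched or measures it in the computational basis, the distinguisher applies a Hadamard to each qubit and measures, obtaining $d$, then queries $\mathsf{Check}(y,d)$ and outputs the answer. If the register was returned untouched, $d$ is (up to phase) uniform over $W_y^{\perp}$, hence nonzero and accepted with probability $1-2^{-(n-k)}$; if it was measured, $d$ is uniform over $\{0,1\}^n$ and accepted only with probability $\approx 2^{-k}$. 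This is an overwhelming distinguishing advantage, so $H$ is not collapsing.

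The heart of the argument — and the step I expect to be the main obstacle — is proving \emph{unconditionally} that no $\poly(\secp)$-query quantum algorithm with access to \emph{both} $H$ and $\mathsf{Check}$ finds a collision in $H$ except with negligible probability. Two informal hardness sources are in play. First, queries to $H$ alone reveal a collision only after $\Omega(2^{(n-k)/3})$ queries, by the quantum collision lower bound for (essentially) random functions. Second, for each $y$ the oracle $\mathsf{Check}(y,\cdot)$ is merely a membership oracle for the density-$2^{-k}$ set $W_y^{\perp}$; and — the crucial structural point — a single preimage of $y$, which an adversary can always obtain for free by evaluating $H$ and setting $y:=H(x)$ for the resulting $x$, fixes only the coset representative and leaks nothing about the secret direction subspace $W_y$, whereas actually witnessing a collision for $y$ amounts to pinning down $W_y$, which requires $\Omega(\dim W_y^{\perp})$ independent accepted vectors and hence $\Omega(2^{k/2})$ queries by a Grover/BBBV-style bound. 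The difficulty is that $H$ and $\mathsf{Check}$ are \emph{not} independent — $\mathsf{Check}(y,\cdot)$ is determined by the fiber of $y$ under $H$ — so these two bounds cannot simply be composed, and it is exactly in handling this interdependence inside an oracle-simulation argument that the proof of \cite{STOC:AGKZ20} was found to be flawed. I would therefore run the argument in Zhandry's compressed-oracle framework, maintaining a joint compressed state for $H$ together with the partially-revealed subspaces, defining the compressed $\mathsf{Check}$ update so that it is consistent with whatever of $H$'s fibers is already recorded, and proving inductively that after $q$ queries the total amplitude on any basis state recording a collision is $\poly(q)\cdot 2^{-\Omega(\min(k,\,n-k))}$; bounding the amplitude a single query can move onto ``collision-witnessing'' or ``subspace-revealing'' states is the technical crux. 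With $k=n/2$ this yields collision resistance up to roughly $2^{\secp/6}$ queries, which is super-polynomial.

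Putting the pieces together: relative to the classical oracle $(H,\mathsf{Check})$, $H$ is collision-resistant against all polynomial-query quantum adversaries yet manifestly not collapsing, hence yields a classically-binding but non-collapse-binding commitment, and then the relativizing transformation of \cite{STOC:AGKZ20,TQC:DalSpo23} produces a secure OSS relative to the same oracle, establishing Theorem~\ref{thm:main1}. (The same construction simultaneously gives the classical-oracle separation of Theorem~\ref{thm:main2}.)
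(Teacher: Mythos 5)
Your top-level plan --- build a classical oracle for a collision-resistant, non-collapsing hash and invoke the relativizing AGKZ/Dall'Agnol--Spooner transformation --- matches the paper's, and your non-collapsing distinguisher is the same Fourier-sampling attack. But the oracle you construct is essentially the AGKZ-style coset-partition hash that the paper deliberately abandons, and it cannot even be sampled as stated: you ask $H^{-1}(y)$ to be a coset $v_y+W_y$ of an \emph{independent, uniformly random} $k$-dimensional subspace while also demanding pairwise disjointness. These conflict --- already at $k=n/2$ two random cosets in $\bbZ_2^n$ intersect with constant probability, so with $\approx 2^{n-k-1}$ of them the collection will not be disjoint, and conditioning on disjointness destroys independence and reintroduces exactly the inter-coset correlations that made the AGKZ analysis intractable (see the discussion in Section~\ref{sec:overvieworacle}). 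The paper avoids this by keeping the fibers of $H$ unstructured ($H$ is the first $r$ bits of a random permutation, so $H^{-1}(y)$ is just a random set of size $2^{n-r}$) and placing one fresh independent coset of $\bbZ_2^k$ per $y$ in a \emph{separate} output register via $\Ps(x)=\bigl(y,\matA_y\cdot J(x)+\vecB_y\bigr)$; independence is then free since those cosets live in their own copies of $\bbZ_2^k$ and need not partition anything. The price is an extra inverse oracle $\Ps^{-1}$, needed so the Fourier-sampling verifier can uncompute $x$ before the QFT; you do not need an inverse oracle because $H^{-1}(y)$ literally is the coset, but that simplification is only available because you posited the unsamplable structure.

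For collision resistance you name the crux correctly --- $\mathsf{Check}(y,\cdot)$ is a deterministic function of $H$'s fiber, so you cannot simply compose a collision bound for $H$ with a membership bound for $W_y^\perp$, and this interdependence is exactly where AGKZ's proof broke --- and then you defer it: you propose a compressed-oracle argument with a joint database, assert that the amplitude on collision-recording states stays $\poly(q)\cdot 2^{-\Omega(\min(k,n-k))}$, and label the per-query amplitude transfer ``the technical crux.'' That crux is the theorem; no invariant is actually proved, and the certificate-complexity discussion in Section~\ref{sec:bug} gives concrete reason to worry that the natural invariant is false or at least very hard to establish. The paper's actual proof takes a different route consisting of explicit reductions with explicit losses: a random self-reduction over $\Pi$ and the $\matA_y,\vecB_y$'s; information-theoretic bloating of the dual spaces $S_y^\perp$ to random superspaces $T_y^\perp$ (Lemma~\ref{lemma:bloating_dual_oracle}); simulation of the bloated dual from a smaller dual-free instance, together with a dimension-counting anti-concentration lemma ruling out ``bad'' collisions (Lemma~\ref{lemma:simulating_dual_oracle}, using Lemma~\ref{lemma:unconditional_dual_subspace_anti_concentration}); a reduction of the dual-free case to collision-finding in an arbitrary coset-partition function queried only in the forward direction (Theorem~\ref{thm:dualfreetocol}); and finally an instantiation by parallel-repeated random shrinking 2-to-1 functions, whose hardness is the polynomial-method collision bound (Theorem~\ref{theorem:n_to_ell_CPF_collision_resistant}). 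Nothing in the paper is deferred to an unproved amplitude invariant, whereas in your proposal the central lower bound is.
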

Similarly to the blueprint suggested in~\cite{STOC:AGKZ20}, we prove Theorem~\ref{thm:main1} through separating quantum hashing notions:
\begin{theorem} \label{thm:main2}
Relative to a classical oracle, there exist post-quantum collision-resistant hash functions that are non-collapsing, and there exist post-quantum classically-binding commitments that are not collapse-binding.
\end{theorem}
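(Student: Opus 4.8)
\medskip\noindent\emph{Proof plan.}
It suffices to construct, relative to a classical oracle, a compressing post-quantum collision-resistant hash family $H$ that admits an efficient non-collapsing attack: the first half of the theorem is then immediate; the canonical hashing-based commitment (opening $=$ a preimage, following~\cite{EC:Unruh16}) gives the second half, since its classical computational binding is equivalent to collision resistance of $H$ and the non-collapsing attack transfers verbatim to collapse-binding; and by~\cite{STOC:AGKZ20,TQC:DalSpo23} the same $H$ yields the OSS of Theorem~\ref{thm:main1}. So the whole theorem reduces to the hash.

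The oracle is built around a uniformly random fixed-point-free involution $\tau$ on $\{0,1\}^n$: its $2^{n-1}$ orbits $\{x,\tau(x)\}$ are declared to be the fibers of a $2$-to-$1$ hash $H$ (say, $H(x)$ is the lexicographically smaller of $x$ and $\tau(x)$). Relative to $H$ alone, a collision is an orbit of $\tau$, i.e.\ the inverse of a hidden uniformly random pairing, which needs $\Omega(2^{n/3})$ quantum queries by the standard collision/BHT bound. To witness non-collapsing I additionally provide a ``detector'' oracle $R$ that lets the adversary apply a flagged version of the reflection about the symmetric fiber state $\ket{\psi_y}=\tfrac1{\sqrt2}(\ket{x_0}+\ket{x_1})$ --- equivalently, test whether a register is $\tau$-symmetric. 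The design constraint on $R$ is that it implements exactly this reflection on honestly prepared states while revealing nothing about $\tau$ on computational-basis (or low-entanglement) inputs: concretely, $\tau$ is embedded inside a larger random permutation together with an auxiliary control coordinate, so the pair-swap is only ``unlocked'' when the query register is already in uniform superposition over an entire fiber. Pinning down $R$ together with the ``legitimate-knowledge'' invariant it must respect is the technical heart of the construction, and is exactly the point where the argument of~\cite{STOC:AGKZ20} broke.

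Given $(H,R)$, the non-collapsing attack is immediate: prepare $\tfrac1{\sqrt{2^n}}\sum_x\ket{x}\ket{0}$, apply $H$ into the second register and measure it, obtaining a uniformly random range element $y$ together with $\ket{\psi_y}$ on the first register; then run $R$ on $\ket{\psi_y}$, which (being $\tau$-symmetric) causes $R$ to raise its accept flag with probability $1$. If instead the preimage register is measured first, the state collapses to $\ket{x_0}$ or $\ket{x_1}$, on each of which $R$ accepts with probability $|\braket{\psi_y}{x_b}|^2=\tfrac12$. Hence the accept bit distinguishes the collapsed from the uncollapsed experiment with constant advantage, while both experiments certainly output a $y$ and a register supported on $H^{-1}(y)$ --- a legitimate non-collapsing adversary.

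The crux, and the only genuinely hard part, is the \emph{unconditional} proof that no $q$-query quantum algorithm with superposition access to $(H,R)$ produces a collision of $H$ except with probability $\poly(q)\cdot 2^{-\Omega(n)}$. The plan is a compressed-oracle argument adapted to a random pairing rather than a random function (following known treatments of compressed permutation oracles): maintain a database recording the $\tau$-orbits the algorithm has ``touched''; show each $H$-query adds at most one orbit; and --- the decisive step --- show each $R$-query can be simulated from the current database \emph{without} adding any orbit not already recorded, using that by design $R$ acts as the identity (up to the flag and a global phase) on states supported on recorded orbits and is non-committal elsewhere, i.e.\ $R$ commutes with the database simulator. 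A progress/rank bound on the amplitude placed on ``two recorded preimages of a single $y$'' after $q$ queries then closes the argument. The main obstacle is establishing precisely this commutation: it is the statement whose analogue is flawed in~\cite{STOC:AGKZ20}, and the whole construction of $R$ is engineered so that it holds.
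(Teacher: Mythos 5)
Your top-level plan --- build a non-collapsing collision-resistant hash relative to a classical oracle, then derive the commitment separation from Unruh's hashing-based scheme and the OSS from~\cite{STOC:AGKZ20,TQC:DalSpo23} --- is the paper's plan too. The gap is in the oracle design, and it is not a deferred engineering detail: the ``detector'' oracle $R$ you describe cannot exist as a \emph{classical} oracle. A classical oracle is a unitary $U_f$ applied coherently; it has no way to ``see'' whether the query register is a basis state or a superposition, so there is no mechanism by which the pair-swap is ``unlocked'' only on fiber superpositions, and an auxiliary control coordinate is just another input wire the adversary can set at will. Worse, any unitary that distinguishes $\ket{\psi_y^{+}}=\tfrac1{\sqrt2}(\ket{x_0}+\ket{x_1})$ from $\ket{\psi_y^{-}}=\tfrac1{\sqrt2}(\ket{x_0}-\ket{x_1})$ (a reflection, or a flagged projection) must act nontrivially on $\ket{x_0}=\tfrac1{\sqrt2}(\ket{\psi_y^{+}}+\ket{\psi_y^{-}})$: measure the flag, then the register, and you obtain $x_1$ with probability $1/2$ in a single query. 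So any $R$ that makes your non-collapsing attack succeed also breaks collision resistance outright, and no amount of ``engineering'' inside a classical oracle escapes this.

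There is also a quantitative reason the $2$-to-$1$-on-the-input-space design cannot support the QFT-and-dual-test mechanism (which is what a classical-oracle version of your reflection must ultimately be). For a $2$-to-$1$ $H$ with fiber $\{x_0,x_1\}\subset\{0,1\}^n$, the associated subspace is the $1$-dimensional $\{0,x_0\oplus x_1\}$ and the dual a membership oracle would have to recognize is the \emph{hyperplane} $(x_0\oplus x_1)^\perp$. Hyperplanes are not sparse: half of all vectors pass, and $O(n)$ classical queries reconstruct $x_0\oplus x_1$ and hence a collision --- and making the ambient space bigger does not help, because a $1$-dimensional primal always has a codimension-$1$ dual. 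The paper avoids this by taking $H$ to be $2^{n-r}$-to-$1$ with $n-r=\Theta(\lambda)$ and, crucially, by \emph{not} testing any subspace in the preimage space: $\Ps$ re-encodes each fiber $H^{-1}(y)$ as a fresh random coset $\matA_y J(x)+\vecB_y$ in an ambient $\Z_2^k$, $\Ps^{-1}$ lets the tester un-compute $x$, the QFT is applied to the $\vecU$ register, and $\Ds$ tests the now exponentially sparse dual $\colspan(\matA_y)^\perp$. Every oracle is a plain classical function; the distinguisher's power comes from the QFT, not from the oracle ``sensing'' a superposition. Correspondingly, the collision-resistance proof is not a compressed-oracle database argument over $\tau$-orbits (that only appears at the very base, for the random $2$-to-$1$ function, via~\cite{JACM:AarShi04,QIC:Zhandry15}); the parts that killed~\cite{STOC:AGKZ20} --- the inverse oracle and the dual oracle --- are handled instead by a random self-reduction, a ``bloat-the-dual'' step that swaps $\Ds$ for a random superspace test, a step that simulates the bloated dual from a smaller instance so it can be removed entirely, and a final embedding of an arbitrary coset-partition function.
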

We do not actually know how to prove security for the oracle in~\cite{STOC:AGKZ20}. Instead, we prove Theorems~\ref{thm:main1} and~\ref{thm:main2} by a new construction.
Our construction is inspired by the previous work of \cite{STOC:AGKZ20} and a proposal made by \cite{Bartusek23} at the NTT Research Quantum Money Workshop~\cite{QuantumMoneyWorkshop}.
We prove security through a new technique, showing the random self-reducibility of our collision problem, and a sequence of reductions to simpler problems. Ultimately we show collision finding in our hash is no easier than collision finding in plain 2-to-1 random functions, which is known to be hard. This gives the first classical-oracle construction of OSS\footnote{By ``classical-oracle construction,'' we mean the first construction \emph{provably} and \emph{unconditionally} secure relative to a classical oracle.}, and also the first such separation of classical- and collapse-binding. The construction is even the first classical-oracle construction of \emph{quantum lightning}, a weaker notion than OSS, proposed in~\cite{EC:Zhandry19b}. 

\medskip
\noindent
\paragraph{Part 2: Standard-model OSS.} The second part of this paper proves Theorems \ref{thm:main3}, \ref{thm:main4}, \ref{thm:main5} and \ref{thm:main6} below. It includes the development of a new cryptographic notion we call permutable pseudorandom permutations (permutable PRPs). We show how permutable PRPs can be used to prove the security of our construction in the standard model, and also for solving a number of long-standing open problems in cryptography, as elaborated below.

\begin{theorem}\label{thm:main3}
There exists secure OSS assuming each of the following: (1) sub-exponentially-secure indistinguishability obfuscation, (2) sub-exponentially-secure one-way functions, and (3) (polynomially-secure) LWE with a sub-exponential noise-modulus ratio.
\end{theorem}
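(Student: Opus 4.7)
The plan is to lift the classical-oracle construction from Theorem~\ref{thm:main1} into the standard model by replacing each random permutation used by the oracle with an obfuscated \emph{permutable PRP}, exactly the primitive the paper advertises in Part~2. Concretely, the oracle from Part~1 can be viewed as a verification/hash circuit built out of random permutations over a combinatorial domain, and the published key in the standard-model scheme will be an iO-obfuscation of this circuit with every random permutation replaced by $\permute$-compatible permutable PRPs whose secret keys are sampled at setup time and then discarded. Quantum signing keys, signing, and verification are defined exactly as in the oracle scheme, but with queries to the oracle replaced by evaluations of the obfuscated circuit.

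The security reduction mirrors the hybrid sequence of the Part~1 proof. Each hybrid step in that proof that replaces a random permutation by a re-structured one (for example, composing with a transposition, puncturing at a preimage, or invoking the random self-reducibility of the collision problem to rerandomize the permutation) is simulated here by a pair of substeps: first we use pseudorandomness of the permutable PRP to move from the random permutation to an (unobfuscated) PRP, then we use permutability together with iO on two \emph{functionally equivalent} circuits to realize the desired re-structuring on the PRP key in the clear, and finally we undo the PRP switch. Since every iO hop is between circuits with identical input-output behavior, the hops are sound, and the terminal hybrid is exactly the oracle-world game whose unconditional hardness was established in Part~1 by reduction to $2$-to-$1$ collision finding. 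Sub-exponential iO and sub-exponential OWFs are needed because the oracle-world bound translates only after complexity leveraging: each reduction must guess or enumerate a polynomial amount of challenge structure, and LWE with sub-exponential noise-modulus ratio is used to instantiate the permutable PRP itself.

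The main obstacle I expect is matching the permutability interface of the standard-model PRP to every structural manipulation the Part~1 oracle proof performs on its random permutations. The oracle proof conditions on specific data about the permutation (preimages of challenge points, swapped pairs along a path, cosets of a coloring structure), and for the iO argument to go through, the permutable PRP must support a simultaneous \emph{puncture-and-transform} mode that realizes the same conditioning while agreeing with the original PRP on every other input; otherwise functional equivalence of the two obfuscated circuits fails and iO does not apply. Nailing down this interface so that it simultaneously (i)~covers every manipulation needed in the Part~1 hybrids and (ii)~is realizable from LWE with a quantitative security loss that the sub-exponential assumptions can absorb is the core technical work, after which the remaining pieces—the hybrid sequence, the PRP-to-random-permutation switches, and the final appeal to the Part~1 bound—are essentially bookkeeping.
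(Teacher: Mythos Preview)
Your high-level plan---replace the random permutations by permutable PRPs, obfuscate, and replay the Part~1 hybrid sequence---is the right shape, but you have misidentified the role of LWE, and this is a genuine gap rather than a detail.

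The permutable PRP is \emph{not} built from LWE; Theorem~\ref{thm:main5} gives it from sub-exponential iO and sub-exponential one-way functions alone. LWE enters at a completely different point: the terminal step of the Part~1 proof reduces to collision-finding in a \emph{random} 2-to-1 function, whose hardness is unconditional only in the oracle model. In the standard model there is no ``oracle-world game'' to land on; you must replace that random 2-to-1 function by a \emph{cryptographic} 2-to-1 collision-resistant hash, and that is what LWE supplies. Moreover, the oracle proof (Theorem~\ref{thm:dualfreetocol}) implicitly augments the coset-partition function $Q$ into a permutation $Q'$ by outputting, alongside $Q(x)$, the coordinates of $x$ inside its coset; this $Q'$ is information-theoretically well-defined but inefficient. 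In the standard-model proof that permutation must actually be hard-coded into an obfuscated hybrid circuit, so it must be efficiently computable---which forces the 2-to-1 hash to come with a \emph{trapdoor}. The LWE-based claw-free construction (\`a la~\cite{FOCS:BCMVV18}) is used precisely because it has one.

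There are two further structural issues your outline does not anticipate. First, the LWE hash has range larger than domain, so it cannot be slotted in where the oracle construction used a surjective map; the paper handles this by padding the hash output with zeros and composing with a \emph{second} permutable PRP on the output side (this is the reason for $\prp^{-1}(k_{\sf out},\cdot)$ in Construction~\ref{constr:standard}). Second, the LWE hash is only \emph{approximately} 2-to-1: a sparse but adversarially-locatable set of inputs is 1-to-1, which breaks functional equivalence in the iO hop that embeds $Q$. Fixing this requires inserting a \emph{fixed} sparse trigger into the obfuscated program (Lemma~\ref{lemma:intervaltrigger}), which in turn is only possible because the trigger is sandwiched between the two permutable PRPs. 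None of this is ``bookkeeping''; it is where most of the Section~\ref{sec:standard} work lives.
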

\begin{theorem}\label{thm:main4}
Under the same assumptions as Theorem~\ref{thm:main3}, there exist post-quantum collision-resistant hash functions that are non-collapsing, and there exist post-quantum classically-binding commitments that are not collapse-binding.
\end{theorem}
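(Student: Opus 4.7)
The plan is to port the classical-oracle construction underlying Theorem~\ref{thm:main2} to the standard model by replacing the random permutation at the heart of that construction with an obfuscation of a permutable PRP, and then to derive the commitment statement from the hash statement through a standard generic transformation. Concretely, the oracle construction hashes via a random permutation together with some additional algebraic structure (needed to enable non-collapsing via LWE-style claw-free preimage superpositions), and the Part~1 proof reduces collision finding in this hash to collision finding in a plain $2$-to-$1$ random function by exploiting the random self-reducibility of the permutation oracle. My first step is to re-express every oracle query in that proof as an evaluation of a permutable PRP, so that the entire sequence of reductions only invokes oracle manipulations of the form ``compose the permutation with a known permutation on one side,'' which is exactly the interface the permutable PRP abstraction is designed to support.

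Next, I would replace each oracle hybrid in the Part~1 analysis with a standard-model hybrid in which the hash function is given out as an $\iO$ obfuscation of the permutable PRP, and argue indistinguishability of consecutive hybrids by combining (i) the permutability property, which lets us switch the underlying PRP key to a re-permuted one with only negligible distinguishing advantage, and (ii) $\iO$ security applied to two functionally equivalent circuits. Because the Part~1 reduction contains only polynomially many hybrids and the final step reduces to collision resistance of a $2$-to-$1$ random function which follows in the standard model from any post-quantum collision-resistant hash (hence from LWE), sub-exponential $\iO$ and sub-exponential OWFs are more than enough to carry the security loss through the reductions. The non-collapsing property is established by exhibiting the same quantum opening used in the oracle world: a uniform superposition over preimages under the hash, produced with the aid of an LWE-based trapdoored claw-free construction, whose measurement would reveal a second preimage and thus violate collision resistance—exactly the contradiction required to conclude non-collapsing.

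Finally, I would obtain a post-quantum classically-binding, non-collapse-binding commitment from the non-collapsing collision-resistant hash by a standard construction (e.g., committing to $m$ as $H(m \| r)$ for uniform $r$, or the hash-then-commit template used in~\cite{EC:Unruh16,STOC:AGKZ20}): classical binding is inherited from collision resistance, while the non-collapse-binding attack is obtained by running the non-collapsing attack on the underlying hash and packaging the resulting superposition as a commitment opening.

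The main obstacle I expect is ensuring that every step of the Part~1 oracle proof can be rewritten to use only permutation-preserving manipulations compatible with the permutable PRP interface; in particular, any step that implicitly uses the fact that the oracle is a \emph{truly} random permutation (as opposed to one indistinguishable from random on chosen inputs) must be reworked. This is the delicate analogue, in the permutation setting, of the familiar pattern in $\iO$-based proofs where random-oracle-style arguments are replaced by puncturing arguments on puncturable PRFs; here the corresponding tool is the permutable PRP, and the subtlety lies in checking that each hybrid really does only need permutability and not some stronger pseudorandomness property that would be unavailable to us.
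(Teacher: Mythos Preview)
Your high-level plan---replace the oracle by an obfuscated permutable PRP and port each reduction step to a computational hybrid---is the right shape, but several of your supporting claims are wrong and hide genuine technical gaps. First, the Part~1 proof does \emph{not} translate into polynomially many hybrids: each step that ``re-randomizes'' the permutation or swaps the distribution of the cosets $(\matA_y,\vecB_y)$ becomes, in the standard model, a hybrid over \emph{all} $y$ (of which there are $2^r$), and the permutable PRP security itself is proved via an exponential chain of neighbor-swap hybrids. Sub-exponential security of $\iO$ and OWFs is not ``more than enough''; it is exactly what is needed to absorb these exponential hybrid counts. Second, you conflate the roles of LWE: non-collapsing in the construction has nothing to do with LWE or claw-free trapdoors---it comes from the obfuscated dual-membership oracle $\Ds$, exactly as in the oracle scheme. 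LWE enters only in the collision-resistance proof, as the concrete instantiation of the $2$-to-$1$ function at the bottom of the reduction chain; a generic post-quantum CRHF does \emph{not} suffice, because the reduction needs the preimage sets to be cosets.

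The more serious omission is that the LWE-based $2$-to-$1$ function is structurally mismatched with the oracle proof in two ways your plan does not address. It is \emph{expanding} (range larger than domain), so the embedding that in the oracle world produced a permutation now does not; the paper handles this by composing with an additional permutable PRP on the output side and enlarging $y$ from $r$ to $d$ bits. And it is only \emph{approximately} $2$-to-$1$: a sparse but fixed set of inputs are $1$-to-$1$, so the two programs in the key $\iO$ step are \emph{not} functionally equivalent, and $\iO$ security cannot be invoked directly. Resolving this requires a new ``fixed sparse trigger'' lemma (sandwiching the trigger between two permutable PRPs so that a deterministic interval trigger can be inserted undetectably), plus a trapdoor for the LWE hash so that the coset descriptions $(\overline\matA_y,\overline\vecB_y)$ can be computed \emph{efficiently} inside the hybrid circuits---in the oracle proof their mere existence sufficed, but here they must be hard-coded. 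None of these obstacles is handled by ``checking that each hybrid only needs permutability''; they require new constructions and new lemmas beyond the permutable-PRP interface.
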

As before, Theorem~\ref{thm:main3} follows from Theorem~\ref{thm:main4}, which is proved in \ifllncs{the full version \cite{shmueli2025one}}\else{Section \ref{sec:standard}}\fi. This gives the first standard-model OSS\footnote{By ``standard-model,'' we mean with provable security under widely-used computational assumptions, as opposed to merely conjecturing that a construction is secure.}, and also the first standard-model separation between classical- and collapse-binding, solving this decade-old question\footnote{\cite{EC:Unruh16} was first made public in early 2015.}. Even for the weaker notion of quantum lightning, for which there exist a handful of candidates, this is the first construction with provable security under widely-studied assumptions.

The basic idea is to obfuscate the functions in our oracle construction rather than putting it in an oracle. The main technical gap between proving security in the oracle model and the standard model stems from the fact that the construction uses random permutations. Our oracle construction uses a (truly) random permutation and our standard model construction uses a pseudorandom permutation (PRP). Obfuscating PRPs and getting any formal security guarantees, however, is a known challenge in cryptography, independently of quantum computation.

We develop a new notion of PRPs, called \emph{permutable} PRPs, that can be obfuscated using iO with provable security. A permutable PRP $\prp$ very roughly allows the following: Given a key $k$ and a (known) permutation $\Gamma$ in the form of having its circuit, one can produce a ``permuted'' key $k^\Gamma$, which allows for computing $\Gamma\left( \prp\left( k,\cdot \right) \right)$ (as well as the inverse $\prp^{-1}\left( k, \Gamma^{-1}\left( \cdot \right) \right)$). Moreover, the key $k^\Gamma$ hides the fact that the outputs were permuted by $\Gamma$. See Section~\ref{sec:overview} for a more detailed explanation, and \ifllncs{the full version \cite{shmueli2025one} }\else{Section \ref{sec:prps} }\fi for a formal definition, as well as a formal statement and proof of the following:
\begin{theorem}[Informal]\label{thm:main5}There exist permutable PRPs for a large class of $\Gamma$ assuming (1) sub-exponentially-secure iO and (2) sub-exponentially-secure one-way functions.\end{theorem}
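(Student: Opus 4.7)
The plan is to build a permutable PRP by iO-obfuscating a carefully chosen ``iO-friendly'' PRP construction, and to define the permuted key $k^\Gamma$ as an iO obfuscation of the composed circuit $\Gamma \circ \prp(k,\cdot)$ together with its inverse. First I would take the base $\prp$ to be a layered construction built from puncturable PRFs---for instance, a Benes-network style shuffle, an iO-friendly format-preserving encryption, or a Feistel network with puncturable PRF round keys---where each layer applies a small key-dependent permutation, and where the round keys are derived from $k$. The key of the PRP is the iO obfuscation of the circuit $x \mapsto \prp(k,x)$ (with a twin obfuscation for $\prp^{-1}$), and $k^\Gamma$ is the iO obfuscation of the analogous circuit with a $\Gamma$-layer composed on the output (and a $\Gamma^{-1}$-layer composed on the inverse). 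Correctness is immediate; padding both circuits to the same size lets iO apply uniformly.

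To prove that $k^\Gamma$ is indistinguishable from an honestly sampled $k'$, the reduction proceeds in two stages. First, I would use iO to switch the obfuscation of $\Gamma \circ \prp(k,\cdot)$ to an obfuscation of $\prp(k'',\cdot)$ for some key $k''$ in the PRP's own key space that is \emph{functionally equivalent} to $\Gamma \circ \prp(k,\cdot)$. For this step to go through, the PRP family must be rich enough that composition with $\Gamma$ can be absorbed into a legal key; this is precisely the ``large class of $\Gamma$'' condition. Once we are at an honest-looking key $k''$, a standard hybrid over the internal PRF keys---puncturing at one input (or one pair of inputs, to preserve bijectivity) at a time and invoking iO on circuits that agree on all non-punctured inputs---converts $k''$ into an independently-sampled $k'$. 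Sub-exponential hardness of both iO and the underlying OWF is needed so that the union bound over the exponentially large domain can be absorbed, and so that the hybrid argument over the polynomially many $\Gamma$-queries of the distinguisher goes through.

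The main obstacle is arranging the underlying PRP so that its key space is closed (up to functional equivalence) under composition with a useful class of $\Gamma$. A naive Feistel construction does not have this property, since $\Gamma \circ \prp(k,\cdot)$ need not be a Feistel permutation for any key. My approach is to augment the layered construction with an extra ``free'' slot whose key directly encodes a circuit from the allowed class of $\Gamma$ (appended after the pseudorandom layers), and to show via the pseudorandomness of the preceding layers that this extra key looks uniformly distributed over the allowed class. The bulk of the technical work lies in (i) identifying the precise class of $\Gamma$ compatible with iO's polynomial size blowup and with the admissible-inputs structure needed for the two-point puncturing hybrid, and (ii) verifying that the reduction remains composable when the distinguisher makes multiple $\Gamma$-queries and sees several permuted keys simultaneously.
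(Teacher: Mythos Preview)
Your proposal has two genuine gaps that the paper's argument is specifically designed to overcome. First, your Step 2 (``puncture at one input at a time and invoke iO on circuits that agree on all non-punctured inputs'') runs into a known barrier: a PRP cannot be punctured at a single point in the PRF sense, because replacing one output by a fresh random value destroys bijectivity, so the punctured circuit is not functionally equivalent to any permutation and iO cannot equate it with a fresh key. Puncturing in pairs does not help unless you already know which swap preserves the distribution---which is exactly the permutable property you are trying to prove. The paper discusses this impossibility explicitly (citing Boneh--Kim--Wu) and designs the permutable-PRP notion to sidestep it. Second, your Step 1 requires the key space to be closed under post-composition with $\Gamma$; you note Feistel is not, and your ``free slot'' fix is circular: a fresh key has a random element of the class in the slot while $k''$ has the specific $\Gamma$, and showing those are indistinguishable is the original goal. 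The paper moreover argues that any base PRP that is not secure in the small-domain/truth-table regime (in particular Luby--Rackoff) provably cannot support even neighbor swaps, ruling out the Feistel-with-puncturable-round-keys base you suggest.

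What the paper does instead is isolate one atomic operation---swapping two \emph{adjacent} outputs $z$ and $z{+}1$---for which indistinguishability can be proved from puncturable PRFs alone. The base PRP is the Granboulan--Pornin small-domain construction, reinterpreted via a ``tally tree'' whose node values are generated top-down by a puncturable PRF through hypergeometric sampling; once the PRF is punctured along the root-to-leaf paths to $z$ and $z{+}1$, the two leaf bits become statistically exchangeable, so the swap is undetectable. General $\Gamma$ are then handled by decomposing $\Gamma$ into an (exponentially long) sequence of neighbor swaps whose partial products all have small circuits---this is the ``decomposable'' class that makes the theorem informal---and running a hybrid over the sequence inside iO; sub-exponential hardness of iO and OWFs absorbs the exponential number of steps. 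The missing idea in your proposal is precisely this neighbor-swap primitive and its tally-tree realization.
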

Such PRPs can be seen as the PRP analogue of a puncturable pseudorandom function (PRF), which is one of the main techniques used to prove security in the iO literature. To the best of our knowledge, this is the first example which provably obfuscates a PRP. We also show that our techniques are quite general, and in \ifllncs{the full version \cite{shmueli2025one} }\else{Section \ref{sec:prps} }\fi we prove the following:
\begin{theorem}\label{thm:main6}
There exist \emph{full-domain} trapdoor one-way permutations (OWPs), assuming (1) sub-exponentially-secure iO and (2) sub-exponentially-secure one-way functions.
\end{theorem}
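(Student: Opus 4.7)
The plan is to realize the classical Diffie--Hellman vision by obfuscating a permutable PRP. Specifically, let $\prp$ be a permutable PRP on $\{0,1\}^n$ provided by Theorem~\ref{thm:main5}, and sample a random key $k$. The public evaluation circuit of the trapdoor OWP is $\tilde{F} \gets \iO(\prp(k, \cdot))$, the trapdoor is $k$, and inversion is performed via $\prp^{-1}(k, y)$. Full-domainness is immediate because $\prp(k, \cdot)$ is a bijection of $\{0,1\}^n$, and correctness of the trapdoor follows from correctness of $\prp$ and $\iO$. The nontrivial content is proving one-wayness.

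For one-wayness, the adversary is given $\tilde{F}$ and $y = \prp(k, x)$ for uniform $x$. Since $\prp(k, \cdot)$ is a permutation, this is equivalent to sampling $y$ uniformly and taking $\prp^{-1}(k, y)$ as the target. I would proceed through a hybrid that replaces the obfuscated program with $\iO(\prp(k^\Gamma, \cdot))$ for a transposition $\Gamma$ that swaps $y$ with a fresh uniform $y'$; by the security of permutable PRPs applied to the transposition class, this replacement is indistinguishable to any PPT adversary. In the new hybrid, the ``correct'' preimage of $y$ under the obfuscated circuit becomes $\prp^{-1}(k, y')$, and $y'$ itself appears in the adversary's view only implicitly inside the obfuscation. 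A further hybrid, again invoking permutable PRP security, argues that $y'$ (and hence its preimage) is computationally hidden, so any PPT adversary inverts with only negligible probability, contradicting the hypothesized attack.

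The main technical hurdle is aligning the precise security guarantee of the permutable PRPs from Theorem~\ref{thm:main5} with the swap the one-wayness proof needs: the class of $\Gamma$ must include (or be generated by) transpositions, the swap must remain indistinguishable even though the two functionalities differ at the two swap points, and sub-exponential security is likely needed so that complexity leveraging over the $2^n$ possible challenge points does not degrade soundness. Verifying that the permutable PRP construction cleanly supports sampling $\Gamma$ \emph{after} the challenge $y$ is fixed, while keeping $k^\Gamma$ distributed like a fresh random permuted key, is the step I expect to demand the most care; a ``selective-to-adaptive'' guessing reduction (losing a factor of $2^n$ in the security bound) is the fallback if the permutable PRP guarantee is only selectively secure.
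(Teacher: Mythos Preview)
Your construction matches the paper exactly, and the first hybrid step---using permutable PRP security to pass from $\iO(\prp(k,\cdot))$ to $\iO(\Gamma\circ\prp(k,\cdot))$ with $\Gamma=\transposition{y^*}{y'}$, thereby converting the win condition from ``$\prp(k,x)=y^*$'' to ``$\prp(k,x)=y'$''---is correct and is also what the paper does. The gap is in your second step: ``a further hybrid, again invoking permutable PRP security, argues that $y'$ is computationally hidden.'' This does not go through as stated. After the first swap, the adversary outputs $x$ with $\prp(k,x)=y'$, but to turn this into a predictor for $y'$ you would need to evaluate $\prp(k,\cdot)$ on $x$. The permutable PRP challenger only hands you $k^{\Gamma,c}$, and $\prp(k^{\Gamma,1},x)=\Gamma(\prp(k,x))=y^*$, not $y'$; undoing $\Gamma$ requires knowing which case you are in. If instead you try to hybrid back to $\iO(\prp(k,\cdot))$ while keeping the win condition ``$\prp(k,x)=y'$,'' the reduction cannot check that condition consistently across $c\in\{0,1\}$ for the same reason. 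Every natural attempt runs into this circularity.

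The paper breaks the circularity with one extra idea you are missing: before invoking permutable PRP security, it first replaces $P$ by the program $P_1$ that behaves like $\prp(k,\cdot)$ except that the preimage of $y'$ is redirected to output $y^*$. This is the standard iO ``sparse random trigger'' move (Lemma~\ref{lem:iopuncture}) and is indistinguishable because $y'$ is uniform. The point is that $P_1$ now sends both the preimage of $y^*$ and the preimage of $y'$ to $y^*$, so its functionality is \emph{invariant} under the transposition $\transposition{y^*}{y'}$. After flipping the permuted-key bit (OP-PRP security), the program is unchanged, so by iO security one can switch back to using the plain key $k$ inside $P_1$. At that point the reduction \emph{does} have $k$, can compute $y'=\prp(k,x)$ from the adversary's output, and this contradicts the unpredictability of $y'$ guaranteed by the same sparse-trigger lemma.

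Finally, your worry about complexity leveraging and a selective-to-adaptive loss of $2^n$ is misplaced. The challenge $y^*$ is uniformly random, not adversarially chosen, so the reduction simply samples $y^*,y'$ itself and submits $\Gamma=\transposition{y^*}{y'}$ to the permutable PRP challenger before receiving $k^{\Gamma,c}$; no guessing is needed. Sub-exponential hardness is used only to build the permutable PRP (Theorem~\ref{thm:main5}), not in the one-wayness reduction.
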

Here, we remind that our notion of ``full-domain'' means that the permutation domain is just $\{0,1\}^n$.
We thus solve the decade-old problem of constructing full-domain (trapdoor) one-way permutations from iO\footnote{\cite{TCC:BitPanWic16} was first made public in early 2015.}, and give an answer to the decades-old problem of obfuscating PRPs to obtain trapdoor permutations. In light of the impossibility in~\cite{TCC:AshSeg16}, Theorem~\ref{thm:main6} may seem surprising. However, we explain in Section~\ref{sec:owpimposs} that it actually does \emph{not} contradict their impossibility. 

Our trapdoor OWP simply obfuscates a permutable PRP. The proof is straightforward given a permutable PRP; the bulk of the technical effort is then in constructing the permutable PRP in Theorem~\ref{thm:main5}. This demonstrates the utility of our new PRP notion.

For another application, by plugging into the elegant proof of quantumness of~\cite{ITCS:MorYam23}, we immediately obtain:
\begin{corollary}\label{cor:main}Assuming sub-exponentially (classically) secure iO and sub-exponentially (classically) secure one-way functions, there exists a proof of quantumness protocol.\end{corollary}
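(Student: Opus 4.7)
The plan is to invoke Theorem~\ref{thm:main6} as a black box and feed the resulting primitive into the proof-of-quantumness protocol of \cite{ITCS:MorYam23}. That work exhibits a proof-of-quantumness protocol from any full-domain trapdoor one-way permutation that is one-way against classical (sub-exponential-time) adversaries, and it explicitly notes that the previously-known iO-based trapdoor permutations of \cite{TCC:BitPanWic16,EC:GPSZ17} do not suffice because their domains are sparse subsets of $\{0,1\}^n$ from which a prover cannot efficiently prepare a uniform superposition. Theorem~\ref{thm:main6} removes exactly this obstruction.

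First I would verify that the proof of Theorem~\ref{thm:main6} goes through when its hypotheses are only classically (rather than quantumly) sub-exponentially secure. The construction obfuscates a permutable PRP, and all hybrids in the underlying security reduction are purely classical: they invoke classical iO security and classical OWF security via standard puncturing and permuting hybrids. Hence, instantiating iO and OWFs with classical sub-exponential hardness yields a full-domain trapdoor OWP whose one-wayness holds against sub-exponential-time classical adversaries, which is precisely the hypothesis required by \cite{ITCS:MorYam23}.

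Second, I would directly apply the MY23 protocol to this OWP: the verifier samples $(pk, td)$ and sends $pk$; the prover prepares $\tfrac{1}{\sqrt{2^n}}\sum_{x\in\{0,1\}^n}\ket{x}$, applies $f_{pk}$ coherently, and then runs the MY23 prover subroutine; the verifier runs the MY23 verifier subroutine using $(pk, td)$. Quantum completeness and classical soundness are inherited directly from the MY23 analysis, since that analysis uses the permutation only through (i) efficient superposition preparation over the domain (the full-domain property) and (ii) efficient forward evaluation together with efficient inversion given $td$.

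The main obstacle, to the extent there is one, is bookkeeping rather than new technical content: one must propagate sub-exponential classical security consistently through the chain iO $+$ OWF $\Rightarrow$ permutable PRP $\Rightarrow$ full-domain trapdoor OWP $\Rightarrow$ MY23, ensuring that the polynomial reduction losses at each step fit inside the initial sub-exponential gap. Since each reduction incurs only polynomial blow-up, this is comfortable, and no new proof ideas are required beyond combining Theorem~\ref{thm:main6} with the protocol of \cite{ITCS:MorYam23}.
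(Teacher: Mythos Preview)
Your proposal is correct and matches the paper's approach exactly: the paper derives the corollary in one line by plugging the full-domain trapdoor OWP of Theorem~\ref{thm:main6} into the proof-of-quantumness protocol of \cite{ITCS:MorYam23}, noting that the prior iO-based trapdoor permutations of \cite{TCC:BitPanWic16,EC:GPSZ17} fail because their domains are sparse. Your additional observation that the reductions underlying Theorem~\ref{thm:main6} are purely classical (so classical sub-exponential hardness suffices) is precisely the point the paper is implicitly using when it emphasizes ``(classically)'' in the corollary's hypotheses.
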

This is the first proof of quantumness using iO, as the non-full-domain trapdoor permutations of~\cite{TCC:BitPanWic16,EC:GPSZ17} cannot be used in this construction.

\paragraph{Paper Organization.}
In the remainder of the introduction, we discuss additional motivation and related work (Section~\ref{sec:motivation}), why the bug in~\cite{STOC:AGKZ20} is unfixable (Section~\ref{sec:bug}), and why the impossibility of~\cite{TCC:AshSeg16} does not apply to our construction (Section~\ref{sec:owpimposs}). In Section~\ref{sec:overview}, we provide an overview of our techniques. A reader only interested in our results on obfuscating PRPs, including our application to trapdoor permutations, can find an overview in Section~\ref{sec:overviewprps}. Section~\ref{sec:overviewprps} is entirely classical and can be read independently of the rest of the paper without any background in quantum computing. A reader interested in our OSS construction should start with Section~\ref{sec:overvieworacle} which gives our oracle construction and an overview of the oracle proof. Then after developing our techniques for obfuscating PRPs, we explain how to translate our oracle construction into a standard-model construction in Section~\ref{sec:overviewstandard}. 

\subsection{Motivation and Other Related Work}\label{sec:motivation}

\paragraph{Quantum money and variants.} Quantum money uses unclonable quantum states as currency to prevent counterfeiting. One-shot signatures lives at the top of a hierarchy of related concepts:
\begin{itemize}
    \item {\bf Secret key quantum money.} This was originally proposed by Wiesner~\cite{Wiesner83}. A major drawback, however, is that only the mint is able to verify, leading to a number of limitations.
    \item {\bf Public key quantum money.} This was proposed by Aaronson~\cite{CCC:Aaronson09} to remedy the various issues with Wiesner's scheme. Here, anyone can verify banknotes but only the mint can create new notes. It has been a major challenge to construct public key quantum money. Several candidate constructions exist~\cite{ITCS:FGHLS12,KSS22,EC:LiuMonZha23,ITCS:Zhandry24a}. It was also shown to exist in a classical oracle model~\cite{STOC:AarChr12}, which was later improved to a standard-model construction using iO by~\cite{EC:Zhandry19b}.
    \item {\bf Quantum lightning.}
    This concept allows anyone to mint banknotes together with classical serial numbers such that anyone can verify pairs of a serial number and a quantum banknote, but ensures that no user can create two valid banknotes with the same serial number. This last property of unclonability (even for the state generator) allows to use quantum lightning in decentralized settings, in ways that standard (public-key) quantum money cannot generally be used.
    Quantum lightning was first suggested by~\cite{LAFGHKS09} (under the name ``collision-free quantum money'') and rigorously formalized by~\cite{EC:Zhandry19b}. It implies in particular a public-key quantum money scheme, where a banknote is a quantum lightning state/serial number pair $|\$\rangle,\sigma$, together with a signature on $\sigma$, signed using the mint's (standard classical post-quantum) signing key.
    Some of the quantum money candidates are also quantum lightning~\cite{ITCS:FGHLS12,KSS22,EC:LiuMonZha23,ITCS:Zhandry24a}. But others, including the classical oracle and iO results mentioned above, are not quantum lightning. Prior to our work, all quantum lightning schemes required novel computational assumptions that had not been studied by the wider cryptography community. Moreover, no prior scheme has provable security in a classical oracle model. In fact, our OSS gives in particular the first quantum lightning scheme with provable security under widely-studied assumptions, and the first unconditional proof in a classical oracle model.
    \item {\bf One-shot signatures (OSS).} Further strengthening quantum lightning, OSS treats the lightning state as a quantum signing key, which can sign a single message and then provably self-destructs. Among other things, the upgrade from quantum lightning to OSS adds the ability to use only classical communication and local quantum computation, decrease the needed coherence times for quantum lightning states, and more. No known provable classical-oracle constructions nor standard-model instantiations (under \emph{any} reasonable-sounding assumption) were known prior to this work.
\end{itemize}

\paragraph{Quantum cryptography with classical communication.} An interesting application of OSS is to send quantum money using classical communication. This may sound impossible at first, but~\cite{STOC:AGKZ20} observe that it is nevertheless possible to send quantum money using a simple classical \emph{interactive} protocol: for the mint to send a money state to Alice, Alice will create a lightning state/OSS signing key and serial number \emph{for herself} $|\$_{\sf Alice}\rangle,\sigma_{\sf Alice}$, and send $\sigma$ to the mint. The mint then signs the serial number, providing the signature $\sigma_{{\sf mint}\rightarrow{\sf Alice}}$, which the mint sends back to Alice. Now Alice's money state is $|\$_{\sf Alice}\rangle,\sigma_{\sf Alice},\sigma_{{\sf mint}\rightarrow{\sf Alice}}$, which was obtained by just sending classical messages! Then, if Alice wants to send money to Bob, Bob simply creates a new lightning state/OSS signing key and serial number \emph{for himself} $|\$_{\sf Bob}\rangle,\sigma_{\sf Bob}$, sends $\sigma_{\sf Bob}$ to Alice, who then signs $\sigma_{\sf Bob}$ using her quantum lightning/OSS signing key, obtaining signature $\sigma_{{\sf Alice}\rightarrow {\sf Bob}}$, which she sends to Bob. Bob now has the money state $|\$_{\sf Bob}\rangle,\sigma_{\sf Bob},\sigma_{{\sf Alice}\rightarrow{\sf Bob}},\sigma_{{\sf mint}\rightarrow{\sf Alice}}$. By the OSS guarantee, Alice's money state has now self-destructed, meaning she no longer has the money but Bob does. Bob can then send the money to Charlie, etc.\footnote{The Mint-to-Alice step actually only requires quantum lightning, but Alice-to-Bob, etc seem to require the stronger OSS.}

The Mint-to-Alice step using classical communication was previously solved by~\cite{STOC:Shmueli22} using iO and other tools, but the money could not be subsequently sent to Bob without quantum communication\footnote{Using the results from \cite{C:Shmueli22} it is possible to use only classical communication and send money to Bob. However, this still necessitates Alice to communicate with the Bank repeatedly for every new transaction.}. Our work allows for Alice to send money to any party in the system classically and directly. These works on quantum money with classical communication are part of a broader class of protocols for performing quantum cryptography using classical communication, such as tests of quantumness and certified randomness~\cite{FOCS:BCMVV18}, position verification~\cite{ITCS:LiuLiuQia22}, and certified deletion~\cite{C:BarKhuPor23,C:BarKhu23,TCC:KitNisYam23,EC:BGKMRR24}.

\paragraph{Cryptocurrencies and Blockchains.} One-shot signatures have numerous applications in the cryptocurrency and blockchain settings. For example, they can be used to give decentralized currency and even smart contracts without a blockchain at all~\cite{EC:Zhandry19b,STOC:AGKZ20,Sattath22}. OSS are known \cite{coladangelo2020quantum} to provide a solution to the blockchain scalability problem\footnote{The work of \cite{coladangelo2020quantum} uses a strengthening of quantum lightning, where there is an additional procedure to destroy quantum lightning states and produce a classical "proof of deletion". This strengthening is known to follow from OSS, but not from standard quantum lightning.}, using only classical communication. They also have various advantages for other blockchain-related tasks, such as decreasing the threshold for perfect finality, eliminating slashing and leakage risks in liquid staking, and more~\cite{Drake23}. 

\paragraph{Obfuscating pseudorandom objects.} Perhaps the main technique in the literature for using iO is the punctured programming paradigm~\cite{STOC:SahWat14}, which primarily utilizes a puncturable pseudorandom function~\cite{CCS:KPTZ13,AC:BonWat13,PKC:BoyGolIva14}. These are functions where one can give out a ``punctured'' key, which allows for evaluating the PRF on all but a single point $x$. Meanwhile, even with this ability, the value on $x$ remains pseudorandom.

Puncturable \emph{invertible} functions were considered in~\cite{TCC:BonKimWu17}, though their construction expands the input size, so it is not a permutation but rather an (efficiently invertible) injection. They also discuss puncturable PRPs, and explain that they are impossible in some settings. In particular, in the setting where the domain is polynomial-sized, if the punctured key reveals the permutation on all points but $x$, it also reveals $x$.

Our notion of permutable PRPs avoids this issue and even is valid in the small-domain setting. We can ``puncture'' a permutable PRP at a point $x$ with output $y$ by choosing a random image $y'$, letting $\Gamma$ be the transposition which swaps $y$ and $y'$, and outputting the permuted key $k^\Gamma$. This allows for computing the PRP on all points except for $x$ (and also the pre-image $x'$ of $y'$), but hides the true value of $y$. The impossibility of punctured PRPs has come up in several settings (e.g.~\cite{C:SACM21,USENIX:MZRS22,EC:LiuMonZha23,EPRINT:HPPY24}). It would be interesting to explore if our notion could be useful in these settings.

\paragraph{(Trapdoor) OWPs from iO.} Trapdoor one-way permutations (OWP) were proposed by~\cite{FOCS:Yao82a} to abstract the ideas behind public key cryptosystems based on both RSA and Rabin. The domain and range of these objects are sets with algebraic structure. However, it became much simpler to describe applications in terms of a full-domain OWP where the domain and range are simply $\{0,1\}^n$. This simplification lead to several conceptual errors (see~\cite{GolRot13} for explanation). To account for these errors, when using sparse-domain OWPs, one must often stipulate additional conditions (called ``enhanced'' or ``doubly enhanced'') that are trivially satisfied by full-domain OWPS.

Once iO emerged as a powerful cryptographic tool~\cite{FOCS:GGHRSW13}, a natural question was whether iO could give a third way of building a trapdoor permutation. A particular motivation is to achieve post-quantum security, since RSA and Rabin are both quantumly insecure due to Shor's algorithm~\cite{FOCS:Shor94}. Constructions were given~\cite{TCC:BitPanWic16,EC:GPSZ17}, but these constructions were ``messy,'' with sparse domains that could not be efficiently recognized and required cryptographic procedures to sample. An interesting question was whether a clean, full-domain trapdoor permutation was possible from iO. Our work mostly resolves this question, though we still have the limitation that the space of keys is sparse.

As a concrete application, our trapdoor OWPs can be plugged into the elegant proofs of quantumness of~\cite{ITCS:MorYam23}, to obtain proofs of quantumness from (classically-hard sub-exponential) iO and one-way functions\footnote{The OWP part of our work does not need LWE.}. This is the first proof of quantumness based on iO. The prior OWPs of~\cite{TCC:BitPanWic16,EC:GPSZ17}, despite being doubly enhanced, did not suffice for this application. More generally, permutations on simple domains like $\{0,1\}^n$ appear much more useful in the context of quantum cryptosystems, whereas permutations on highly structured domains break the delicate structure of quantum states.

\subsection{The Bug in the previous work of AGKZ} \label{sec:bug}

The bug in~\cite{STOC:AGKZ20} as found by~\cite{Bartusek23} is rather technical, and we do not discuss it here. However, we argue that the bug is likely unfixable using the techniques employed by~\cite{STOC:AGKZ20}. The issue is that in \cite{STOC:AGKZ20}, the security of the OSS boils down to showing the collision resistance of a certain hash function, and the proof technique uses what is known as the inner-product adversary method -- first developed in~\cite{STOC:AarChr12}. Unfortunately, the inner-product adversary method is unlikely to be able to prove even the basic collision problem is hard, even before adding all the structure that~\cite{STOC:AGKZ20} need to obtain their OSS scheme. The reason is that the collision problem has a small \emph{certificate complexity}, and it is known that the adversary method cannot prove good lower-bounds for problems with small certificate complexity, as shown by~\cite{CCC:Aaronson03}. This certificate complexity barrier extends to all ``positive weight'' adversary methods, including the inner-product method. Thus, it seems any fixed proof for~\cite{STOC:AGKZ20} must use additional techniques.\footnote{One caveat is that the barrier only applies to lower bounds for the number of queries to distinguish a function with collisions from a function that is injective. Such a lower-bound immediately implies a lower-bound for actually finding collisions. But the converse is not true, and the certificate complexity barrier does \emph{not} seem to rule out directly proving the hardness of finding collisions. Nevertheless, it does not seem like the techniques of~\cite{STOC:AGKZ20} circumvent this barrier.}

Note that most of our oracle result is proved using standard adversary method techniques. However, our result assumes the standard collision lower bound as previously proved in~\cite{JACM:AarShi04,QIC:Zhandry15}. These results were proved using the polynomial method that is not subject to the certificate complexity barrier.

\subsection{On the Lower Bound for OWPs from iO}\label{sec:owpimposs}
Asharov and Segev~\cite{TCC:AshSeg16} show a barrier for constructing OWPs from iO and one-way functions. Specifically, they show that any ``black-box'' construction of one-way permutations from iO and one-way functions, cannot be ``domain-invariant''. Here, black-box in the context of iO is a bit subtle, but their notion captures most known iO techniques, including ours.
Recall that in a (keyed) OWP, we sample one permutation out of exponentially many permutations in an efficiently samplable family. As for domain-invariance, the abstract of~\cite{TCC:AshSeg16} defines it as ``each permutation [in the family] may have its own domain, but these domains are independent of the underlying building blocks''. At first glance, this would seem to contradict our full-domain trapdoor permutation, since the domain of all permutations in the family of our construction is just $\{0,1\}^n$, clearly independent of any building block.

However, there are really two types of domain invariance: One for the key, and one for the actual input. The formal specification of the impossibility in~\cite{TCC:AshSeg16} reveals that their notion of domain-invariance refers to schemes that satisfy \emph{both} notions. In fact, in the model they propose, obfuscating a PRP actually does yield an \emph{input}-domain-invariant (trapdoor) OWP; also, since the key is an obfuscated program and most strings do not correspond to valid programs / keys, it is not key-domain invariant. 

Our (trapdoor) OWPs are similarly input-domain invariant but not key-domain invariant, since our key is likewise an obfuscated program, with the added benefit of provable security under iO.  Thus, our result does not contradict the impossibility of~\cite{TCC:AshSeg16}. Instead, it helps clarify the impossibility as being more about key-domain-invariance.

\subsection{Directions for Future Work}
Here are some natural follow-up directions from our work:
\begin{itemize}

    \item
    Can OSS be achieved without using iO? While there are several approaches (using new hardness assumptions) that seem to give weaker objects like quantum lightning~\cite{ITCS:FGHLS12,KSS22,EC:LiuMonZha23,ITCS:Zhandry24a}, they do not seem amenable to producing classical signatures, even heuristically. 
    
    \item
    Is it possible to remove the need for sub-exponential hardness in our constructions? Sub-exponential hardness arises in two key places. The first is in our PRP: we start with a ``base'' PRP (based just on one-way functions) which is permutable only for the class of ``neighbor swaps'' that exchange a given $z$ with $z+1$, and upgrade it to more general permutations $\Gamma$ by decomposing such permutations into a sequence of neighbor swaps. The proof incurring a security loss for each step in the decomposition. General permutations, even just transpositions (which are enough to get trapdoor OWPs) require an exponential-sized decomposition. In order to get around this limitation, it seems that the ``base'' PRP needs to support a richer class than just neighbor swaps. Is there a base PRP that support, say, general transpositions, from just one-way functions? The second place sub-exponential hardness comes in is when we are switching from the construction to a simulated distribution; this utilizes a hybrid over all possible outputs. Perhaps there is a clever way to change all possible outputs in one go.
    
    \item
    Can we achieve a ``clean iO'' approach to OSS that uses just iO and generic primitives? Note that it is unlikely that OSS can be obtained from iO and one-way functions: OSS requires collision resistance, which is believed to not be possible from iO and one-way functions alone~\cite{FOCS:AshSeg15}. But perhaps iO and collision-resistance is enough. Or at a minimum, maybe our LWE assumption can be replaced with other algebraic techniques to give a diversity of assumptions.
    \item
    We introduced a new technique for obfuscating PRPs, and it would be interesting to see if this enables any new results, or at least streamlines old results. 
\end{itemize}

Finally, we conclude with a fascinating complexity-theoretic question inspired by our techniques. Any permutation $\Gamma$ on $\{0,1\}^n$ can be decomposed into an exponentially long product of transpositions $\Gamma=\transposition{a_1}{a_2} \circ \transposition{a_3}{a_4} \circ \cdots$. While such a decomposition is clearly inefficient, the following question asks if the partial products can be made small:
\begin{question} [Efficient Permutation Decomposition Problem] \label{question:decompose}
For any permutation $\Gamma$ on $\{0,1\}^n$ such that $\Gamma,\Gamma^{-1}$ have circuits of size $s$, is it possible to decompose $\Gamma$ into a product of transpositions $\Gamma=\tau_1\circ \tau_2\circ\cdots \circ \tau_T$ such that each partial product $\Gamma_t:=\tau_1\circ\tau_2\circ\cdots\circ\tau_t$ for $t\in [T]$ and its inverse $\Gamma_t^{-1}$ have circuits of size $\poly(s,n)$?
\end{question}
As part of constructing our permutable PRPs and applying them to construct OSS, we show that very general families of permutations can be efficiently decomposed in this way (see Figure~\ref{fig:decomposable} for a non-exhaustive list of such families).  On the other hand, we do not know how to handle all efficient permutations, and even for very simple permutations like multiplying by a scalar mod $N$, we only know a general solution assuming the Extended Riemann Hypothesis.

\section{Technical Overview}\label{sec:overview}
In this section we explain the main techniques shown in this work.

\subsection{Definitions}
We first give the formal definition of one-shot signatures.

\begin{definition} [One-Shot Signature Scheme] \label{definition:OSS}
A one-shot signature (OSS) scheme is a tuple of algorithms $\left( \setup, \gen, \sign, \ver \right)$ together with message-space $\left( \Ms_\lambda \right)_\lambda$ satisfying the following:
\begin{itemize}
    \item
    $\crs \gets \setup\left( 1^\lambda \right)$: A classical probabilistic polynomial-time algorithm that given the security parameter, samples the classical common reference string (CRS).
    
    \item $\left( \pk, \ket{\sk} \right) \gets \gen(\crs)$: A quantum polynomial-time algorithm that takes the classical $\crs$ and samples a classical public key $\pk$ and quantum secret key  $\ket{\sk}$.
    
    \item
    $\sigma \gets \sign\left( \crs, \ket{\sk}, m \right)$: A quantum polynomial-time algorithm that given $\crs$ and the quantum key $\ket{\sk}$, given any message $m \in \Ms_\lambda$ produces a classical signature $\sigma$.
    
    \item
    $\ver\left( \crs, \pk, m, \sigma \right) \in \{ 0, 1 \}$: A classical deterministic polynomial-time algorithm which verifies a message $m$ and signature $\sigma$ relative to a public key $\pk$.
    
    \item {\bf Correctness:}
    There exists a negligible function $\negl$ such that, for any $\lambda$ and any $m\in\Ms_\lambda$ we have
    $$
    \Pr_{
    \substack{
    \crs\gets\setup\left( 1^\lambda \right), \\
    \left( \pk, \ket{\sk} \right) \gets \gen\left( \crs \right), \\
    \sigma_{m} \gets \sign(\crs, \ket{\sk}, m)
    }
    }\left[
    \ver\left( \crs, \pk, m, \sigma_{m} \right) = 1
    \right]
    \geq 1 - \negl\left( \lambda \right)
    \enspace .
    $$
    
    \item {\bf Security:}
    For any QPT algorithm $\Adv$, there exists a negligible function $\negl$ such that for all $\lambda$ the following probability is bounded by $\leq \negl \left( \lambda \right)$,
    $$
    \Pr_{
    \substack{ \crs\gets\setup(1^\lambda), \\ \left( \pk, m_0, m_1, \sigma_0, \sigma_1 \right) \gets \Adv\left( \crs \right) }
    }
    \left[
         \ver\left( \crs, \pk, m_0, \sigma_0 \right) = 1 \;\land\;
         \ver\left( \crs, \pk, m_1, \sigma_1 \right) = 1
    \right]
    \enspace .
    $$
\end{itemize}
\end{definition}

It is straightforward to adapt the above definition to utilize an oracle. In this case, the oracle will play the role of $\crs$, and we will omit the algorithm $\setup$.

\paragraph{OSS from Non-collapsing Hashing.} As observed informally by~\cite{STOC:AGKZ20}, a collision-resistant but non-collapsing hash function gives an OSS, and this was made formal and general by~\cite{TQC:DalSpo23}. 
Our OSS therefore will be built from such a hash function. Here, we give the notion of collision resistant and non-collapsing hash functions.

\begin{definition} [Collision-Resistant Always-Non-Collapsing Hash] \label{definition:CR_always_NC_hash}
A collision-resistant always-non-collapsing hash function is a pair of PPT algorithms $\left( \setup, H \right)$ such that
\begin{itemize}
    \item
    $\crs \gets \setup\left( 1^\lambda \right)$: A classical probabilistic polynomial-time algorithm that given the security parameter, samples the classical common reference string (CRS).
    
    \item
    $y \gets H\left( \crs, x \right)$: A classical deterministic polynomial-time algorithm that given the CRS and input $x$, outputs $y$.
    
    \item {\bf Collision-resistance:}
    For any QPT algorithm $\As$, there exists a negligible function $\negl$ such that for all $\lambda$,
    $$
    \Pr_{
    \substack{ \crs\gets\setup\left( 1^\lambda \right), \\ \left( x_0, x_1 \right) \gets \Adv\left( \crs \right) }
    }
    \left[
    H\left( \crs, x_0 \right) = H\left( \crs, x_1 \right)
    \right]
    \leq
    \negl\left( \lambda \right)
    \enspace .
    $$
    
    \item {\bf Always non-collapsing:}
    There exists a pair of QPT algorithms $\left( \Ss, \Ds \right)$ and a negligible function $\negl\left( \lambda \right)$ such that for all $\lambda$,
    \begin{align*}
    &\left|
    \Pr
    \left[
    \Ds\left( x, \aux \right) = 1
    \;
    :
    \;
    \substack{ \crs\gets\setup\left( 1^\lambda \right) , \\
    \left( \ket{\psi}, \aux \right) \gets \Ss\left( \crs \right), \\
    x \gets {\sf Measure}\left( \ket{\psi} \right) }
    \right] - \right.\\
    &\;\;\;\;
    \left.
    \Pr
    \left[
    \Ds\left( \ket{ \psi_{y} },\aux \right) = 1
    \;
    :
    \;
    \substack{ \crs\gets\setup\left( 1^\lambda \right) , \\
    \left( \ket{\psi}, \aux \right) \gets \Ss\left( \crs \right), \\
    \ket{ \psi_{y} } \gets {\sf PartialMeasure }_{H\left( \crs,\cdot \right)}\left( \ket{ \psi } \right) }
    \right]
    \right|
    \geq
    1 - \negl(\lambda)
    \enspace .
    \end{align*}
    Here, $x \gets {\sf Measure}\left( \ket{ \psi } \right)$ means to measure $\ket{ \psi }$ in the computational basis arriving at measurement $x$. $\ket{ \psi_{y} } \gets {\sf PartialMeasure }_{H\left( \crs,\cdot \right)}$ means to compute $H\left( \crs,\cdot \right)$ (in superposition, with output register on the side) with input register $\ket{\psi}$, and measuring the result, resulting in outcome $y$. Then the state $\ket{ \psi }$ collapses to $\ket{ \psi_{y} }$, which contains some superposition of preimages of $y$.
\end{itemize}
\end{definition}
We will not formally give the proof that such a hash function implies OSS, but give the sketch for the interested reader. The idea is that $\gen$ runs $\Ss$ to get $|\psi\rangle,\aux$, and then applies $H(\crs,\cdot)$ to  $|\psi\rangle$ and measures, obtaining $y$. We set $\pk=y$, and $|\sk\rangle=|\psi_y\rangle,\aux$ where $|\psi_y\rangle$ is the post-measurement state. Observe that the support of $|\psi_y\rangle$ are strings $x$ that hash to $y$.

For a bit $b$, let $|\psi_{y,b}\rangle$ be the state post-selecting on $x$ whose first bit is $b$. One can create $|\psi_{y,b}\rangle$ for a random choice of $b$ my simply measuring the first qubit. Then very roughly it is shown in~\cite{STOC:AGKZ20,TQC:DalSpo23} how to use the distinguisher $\Ds$ to move back and forth between $|\psi_{y,0}\rangle$ and $|\psi_{y,1}\rangle$.

To sign a bit $b$, one obtains $|\psi_{y,b}\rangle$ and measures, obtaining a string $x$ beginning with the bit $b$ that hashes to $y$. This is a signature on $b$, which is easy to verify. Moreover, the collision-resistance of $H$ implies that it is computationally infeasible to find two signatures. Thus, we have an OSS for a single bit. Then by parallel repetition, we obtain an OSS for arbitrary messages.

\subsection{OSS Relative to a Classical Oracle \ifllncs\else(Section ~\ref{sec:oss_oracle})\fi} \label{sec:overvieworacle}

Since OSS follows from non-collapsing collision-resistant hash functions, we will focus on constructing the latter. 

\paragraph{The Construction From~\cite{STOC:AGKZ20}.} The OSS of~\cite{STOC:AGKZ20} utilizes what we will call a \emph{coset partition function} in this work. This is a function $Q$ that is many-to-1, and where the pre-image set of any image is a coset of a linear subspace: that is, the pre-image sets have the form $Q^{-1}(y) := \{ \matA_y \cdot \rv + \vecB_y \: : \: \rv \in \bbZ_2^k \}$, where $\matA_y,\rv_y$ are a matrix/vector pair that depends on the image $y$. The function $Q$ will be the hash function, provided as an oracle.

In order to be non-collapsing,~\cite{STOC:AGKZ20} employ the hidden subspaces approach of~\cite{STOC:AarChr12}, and additionally provide a separate oracle $D$, which provides membership testing for the linear space $\matA^\perp := \{ \zv \: : \: \matA_y \cdot \zv = 0 \}$. This allows for testing if a state is in the uniform superposition $\ket{ Q^{-1}(y) } := \sum_{x\in Q^{-1}(y)} \ket{ x}$ : first use $Q$ to test that the state is in the support of $Q^{-1}(y)$, then apply the quantum Fourier transform (QFT), and use $D$ to test that the resulting state has support on $\matA^\perp$. The only state that passes both verifications is the state $\ket{ Q^{-1}(y) }$.

In order to obtain a non-collapsing hash/OSS scheme, then one needs to prove that $Q$ is collision-resistant, given the oracles for $Q$ and $D$. As mentioned, the proof provided in~\cite{STOC:AGKZ20} contained a fatal bug, and while the scheme is plausibly collision-resistant, it remains unclear how to prove this.

\paragraph{Relaxing the structure.} A key challenge with the approach in~\cite{STOC:AGKZ20} is that it is non-trivial come up with a coset partition function that is not trivially insecure. The cosets need to have very different orientations ($\matA_y$ values), lest the function $Q$ be periodic and therefore subject to quantum period-finding algorithms~\cite{FOCS:Shor94}. But how do we perfectly partition the domain into cosets that are all of different orientations, but no overlaps? The coset partition function used in~\cite{STOC:AGKZ20} is derived in a very specific way, and the various pre-image sets are highly correlated, making reasoning about them quite challenging. This is further complicated by the $D$ oracle, which contains even more information about the cosets. An interesting proposal was made by~\cite{Bartusek23}: Simply have the hash function be a random function $H$, but provide an additional oracle which maps each pre-image set $H^{-1}(y)$ to a random coset $S_y$ embedded in a much larger space. The point is that, the cosets $S_y$ for different $y$'s no longer need to partition the space they live in, as they are independent of each other.

Our following construction is inspired by the above general principles. We sample a random secret permutation $\Pi$ on $\{0,1\}^n$, and let $H(x),J(x)$ denote the first $r$ bits and last $n-r$ bits of $\Pi(x)$, respectively. $H:\{0,1\}^n\rightarrow\{0,1\}^r$ will be our hash function. For each $y\in\{0,1\}^r$, we also choose a random coset $S_y\subseteq\{0,1\}^k$ of dimension $n-r$, described by a matrix/vector pair $\matA_y \in \bbZ_{2}^{ k \times (n - r) },\vecB_y \in \bbZ_{2}^{k}$ as $S_y = \{ \matA_y\cdot\rv+\vecB_y:\rv\in\{0,1\}^{n-r}\}$. We then provide an oracle $\Ps(x)$ which outputs $\left( H(x), \vecU = \matA_{H(x)} \cdot J(x) + \vecB_{H(x)} \right)$. In other words, it computes $y:=H(x)$, then uses this information to determine a coset, and then uses the remaining permutation information $J(x)$ to determine the point $\vecU$ in that coset that $x$ maps to. We likewise provide the oracle $\Ds(y,\vecV)$ which checks if $\vecV\in S_y^\perp$, where $S_y^\perp$ is the kernel of $\matA_y$.

We have to provide one more oracle, which is denoted $\Ps^{-1}\left( y, \vecU \right)$, which inverts the oracle $\Ps$: It outputs $x$ if $\Ps(x) = \left( y, \vecU \right)$ and otherwise outputs a special symbol $\bot$, indicating that $\left( y, \vecU \right)$ is not in the range of $\Ps$. This oracle is necessary in order to actually preserve the non-collapsing property: Given the uniform superposition of pre-images $\ket{H^{-1}\left( y \right)}$, applying the oracle $\Ps$ gives $\sum_{x\in H^{-1}(y)}|x,\Ps(x)\rangle=\sum_{x\in H^{-1}(y)}|x,y,\matA_y\cdot J(x)+\vecB_y\rangle$. We would like to apply the QFT to the register containing $\matA_y\cdot J(x)+\vecB_y$, but this will not work: since the register $x$ remains around and is entangled with the register containing $\matA_y\cdot J(x)+\vecB_y$, performing the QFT will actually yield random junk. The oracle $\Ps^{-1}$ allows us to un-compute $x$, thereby allowing verification to work as desired.

We now need to prove the collision-resistance of $H$. Unfortunately, quantum query lower-bounds for oracles with inverses is a notorious challenging problem. This problem, together with the presence of the oracle $\Ds$, were the primary barriers to proving the security of this construction.

In what follows, we describe our proof in the oracle setting. At a very high-level, our proof will gradually eliminate parts of the oracles $\Ps,\Ps^{-1},\Ds$, until all that remains is a simple hash function oracle, and we can then invoke the known query lower-bounds for collision-finding to conclude security.

\paragraph{Warm-up: A Random Self-Reduction.} As a warm-up that will lead to our proof, we introduce a random self-reduction for the oracles $\Ps,\Ps^{-1},\Ds$. Given an instance of the oracles $\overline{\Ps},\overline{\Ps^{-1}},\overline{\Ds}$ (with underlying arbitrary permutation $\overline{\Pi}$ on $\{ 0, 1 \}^{n}$ and arbitrary matrix/vector pairs $\overline{\matA}_y \in \bbZ_{2}^{k \times (n - r)}$,$\overline{\vecB}_y \in \bbZ_{2}^{k}$ for all $y \in \{ 0, 1 \}^{r}$, where all we know is that $\overline{\matA}_y$ is full-rank), we can construct another instance as follows. Choose a random permutation $\Gamma$ on $\{ 0, 1 \}^{n}$ and for every $y \in \{ 0, 1 \}^{r}$, choose a random full-rank $\matC_y\in\{0,1\}^{k\times k}$ and random $\vecD_y\in\{0,1\}^k$. Note that $\matC_y,\vecD_y$ define a random affine permutation $L_y(\vecU)=\matC_y\cdot\vecU+\vecD_y$ on $\{0,1\}^k$. Then define:
\begin{itemize}
    \item
    $\Ps(x)$: Compute $\overline{x} \gets \Gamma(x)$, $(y,\overline{\vecU}) \gets \overline{\Ps}( \overline{x} )$, $\vecU \gets L_y(\overline{\vecU})$ and output $(y,\vecU)$.
    
    \item
    $\Ps^{-1}\left( y,\vecU \right)$: Compute $\overline{\vecU} \gets L_{y}^{-1}\left( \vecU \right)$, $\overline{x} \gets \overline{\Ps}^{-1}\left( y, \overline{\vecU} \right)$, $x \gets \Gamma^{-1}\left( \overline{x} \right)$ and output $x$.
    
    \item
    $\Ds\left( y, \vecV \right)$: Output $\overline{\Ds}\left( y, \vecV^{T} \cdot \matC_{y} \right)$.
\end{itemize}
The resulting oracle implicitly sets $\Pi := \overline{\Pi}\circ \Gamma$, and for every $y \in \{ 
0, 1 \}^{r}$ sets $\matA_y := \matC_y\cdot\overline{\matA}_y$, $\vecB_y := \matC_y\cdot \overline{\vecB}_y+\vecD_y$, which all distribute independently of the underlying $\overline{\Pi}$, $\overline{\matA}_y$, $\overline{\vecB}_y$.
Thus this random self-reduction turns any instance $\overline{\Ps},\overline{\Ps^{-1}},\overline{\Ds}$ into a fresh independent instance $\Ps,\Ps^{-1},\Ds$. Moreover, if we let $H$ and $\overline{H}$ be the functions outputting the first $r$ bits of $\Ps$ and $\overline{\Ps}$ respectively, we can turn a collision for $H$ into a collision for $\overline{H}$ by applying $\Gamma$. Thus, we can derive the collision-resistance for random instances based on the collision resistance of any fixed instance. We note that there does not appear to be an analogous random self-reduction for the oracles of~\cite{STOC:AGKZ20}. We will use variants of this self-reduction to gradually remove components from our oracle.

\paragraph{Step 1: Bloating the Dual.} The first step of our actual proof, inspired by techniques of~\cite{EC:Zhandry19b}, is to ``bloat the dual''. This means, for each $y$, we choose a random super-space $T_y^\perp$ of $S_y^\perp := \colspan\left( \matA_{y} \right)^{\bot}$, and replace $\Ds$ with the oracle $\Ds'$ which checks for membership in $T_y^\perp$. The super-space $T_y^\perp$ is chosen at random such that (1) it is a sparse subset of $\{ 0, 1 \}^{k}$ and (2) $S_y^\perp$ is a sparse subset of $T_{y}^{\bot}$. We can prove that $\Ds'$ is indistinguishable form $\Ds$, since the points in $T_y^\perp$ but not in $S_y^\perp$ are random sparse points hidden from the adversary. Thus an adversary that finds a collision given the more informative $\Ds$, will also find a collision given $\Ds'$.

In~\cite{EC:Zhandry19b}, this technique was employed to reduce the security of the quantum money scheme of~\cite{STOC:AarChr12} to an information-theoretic statement. Crucially in that proof, \emph{both} the primal $S$ and dual $S^\perp$ were bloated, which then implies that the original space $S$ is information-theoretically hidden. In our case, however, the original subspaces $S_y$ are still part of the oracles $\Ps,\Ps^{-1}$, and it is unclear how to bloat them. This key difference is that in~\cite{EC:Zhandry19b}, there was only a single sparse space $S$, and it could be bloated by consuming some of the abundant ``free'' ambient space. However, in our case, the entire domain is partitioned into sets $H^{-1}(y)$, and it seems that we would need to bloat each of them. But this is impossible, since there is no ``free'' space left as each point is already part of some subspace. Nevertheless, in the following, we will see that the dual-bloating is still useful.

\paragraph{Step 2: Simulating the Dual.} We will now completely eliminate the dual oracle $\Ds'$. To do so, we use a version of the random self-reduction described above, but starting from a smaller instance $(\overline{\Ps},\overline{\Ps}^{-1},\overline{\Ds})$. We will show that embedding the smaller instance lets us (without querying the smaller instance) know some information about the subspaces $S_y^\perp$, namely a random super-space $T_y^\perp$ of $S_y^\perp$. This allows us to simulate $\Ds'$ \emph{without querying $\overline{\Ds}$ at all}, and in turn to eliminate the dual oracle $\overline{\Ds}$ entirely.

In more detail, let $s=\log_2 (|T_y^\perp|/|S_y^\perp|)$, which is how much larger (in terms of dimension) $T_y^\perp$ is from $S_y^\perp$. Let $\overline{\Ps},\overline{\Ps}^{-1}$ be an instance of our oracle, but with input space $\{0,1\}^{n - (n - r - s)} = \{0,1\}^{r + s}$ and output space $\{0,1\}^r \otimes\{0,1\}^{k - (n - r - s)}$. In other words, we shrink the input $x$ and the vector part of the output $\vecU$ each by $n - r - s$ bits, but we keep the $y$ part of the output (the actual hash function output) the same length. We will not have access to the oracle $\overline{\Ds}$. 

We simulate $\Ps,\Ps^{-1},\Ds'$ using $\overline{\Ps},\overline{\Ps}^{-1}$. However, since $\overline{\Ps},\overline{\Ps}^{-1}$ is a smaller instance, we first expand it into a full-sized instance (i.e., without considering how our mapping distributes). Essentially, we just pass the first $r + s$ bits of $x$ into $\overline{\Ps}$, and the last $n - r - s$ bits we output in the clear as part of $\vecU$. In more detail: 
\begin{itemize}
    \item
    $\Ps(x)$:
    Break the input $x\in\{0,1\}^n$ into two parts: $x := \left( \overline{x} \in \bbZ_{2}^{r + s}, \widetilde{x} \in \bbZ_{2}^{n - r - s} \right)$ and we moreover interpret $\widetilde{x}$ as a vector of dimension $n - r - s$. Now query $\left( y, \overline{\vecU} \right) \gets \overline{\Ps}\left( \overline{x} \right)$, and set $\vecU = \left( \overline{\vecU}, \widetilde{x} \right)$. Output $\left( y, \vecU \right)$.
    
    \item
    $\Ps^{-1}\left( y, \vecU \right)$: 
    Write $\vecU = \left( \overline{\vecU} \in \bbZ_{2}^{k - (n - r - s)}, \widetilde{x} \in \bbZ_{2}^{n - r - s} \right)$, compute $\overline{x} \gets \overline{\Ps}^{-1}\left( y, \overline{\vecU} \right)$ and output $x = \left( \overline{x}, \widetilde{x} \right)$.
    
    \item
    $\Ds'\left( y, \vecV \right)$:
    Write $\vecV = \left( \overline{\vecV} \in \bbZ_{2}^{k - (n - r - s)}, \widetilde{x} \in \bbZ_{2}^{n - r - s} \right)$. Output 1 if and only if $\widetilde{x} = 0^{n - r - s}$.
\end{itemize}
Now this clearly does not simulate the correct distribution of oracles $\Ps,\Ps^{-1},\Ds'$ since for example the last $n - r - s$ bits of input are clearly visible in the output, and $\Ds'$ checks a fixed subspace. However, it is a valid instance of the oracles $\Ps,\Ps^{-1},\Ds'$, in the sense that there is \emph{some} choice permutation $\Pi$, cosets $S_y$ and spaces $T_y^\perp$ that gives these oracles, or in other words, we output an oracle that's inside the support of $\Ps,\Ps^{-1},\Ds'$.
Note that our output oracle satisfies that for every $y \in \{ 0, 1 \}^{r}$ we have that $T_y^\bot$ is just the space of vectors whose last $n - r - s$ entries are 0 -- we'll use this fact later.
We can then apply our random self-reduction to generate a correctly distributed instance.\footnote{Note that in our warm-up, the random self-reduction re-randomized an instance of the actual construction $\left( \Ps,\Ps^{-1},\Ds \right)$, but we are now re-randomizing an instance $\left( \Ps, \Ps^{-1}, \Ds' \right)$, when the dual is bloated. The same random self-reduction works just as well in this setting.}

To re-randomize $\left( \Ps,\Ps^{-1},\Ds' \right)$, let $\overline{H}$ the hash function in the oracles $\overline{\Ps},\overline{\Ps}^{-1}$, let $H$ the hash function in the oracles $\left( \Ps,\Ps^{-1},\Ds' \right)$ and let $\widetilde{H}$ the hash function in the oracles $\left( \widetilde{ \Ps },\widetilde{ \Ps^{-1} }, \widetilde{ \Ds } \right)$, which are generated by applying the random self reducibility on $\left( \Ps,\Ps^{-1},\Ds' \right)$. By the random self-reducibility, we know that any algorithm which finds collisions for $\widetilde{H}$ given access to $\left( \widetilde{ \Ps },\widetilde{ \Ps^{-1} }, \widetilde{ \Ds } \right)$ will find collisions for the simulated $H$ relative to $\left( \Ps,\Ps^{-1},\Ds' \right)$. To complete our elimination of the dual oracle, we need to show that such collisions in $H$ actually yield collisions in $\overline{H}$. For simplicity, we will work with the non-random-self reduced oracles described above $\left( \Ps,\Ps^{-1},\Ds' \right)$. First, observe that if we write $x = \left( \overline{x} \in \bbZ_{2}^{r + s}, \widetilde{x} \in \bbZ_{2}^{n - r - s} \right)$, then $H(x) = \overline{H}(\overline{x})$. There are two ways a collision $x,x'$ in $H$ can occur:
\begin{itemize}
    \item
    $\overline{x} \neq \overline{x}'$. In this case $\overline{x}$ and $\overline{x}'$ form a collision in $\overline{H}$, as desired.
    
    \item
    $\overline{x} = \overline{x}'$. We will call these ``bad'' collisions, which we will now handle. Observe that in these cases, since $x \neq x'$, we must have $\widetilde{x} \neq \widetilde{x}'$. Also, since $\overline{x} = \overline{x}'$, we have that $\overline{\vecU} = \overline{\vecU}'$. But letting $\vecU = \left( \overline{\vecU} \in \bbZ_{2}^{k - (n - r - s)}, \widetilde{x} \in \bbZ_{2}^{n - r - s} \right)$ and $\vecU' = \left( \overline{\vecU} \in \bbZ_{2}^{k - (n - r - s)}, \widetilde{x}' \in \bbZ_{2}^{n - r - s} \right)$, this means that $\vecU - \vecU'$ is a non-zero vector whose first $k - (n - r - s)$ entries are $0$. 

    In that case, let $T_y$ be the linear space that is dual to $T_y^\perp$. Recall that in the non-re-randomized case, $T_y^\perp$ is the space of vectors whose last $n - r - s$ entries are 0; this means that $T_y$ is the space of vectors whose \emph{first} $k-(n - r - s)$ entries are 0. Thus, ``bad'' collisions give $\left( \vecU - \vecU' \right) \in T_y$. This property is moreover preserved by applying the random self-reduction. In contrast, a general collision will have $\vecU-\vecU'$ in a much larger space, namely $S_y$, the analogous linear space dual to $S_y^\perp$.\footnote{$S_y$ contains, and is $s$ dimensions larger than $T_y$, since $S_y^\perp$ is contained, and is $s$ dimensions smaller than $T_y^\perp$.} 
    
    Thus, we see that bad collisions cause $\vecU-\vecU'$ to concentrate in a much smaller space than general collisions. Following a technique introduced in \cite{C:Shmueli22} (which we necessarily make optimal, in \ifllncs{the full version \cite{shmueli2025one}}\else{Lemmas \ref{lemma:unconditional_dual_subspace_anti_concentration} and \ref{lemma:dual_subspace_concentration}}\fi), we will use this fact to show that an algorithm which produces bad collisions actually distinguishes the bloated from un-bloated cases, contradicting our earlier proof of the indistinguishability of these two cases.
    
    In slightly more detail, this is because we can run an adversary several times (using a separate random self-reduction each time) to collect several independent vectors $\vecU-\vecU'$. If the vectors are always concentrated in $T_y$, the span of them will be contained in $T_y$ and therefore have dimension $n-r-s$. On the other hand, if we do the same for the original un-bloated oracles, we argue (again using self-reducibility) that the vectors $\vecU-\vecU'$ will be random in $S_y$ and hence with enough of them we span the whole larger space of dimension $n-r$. Looking at the dimension of the spanned space therefore distinguishes the original and bloated duals, which we already showed was impossible. Thus, the original algorithm must at least \emph{occasionally} output collisions that are not bad.
\end{itemize}

\paragraph{Step 3: Reducing to a Coset-Partition Function.} We have now reduced our problem to proving the collision resistance of $H$ when only given the oracles $\Ps,\Ps^{-1}$ but where the adversary does not have access to the oracle $\Ds$. We now prove such collision resistance, assuming the collision-resistance of some coset-partition function $Q$.  This may seem counter-productive; after all, we deliberately moved away from the coset-partition structure of~\cite{STOC:AGKZ20}. Looking ahead, we will see that, since we have now stripped away the dual oracle, we actually \emph{can} analyze the coset-partition structure.

We explain that given a coset-partition function $Q : \{ 0, 1 \}^{n} \rightarrow \{ 0, 1 \}^{r}$, we can readily construct an instance of $\Ps,\Ps^{-1}$: $\Ps(x)$ sets $y=Q(x)$, and $\vecU = \left( x, 0^{k-n} \right)$. This also makes $\Ps^{-1}\left( y, \left( x, 0^{k-n} \right) \right)$ trivial as $x$ is already present in the clear; we just need to verify that $Q(x)=y$ before outputting $x$ (and outputting $\bot$ otherwise). This gives a valid instance of the oracles $\Ps,\Ps^{-1}$ since the pre-image sets of $Q$ are already cosets, so outputting $x$ padded with 0's satisfies the structure of $\Ps$. The oracles $\Ps,\Ps^{-1}$ clearly do not have the correct distribution, but we can once more apply the random self-reduction to simulate a proper random instance of $\Ps,\Ps^{-1}$.

Importantly, observe that the reduction which simulate $\Ps,\Ps^{-1}$ only require access to $Q$ \emph{in the forward direction}. This means we have reduced our problem to the collision resistance of $Q$, given only forward queries to $Q$.

\paragraph{Step 4: Constructing Hard Coset-Partition Functions.}
It may appear that we are still stuck: while we have eliminated the dual and inverse oracles, we have re-introduced the extra structure of a coset partition function. Thankfully, our reduction from the previous Step 3 actually shows that \emph{any} coset partition function $Q$ can be used to simulate the oracles $\Ps,\Ps^{-1}$, as long as $Q$ maps from $n$ to $r$ bits and each preimage set, which is a coset, has size $2^{n - r}$. This means that finding any distribution over such $Q$ that is provably collision resistant, will constitute the collision resistance of our construction.

We now explain a simple construction of such a function $Q$. We start by observing that a 2-to-1 function is trivially a coset partition function. This is because any pre-image set, which is 2 points $x,x'$, is automatically a coset of the $1$-dimensional subspace $S := \{ 0^{n}, x \oplus x' \} \subseteq \bbZ_{2}^{n}$ with a constant shift of $x$ (or equivalently, a constant shift of $x'$). Also, the known quantum collision lower-bounds~\cite{JACM:AarShi04,QIC:Zhandry15} imply that a random 2-to-1 function is provably collision resistant. Now, just any random 2-to-1 function is not enough for us, but the following can be done.

We observe that a straightforward combination of several existing results proves that a random $2$-to-$1$ function, with the added property that it is shrinking by 1 bit, is also collision resistant. Given such shrinking $\ell$ i.i.d. random 2-to-1 functions $H_{1} : \{ 0, 1 \}^{n'} \rightarrow \{ 0, 1 \}^{n' - 1}$, $\cdots$, $H_{\ell} : \{ 0, 1 \}^{n'} \rightarrow \{ 0, 1 \}^{n' - 1}$, take $Q : \{ 0, 1 \}^{ n' \cdot \ell } \rightarrow \{ 0, 1 \}^{ n' \cdot \ell - \ell }$ to be the $\ell$-wise parallel application of them. The pre-image sets are then the direct sums of $\ell$ pre-image sets of the underlying 2-to-1 functions, and direct sums of cosets are cosets. Parallel application also preserves collision resistance. Putting everything together we set $n := n' \cdot \ell$, $r := n' \cdot \ell - \ell$, which proves the oracle-security of our construction, thereby proving Theorems~\ref{thm:main1} and~\ref{thm:main2}.

\subsection{Obfuscating PRPs \ifllncs\else(Section ~\ref{sec:prps})\fi}\label{sec:overviewprps}

Our next goal is to turn our oracle proof into a standard-model proof. The natural approach is, instead of providing oracles for the various functions $\Ps,\Ps^{-1},\Ds$, to provide obfuscations of these functions. To make the functions efficient, we will replace the random permutation $\Pi$ with a pseudorandom permutation, and the random choices of $\matA_y,\vecB_y$ with values generated by a pseudorandom function.

But we immediately run into a problem: There are simply no known techniques for proving the security of obfuscated pseudorandom permutations when using the standard notion of indistinguishability obfuscation (iO). This is because iO only provides a seemingly very weak guarantee: Functionally-equivalent programs are indistinguishable when obfuscated. In order to use iO, some of the program transformations actually need happen outside of the iO, which in turn requires other cryptographic techniques. Most of the iO literature follows the punctured programming approach~\cite{STOC:SahWat14}, which uses the notion of a ``punctured PRF'' \cite{CCS:KPTZ13,AC:BonWat13,PKC:BoyGolIva14}. These, very roughly, allow for giving out a program that either computes the PRF correctly, \emph{or} the program computes the PRF everywhere but a single (known) point, and for that one point the output is uniformly random. Security says that these two programs are indistinguishable.

\paragraph{Permutable PRPs.} Unfortunately, pseudorandom permutations provably cannot be punctured in the same sense as PRFs, as explained by \cite{TCC:BonKimWu17}. Roughly, the reason is that replacing the output at a single point with a uniform random value means the function is no longer a permutation.

We instead define the notion of a \emph{permutable} PRP. Here, given $k \in \{ 0, 1 \}^{\secp}$ a PRP secret key and some fixed permutation $\Gamma$ on $\{ 0, 1 \}^{n}$, it is possible to produce a circuit which computes either (1) $\prp(k,\cdot)$ and its inverse correctly, or (2) $\Gamma(\prp(k,\cdot))$ and its inverse. Security requires that (1) and (2) are indistinguishable, even if $\Gamma$ is known. This avoids the impossibility of~\cite{TCC:BonKimWu17} since we always maintain that the program being computed is a permutation. When used in iO proofs, permutable PRPs readily give that $\iO(\prp(k,\cdot))$ is computationally indistinguishable from $\iO(\Gamma(\prp(k,\cdot)))$. This even holds true if given obfuscated programs for the inverses, and even for more complicated programs that may query the permutations several times. Observe that a similar statement for puncturable pseudorandom functions readily follows from~\cite{TCC:CLTV15}, but that case inherently has no inverse.

\paragraph{Constructing Permutable PRPs for Neighbor Swaps.} For now we will focus on an extremely simple setting, where we restrict $\Gamma$ to be a swap  between $z \in \{ 0, 1 \}^{n}$ and $(z + 1) \in \{ 0, 1 \}^{n}$ for some value $z$, but leave all other points unaffected. We will call such $\Gamma$ a \emph{neighbor swap}.

Even for the simple case of a neighbor swap, a permutable PRP is non-trivial. An ``obvious'' choice is to take a PRP built from a pseudorandom \emph{function} (PRF), and instantiate the PRF with a puncturable PRF. But it is not at all clear a priori how puncturing the underlying PRF allows for swapping outputs. 

In fact, this strategy \emph{cannot} work in general. For example, we argue that it cannot work for Luby-Rackoff PRPs. To see this, we observe that in the setting where the domain size is polynomial, a neighbor-swap PRP actually is also a standard PRP. This is because the neighbor-swap property ensures that you can computationally undetectably swap $z$ and $z+1$ in the output truth table. Since any permutation on a polynomial-domain can be decomposed into a polynomial-length sequence of neighbor swaps, this says that you can apply a \emph{random} permutation un-detectably. But composing with a random permutation actually gives a truly random permutation, thus showing that the truth table of the PRP is computationally indistinguishable from the truth table of a random permutation. We can also scale this argument up, and show that if the neighbor-swap PRP is sub-exponentially secure, then it must be secure against a ``truth-table'' adversary that is provided the entire (exponential-sized) truth table. We then recall that standard PRP constructions such as Luby-Rackoff \emph{cannot} achieve security in the small-domain/ truth-table setting~\cite{AC:Patarin01}.

\begin{figure}
\centering\includegraphics[width=8cm]{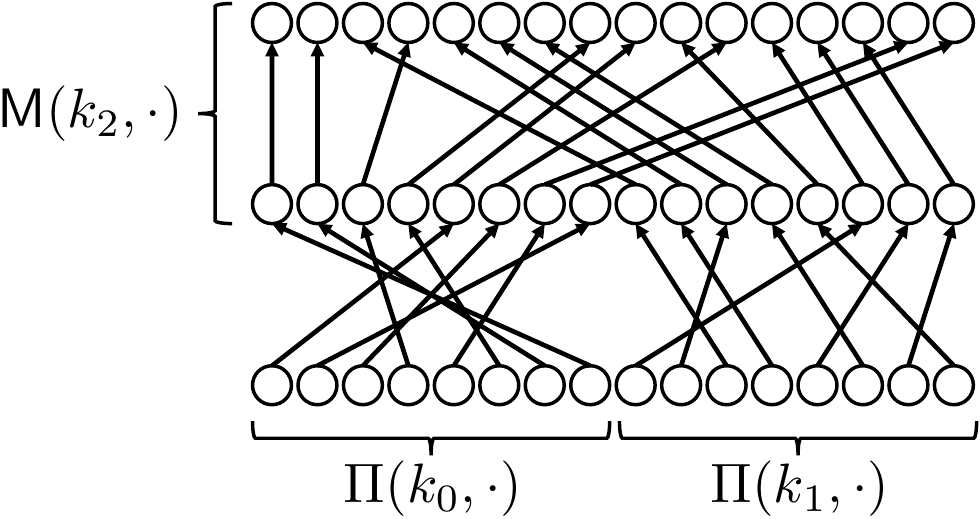}\hspace{2cm}
\caption{\label{fig:prp-constr} The PRP construction $\Pi(k,\cdot)$ of~\cite{FSE:GraPor07}. There are $2^n$ input nodes on the bottom, one per input, and $2^n$ output nodes. The left and right halves correspond to inputs starting with 0 and 1 respectively. Here, $k_0,k_1,k_2$ are keys pseudorandomly derived from $k$. Then $\merge\left( k_2, \cdot \right)$ is a pseudorandom object we call a Merge which preserves the order of each half but otherwise pseudorandomly scrambles them. $\Pi\left( k_0, \cdot \right)$ and $\Pi\left( k_1, \cdot \right)$ are then recursive calls to the construction on inputs of length $n-1$.}
\end{figure}

Guided by the above discussion, we look to the literature on small-domain permutations, specifically the work of~\cite{FSE:GraPor07}. While \cite{FSE:GraPor07} present their construction in a very procedural way involving ``permutators'', ``splitters'' and ``repartitors'', we will present the idea in a more conceptual way. We will then explain how our permuting algorithm executes neighbor swaps efficiently for any (possibly exponential) domain size. The first observation is that one can construct a random permutation as follows: Divide into two equal sized piles (those strings starting with 0 and 1, respectively), and then recursively, randomly permute each pile independently. Then randomly merge the two piles together; this step preserves the order in each pile (which were already shuffled, so this is fine), but randomly determines how the two piles are interleaved. It is not hard to see that every permutation uniquely corresponds to a triple containing a merge (determining the interleaving) and two recursive permutations. Thus this process perfectly simulates a random permutation. See Figure~\ref{fig:prp-constr}.

It turns out that permutations structured in this way can be evaluated efficiently, even for an exponential-size domain. To evaluate on a point, we only need to efficiently evaluate \emph{one} of the recursive permutations for the pile it belongs to, and then the merge operation (which we simply call a Merge). As there will be $n$ levels of the recursion and each level only makes a single recursive call, this means that we can evaluate the overall permutation using $n$ Merge evaluations. We show how to compute the Merge operation efficiently using a random data-structure we call a tally tree. A tally tree is a full binary tree on $2^n$ nodes, where leaf $z$ corresponds to an \emph{output} of the Merge, and is given the value of the first bit of the pre-image of $z$. Then the internal nodes count the total number of 1's in the leaves of the sub-tree rooted at that node. An example tally tree $T$ for a marge $\merge$ is given in Figure~\ref{figure:tally}.
\begin{figure}
\centering\includegraphics[width=8cm]{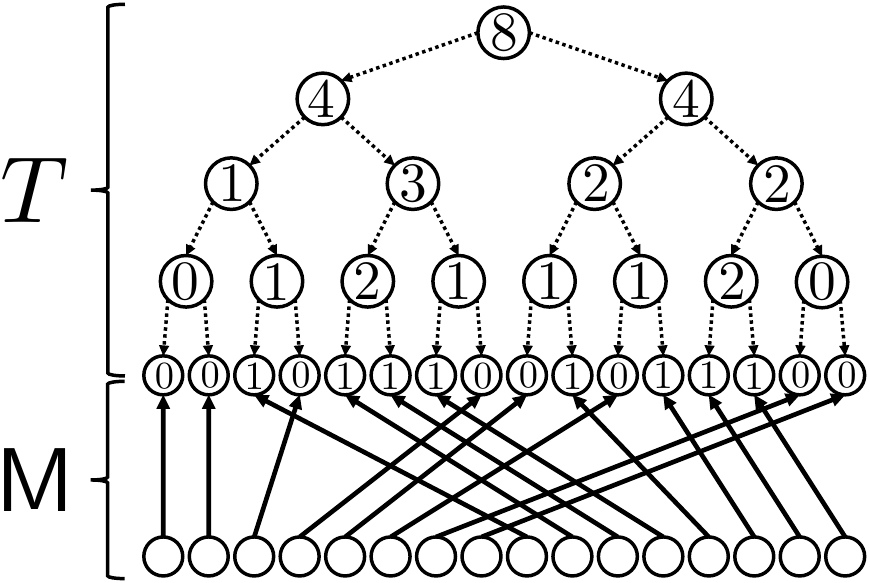}
\caption{\label{figure:tally} A merge $\merge$ and the associated tally tree $T$, for the case $N_0=N_1=8$, $N=16$.}
\end{figure}

Using a tally tree we can efficiently evaluate the Merge. We start by describing the inverse: since the Merge preserves the order within each half, we can compute the inverse of $z$ just knowing two pieces of information: (1) which half the pre-image of $z$ belongs to, and (2) how many nodes to the left of $z$ go to the same half. This can be computed easily using the tally tree. Then the forward direction can be computed using a binary search algorithm which makes queries to the inverse and again exploits the order-preserving property.

The next observation is, rather than choosing a random merge and computing the tree in a bottom-up manner, we can actually sample the tree in a top-down manner. In this view, the value in each node follows an appropriate hypergeometric distribution based on the value of its parent. Then the random choices can be simulated using a pseudorandom function (PRF). The values in the tree are left implicitly determined by the PRF key, to be computed on the fly as needed during evaluation (which only visits a polynomial-sized portion of the tree since evaluation is efficient). The result is that keys can be small (independent of the domain size), and evaluation takes time poly-logarithmic in the domain size.

We start with this construction, and first show that we can permute $z$ and $z+1$ by either permuting the Merge or by permuting one of the recursive calls to $\prp$. In particular, if the pre-images of $z$ and $z+1$ lie in different halves of the domain, then we can permute within the Merge. If the pre-images lie in the same half, then we cannot permute the Merge since this would violate the ordering of elements in that half. In this case, however, the order-preserving property of the Merge guarantees that the pre-images of $z,z+1$ are adjacent, and so we can instead permute within the recursive $\prp$ call for that half. See Figure~\ref{fig:prp-permute}.
\begin{figure}
\centering\begin{tabular}{r}\includegraphics[width=7.7cm]{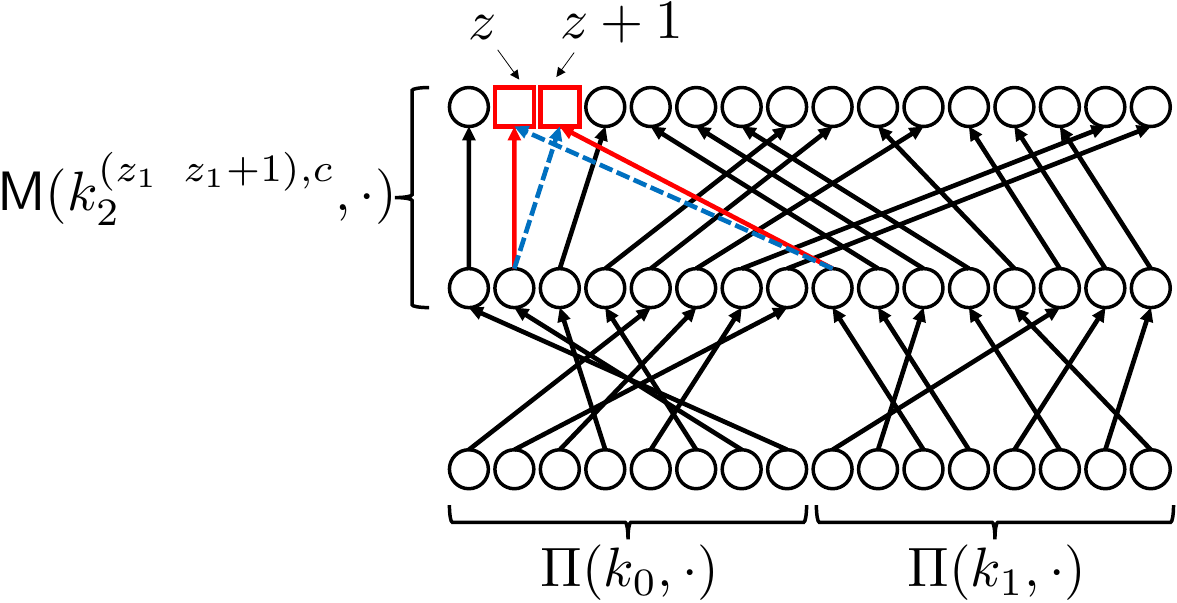}\ifllncs\\\else\hspace{1cm}\fi\includegraphics[width=6.3cm]{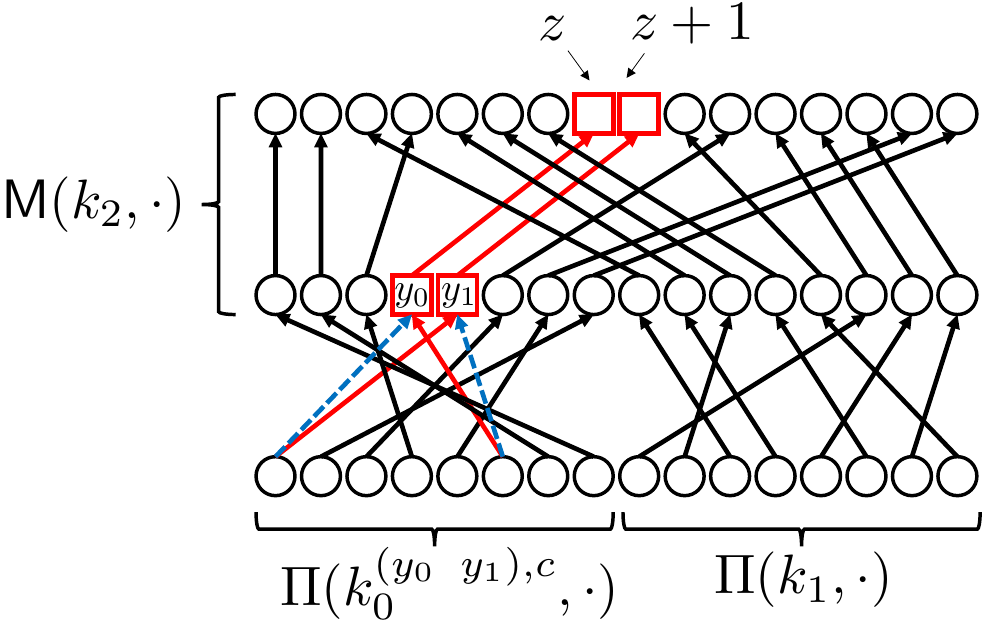}\end{tabular}\hspace{2.5cm}
\caption{\label{fig:prp-permute} Our permuting algorithm. Here, $k^{\neighborswap{z},c}$ means the permuted key, where $c=0$ means no swapping occurs, and $c=1$ means $z$ and $z+1$ are swapped. Red squares indicate the points $z,z+1$, red solid arrows indicate the original permutation ($c=0$), and blue dashed lines indicate the permuted permutation ($c=1$). Let $y_0,y_1$ be the pre-images of $z,z+1$ in the Merge.\ifllncs Top: \else Left: \fi The case where $y_0, y_1$ lie in different halves, and we permute $z,z+1$ by permuting the merge $\merge$.\ifllncs Bottom: \else Right: \fi The case where $y_0,y_1$ lie in the same half, and we permute $z,z+1$ by permuting the recursive application of $\prp$, as indicated by the red squares.}
\end{figure}

Thus, we have reduced the task of permuting $\Pi$ to permuting the Merge, and we only need to concern ourselves with the case where the pre-images of $z,z+1$ lie in different halves. If we look at the leaf nodes $z,z+1$ in the tally tree, since they have pre-images in different halves, this means one of the nodes has a value of 0 and the other a value of 1. Moreover, permuting $z, z+1$ corresponds to exchanging which is a 0 and which is a 1. 

What we show is that such permuting can be accomplished by puncturing the underling pseudorandom function (PRF) that is used to generate the tally tree. We use a puncturable PRF that can be punctured at several points, resulting in those points being replaced with random values. Concretely, we puncture at the nodes $z,z+1$, as well as all nodes on the paths from these nodes to the root, and also the siblings of those nodes. By analyzing the induced hypergeometric distributions, we conclude after puncturing that the values at $z,z+1$ are actually statistically equally likely to be 0 or 1. Thus, we can swap their values without detection. See Figure~\ref{fig:puncturing} for an example.
\begin{figure}
\centering\includegraphics[width=7.5cm]{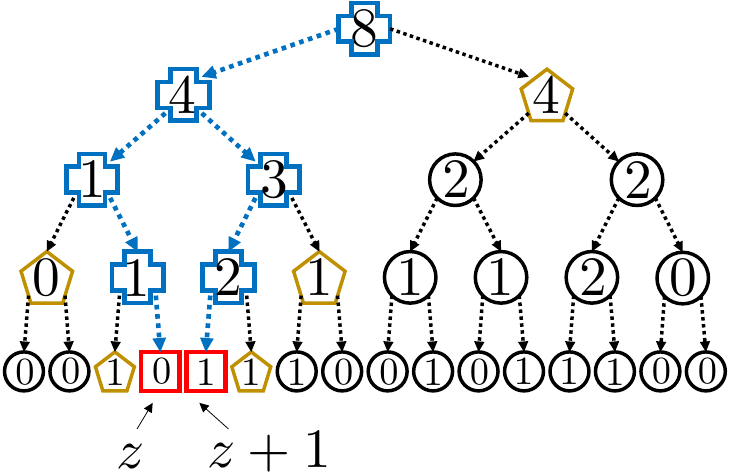}\hspace{1cm}\includegraphics[width=7.5cm]{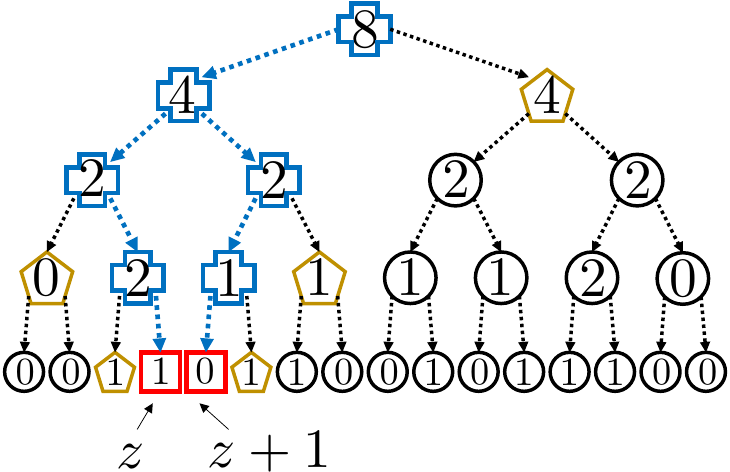}
\caption{\label{fig:puncturing}\ifllncs Top: \else Left: \fi The tally tree for a permuted merge key $k^{\neighborswap{z},0}$, where no swapping happens. The red squares are the positions to be swapped $z,z+1$. The blue crosses are the paths from root to these nodes and the gold pentagons are the siblings of these paths; these nodes have all been punctured.\ifllncs Bottom: \else Right: \fi The tally tree for the permuted merge key $k^{\neighborswap{z},1}$. This is identical to the\ifllncs top \else left \fi tree, except that we swapped the bits at $z$ and $z+1$, and accordingly updated the tallies in the paths to $z,z+1$ (the blue crosses). The gold pentagons and all their descendants remain unchanged.}
\end{figure}

\begin{remark}\cite{FSE:GraPor07} is not particularly efficient from a practical perspective, due to having to sample from hypergeometric distributions. Other works have given more efficient constructions of small-domain PRPs~\cite{STOC:Morris05,C:HoaMorRog12,C:RisYil13,EC:MorRog14}, but it is unclear if these also give permutable PRPs when instantiated with a puncturable PRF.
\end{remark}

\paragraph{Extending to More General Permutations.} Neighbor swaps are not sufficient for most applications. We therefore explain how to extend the construction to handle much more general permutations $\Gamma$.

The idea is simple, and builds upon the intuition that neighbor swaps are enough to generate all permutations. To get the circuits computing $\prp(k,\cdot)$ or $\Gamma(\prp(k,\cdot))$, we simply obfuscate those programs using iO. To show that the two cases are indistinguishable, we decompose $\Gamma$ into a sequence of (exponentially-many) neighbor swaps, and perform a sequence of hybrids where we apply one neighbor-swap at a time.

There are two caveats to this approach. One is that we will need exponentially-many hybrids, and therefore we must rely on sub-exponentially-secure iO. The underlying puncturable PRF must also be sub-exponentially secure, which is implied by sub-exponential one-way functions.

The more challenging caveat is the following. If we decompose $\Gamma$ as $\Gamma = \tau_N \circ \tau_{N-1} \circ \cdots \circ \tau_1$ for neighbor-swaps $\tau_i$, we actually require that each partial permutation $\Gamma_t := \tau_N \circ \tau_{N-1} \circ \cdots \circ \tau_t$ is computable by a small circuit. These need to be small because, at the appropriate hybrid, they will be hard-coded into the program that gets obfuscated. To maintain security given the obfuscated inverse circuits, we also need each $\Gamma_t^{-1}$ to have small circuits. But in general, even if $\Gamma, \Gamma^{-1}$ have small circuits, it is not obvious that $\Gamma$ can be decomposed in a way which guarantees that each $\Gamma_t, \Gamma_t^{-1}$ have smalls circuit. This leads to our Efficient Permutation Decomposition (EPD) Problem (Question~\ref{question:decompose}), which asks whether such a guarantee is possible in general.\footnote{Note that Question~\ref{question:decompose} was about decomposing into transpositions rather than neighbor swaps. But as\ifllncs shown in the full version \cite{shmueli2025one} \else we will see\fi, transpositions can be decomposed into neighbor swaps, and thus the two versions are equivalent.}

We do not know how to resolve the EPD problem in general. Instead, we define a notion of ``decomposable'' circuits, which are $\Gamma$ that can be decomposed into a (potentially exponentially-long) sequence in such a way that the $\Gamma_t,\Gamma_t^{-1}$ all have small circuits.\footnote{Note that there may be many possible decompositions into neighbor-swaps, and some decompositions may give small circuits while others may not.} Fortunately, decomposable circuits capture many natural types of permutations, such as general (non-neighbor) transpositions, linear cycles $j\mapsto j+1\mapsto \cdots \mapsto \ell-1\mapsto\ell\mapsto j$, involutions (i.e., such that $\Gamma\circ\Gamma$ is the identity), and more. See Figure~\ref{fig:decomposable} for a longer list of examples of decomposable permutations, which are shown to be decomposable in\ifllncs the full version of this work\else Section \ref{sec:prps}\fi. Our proof shows that an obfuscated neighbor-swap PRP is a permutable PRP for all decomposable circuits. This gives the informal Theorem~\ref{thm:main5}.

\begin{figure}
\centering \begin{tabular}{l}
Transpositions $i\leftrightarrow j$.\\
Linear cycles $j\rightarrow j+1\rightarrow j+2\rightarrow\cdots\rightarrow\ell\rightarrow j$.\\
Affine mod-2 permutations $\xv\rightarrow \matA\cdot\xv+\vecV\bmod 2$ if $\det(A)\bmod 2=1$.\\
Products $(x,y)\rightarrow(\Gamma_0(x),\Gamma_1(x))$ for decomposable $\Gamma_0,\Gamma_1$.\\
Compositions $\Gamma=\Gamma_0\circ\Gamma_1$ for decomposable $\Gamma_0,\Gamma_1$.\\
Controlled permutations $(x,y)\rightarrow (x,\Gamma_x(y))$ for decomposable $\Gamma_x$.\\
Efficient involutions ($\Gamma\circ\Gamma=\mathbf{I}$).\\
Conjugations $\Lambda^{-1}\circ\Gamma\circ\Lambda$ for decomposable $\Gamma$ and efficient $\Lambda,\Lambda^{-1}$.\\
Permutations with ancillas $(x,0^{n})\rightarrow (\Gamma(x),0^{n})$ for efficient $\Gamma,\Gamma^{-1}$.
\end{tabular}
\caption{\label{fig:decomposable} A non-exhaustive list of decomposable permutations. Note that for conjugations, $\Lambda$ need not be decomposable. For ancillas, $\Gamma:\{0,1\}^n\rightarrow\{0,1\}^n$ need not be decomposable, but the permutation $(x,0^{n})\rightarrow (\Gamma(x),0^{n})$ is only a partial function that is not specified on the rest of the domain. The permutable PRP domain in this case is the entire space including ancillas.}
\end{figure}

\begin{remark}Interestingly, despite showing that a wide range of permutations are decomposable, even for very simple permutations, decompositions can be tricky. For example, $x\mapsto 2x\bmod N$ for odd $N$ is decomposable, though our proof uses the fact that the map is a merge with an efficiently computable tally tree, and shows that such tally trees allow for decomposition. This generalizes to $x\mapsto ax\bmod N$ for \emph{polynomial} $a$. But the only way we know how to handle general linear maps requires decomposing $a$ into the product ($\bmod \: N$) of polynomially-bounded scalars. This in turn requires that such polynomial scalars generate $\Z_N^*$. This is implied by the Extended Riemann Hypothesis~\cite{Bach90}. However, we leave a general unconditional decomposition, even for scalar multiplication, as an intriguing open question.
\end{remark}

\paragraph{Applications.} We show that permutable PRPs can be very useful in obfuscation applications.\footnote{As we describe it, our applications will obfuscate a permutable PRP, where our permutable PRPs are built using obfuscation. We can equivalently simply obfuscate the underlying neighbor swap PRP, and all the hybrids in the proof of security for the permutable PRP can instead be carried out at the level of the application. However, for conceptual simplicity, we believe it is beneficial to describe our results using the permutable PRP abstraction.} Concretely, we show that obfuscating a permutable PRP for general transpositions (which in particular are decomposable) immediately gives a \emph{full-domain} trapdoor one-way permutation, yielding Theorem~\ref{thm:main6}. The proof goes through several hybrids:
\begin{itemize}
    \item Hybrid 0: The adversary is given an obfuscation of $\prp(k,\cdot)$ and a challenge point $y^*$, and has to find $x^*$ such that $y^* = \prp(k, x^*)$.
    
    \item Hybrid 1: We choose a random output $y'$, and ``puncture'' the program so if $\Pi(k,x)=y'$, the program outputs $y^*$ instead of $y'$. Otherwise the program is unchanged. This means there are now two points $x^*,x'$ that map to $y^*$. This change follows standard iO techniques.
    
    \item Hybrid 2: We use permuted key $k^{\transposition{x^*}{x'},0}\gets\permute\left( k, \transposition{x^*}{x'}, 0 \right)$ but where the transposition $\transposition{x^*}{x'}$ is still turned off. Recall per Figure~\ref{fig:decomposable} that transpositions are decomposable. We use the permuted key instead of the real key $k$ for all purposes of this hybrid. Since these are functionally equivalent, by the security of the iO, the transition to this hybrid is computationally undetectable. 
    
    \item Hybrid 3: Now we compose the PRP with the transposition swapping $x^*,x'$. That is, we sample $k^{\transposition{x^*}{x'},1}\gets\permute\left( k, \transposition{x^*}{x'}, 1 \right)$ and use it instead of $k^{\transposition{x^*}{x'},0}$, which was used in the previous hybrid. By the security of the permutable PRP this change is computationally indistinguishable.
    Note that, importantly we transpose $x^*$ and $x'$ both within the obfuscated program and also to verify the adversary's output (where we check that the circuit's output equals $y^*$). This swap means that now the adversary has to find $x'$ instead of $x^*$. That is, after we move to this hybrid, we can make another change to how we check the success of this hybrid: Take the original PRP key $k$ and simply check that $y' = \prp\left( k, x \right)$.
    
    \item Hybrid 4: Now we move to using the original key $k$ inside the obfuscated circuit instead of the permuted key $k^{\transposition{x^*}{x'},1}$. Note that the functionality of the outside circuit (which is also obfuscated by iO) is unchanged because it is still the case that on both, $x^*$, $x'$ it outputs $y^*$. This change is computationally indistinguishable by the security of the iO.

    \paragraph{Finalizing the reduction.}
    In the setting of the last hybrid experiment, we sample a random PRP key $k$ and a uniformly random $y'$. We then obfuscate the circuit $P_1$, which computes the PRP with key $k$, checks if the output was $y'$, and if so outputs $y^*$ instead of $y'$ (or in other words, does something unrelated to the original problem). The adversary then finds $x'$ which eventually leads to finding $y'$. By standard iO techniques, finding such uniformly random $y'$ is computationally intractable, even given the key $k$, the value $y^*$ and the obfuscation of the program $P_{1}$.
\end{itemize}

An elaborated formal proof is found in \ifllncs{the full version \cite{shmueli2025one}}\else{Section \ref{sec:prps} (Theorem \ref{theorem:owp_from_io})}\fi.

\subsection{OSS In the Standard Model\ifllncs\else (Section \ref{sec:standard})\fi} \label{sec:overviewstandard}

We finally turn to constructing OSS in the standard model. The idea is to simply replace the random permutation $\Pi$ with a permutable PRP $\prp(k,\cdot)$ and to have the matrix/vector pairs $\matA_y,\vecB_y$ be generated pseudorandomly from a (puncturable) pseudorandom function. Fortunately, the preceding sections have already set up most of the ideas we need for the proof, as our sequence of reductions for proving our oracle construction actually have natural cryptographic analogs. 

For our random self-reduction, we need to argue that the simulated obfuscated program is indistinguishable from the real distribution. In the oracle model, we have perfect indistinguishability. In the standard model case, we no longer have perfect indistinguishability since the simulated permutation $\Pi$ is now the composition of two permutable PRPs, and the matrix/vector pairs $\matA_y,\vecB_y$ are derived from underlying terms $\overline{\matA}_y,\overline{\vecB}_y,\matC_y,\vecD_y$, each of which are pseudorandomly generated. Fortunately, a standard hybrid over every $y$, utilizing punctured PRF security lets us move to the correct distribution over $\matA_y,\vecB_y$, and permutable PRP security lets us replace the composition of two permutable PRPs with a single PRP.

For dual-bloating, we follow the approach of \cite{EC:Zhandry19b}, which shows how to increase the size of of obfuscated subspaces using iO and one-way functions. We can likewise eliminate the dual as in the oracle case, as our argument works just as well in the standard model. Finally, we can embed a coset-partition function just as we did in the oracle case, but now using our computational random self-reduction.

There are, however, two main hurdles to getting everything to work. The first is that, in order to show that the obfuscation of the composition of two permutable PRPs is indistinguishable from a single permutable PRP, we need our permutable PRPs itself to be decomposable. In \ifllncs{the full version \cite{shmueli2025one} }\else{Theorem \ref{thm:op-prp} }\fi we construct permutable PRPs by our permuting algorithm, while carefully also maintaining their decomposability. This allows for overall composition between permutable PRPs.

Another caveat is that we need a post-quantum collision-resistant 2-to-1 function $L$, since we can no longer rely on the unconditional existence relative to an oracle. It turns out that we also need this $L$ to have some extra properties, that were not present in the oracle case. Specifically, when we showed that embedding $Q$ (the parallel repetition of $L$) results in a valid instance of the oracles $\Ps,\Ps^{-1}$, we actually argued that this means there \emph{exists} an implicit permutation $\Pi$ determined by $Q$. Very roughly, since $Q$ has collisions it therefore loses information, and $\Pi$ outputs $Q$ plus some additional bits to recover the lost information. For the oracle proof, we only needed the existence of $\Pi$ to argue that the simulated oracles had the same distribution as the real oracles. But for our standard-model argument, we will actually have a hybrid where $\Pi$ is hard-coded into the obfuscated program. That means we need $\Pi$ to have a small circuit. Unfortunately, $\Pi$ essentially requires inverting $Q$, and so in general is computationally inefficient. Note that the inefficient circuit is not needed in normal usage, but just in the proof.

As a result, we need the function $Q$ to have a number of specific properties, which mainly boil down to having $L$ a collision-resistant 2-to-1 function that has a trapdoor. Fortunately, we can construct such ``trapdoor'' 2-to-1 hash functions from lattices, following a similar approach as the claw-free trapdoor functions built from LWE~\cite{FOCS:BCMVV18}.

However, focusing on the LWE-based hash functions leads into the third and fourth caveats. For the third: The range of the LWE-based hash function is actually larger than the domain. In contrast, for our oracle construction described above, the function $Q$ needs to be surjective in order for filling in the lost information to actually yield a permutation. But this means the range of $Q$ must be smaller than the domain since it is many-to-one. This also presents a problem for invoking permutable PRP security, since permutable PRPs allow for composing with \emph{permutations}, but our LWE-based hash function is no longer a permutation.

Fortunately, we can expand the range by padding $y$ with 0's, and composing the output of the construction with \emph{another} permutable PRP. This leads to our ultimate construction, which we describe visually in Figure~\ref{figure:standard-constr}. We can also split up the evaluation of $Q$ into multiple stages, where each stage is a (decomposable) permutation which allows us to invoke permutable PRP security.

\begin{figure}
\centering\includegraphics[width=10cm]{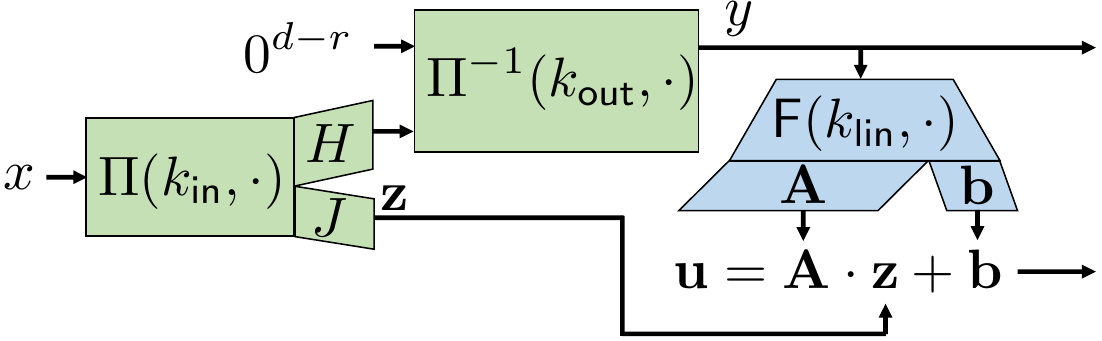}
\caption{\label{figure:standard-constr} The program $P$ that gets obfuscated to give the program $\Ps$. The programs $P^{-1},D$ are defined accordingly. Here, $\Pi,\Pi^{-1}$ is a permutable PRP, and $\prf$ is a puncturable pseudorandom function. The output of $H$ is $r$ bits, $\zv$ is $n-r$ bits, and $y$ is $d$ bits.}
\end{figure}

The fourth caveat is that the LWE-based hash function $L$ is really only 2-to-1 on an overwhelming fraction of the domain, but it is easy to devise points where the function is only 1-to-1 on those points. Intuitively, this makes the parallel repetition $Q$ of $L$ be, accordingly, only an \emph{approximate} coset partition function (i.e., it is not the case that for every output point of $Q$, the preimage set is a coset of the same dimensions). This breaks the straightforward iO proof: When we embed $Q$ into the programs, the programs will actually not have the same functionality due to the 1-to-1 points. This prevents us from naively using iO, which requires functionally equivalent programs.

To account for this, we add a trigger at all the sparse ``bad'' 1-to-1 points, and if this trigger occurs, the program behaves differently. We could try to embed this trigger in the original construction, but it is not clear if we would be able to carry out all of our re-randomization steps; even if it could work, it would significantly complicate each of the steps. Instead, we add the trigger only in the proof. Standard iO techniques allow for adding sparse triggers on \emph{random} points that are otherwise independent of the program, but not fixed triggers like we need.\footnote{Observe that the ``bad'' points are fixed once the hash function is chosen.} Here, we use our permutable PRPs again. We show that, as long as the fixed trigger is sandwiched between two permutable PRPs (which they are, for our case, because we are already composing another PRP on the output) and meets some other technical conditions, we can un-detectably add the trigger. See \ifllncs{the full version \cite{shmueli2025one} }\else{Lemma \ref{lemma:intervaltrigger} in Section \ref{sec:prps} }\fi for a precise statement, and Figure \ref{figure:sparsetrigger} for a visual representation of the kinds of program changes which we can make. 

Our proof of the sparse trigger works as follows. To trigger at a single \emph{fixed} point $y$, we actually first add a trigger at a \emph{random} point $y'$. Adding random triggers follows from standard iO techniques. Then, we use the permutable PRP property to exchange $y$ and $y'$, meaning now the trigger occurs at $y$. This step requires a permutable PRP both before and after the trigger, since after the trigger, we need to return $y$ and $y'$ to their original values. With a bit more work, we can extend this to interval triggers, as long as the interval is at most a sub-exponential fraction of the domain. 

Once we add the trigger, then embedding $Q$ actually yields an equivalent program and allows the proof to go through. By piecing all of the steps together, we obtain Theorems~\ref{thm:main3} and~\ref{thm:main4}.

\begin{figure}
\centering\includegraphics[width=10cm]{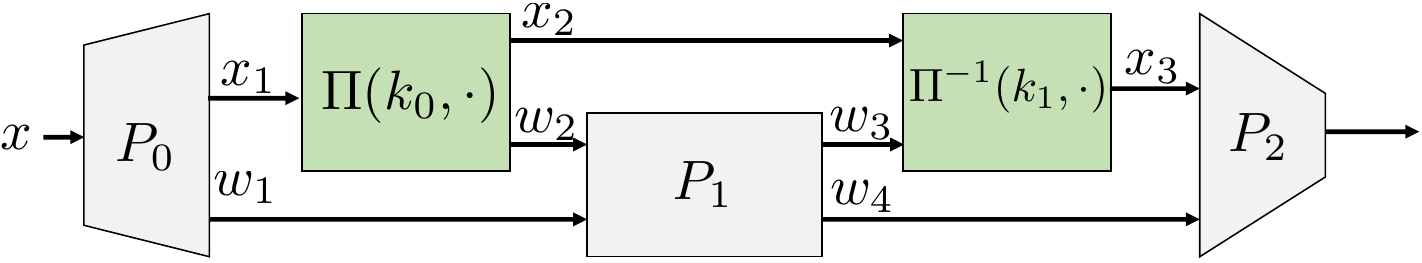}\\\vspace{0.5cm}\includegraphics[width=10cm]{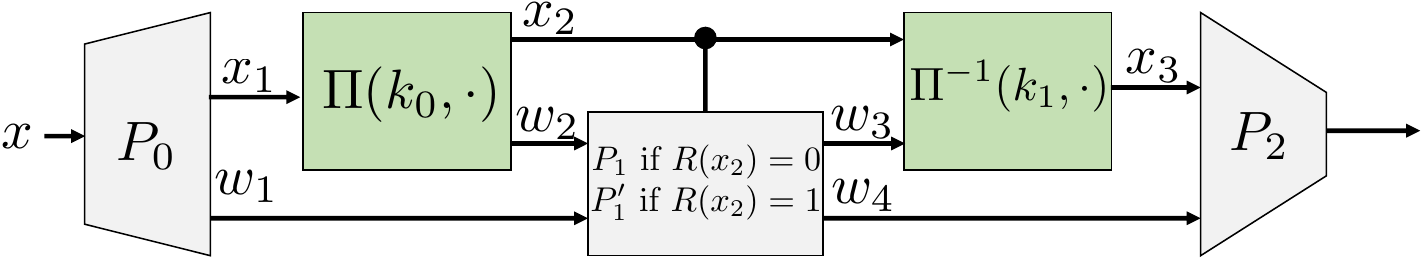}
\caption{\label{figure:sparsetrigger} The original program $P$ (top) and triggered program $P'$ (bottom). Here, $P_0,P_1,P_1',P_2$ are arbitrary programs as long as they are independent of the keys $k_0,k_1$. Our indistinguishability result shows that for the case where $R(x_2)$ tests if $x_2$ (or some component thereof) is in a fixed sparse interval, then obfuscating these two programs yields computationally indistinguishable programs. Note that $R$ is fixed and known, but due to the application of the permutable PRPs, the points that hit the trigger are computationally hidden once obfuscated.}
\end{figure}

\ifllncs
\else
\begin{remark}\label{remark:other} One may wonder if there are other instantiations of the needed 2-to-1 collision-resistant hash function. Aside from needing the structure of being 2-to-1, another requirement that limits instantiations is the apparent need for a trapdoor. This seems to prevent us from using hash functions based on LPN~\cite{EC:BLVW19,AC:YZWGL19} or super-singular isogeny graphs~\cite{JC:CLG09}. One can always make a some-what tautological assumption that obfuscating the function which discards the last bit of a (permutable) PRP gives such a function (the trapdoor being the un-obfuscated key), though we do not know how to prove this based on any standard assumption. Another possibility is to look at constructions from group actions such as~\cite{alamati2022candidate}, which can plausibly be post-quantum instantiated using \emph{ordinary} isogenies over elliptic curves. Their construction has a trapdoor and can be split up into decomposable parts analogous to the LWE-based construction, and has \emph{most} of the domain being 2-to-1. However, a problem is that a non-negligible fraction of the domain is 1-to-1. This breaks the step where we embed a trigger into the obfuscated program, since the trigger is no longer sparse. An interesting open question is whether their construction can be modified so that an exponentially-small fraction of the domain is 1-to-1. 
\end{remark}
\fi

\ifllncs
\bibliographystyle{alpha}
\bibliography{bib,abbrev0,crypto}
\appendix
\fi

\ifllncs
\else

\section{Cryptographic Tools} \label{sec:cryptodefs}
We present the general cryptographic tools and techniques which are used in this work. Some of the results in this section are new and may be of independent interest.

\paragraph{Cryptographic Preliminaries.}
We use the following known primitives and notions. Both are implicitly classical primitives with security holding against quantum algorithms.

\begin{definition} [Puncturable PRFs] \label{def:pprf}
A puncturable pseudorandom function (P-PRF) is a pair of efficient algorithms $\left( \prf, \punc, \eval \right)$ with associated output-length function $m\left( \lambda \right)$ such that:
\begin{itemize}

  \item $\prf : \{0,1\}^\lambda \times \{0,1\}^* \rightarrow \{0,1\}^{m\left( \lambda \right)}$ is a deterministic polynomial-time algorithm.
  
  \item $\punc\left( k, S \right)$ is a probabilistic polynomial-time algorithm which takes as input a key $k \in \{0,1\}^\lambda$ and a set of points $S \subseteq \{0,1\}^*$. It outputs a punctured key $k^S$.
  
  \item $\eval\left( k^S, x \right)$ is a deterministic polynomial-time algorithm.
  
  \item {\bf Correctness:}
  For any $\lambda \in \Nat$, $S \subseteq \{ 0, 1 \}^*$, $k \in \{0,1\}^\lambda$, $x \notin S$, and $k^S$ in the support of $\punc\left( k, S \right)$, we have that $\eval\left( k^S, x \right) = \prf\left( k, x \right)$.
  
  \item {\bf Security:}
  For any quantum polynomial-time algorithm $\As$, there exists a negligible function $\epsilon$ such that the following experiment with $\As$ outputs 1 with probability at most $\frac{1}{2} + \epsilon\left( \lambda \right)$:
  \begin{itemize}
    \item $\As(1^\lambda)$ produces a set $S \subseteq \{0,1\}^*$.
    
    \item The experiment chooses a random $k\gets\{0,1\}^\lambda$ and computes $k^S \gets \punc\left( k, S \right)$. For each $x\in S$, it also sets $y_x^0 := \prf\left( k, x \right)$ and samples $y_x^1 \gets \{0,1\}^{m(\lambda)}$ uniformly at random. Then it chooses a random bit $b$. It finally gives $k^S, \{\left( x, y_x^b \right) \}_{ x \in S }$ to $\As$.
    
    \item $\As$ outputs a guess $b'$ for $b$. The experiment outputs 1 if $b'=b$.
  \end{itemize}
  
\end{itemize}
\end{definition}

\paragraph{Different security levels.}
For arbitrary functions $f_{0}, f_{1} : \Nat \rightarrow \Nat$, we say that the P-PRF is $\left( f_{0}, \frac{1}{f_{1}} \right)$-secure if in the above the security part of the definition, we ask that the indistinguishability holds for every adversary of size $\leq f_{0}(\secp)$ and we swap $\epsilon(\secp)$ with $\frac{1}{ f_{1}(\secp) }$. Concretely, a sub-exponentially secure P-PRF scheme would be one such that there exists a positive real constant $c > 0$ such that the scheme is $\left( 2^{\lambda^c}, \frac{1}{2^{\lambda^c}} \right)$-secure.

\begin{definition} [Indistinguishability Obfuscation (iO)] \label{def:iO}
An indistinguishability obfuscator (iO) for Boolean circuits is a probabilistic polynomial-time algorithm $\iO\left( \cdot, \cdot, \cdot \right)$ with the following properties:
\begin{itemize}

  \item {\bf Correctness:}
  For all $\lambda, s \in \Nat$, Boolean circuits $C$ of size at most $s$, and all inputs $x$ to $C$,
  \[
  \Pr\left[
  \Obf_{C}(x) = C(x) \: : \: \Obf_{C} \leftarrow \iO\left( 1^\lambda, 1^s, C \right)
  \right] = 1
  \enspace .
  \]

  \item {\bf Security:}
  For every polynomial $\poly(\cdot)$ there exists a negligible function $\epsilon$ such that the following holds.
  Let $\secp, s \in \Nat$, and let $C_{0}$, $C_{1}$ two classical circuits of (1) the same functionality (i.e., for every possible input they have the same output) and (2) both have size $\leq s$.
  $$
  \biggl\{
  \Obf_{C_{0}} \: : \: \Obf_{ C_{0} } \leftarrow \iO\left( 1^\lambda, 1^s, C_{0} \right)
  \biggr\}
  $$
  $$
  \approx_{ \left( \poly(\secp), \epsilon(\secp) \right) }
  \biggl\{
  \Obf_{C_{1}} \: : \: \Obf_{ C_{1} } \leftarrow \iO\left( 1^\lambda, 1^s, C_{1} \right)
  \biggr\}
  \enspace .
  $$ 
\end{itemize}
\end{definition}

\paragraph{Different security levels.}
For arbitrary functions $f_{0}, f_{1} : \Nat \rightarrow \Nat$, we say that an iO scheme is $\left( f_{0}, \frac{1}{f_{1}} \right)$-secure if in the above the security part of the definition, we swap $\poly$ with a concrete function $f_{0}$ and the negligible function with the concrete $\frac{1}{f_{1}}$. Concretely, a sub-exponentially secure iO scheme would be one such that there exists a positive real constant $c > 0$ such that the scheme is $\left( 2^{\lambda^c}, \frac{1}{2^{\lambda^c}} \right)$-secure.

\begin{definition} [Lossy Functions] \label{definition:lf}
A lossy function (LF) scheme consists of classical algorithms $(\LFGen$, $\LFF)$ with the following syntax.

\begin{itemize}
    \item
    $\pk \gets \LFGen\left( 1^\secp, b, 1^{\ell} \right)$: a probabilistic polynomial-time algorithm that gets as input the security parameter $\secp \in \Nat$, a bit $b$ and a lossyness parameter $\ell \in \Nat$, $\secp \geq \ell$. The algorithm outputs a public key.

    \item
    $y \gets \LFF\left( \pk, x \right)$: a deterministic polynomial-time algorithm that gets as input the security parameter $\secp \in \Nat$, the public key $\pk$ and an input $x \in \{ 0, 1 \}^\secp$ and outputs a string $y \in \{ 0,1 \}^{m}$ for some $m \geq \secp$.
\end{itemize}

The scheme satisfies the following guarantees.

\begin{itemize}
    \item {\bf Statistical Correctness for Injective Mode:}
    There exists a negligible function $\negl(\cdot)$ such that for every $\secp, \ell \in \Nat$, 
    $$
    \Pr_{
    \pk \gets \LFGen\left( 1^\secp, 0, 1^{\ell} \right)
    }
    \left[
    \Big| \Img\left( \LFF\left( \pk, \cdot \right) \right) \Big| = 2^\secp
    \right]
    \geq
    1 - \negl(\secp) \enspace .
    $$

    \item {\bf Statistical Correctness for Lossy Mode:}
    There exists a negligible function $\negl(\cdot)$ such that for every $\secp, \ell \in \Nat$, 
    $$
    \Pr_{
    \pk \gets \LFGen\left( 1^\secp, 1, 1^{\ell} \right)
    }
    \left[
    \Big| \Img\left( \LFF\left( \pk, \cdot \right) \right) \Big| \leq 2^\ell
    \right]
    \geq
    1 - \negl(\secp) \enspace .
    $$
    
    \item {\bf Security:}
    For every polynomial $\poly(\cdot)$ there exists a negligible function $\epsilon$ such that the following holds.
    Let $\secp, \ell \in \Nat$, then (note that in the following computational indistinguishability, the security parameter is $\ell$ and not $\secp$),
    $$
    \biggl\{
    \pk_{0} \: : \: \pk_{0} \gets \LFGen\left( 1^\lambda, 0, \ell \right)
    \biggr\}
    $$
    $$
    \approx_{ \left( \poly(\ell), \epsilon(\ell) \right) }
    \biggl\{
    \pk_{1} \: : \: \pk_{1} \gets \LFGen\left( 1^\lambda, 1, \ell \right)
    \biggr\}
    \enspace .
    $$ 
\end{itemize}

\end{definition}

\paragraph{Different security levels.}
For arbitrary functions $f_{0}, f_{1} : \Nat \rightarrow \Nat$, we say that an LF scheme is $\left( f_{0}, \frac{1}{f_{1}} \right)$-secure if in the above the security part of the definition, we swap $\poly$ with a concrete function $f_{0}$ and the negligible function with the concrete $\frac{1}{f_{1}}$. Concretely, a sub-exponentially secure LF scheme would be one such that there exists a positive real constant $c > 0$ such that the scheme is $\left( 2^{\ell^c}, \frac{1}{2^{\ell^c}} \right)$-secure.

\subsection{Two iO Techniques}
Here, we recall two standard iO techniques that we will abstract as useful lemmas.

\paragraph{Sparse Random Triggers.}
Let $P$ be some program and $P'$ an arbitrary different program. Let $R$ be a function with range $[N]$ for some $N$ that is exponential in the security parameter. Let $J_y$ (for `join') be the following program:
$$
J_y(x) =
\begin{cases}
P(x) \text{ if } R(x) \neq y \\
P'(x) \text{ if } R(x) = y
\end{cases}
\enspace .
$$

\begin{lemma} \label{lem:iopuncture}
Suppose one-way functions exist. For sufficiently large polynomial $s$ and for $y$ chosen uniformly in $\{0,1\}^\lambda$, $\iO(1^\lambda,1^s,P)$ and $\iO(1^\lambda,1^s,J_y)$ are computationally indisitnguishable even given the description of $P$. Moreover, $y$ is computationally unpredictable given $P,\iO(1^\lambda,1^s,J_y)$
\end{lemma}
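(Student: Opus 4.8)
The plan is to prove Lemma~\ref{lem:iopuncture} by a hybrid argument that interpolates between $P$ and $J_y$ via an intermediate program that uses a hidden ``trigger token'' rather than the raw value $y$. First I would introduce a (puncturable) PRG or, more simply, a one-way function $f$, and let $t = f(y)$ for a uniformly random $y \in \{0,1\}^\lambda$. Consider the program $\widetilde{J}_t$ which, on input $x$, computes $R(x)$ and then checks whether $f(R(x)) = t$: if so it runs $P'(x)$, otherwise it runs $P(x)$. Since $R$ has range $[N]$ with $N$ exponential and $y$ is uniform, by the security of $f$ (injectivity is not needed — we only need that $t=f(y)$ hides $y$ well enough, and for the predictability claim that $y$ is hard to recover from $t$), the value $y$ is the unique preimage under $R$-composed-with-$f$ that triggers $P'$ with overwhelming probability, so $\widetilde{J}_t$ and $J_y$ are \emph{functionally equivalent} except on the negligible-probability event that $R$ collides onto another point $y''$ with $f(y'') = t$; for a generic $R$ this event is ruled out by a union bound over the $N$ outputs, using that $f$ is injective on a $1-\negl$ fraction or simply that the probability a random other output maps to $t$ is negligible. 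Hence $\iO(1^\lambda,1^s,J_y) \approx \iO(1^\lambda,1^s,\widetilde{J}_t)$ by iO security (after padding both to size $s$).

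Next I would move from $\widetilde{J}_t$ to $P$. The key point is that when $t$ is \emph{not} in the image of $f$ — equivalently, when we replace $t = f(y)$ with a uniformly random string $t^*$ in the (sparse) codomain of $f$ — the program $\widetilde{J}_{t^*}$ is functionally equivalent to $P$, because the check $f(R(x)) = t^*$ never succeeds. So the remaining step is: $\iO(1^\lambda,1^s,\widetilde{J}_{f(y)})$ is computationally indistinguishable from $\iO(1^\lambda,1^s,\widetilde{J}_{t^*})$ with $t^*$ uniform, which follows directly from the pseudorandomness of the image of $f$ (i.e. from $f(U_\lambda) \approx_c U$ when $f$ is a length-expanding PRG, or by first switching to a PRG applied to $y$). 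Composing the two indistinguishabilities and the two functional equivalences ($\widetilde{J}_{t^*} \equiv P$ and $\widetilde{J}_{f(y)} \equiv J_y$ up to negligible error) yields $\iO(1^\lambda,1^s,P) \approx_c \iO(1^\lambda,1^s,J_y)$, even given the description of $P$ (all the hybrids can be sampled given $P$, since $P'$ and $R$ are fixed public objects and the only secret is $y$). The ``moreover'' claim — that $y$ is computationally unpredictable given $P$ and $\iO(1^\lambda,1^s,J_y)$ — follows from the same hybrid: given only $P$ and $\iO(1^\lambda,1^s,\widetilde{J}_{f(y)})$, any predictor for $y$ would, combined with the one-wayness of $f$, contradict either the hardness of inverting $f$ on $f(y)$ or (after the switch to $t^*$) would be predicting a value that no longer appears anywhere in the view.

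The main obstacle I anticipate is the functional-equivalence step $\widetilde{J}_{f(y)} \equiv J_y$ and the handling of $R$-collisions: because $R$ is an arbitrary function of exponential range, it is conceivable that two distinct outputs $y, y''$ collide under $f$ (if $f$ is not injective) or that $R$'s range, while of size $N$, has many strings mapping to the same image under $f$. To make this airtight one either (i) uses an \emph{injective} one-way function / injective PRG so that $f(R(x)) = f(y) \iff R(x) = y$ deterministically, giving exact functional equivalence and removing any error term, or (ii) argues that over the random choice of $y$ the ``bad'' event has probability at most $N \cdot 2^{-\Omega(\lambda)}$, which is negligible as long as $R$'s codomain $N$ is only exponential with a tunable constant — and one sets the output length of $f$ large enough relative to $\log N$. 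I would take route (i), since injective length-doubling PRGs exist from one-way functions, keeping the reduction clean. A secondary subtlety is parameter bookkeeping: the intermediate programs $\widetilde{J}_t$ must be padded to a common size $s = \poly(\lambda)$ that dominates $|P|, |P'|, |R|$ and the circuit for $f$, which is exactly why the lemma is stated ``for sufficiently large polynomial $s$.''
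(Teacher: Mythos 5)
Your proposal matches the paper's proof essentially step for step: both introduce an injective length-doubling PRG applied to $R(x)$, both move between $J_y$ and $P$ via an intermediate ``triggered by a PRG-image value'' program, both appeal to injectivity to get exact functional equivalence (avoiding the collision issue you correctly flag), and both derive unpredictability of $y$ by noting that in the intermediate hybrid the view only depends on $f(y)$, so predicting $y$ inverts $f$. The only cosmetic difference is the direction in which the hybrid chain is traversed.
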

\begin{proof}We first prove indistinguishability through a sequence of hybrid programs:

\begin{itemize}
    \item $\Hyb_{0}$: The original obfuscation of $P$.
\end{itemize}
Here, the adversary is given $\iO(1^\lambda,1^s,P)$.

\begin{itemize}
    \item $\Hyb_{1}$: Adding a uniformly random trigger that applies only if it is in the image of a sparse PRG.
\end{itemize}
We assume an injective length-doubling pseudorandom generator $\prg:[N]\rightarrow[N]^2$. These follow from injective one-way functions, which in turn follow from plain one-way functions and iO~\cite{TCC:BitPanWic16}. Here, we choose a random $w\gets[N]^2$. The adversary is given $\iO(1^\lambda,1^s,J'_w)$ where  $J_w'(x)=\begin{cases}P(x)\text{ if }\prg(R(x))\neq w\\P'(x)\text{ if }\prg(R(x))=w\end{cases}$.
Note that since $\prg$ is length-doubling, we have that with overwhelming probability over the choice of $w$, the second line of $J'_w$ will never be triggered. Therefore, $J'_w$ is functionally equivalent to $P$. Therefore, by $\iO$ security, as long as $s$ is larger than the size of $J'_w$ (which is larger than the size of $P$), hybrids 0 and 1 are indistinguishable.

\begin{itemize}
    \item $\Hyb_{2}$: Changing the trigger to be a random element inside the image of the PRG.
\end{itemize}
Here, we switch to $w=\prg(y)$ for a random $y$. Indistinguishability from Hybrid 1 follows immediately from the pseudorandomness of $\prg$.

\begin{itemize}
    \item $\Hyb_{3}$: Dropping the use of the PRG and checking its image directly.
\end{itemize}
Now the adversary is given $\iO(1^\lambda,1^s,J_y)$ for a random $y$. Observe that since $w=\prg(y)$ and $\prg$ is injective, $J_y$ and $J'_w$ have equivalent functionalities. Therefore, by $\iO$ security, hybrids 2 and 3 are indistinguishable. This completes the proof of indistinguishability.

For the computational unpredictability of $y$, consider an adversary starting in hybrid 3 which outputs $y$ with probability $\epsilon$. Since the indistinguishability of hybrids 2 and 3 did not rely on the randomness of $y$, we can switch to hybrid 2 and still obtain an adversary that outputs $y$ with probability at least $\epsilon-\negl$. Now we observe that the view of the adversary only depends on $w=\prg(y)$, and in the end the adversary produces $y$ with non-negligible probability. Thus, by a straightforward reduction to the one-wayness of $\prg$, we conclude that $\epsilon-\negl$, and hence $\epsilon$ itself, must be negligible.
\end{proof}

\paragraph{Swapping distributions.}
We now move to the next standard technique. Let $\{ D^x_0 \}_x, \{ D^x_1 \}_x$ be two families of distributions over the same domain $\Ys$, which can also be thought of as deterministic functions $D_0(x;r)$, $D_1(x; r)$ that take as input an index $x$ and some random coins $r$. Let $P$ be a program that makes queries to an oracle $O: \Xs \rightarrow \Ys$ for some set $\Xs$. Then we have the following:

\begin{lemma} \label{lem:distswap}
Let $\left( \prf, \punc \right)$ be a $\left( f_{\prf}, \delta_\prf \right)$-secure puncturable PRF and $\iO$ be a $\left( f_{\iO}, \delta_\iO \right)$-secure iO. Let $\Xs$ a finite set and let $D_0 := \{ D_{0, x} \}_{x \in \Xs}$,$D_1 := \{ D_{1, x} \}_{x \in \Xs}$ two ensembles of distributions, such that for every $x \in \Xs$, $D_{0, x}$, $D_{1, x}$ are $\left( f_{D}, \delta_{D} \right)$-indistinguishable. Let $E_0(x) = D_0\left( x ; \prf(k,x) \right)$ and $E_1(x) = D_1\left( x ; \prf\left( \overline{k}, x \right) \right)$. Then for a sufficiently large polynomial $s$, $\iO\left( 1^\lambda, 1^s, P^{E_0} \right)$ and $\iO\left( 1^\lambda, 1^s, P^{E_1} \right)$ are $\left( \min\left( f_{\prf}, f_{\iO}, f_{D} \right), O\left( |\Xs|\cdot \left( \delta_\prf + \delta_\iO + \delta_{D} \right) \right) \right)$-computationally indistinguishable, where $k,\overline{k}\gets\{0,1\}^\lambda$ are uniformly random keys. 
\end{lemma}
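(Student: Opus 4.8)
The plan is to do a hybrid over the $|\Xs|$ points of $\Xs$, switching one input at a time from $E_0$-behavior to $E_1$-behavior, with each switch costing one application of puncturable-PRF security, one of iO security, and one of the distribution indistinguishability. Fix an enumeration $x_1,\dots,x_{|\Xs|}$ of $\Xs$. Hybrid $i$ is the obfuscation of the program $P^{E^{(i)}}$ whose oracle $E^{(i)}$ answers query $x_j$ with $D_1(x_j;\prf(\overline{k},x_j))$ for $j\le i$ and with $D_0(x_j;\prf(k,x_j))$ for $j>i$; here both keys $k,\overline{k}$ are sampled up front and hard-wired (the program has both, even though in hybrid $0$ only $k$ is used and in hybrid $|\Xs|$ only $\overline{k}$ is). Hybrid $0$ is $\iO(1^\lambda,1^s,P^{E_0})$ and hybrid $|\Xs|$ is $\iO(1^\lambda,1^s,P^{E_1})$, as required; the only subtlety is that hybrid $0$ should also carry the unused key $\overline{k}$, but since $\overline{k}$ is independent and never queried, a preliminary iO step (the two programs are functionally equivalent) adds it at cost $\delta_\iO$, and symmetrically at the end.

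The single-step argument from hybrid $i-1$ to hybrid $i$ goes through four sub-hybrids. (1) Puncture both $k$ and $\overline{k}$ at $x_i$ inside the program, hard-wiring the correct values $\prf(k,x_i)$ and $\prf(\overline{k},x_i)$ so functionality is unchanged; this costs one iO invocation ($\delta_\iO$). (2) Replace the two hard-wired PRF values with freshly sampled uniform coins $r,\overline{r}$; this is two invocations of puncturable-PRF security ($2\delta_\prf$), and now the behavior at $x_i$ is $D_0(x_i;r)$ with $r$ truly uniform. (3) Switch the sampled value at $x_i$ from $D_0(x_i;r)$ to $D_1(x_i;\overline{r})$ — i.e., from $D_{0,x_i}$ to $D_{1,x_i}$ — using the assumed $(f_D,\delta_D)$-indistinguishability of these two distributions ($\delta_D$); note the adversary here is the full obfuscated-program experiment, which is why we need $f_D$ to dominate. (4) Reverse steps (2) and (1): re-derive the value at $x_i$ from $\prf(\overline{k},x_i)$ via PRF security, then un-puncture via iO, arriving at hybrid $i$; another $2\delta_\prf+\delta_\iO$. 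Summing over all $|\Xs|$ steps and the two bookkeeping iO steps gives total distance $O(|\Xs|(\delta_\prf+\delta_\iO+\delta_D))$, and the adversary size bound is $\min(f_\prf,f_\iO,f_D)$ since every sub-step reduces to one of these primitives run inside an otherwise-efficient wrapper (the wrapper's overhead is polynomial and absorbed, as is standard, into the ``sufficiently large polynomial $s$'' and into the size functions).

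The main obstacle — and the reason the lemma is stated with iO rather than proven trivially — is step (1)/(4): to puncture the PRF inside the obfuscated program we must hard-code the value at $x_i$, but the program $P$ queries its oracle adaptively and possibly in superposition-free but data-dependent ways, so we need the punctured program to be genuinely functionally equivalent to the unpunctured one on \emph{every} input, including when $P$'s internal logic causes it to query $x_i$. This is fine because $E^{(i-1)}$ and its punctured-and-hard-wired variant agree pointwise by construction, so iO applies; the care required is only in making sure $s$ is chosen large enough to dominate the sizes of all the intermediate programs (which carry punctured keys and hard-wired outputs), and that the circuit descriptions $D_0,D_1$ have fixed polynomial size so that the hard-wired constants don't blow up the program. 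A secondary point worth stating carefully is that in step (3) the distinguisher built against $D_{0,x_i}$ vs.\ $D_{1,x_i}$ must itself simulate the obfuscation and run the (quantum polynomial-time) adversary, so the hypothesis that $D_{0,x}$ and $D_{1,x}$ are $(f_D,\delta_D)$-indistinguishable has to be against adversaries of this size class; this is exactly why $f_D$ enters the final bound as a $\min$.
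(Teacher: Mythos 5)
Your overall hybrid structure — one pass over $\Xs$, each step passing through puncture-and-hard-wire (iO), swap to true randomness (P-PRF), swap the distribution (the $D$ assumption), and unwind — is exactly the structure of the paper's proof, including the observation that the $D$-distinguisher must be in the class $f_D$ because it simulates the whole obfuscation experiment, and that $s$ must dominate all intermediate program sizes. So the approach is the right one.

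There is one concrete slip in steps (1)--(3). You hard-wire the raw PRF outputs $\prf(k,x_i)$ and $\prf(\overline{k},x_i)$, then replace them with fresh coins $r,\overline{r}$, so the intermediate program still evaluates $D_0(x_i;r)$ (resp.\ $D_1(x_i;\overline{r})$) internally with the coins hard-coded. In step (3) you then try to reduce to the $(f_D,\delta_D)$-indistinguishability of $D_{0,x_i}$ and $D_{1,x_i}$, but a distinguisher for those distributions is handed a \emph{sample}, not the coins that produced it; it cannot recover an $r$ with $D_0(x_i;r)=y^\ast$ to plant inside the program, so the reduction as written does not go through. The fix is an extra iO step (or an alternative arrangement of the sub-hybrids): after step (2), replace ``at $x_i$, compute $D_0(x_i;r)$ from the hard-wired $r$'' with the functionally equivalent ``at $x_i$, return the hard-wired constant $y := D_0(x_i;r)$,'' and symmetrically before step (4). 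Then the step-(3) reduction can embed the challenge sample directly as $y$. The paper sidesteps this by sampling $y \gets D_0(i;\prf(k,i))$ and hard-wiring $y$ itself from the very first sub-hybrid, rather than the PRF output. The extra iO invocations do not change the $O(|\Xs|(\delta_\prf+\delta_\iO+\delta_D))$ bound.
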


\begin{proof}
We assume $\Xs = [N]$ by giving some ordering to the set. We prove security through a sequence of hybrids.

$\Hyb_{i.0}$: The adversary gets $\iO(1^\lambda,1^s,P^{E_{i,0}})$, where $E_{i,0}$ has $k,\overline{k}$ hard-coded and is defined as $E_{i,0}(x)=\begin{cases}D_0(x;\prf(k,x))&\text{ if }x\geq i\\D_1(\prf(x;\overline{k},x))&\text{ if }x< i\end{cases}$.

$\Hyb_{i.1}$: The adversary gets $\iO(1^\lambda,1^s,P^{E_{i,1}})$, where to generate $E_{i,1}$, we compute $k^{i}\gets\punc(k,i),\allowbreak \overline{k}^{i}\gets\punc(\overline{k},i)$, sample $y\gets D_0(i;\prf(k,i))$ and let $E_{i,1}(x)=\begin{cases}D_0(x;\prf(k,x))&\text{ if }x> i\\ y&\text{ if }x=i\\D_1(x;\prf(\overline{k},x))&\text{ if }x< i\end{cases}$.
Observe that by our choice of $y$, $E_{i,1}$ is identical to $E_{i,0}$, and hence the programs $P^{E_{i,0}}$ and $P^{E_{i,1}}$ are equivalent. Thus, by the security of $\iO$, Hybrids $i.0$ and $i.1$ are indistinguishable except with probability $\delta_\iO$.

$\Hyb_{i.2}$: Here, we still obfuscate $P^{E_{i,1}}$, but instead switch to $y\gets D_0(i;r)$ for fresh random coins $r$. Observe that the entire experiment except for $y$ is simulatable using just the punctured key $k^i$, and the only difference for $y$ is that we replace $\prf(k,i)$ with a random string. Thus Hybrids $i.1$ and $i.2$ are indistinguishable except with probability $\delta_\prf$.

$\Hyb_{i.3}$: Now we change to $y\gets D_1(r)$. Hybrids $i.2$ and $i.3$ are indistinguishable except with probability $\epsilon$.

$\Hyb_{i.4}$: Now we change to $y\gets D_1(\prf(\overline{k},i))$. Hybrids $i.3$ and $i.4$ are indistinguishable except with probability $\delta_\prf$.

Next, we observe that $E_{i,1}$, when using $y\gets D_1(i;\prf(\overline{k},i))$, is actually functionally equivalent to $E_{i+1,0}$. Thus, we see that the programs $P^{E_{i,1}}$ and $P^{E_{i+1,0}}$ are functionally equivalent. By $\iO$ security, we therefore have that Hybrid $i.4$ and $(i+1).0$ are indistinguishable except with probability $\delta_\iO$.

The proof then follows by observing that Hybrid $1.0$ corresponds to $\iO(1^\lambda,1^s,P^{D_0(\prf(k,\cdot))})$ and Hybrid $(N+1).0$ corresponds to $\iO(1^\lambda,1^s,P^{D_1(\prf(\overline{k},\cdot))})$

\end{proof}

\subsection{Information-Theoretical Hardness of Hidden Subspace Detection} \label{subsection:unconditional_hidden_subspace}
One of our central objects in this paper are quantumly accessible classical functions that check membership in some secret linear subspace $S \subseteq \bbZ_{2}^{k}$.

\paragraph{Information-Theoretical Subspace Hiding.}
We start with a quantum lower bound for detecting a change between two oracles: One allows access to membership check for some given (known) subspace $S$, and the other allows access to membership check in a random superspace $T$ of $S$. This is an information-theoretical version of the subspace-hiding obfuscation introduced in \cite{EC:Zhandry19b}.

\begin{lemma} \label{lemma:unconditional_subspace_hiding}
Let $k, r, s \in \Nat$ such that $r + s \leq k$ and let $S \subseteq \bbZ_{2}^{k}$ a subspace of dimension $r$. Let $\Ss_{s}$ the uniform distribution over subspaces $T$ of dimension $r + s$ such that $S \subseteq T \subseteq \bbZ_{2}^{k}$. For any subspace $S' \subseteq \bbZ_{2}^{k}$ let $\Oracle_{S'}$ the oracle that checks membership in $S'$ (outputs $1$ if and only if the input is inside $S'$).

Then, for every oracle-aided quantum algorithm $\Adv$ making at most $q$ quantum queries, we have the following indistinguishability over oracle distributions. 
$$
\{
\Oracle_{S}
\}
\approx_{O\left( \frac{ q \cdot s }{ \sqrt{ 2^{k - r - s} } } \right)}
\{
\Oracle_{T}
\;
:
\;
T \gets \Ss_{s} 
\} \enspace .
$$
\end{lemma}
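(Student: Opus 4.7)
The plan is a standard hybrid argument that interpolates from $\Oracle_{S}$ to $\Oracle_{T}$ along a nested chain of subspaces, invoking a BBBV-style oracle indistinguishability bound at each step. First, I would sample $T$ incrementally by drawing vectors $w_{1}, \dots, w_{s}$ one at a time, each $w_{i+1}$ uniformly from $\bbZ_{2}^{k} \setminus T_{i}$, where $T_{i} := S + \linspan(w_{1}, \dots, w_{i})$ and $T_{0} := S$. This produces a chain $S = T_{0} \subsetneq T_{1} \subsetneq \cdots \subsetneq T_{s}$ with $\dim(T_{i}) = r + i$, and a standard counting and symmetry argument shows that $T_{s}$ has exactly the distribution $\Ss_{s}$ required by the lemma. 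Crucially, conditioned on $T_{i}$, the new coset $T_{i+1} \setminus T_{i}$ is distributed as a uniformly random non-trivial coset of $T_{i}$ in $\bbZ_{2}^{k}$.

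Next I would bound, for each $i \in \{0, \dots, s-1\}$, the distinguishing advantage between $\Oracle_{T_{i}}$ and $\Oracle_{T_{i+1}}$. The two oracles agree on every input outside the hidden coset $T_{i+1} \setminus T_{i}$, which has size $2^{r+i}$. By the standard BBBV (or ``oracle indistinguishability'') lemma, the distinguishing advantage of any $q$-query algorithm is bounded by $O(q \sqrt{p})$, where $p$ is the expected probability (over the randomness of $w_{i+1}$) that measuring any fixed query state in the computational basis lands in the hidden coset. A direct calculation using the uniformity of $w_{i+1}$ over $\bbZ_{2}^{k} \setminus T_{i}$ yields, for any state $\ket{\psi}$,
$$
\mathbb{E}_{w_{i+1}}\!\!\left[\sum_{z \in T_{i+1} \setminus T_{i}} |\braket{z}{\psi}|^{2} \right] \;\leq\; \frac{2^{r+i}}{2^{k} - 2^{r+i}} \;\leq\; 2 \cdot 2^{r+s-k} \enspace ,
$$
so the per-step advantage is at most $O(q / \sqrt{2^{k-r-s}})$. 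Summing across the $s$ hybrids via the triangle inequality then gives the claimed bound $O(qs / \sqrt{2^{k-r-s}})$.

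The main obstacle is justifying the single-step bound, since the hidden difference set is a random \emph{coset} of $T_{i}$ rather than a uniformly random sparse subset of $\bbZ_{2}^{k}$, and we cannot directly invoke off-the-shelf quantum search lower bounds for random marked sets. However, the marginal of $w_{i+1}$ is uniform on $\bbZ_{2}^{k} \setminus T_{i}$, so every non-$T_{i}$ element is equally likely to lie in the hidden coset; this is precisely the hypothesis that the BBBV hybrid argument needs, and it applies essentially verbatim to give the expected-overlap bound above. A minor remark is that the per-step bound is loose: a tighter geometric-sum analysis across the $s$ steps would eliminate the factor of $s$ entirely, but the stated bound already follows from the coarser estimate and suffices for all downstream uses.
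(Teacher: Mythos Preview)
Your proposal is correct and follows essentially the same approach as the paper: both proofs use an $s$-step hybrid that grows the subspace one dimension at a time, bound each step by a Grover/BBBV-type argument for detecting a uniformly random nontrivial coset of the current subspace, and then sum the per-step bounds. The only cosmetic differences are that the paper phrases the single step in the dual picture (a random nonzero vector $\vecA \in \bbZ_2^{k-r}$ as the target of the map $\vecX \mapsto \vecX^T \matB$) and simply cites ``standard quantum lower bounds,'' whereas you work directly with primal cosets and sketch the BBBV overlap calculation; your observation that the geometric sum would eliminate the factor of $s$ is also implicit in the paper's step-dependent bound $O(q/\sqrt{2^{k-r-(i-1)}})$, which it likewise relaxes to $O(qs/\sqrt{2^{k-r-s}})$.
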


\begin{proof}
    We prove the claim by a hybrid argument, increasing the dimension of the random superspace $T$ by $1$ in each step, until we made an increase of $s$ dimensions. In the first step we consider a matrix $\matB \in \bbZ_{2}^{k \times (k - r)}$, the columns of which form a basis for $S^{\bot}$. Note that the oracle $\Oracle_{S}$ can be described as accepting $\vecX \in \bbZ_{2}^{k}$ iff $\vecX^{T} \cdot \matB = 0^{k - r}$.
    
    Next we sample a uniformly random $\vecA \in \bbZ_{2}^{k - r}$ and consider the oracle $\Oracle_{S_{1}}$ that accepts $\vecX \in \bbZ_{2}^{k}$ iff either $\vecX^{T} \cdot \matB = 0^{k - r}$ or $\vecX^{T} \cdot \matB = \vecA$. Two things can be verified: (1) Due to the randomness of $\vecA$, by standard quantum lower bounds, $\{ \Oracle_{S} \} \approx_{ O\left( \frac{ q }{ \sqrt{ 2^{k - r} } } \right) } \{ \Oracle_{ S_{1} } \; : \; \vecA \gets \bbZ_{2}^{k - r} \}$, and (2) The set $S_{1}$ is a random superspace of $S$ with dimension $r + 1$.

    This proves our claim for $s = 1$. For a general $s$ we can make an $s$-step hybrid argument, where at each step we have $S_{i}$ which is a random $(r + i)$-dimensional superspace of $S$. Overall we get that for a $q$-query algorithm $\Adv$ the distinguishing advantage between $\{ \Oracle_{S} \}$ and $\{ \Oracle_{ S_{1} } \; : \; T \gets \Ss_{s} \}$ is 
    $$
    \sum_{i \in [s]} O\left( \frac{ q }{ \sqrt{ 2^{k - r - (i - 1)} } } \right)
    \leq
    O\left( \frac{ q \cdot s }{ \sqrt{ 2^{k - r - s} } } \right)
    \enspace ,
    $$
    as needed.
\end{proof}

We will also use the following corollary of Lemma \ref{lemma:unconditional_subspace_hiding}. The corollary says that it is still hard to distinguish membership check between $S$ and $T$, also when we duplicate the oracle access $\ell$ times. The corollary follows by a direct simulation reduction.

\begin{corollary} \label{corollary:unconditional_iterated_subspace_hiding}
Let $k, r, s \in \Nat$ such that $r + s \leq k$ and let $S \subseteq \bbZ_{2}^{k}$ a subspace of dimension $r$. Let $\Ss_{s}$ the uniform distribution over subspaces $T$ of dimension $r + s$ such that $S \subseteq T \subseteq \bbZ_{2}^{k}$. For any subspace $S' \subseteq \bbZ_{2}^{k}$ let $\Oracle_{S'}$ the oracle that checks membership in $S'$ (outputs $1$ if and only if the input is inside $S'$).

Then, for every oracle-aided quantum algorithm $\Adv$ making at most $q$ quantum queries, we have the following indistinguishability over oracle distributions. 
$$
\{
\Oracle^{1}_{S}, \cdots, \Oracle^{\ell}_{S}
\}
\approx_{O\left( \frac{ q \cdot \ell \cdot s }{ \sqrt{ 2^{k - r - s} } } \right)}
\{
\Oracle^{1}_{T}, \cdots, \Oracle^{\ell}_{T}
\;
:
\;
T \gets \Ss_{s} 
\} \enspace .
$$
\end{corollary}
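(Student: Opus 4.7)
The plan is to reduce directly to Lemma \ref{lemma:unconditional_subspace_hiding} via a trivial simulation, exploiting the fact that all $\ell$ oracles compute the exact same function in both experiments. In the first distribution, each $\Oracle^{i}_{S}$ is the membership oracle for the same fixed subspace $S$; in the second, a single $T$ is sampled from $\Ss_{s}$ and each $\Oracle^{i}_{T}$ is membership in that one $T$. Hence a bank of $\ell$ copies of the oracle conveys no more information than a single copy equipped with $\ell$ times as many queries.

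Concretely, I would take any $\ell$-oracle quantum distinguisher $\Adv$ making at most $q$ queries in total and build a single-oracle distinguisher $\Adv'$ as follows. $\Adv'$ runs $\Adv$ step by step, maintaining its workspace internally. Each time $\Adv$ makes a (possibly superposed) query to any of its $\ell$ oracle registers, $\Adv'$ routes that query to its own single oracle and returns the answer to $\Adv$. Since every $\Oracle^{i}_{\star}$ implements the same classical function as $\Oracle_{\star}$, this simulation is exact, so the distinguishing advantage of $\Adv'$ equals that of $\Adv$. If one instead interprets $q$ as a per-oracle query bound (as the corollary's statement suggests by the extra factor of $\ell$), then $\Adv'$ makes at most $q\ell$ queries; applying Lemma~\ref{lemma:unconditional_subspace_hiding} with this query count yields the stated bound $O\!\left(\tfrac{q\cdot\ell\cdot s}{\sqrt{2^{k-r-s}}}\right)$.

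It remains to check that the induced distribution over oracles in the single-oracle game coincides with the marginal distribution in the $\ell$-oracle game. In the ``$S$'' case both are deterministic (always $\Oracle_S$), and in the ``$T$'' case $T \gets \Ss_{s}$ is sampled once and used to answer every query, which is exactly the joint distribution in the $\ell$-oracle game. Hence the reduction is distributionally tight.

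There is really no obstacle here beyond careful bookkeeping of the query count and a note that the simulation faithfully handles superposition queries (which it does, since applying the same classical oracle to a register and then forwarding gives an identical unitary on $\Adv$'s state). No new quantum-query lower-bound machinery is invoked; the content of the proof is entirely that ``$\ell$ copies of the same oracle equals one oracle with more queries,'' and the substantive hardness is imported wholesale from Lemma~\ref{lemma:unconditional_subspace_hiding}.
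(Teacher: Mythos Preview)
Your proposal is correct and matches the paper's approach exactly: the paper states that the corollary ``follows by a direct simulation reduction,'' which is precisely the argument you give---route all $\ell$ identical oracles through a single oracle and invoke Lemma~\ref{lemma:unconditional_subspace_hiding}. Your handling of the query-count bookkeeping (and the observation that the stated bound accommodates either the total or per-oracle interpretation of $q$) is also on point.
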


An additional corollary which follows by combining the above, with a standard hybrid argument on the first statement \ref{lemma:unconditional_subspace_hiding} is as follows. Note that in the below statement, the first oracle distribution is where each of the $\ell$ oralces samples an i.i.d. superspace $T_{i}$, and the second oracle distribution is where we sample $T$ once, and then duplicate its oracle.

\begin{corollary} \label{corollary:unconditional_subspace_hiding_main_corollary}
Let $k, r, s \in \Nat$ such that $r + s \leq k$ and let $S \subseteq \bbZ_{2}^{k}$ a subspace of dimension $r$. Let $\Ss_{s}$ the uniform distribution over subspaces $T$ of dimension $r + s$ such that $S \subseteq T \subseteq \bbZ_{2}^{k}$. For any subspace $S' \subseteq \bbZ_{2}^{k}$ let $\Oracle_{S'}$ the oracle that checks membership in $S'$ (outputs $1$ if and only if the input is inside $S'$).

Then, for every oracle-aided quantum algorithm $\Adv$ making at most $q$ quantum queries, we have the following indistinguishability over oracle distributions. 
$$
\{
\Oracle_{T_{1}}, \cdots, \Oracle_{T_{\ell}}
\;
:
\;
\forall i \in [\ell] : T_{i} \gets \Ss_{s} 
\}
$$
$$
\approx_{O\left( \frac{ q \cdot \ell \cdot s }{ \sqrt{ 2^{k - r - s} } } \right)}
\{
\Oracle^{1}_{T}, \cdots, \Oracle^{\ell}_{T}
\;
:
\;
T \gets \Ss_{s} 
\}
\enspace .
$$
\end{corollary}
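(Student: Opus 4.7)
The plan is to route both sides through a common intermediate distribution, namely the distribution in which all $\ell$ oracles are simply $\Oracle_S$ (the membership oracle for the fixed starting subspace $S$), and then apply the triangle inequality. The right-hand side of the corollary---a single random superspace $T \gets \Ss_s$ duplicated $\ell$ times---is already known to be $O\bigl( q \cdot \ell \cdot s / \sqrt{2^{k-r-s}} \bigr)$-close to the $\ell$-fold duplication of $\Oracle_S$: this is exactly what Corollary~\ref{corollary:unconditional_iterated_subspace_hiding} provides. So the remaining task is to establish a matching bound between the left-hand side and this same intermediate distribution.

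For the left-hand side, I would run a standard $\ell$-step hybrid argument. In hybrid $H_i$, oracles $1, \ldots, i$ are set to $\Oracle_S$ while oracles $i+1, \ldots, \ell$ are independent samples $\Oracle_{T_j}$ with each $T_j \gets \Ss_s$. The endpoints are $H_0$, which is exactly the left-hand side, and $H_\ell$, which is the $\ell$-fold duplication of $\Oracle_S$. To bound the advantage between $H_i$ and $H_{i+1}$, I would give a simulation reduction to Lemma~\ref{lemma:unconditional_subspace_hiding}: the reduction receives a single challenge oracle which is either $\Oracle_S$ or $\Oracle_T$ for $T \gets \Ss_s$, places it in position $i+1$, internally fixes the oracle $\Oracle_S$ for positions $j \leq i$ (which requires no secret information, since $S$ is public in the lemma's statement), and internally samples fresh $T_j \gets \Ss_s$ for positions $j > i+1$ to simulate the remaining oracles. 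Each such step contributes at most $O\bigl( q \cdot s / \sqrt{2^{k-r-s}} \bigr)$ to the advantage, and summing over $\ell$ steps yields $O\bigl( q \cdot \ell \cdot s / \sqrt{2^{k-r-s}} \bigr)$.

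Combining the two halves by the triangle inequality gives the claimed overall advantage $O\bigl( q \cdot \ell \cdot s / \sqrt{2^{k-r-s}} \bigr)$, up to a constant factor.

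The only subtlety---which is the main thing to verify carefully---is the query accounting in the hybrid reduction. Any query the distinguisher makes to the composite $\ell$-tuple of oracles translates to at most one query to the challenge oracle in position $i+1$, so the reduction remains a $q$-query algorithm and Lemma~\ref{lemma:unconditional_subspace_hiding} applies at its stated parameters. Beyond this bookkeeping check, each individual step is a direct invocation of the already-established lemma and corollary, so no new quantum lower-bound machinery is needed.
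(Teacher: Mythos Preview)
Your proposal is correct and follows essentially the same approach the paper sketches: it routes both distributions through the intermediate $\ell$-fold duplication of $\Oracle_S$, invoking Corollary~\ref{corollary:unconditional_iterated_subspace_hiding} for one half and a standard hybrid over Lemma~\ref{lemma:unconditional_subspace_hiding} for the other, then applies the triangle inequality. The query-accounting and simulation details you spell out are exactly the bookkeeping the paper leaves implicit.
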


The below is an information theoretical version of our Lemma \ref{lemma:dual_subspace_concentration} and corresponding proof.

\begin{lemma} \label{lemma:unconditional_dual_subspace_anti_concentration}
Let $k, r, s \in \Nat$ such that $r + s \leq k$ and let $S \subseteq \bbZ_{2}^{k}$ a subspace of dimension $r$. Let $\Ss_{s}$ the uniform distribution over subspaces $T$ of dimension $r + s$ such that $S \subseteq T \subseteq \bbZ_{2}^{k}$. For any subspace $S' \subseteq \bbZ_{2}^{k}$ let $\Oracle_{S'}$ the oracle that checks membership in $S'$ (outputs $1$ if and only if the input is inside $S'$).

Assume there is an oracle-aided quantum algorithm $\Adv$ making at most $q$ quantum queries and outputting a vector $\vecU \in \bbZ_{2}^{k}$ at the end of its execution, such that 
$$
\Pr
\left[
\Adv^{\Oracle_{T}} \in \left( T^{\bot} \setminus \{ 0 \} \right)
\;
:
\;
T \gets \Ss_{s}
\right]
\geq
\epsilon
\enspace .
$$
Also, denote $t := k - r - s$, $\ell := \frac{ k\left( t + 1 \right) }{ \epsilon }$ and assume (1) $\frac{t \cdot \frac{1}{\epsilon}}{2^{ s - t }} \leq o(1)$ and (2) $\frac{ q \cdot \ell^2 \cdot s }{ \sqrt{ 2^{t} } } \leq o(1)$. Then, it is necessarily the case that 
$$
\Pr
\left[
\Adv^{\Oracle_{T}} \in \left( S^{\bot} \setminus T^{\bot} \right)
\;
:
\;
T \gets \Ss_{s}
\right]
\geq
\frac{\epsilon}{16 \cdot k \cdot (t + 1)}
\enspace .
$$
\end{lemma}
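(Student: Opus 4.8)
The goal is to show that if $\Adv$, when given the bloated oracle $\Oracle_T$, sometimes outputs a nonzero vector in the \emph{smaller} dual $T^{\bot}$, then it must also, with comparable probability, output a vector in $S^{\bot} \setminus T^{\bot}$ — i.e., it cannot concentrate all its output mass on $T^{\bot}$, because doing so would let it distinguish $T$ from $S$. The high-level strategy is a contrapositive/reduction argument: suppose toward contradiction that the probability of landing in $S^{\bot}\setminus T^{\bot}$ is smaller than the claimed bound. I would first run $\Adv$ many times (with $\ell = k(t+1)/\epsilon$ independent copies of the oracle, as the parameters suggest) to collect a batch of vectors, each of which, conditioned on the good event, lies in $T^{\bot}$. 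Using Corollary~\ref{corollary:unconditional_subspace_hiding_main_corollary}, running $\Adv$ against $\ell$ i.i.d.\ fresh superspaces $T_i$ is indistinguishable from running it against $\ell$ copies of a single $T$; I would use this to argue the collected vectors behave as if they came from a common $T^{\bot}$.

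The core of the argument is then a dimension-counting / rank argument, analogous to the sketch in Step 2 of the overview. If essentially all outputs land in $T^{\bot}$ (a space of dimension $k - r - s = t$... wait, $T$ has dimension $r+s$, so $T^\perp$ has dimension $k-r-s = t$), then the span of enough successful outputs is contained in a space of dimension at most $t$. On the other hand, I claim that when the oracle is $\Oracle_T$, a vector in $T^\perp$ output by $\Adv$ is in fact (conditioned on success) close to uniform over the nonzero vectors of $T^\perp$ relative to the randomness of the self-reduction / the residual randomness of $T$ inside $S^\perp$; more precisely, since $\Adv$ cannot tell $T$ from a re-randomized $T'$ that agrees with $\Adv$'s queries, its output must in fact hit $S^\perp \setminus T^\perp$ occasionally unless it is "stuck" in a fixed subspace. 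The quantitative version: collecting $\ell \approx k(t+1)/\epsilon$ successful samples, with constant probability their span has dimension $> t$ unless the conditional output distribution is supported on a fixed $t$-dimensional subspace — but that fixed subspace can only be $T^\perp$ with negligible probability over the choice of $T$ given $\Adv$'s (bounded) view, by subspace-hiding. The factor $16 k(t+1)$ in the conclusion comes from: $k$ for the union bound over which coordinate/dimension the span fails to grow, $(t+1)$ for the number of rank-increasing steps needed, and the constant $16$ absorbing the constant-probability slack in the amplification and the $o(1)$ error terms (controlled by hypotheses (1) $t\epsilon^{-1}/2^{s-t} = o(1)$ and (2) $q\ell^2 s / \sqrt{2^t} = o(1)$, which ensure the hybrid/simulation losses and the anti-concentration of the hypergeometric-type sampling are negligible).

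Concretely the steps in order: (i) define the $\ell$-fold repetition experiment and invoke Corollary~\ref{corollary:unconditional_subspace_hiding_main_corollary} to pass between i.i.d.\ superspaces and a shared superspace, paying $O(q\ell^2 s/\sqrt{2^t})$ which is $o(1)$ by hypothesis (2); (ii) condition on the event that at least an $\epsilon/2$ fraction (say) of the $\ell$ runs succeed — this holds with good probability by Markov / Chernoff given the per-run success probability $\epsilon$; (iii) run the rank-growth argument: if we never output anything outside $T^\perp$, the span of the successful outputs stabilizes at dimension $\le t$ after at most $t+1$ "fresh" successful samples, so some run's output is forced into the current span — but by subspace-hiding the adversary's view is (almost) independent of which $t$-dimensional space inside $S^\perp$ is the "true" $T^\perp$, so the probability that a fixed span of dimension $\le t$ equals (or contains the relevant part of) $T^\perp$ is bounded by roughly $2^{-(s-t)}$ times combinatorial factors — here hypothesis (1) kicks in; (iv) conclude that with probability $\ge \epsilon/(16k(t+1))$ some successful output lands in $S^\perp \setminus T^\perp$, and by averaging over the $\ell$ runs a single run achieves this probability, which is exactly the claimed bound.

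\textbf{Main obstacle.} The delicate point is step (iii): making rigorous the claim that $\Adv$'s output, conditioned on success and on landing in $T^\perp$, is "spread out" inside $T^\perp$ rather than pinned to a fixed low-dimensional subspace, and doing so \emph{purely information-theoretically} using only the query bound and the subspace-hiding corollaries — without any computational assumption and without a self-reduction that re-randomizes $T$ (since here $T$ is the secret, not something we get to choose). The trick will be to observe that $T^\perp$ itself has a lot of residual randomness given $S$ (it is a random $t$-dimensional subspace of the $(k-r)$-dimensional space $S^\perp$), and that $\Adv$'s $q$ queries reveal only $O(q)$ bits about which superspace $T$ was chosen; hence the posterior on $T^\perp$ is still near-uniform over a set of size $\approx 2^{(s-t)\cdot t}$ or so, and a fixed span of dimension $\le t$ can coincide with it only with the small probability quantified by hypothesis (1). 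Pinning down the exact combinatorial count of $t$-dimensional subspaces and the exact entropy bookkeeping — to land precisely the constant $16$ and the factor $k(t+1)$ — will be the bulk of the technical work, but it is routine Gaussian-binomial estimation once the structure is set up.
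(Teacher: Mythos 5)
Your high-level skeleton is correct and does match the paper's: run $\Adv$ $\ell$ times, compute the dimension of the span of the returned vectors, compare the distribution $\Ds_1$ of $\ell$ i.i.d.\ fresh superspaces against the distribution $\Ds_2$ of $\ell$ copies of a single $T$ using Corollary~\ref{corollary:unconditional_subspace_hiding_main_corollary}, and conclude by an averaging argument. However, your step (iii) — the heart of the matter — goes off in the wrong direction, and that is where the argument has a real gap.

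You try to argue \emph{directly on the shared-$T$ side} that the adversary's successful outputs must be "spread out" inside $T^\perp$, invoking "the residual randomness of $T$ inside $S^\perp$," "the posterior on $T^\perp$ is still near-uniform," and "$q$ queries reveal only $O(q)$ bits." This reasoning does not close cleanly. The adversary is given an oracle for $T$, not $q$ classical bits about it; the oracle pins $T$ down exactly, and quantifying what a $q$-query quantum adversary "knows" about $T$ would require re-proving the subspace-hiding lemma from scratch in a posterior/entropy language, which is a much harder and different problem. Moreover, the assertion itself is beside the point: if $\Adv$ always outputs in $T^\perp$ (which is allowed by hypothesis), then on $\Ds_2$ the span is genuinely contained in a $t$-dimensional space, and no amount of "spreading out inside $T^\perp$" makes the dimension exceed $t$.

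The correct move — the one in the paper — is to prove that the span is large \emph{on the i.i.d.\ side} $\Ds_1$, where no conditioning on the adversary's view is needed at all, and only then transfer to $\Ds_2$. On $\Ds_1$ one partitions the $\ell = k(t+1)/\epsilon$ runs into $t+1$ buckets of size $k/\epsilon$. For each bucket, one argues: when $T_i$ is sampled fresh, the probability that $T_i^\perp$ (a random $t$-dimensional subspace of the $(k-r)$-dimensional space $S^\perp$) has only the trivial intersection with \emph{all} previously collected $t$-dimensional winning duals is at least $1 - t\cdot 2^{t-s}$ (a direct Gaussian-binomial computation over the $t$-step basis-sampling process for $T_i^\perp$), and conditioned on that event $\Adv$ still succeeds with probability $\geq \epsilon - t\cdot 2^{t-s} \geq \epsilon/2$ by hypothesis (1). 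So each bucket contributes a dimension-increasing vector except with probability $e^{-\Omega(k)}$, yielding $D \geq t+1$ with probability $1-e^{-\Omega(k)}$ on $\Ds_1$. Transferring to $\Ds_2$ by the Corollary and hypothesis (2) gives $D\geq t+1$ with probability $\geq 1/2$ there; an averaging argument then gives a $1/4$-fraction of "good" $T$'s with conditional probability $\geq 1/4$ of $D\geq t+1$, and a union bound over the $\ell$ runs gives per-run probability $\geq \frac{1}{4\ell}$. This is where the constant comes from: $\frac14\cdot\frac1{4\ell}=\frac{1}{16\ell}=\frac{\epsilon}{16 k(t+1)}$, not — as you guessed — a union bound over coordinates ($k$) times a count of rank steps ($t+1$) times error-absorption ($16$). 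Your proposal never sets up the bucketing/trivial-intersection argument, which is precisely what makes the $\Ds_1$ dimension grow; without it, conditioning on "an $\epsilon/2$ fraction of runs succeed" does not give you fresh dimensions, only repeated hits in possibly overlapping spaces. That omission, together with the unjustified posterior/entropy claim, is the gap.
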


\begin{proof}
We start with defining the following reduction $\AdvB$, that will use the circuit $\Adv$ as part of its machinery.

\paragraph{The reduction $\AdvB$.}
The input to $\AdvB$ contains $\ell := \frac{ k \cdot \left( t + 1 \right) }{ \epsilon }$ samples of oracles $\left( \Oracle^{(1)}, \cdots, \Oracle^{(\ell)} \right)$, for $t := k - r - s$. Given the $\ell$ oracles, execute $\Adv^{ \Oracle^{(i)} }$ for every $i \in [\ell]$ and obtain $\ell$ vectors $\{ u_{1}, \cdots, u_{\ell} \}$. Then, take only the vectors $\{ v_{1}, \cdots, v_{m} \}$ that are inside $S^{\bot}$, and then compute the dimension of their span, $D := \dim\left( \linspan\left( v_{1}, \cdots, v_{m} \right) \right)$. Note that the number of queries that $\AdvB$ makes is $q \cdot \ell$.

\paragraph{Executing $\AdvB$ on the oracle distribution $\Ds_{1}$.}
Consider the following distribution $\Ds_{1}$: Sample $\ell$ i.i.d superspaces $T_{1}, \cdots, T_{\ell}$, and for each of them, give access to its membership check oracle: $\Oracle_{T_1}, \cdots, \Oracle_{T_\ell}$. Let us see what happens when we execute $\AdvB$ on a sample from the distribution $\mathcal{D}_{1}$.

Consider the $\ell$ vectors $\{ u_{1}, \cdots, u_{\ell} \}$ obtained by executing $\Adv$ on each of the input oracles. Recall that $\ell := \frac{1}{\epsilon} \cdot k \cdot \left( t + 1 \right)$ and consider a partition of the vectors into $t + 1$ consecutive sequences (or buckets), accordingly, each of length $\frac{1}{\epsilon} \cdot k$. In order to show that the probability for the reduction $\AdvB$ to have $D \geq t + 1$ is high, we show that with high probability, in each bucket $j \in [t + 1]$ there is a vector $u_{i}$ that's inside the corresponding dual $T_{i}^{\bot}$, but such that also the intersection between $T_{i}^{\bot}$ and each of the previous $j - 1$ dual subspaces that were hit by $\Adv$, is only the zero vector $0^{k}$. Note that the last condition indeed implies $D \geq t + 1$.

For every $i \in [\ell]$ we define the probability $p_{i}$. We start with defining it for the indices in the first bucket, and then proceed to define it recursively for the rest of the buckets. For indices $i \in [\frac{1}{\epsilon} \cdot k]$ in the first bucket, $p_{i}$ is the probability that given access to $\Oracle_{T_{i}}$, the output of $\Adv$ is $u_{i} \in \left( T_{i}^{\bot} \setminus \{ 0 \} \right)$, and in such case we define the $i$-th execution as successful. We denote by $T_{(1)}$ the first subspace in the first bucket where a successful execution happens (and define $T_{(1)} := \bot$ if no success happened). For any $i$ inside any bucket $j \in \left( [t + 1] \setminus \{ 1 \} \right)$ that is not the first bucket, we define $p_{i}$ as the probability that (1) $u_{i} \in \left( T_{i}^{\bot} \setminus \{ 0 \} \right)$ and also (2) the intersection between $T_{i}^{\bot}$ and each of the dual subspaces of the previous winning subspaces $T_{(1)}, \cdots, T_{(j - 1)}$, is only $\{ 0^k \}$. That is, $p_{i}$ is the probability that the output of the adversary hits the dual subspace, and also the dual does not have a non-trivial intersection with any of the previous successful duals. Similarly to the first bucket, we denote by $T_{(j)}$ the first subspace in bucket $j$ with a successful execution.

We prove that with high probability, all $t + 1$ buckets have at least one successful execution. To see this, we define the following probability $p'$ which we show lower bounds $p_{i}$, and is defined as follows. First, let $\overline{T}_{1}$, $\cdots$, $\overline{T}_{t}$ any $t$ subspaces, each of dimension $r + s$, thus the duals $\overline{T}_{1}^{\bot}$, $\cdots$, $\overline{T}_{t}^{\bot}$ are such that each has dimension $t$. $p'_{ \left( \overline{T}_{1}, \cdots, \overline{T}_{t} \right) }$ is the probability that (1) when sampling $T^{\bot}$, the intersection of $T^{\bot}$ with each of the $t$ dual subspaces $\overline{T}_{1}^{\bot}$, $\cdots$, $\overline{T}_{t}^{\bot}$ was only the zero vector, and also (2) the output of the adversary $\Adv$ was inside $T^{\bot}$. $p'$ is defined is the minimal probability taken over all possible choices of $t$ subspaces $\overline{T}_{1}$, $\cdots$, $\overline{T}_{t}$. After one verifies that indeed for every $i$ we have $p' \leq p_{i}$, it is sufficient to lower bound $p'$.

\paragraph{Lower bound for the probability $p'$.}
The probability $p'$ is for an event that's defined as the logical AND of two events, and as usual, equals the product between the probability $p'_{0}$ of the first event (the trivial intersection between the subspaces), times the conditional probability $p'_{1}$ of the second event (that $\Adv$ hits a non-zero vector in the dual $T^{\bot}$), conditioned on the first event.

First we lower bound the probability $p'_{0}$ by upper bounding the complement probability, that is, we show that the probability for a non-trivial intersection is small. Consider the random process of choosing a basis for a subspace $T$ and note that it is equivalent to choosing a basis for the dual $T^{\bot}$. The process of choosing a basis for the dual has $t$ steps, and in each step we choose a random vector in $S^{\bot}$ that's outside the span we aggregated so far. Given a dual subspace $\overline{T}^{\bot}$ of dimension $t$, what is the probability for the two subspaces to have only a trivial intersection? It is exactly the sum over $z \in [t]$ (which we think of as the steps for sampling $T^{\bot}$) of the following event: In the $t$-step process of choosing a basis for $T^{\bot}$, index $z$ was the first to cause the subspaces to have a non-zero intersection. Recall that for each $z \in [t]$, the probability that $z$ was such first index to cause an intersection, equals the probability that the $z$-th sampled basis vector for $T^{\bot}$ is a vector that's inside the unified span of $\overline{T}^{\bot}$ and the aggregated span of $T^{\bot}$ so far, after $z - 1$ samples. This amounts to the probability
$$
\sum_{z \in [t]} \frac{ |\overline{T}^{\bot}| \cdot 2^{z - 1} }{ |S^{\bot}| }
=
\sum_{z \in [t]} \frac{ 2^{t} \cdot 2^{z - 1} }{ 2^{k - r} }
=
2^{ -s } \cdot \sum_{z \in \{ 0, 1, \cdots, t - 1 \}} 2^{z}
$$
$$
=
2^{ -s } \cdot \left( 2^{t} - 1 \right)
< 
2^{t - s}
\enspace .
$$
Since the above is an upper bound on the probability for a non-trivial intersection between $T^{\bot}$ and one more single subspace, by union bound, the probability for $T^{\bot}$ to have a non-trivial intersection with at least one of the $t$ subspaces $\overline{T}_{1}^{\bot}$, $\cdots$, $\overline{T}_{t}^{\bot}$ is upper bounded by $t \cdot 2^{t - s}$. This means that $p'_{0} \geq 1 - t \cdot 2^{t - s}$.

The lower bound for the conditional probability $p'_{1}$ is now quite easy: Note that since $\Pr\left[ A | B \right] \geq \Pr\left[ A \right] - \Pr\left[ \lnot B \right]$, letting $A$ the event that $\Adv$ outputs a vector in the dual $T^{\bot}$ and $B$ the event that $T^{\bot}$ has only a trivial intersection with all other $t$ subspaces, we get $p'_{1} \geq \epsilon - t \cdot 2^{t - s}$. By our assumption that $\frac{t \cdot \frac{1}{\epsilon}}{2^{ s - t }} \leq o(1)$, we have $p'_{1} \geq \frac{\epsilon}{2}$. Overall we got $p' := p'_{0} \cdot p'_{1} \geq \left( 1 - t \cdot 2^{t - s} \right) \cdot \frac{\epsilon}{2} > \frac{\epsilon}{4}$.

Finally, to see why we get an overall high probability for $D \geq t + 1$ on a sample from $\Ds_{1}$, observe the following. In each bucket there are $\frac{k}{\epsilon}$ attempts, each succeeds with probability at least $\frac{\epsilon}{4}$ and thus the overall success probability in a bucket is $\geq 1 - e^{-\Omega(k)}$. Accordingly, the probability to succeed at least once in each of the $t + 1$ buckets (and thus to satisfy $D \geq t + 1$) is $\geq 1 - (t + 1)\cdot e^{-\Omega(k)}$, by considering the complement probability and applying union bound. Overall the probability for $D \geq t + 1$ is thus $\geq 1 - e^{ -\Omega(k) }$.

\paragraph{Executing $\AdvB$ on the distribution $\Ds_{2}$.}
Consider a different distribution $\Ds_{2}$: Sample $T$ once, then allow an $\ell$-oracle access to it, $\Oracle^{(1)}_{T}, \cdots, \Oracle^{(\ell)}_{T}$. Note that $\AdvB$ is a $q \cdot \ell$-query algorithm and thus by Corollary \ref{corollary:unconditional_subspace_hiding_main_corollary} there is the following indistinguishability of oracles with respect to $\AdvB$:
$$
\mathcal{D}_{1}
\approx_{O\left( \frac{ q \cdot \ell^2 \cdot s }{ \sqrt{ 2^{k - r - s} } } \right)}
\mathcal{D}_{2} \enspace .
$$
Since given a sample oracle from $\Ds_{1}$, the algorithm $\AdvB$ outputs $D \geq t + 1$ with probability $\geq 1 - e^{ -\Omega(k) }$, by the above indistinguishability, whenever we execute $\AdvB$ on a sample from $\Ds_{2}$, then with probability at least $\geq 1 - e^{ -\Omega(k) } - O\left( \frac{ q \cdot \ell^2 \cdot s }{ \sqrt{ 2^{k - r - s} } } \right) \geq 1 - O\left( \frac{ q \cdot \ell^2 \cdot s }{ \sqrt{ 2^{k - r - s} } } \right)$ we have $D \geq t + 1$. By our assumption in the Lemma that $O\left( \frac{ q \cdot \ell^2 \cdot s }{ \sqrt{ 2^{k - r - s} } } \right) \leq \frac{1}{2}$, with probability at least $\frac{1}{2}$ we have $D \geq t + 1$ given a sample from $\Ds_{2}$.
By an averaging argument, it follows that with probability at least $\frac{1}{2} \cdot \frac{1}{2} = \frac{1}{4}$ over sampling the superspace $T$, the probability $p_{T}$ for the event where $D \geq t + 1$, is at least $\frac{1}{2} \cdot \frac{1}{2} = \frac{1}{4}$. Let us call this set of superspaces $T$, "the good set" of samples, which by definition has fraction at least $\frac{1}{4}$. Recall two facts: (1) the dimension of $T^{\bot}$ is $t$, (2) The dimension $D$ aggregates vectors inside $S^{\bot}$. The two facts together imply that in the event $D \geq t + 1$, it is necessarily the case that there exists an execution index $i \in [\ell]$ in the reduction $\AdvB$ where $\Adv$ outputs a vector in $\left( S^{\bot} \setminus T^{\bot} \right)$, given membership check in $T$.

For every $T$ inside the good set we thus know that with probability $\frac{1}{4}$, one of the output vectors of $\Adv$ will be in $\left( S^{\bot} \setminus T^{\bot} \right)$. Since these are $\ell$ i.i.d. executions of $\Adv$, by union bound, for every $T$ inside the good set, when we prepare an oracle access to $T$ and execute $\Adv$, we will get $\Adv^{\Oracle_{T}} \in \left( S^{\bot} \setminus T^{\bot} \right)$ with probability $\frac{1}{4 \cdot \ell}$. We deduce that for a uniformly random $T$ which we then prepare oracle access to, the probability for $\Adv^{\Oracle_{T}} \in \left( S^{\bot} \setminus T^{\bot} \right)$ is at least the probability for this event and also that $T$ is inside the good set, which in turn is at least
$$
\frac{1}{4} \cdot \frac{1}{4 \cdot \ell}
=
\frac{1}{16 \cdot \ell}
:=
\frac{\epsilon}{16 \cdot k \cdot (t + 1)}
\enspace ,
$$
which finishes our proof.
\end{proof}

\subsection{Cryptographic Hardness of Hidden Subspace Detection}
In this subsection we prove the cryptographic analogues of the information theoretical lower bounds from the previous section.

We start with stating the subspace-hiding obfuscation property of indistinguishability obfuscators from \cite{EC:Zhandry19b}.

\begin{lemma} \label{lemma:subspace_hiding}
Let $k, r, s \in \Nat$ such that $r + s \leq k$ and let $S \subseteq \bbZ_{2}^{k}$ a subspace of dimension $r$. Let $\Ss_{s}$ the uniform distribution over subspaces $T$ of dimension $r + s$ such that $S \subseteq T \subseteq \bbZ_{2}^{k}$. For any subspace $S' \subseteq \bbZ_{2}^{k}$ let $C_{S'}$ some canonical classical circuit that checks membership in $S'$, say be Gaussian elimination. Let $\iO$ an indistinguishability obfuscation scheme that is $\left( f(\secp), \epsilon(\secp) \right)$-secure, and assume that $\left( f(\secp), \epsilon(\secp) \right)$-secure injective one-way functions exist. 

Then, for every security parameter $\secp$ such that $\secp \leq k - r - s$ and sufficiently large $p := p(\secp)$ polynomial in the security parameter, we have the following indistinguishability,
$$
\{
\Obf_{S}
\;
:
\;
\Obf_{S} \gets \iO\left( 1^{\secp}, 1^{p}, C_{S} \right)
\}
\approx_{\left( f(\secp) - \poly(\secp), \; s \cdot \epsilon(\secp) \right)}
$$
$$
\{
\Obf_{T}
\;
:
\;
T \gets \Ss_{s}, 
\Obf_{T} \gets \iO\left( 1^{\secp}, 1^{p}, C_{T} \right)
\} \enspace .
$$
\end{lemma}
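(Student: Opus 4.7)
The plan is to mirror the structure of the information-theoretic proof (Lemma~\ref{lemma:unconditional_subspace_hiding}): grow the subspace one dimension at a time via a sequence of $s$ hybrid experiments $H_0, H_1, \ldots, H_s$, where $H_i$ gives the adversary $\iO(1^\secp, 1^p, C_{S_i})$ for $S_i$ a uniformly random superspace of $S$ of dimension $r+i$. We sample $S_{i+1}$ from $S_i$ by drawing a uniformly random $v_i \in \bbZ_2^k \setminus S_i$ and setting $S_{i+1} = S_i \cup (S_i + v_i)$; a standard counting argument shows that $S_s$ is distributed exactly according to $\Ss_s$. The final distinguishing advantage will be at most $s$ times the advantage obtained per step, matching the claimed $s\cdot\epsilon(\secp)$ loss.

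For the single step $H_i \to H_{i+1}$, fix a basis $B_i$ of $S_i^\bot$ and define the linear map $R_i: \bbZ_2^k \to \bbZ_2^{k-r-i}$ by $R_i(x) = B_i x$. Observe that $C_{S_i}(x) = 1$ iff $R_i(x) = 0$, and if we let $\alpha_i := R_i(v_i)$, then $\alpha_i$ is uniform in $\bbZ_2^{k-r-i} \setminus \{0\}$ and the one-dimension-up circuit $C_{S_{i+1}}$ accepts exactly when $R_i(x) \in \{0, \alpha_i\}$. Crucially, the range of $R_i$ has size $2^{k-r-i} \geq 2^\secp$, so the exponential-range requirement of Lemma~\ref{lem:iopuncture} is satisfied.

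To justify the single step, we directly invoke Lemma~\ref{lem:iopuncture} with $P := C_{S_i}$, $R := R_i$, and $P' := \mathbf{1}$ the constant-$1$ program, whose corresponding join program $J_y$ accepts iff $R_i(x) \in \{0, y\}$. For $y$ uniformly random in $\bbZ_2^{k-r-i}$, the lemma gives $\iO(P) \approx \iO(J_y)$ with $(f(\secp)-\poly(\secp), \epsilon(\secp))$-advantage. Sampling $y$ uniformly in $\bbZ_2^{k-r-i}$ differs from sampling uniformly in $\bbZ_2^{k-r-i} \setminus \{0\}$ by statistical distance at most $2^{-(k-r-i)} \leq 2^{-\secp}$; when the bad event $y = 0$ occurs, $J_y$ simply coincides with $C_{S_i}$, so this only affects the labelling of the resulting distribution and contributes a negligible additive term. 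Hence the per-step advantage is bounded by $\epsilon(\secp) + 2^{-\secp} = O(\epsilon(\secp))$.

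The only subtlety, rather than a serious obstacle, is bookkeeping: one must check that iteratively adjoining a uniformly random outside vector over $s$ steps reproduces the distribution $\Ss_s$ (standard), and that the circuit size $p(\secp)$ is chosen sufficiently large to accommodate the intermediate join programs $J_y$, which are only a polynomial additive overhead over $C_{S_i}$ (so any sufficiently large polynomial $p$ suffices uniformly across all $i \in [s]$). Summing the per-step loss over the $s$ hybrids gives the claimed $(f(\secp) - \poly(\secp), \; s\cdot\epsilon(\secp))$-indistinguishability.
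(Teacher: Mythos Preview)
The paper does not prove this lemma itself; it states it as the subspace-hiding property of indistinguishability obfuscation from \cite{EC:Zhandry19b}. Your approach---a dimension-by-dimension hybrid, with each step handled via the sparse-random-trigger technique of Lemma~\ref{lem:iopuncture}---is correct and is essentially the argument in the cited reference (and is the computational analogue of the paper's own Lemma~\ref{lemma:unconditional_subspace_hiding}).

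Two minor bookkeeping points. First, after invoking Lemma~\ref{lem:iopuncture} you have $\iO(J_{\alpha_i})$, but hybrid $H_{i+1}$ requires $\iO(C_{S_{i+1}})$ for the \emph{canonical} membership circuit; since $J_{\alpha_i}$ and $C_{S_{i+1}}$ are functionally equivalent but syntactically different circuits, one further application of iO security is needed to bridge them. Second, Lemma~\ref{lem:iopuncture} as stated in the paper is not quantitative, and its proof internally uses two iO invocations plus one PRG invocation; so each hybrid step costs $O(\epsilon(\secp))$ rather than exactly $\epsilon(\secp)$, and the constant in front of $s\cdot\epsilon(\secp)$ is slightly larger than $1$. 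Neither point affects the substance of the argument, and the paper's stated bound is itself loose with such constants.
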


The following generalization of Lemma \ref{lemma:subspace_hiding} is derived by a random self-reducibility argument and is formally proved by using an additional layer of indistinguishability obfuscation.

\begin{lemma} \label{lemma:iterated_subspace_hiding}
Let $k, r, s \in \Nat$ such that $r + s \leq k$ and let $S \subseteq \bbZ_{2}^{k}$ a subspace of dimension $r$. Let $\Ss_{s}$ the uniform distribution over subspaces $T$ of dimension $r + s$ such that $S \subseteq T \subseteq \bbZ_{2}^{k}$. For any subspace $S' \subseteq \bbZ_{2}^{k}$ let $C_{S'}$ some canonical classical circuit that checks membership in $S'$, say be Gaussian elimination. Let $\iO$ an indistinguishability obfuscation scheme that is $\left( f(\secp), \epsilon(\secp) \right)$-secure, and assume that $\left( f(\secp), \epsilon(\secp) \right)$-secure injective one-way functions exist.  

Then, for every security parameter $\secp$ such that $\secp \leq k - r - s$ and sufficiently large $p := p(\secp)$ polynomial in the security parameter, we have the following indistinguishability,
$$
\{
\Obf^{1}_{S}, \cdots , \Obf^{\ell}_{S}
\;
:
\;
\forall i \in [\ell]
,
\Obf^{i}_{S} \gets \iO\left( 1^{\secp}, 1^{p}, C_{S} \right)
\}
\approx_{\left( f(\secp) - \ell \cdot \poly(\secp), \; \left( 2 \cdot \ell + s \right) \cdot \epsilon(\secp) \right)}
$$
$$
\{
\Obf^{1}_{T}, \cdots , \Obf^{\ell}_{T}
\;
:
\;
T \gets \Ss_{s}, 
\forall i \in [\ell]
,
\Obf^{i}_{T} \gets \iO\left( 1^{\secp}, 1^{p}, C_{T} \right)
\} \enspace .
$$
\end{lemma}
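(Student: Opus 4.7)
The plan is to reduce to the single-copy result of Lemma~\ref{lemma:subspace_hiding} by introducing an inner ``shared'' obfuscation through which all $\ell$ output obfuscations are funneled. The key observation is that because $C_S$ and any obfuscation $\Obf^\star \gets \iO(1^\secp, 1^{p_{\text{in}}}, C_S)$ compute the same function (membership in $S$), iO security implies that $\iO(1^\secp, 1^p, \Obf^\star)$ and $\iO(1^\secp, 1^p, C_S)$ are indistinguishable for a suitable polynomial $p(\secp)$ large enough to accommodate obfuscating either. Hence all $\ell$ independent outer obfuscations can be routed through a single inner object, which is then switched from $C_S$ to $C_T$ via a single application of Lemma~\ref{lemma:subspace_hiding}.

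Concretely, I would proceed through four hybrids. Hybrid $\Hyb_0$ is the starting distribution: $\ell$ independent fresh $\Obf^i \gets \iO(1^\secp, 1^p, C_S)$. In $\Hyb_1$, sample one auxiliary $\Obf^\star \gets \iO(1^\secp, 1^{p_{\text{in}}}, C_S)$ and set every $\Obf^i \gets \iO(1^\secp, 1^p, \Obf^\star)$; this is implemented as an $\ell$-step sub-hybrid that replaces one outer slot at a time using iO security applied to the functionally equivalent pair $(C_S, \Obf^\star)$, incurring loss $\ell \cdot \epsilon(\secp)$. In $\Hyb_2$, sample $T \gets \Ss_s$ and redefine $\Obf^\star \gets \iO(1^\secp, 1^{p_{\text{in}}}, C_T)$, leaving the outer obfuscations formally unchanged; indistinguishability from $\Hyb_1$ follows by a direct reduction to Lemma~\ref{lemma:subspace_hiding}, where the reduction takes its single challenge as $\Obf^\star$ and generates the $\ell$ outer obfuscations itself, incurring loss $s \cdot \epsilon(\secp)$. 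Finally, $\Hyb_3$ (the target distribution) replaces each $\iO(1^\secp, 1^p, \Obf^\star)$ with an independent fresh $\iO(1^\secp, 1^p, C_T)$ via a symmetric $\ell$-step sub-hybrid, contributing another $\ell \cdot \epsilon(\secp)$, for a total loss of $(2\ell + s) \cdot \epsilon(\secp)$.

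There is no genuine obstacle here; the argument is essentially the standard ``extra layer of obfuscation'' trick pervasive in the iO literature. The technical points needing care are the following: the outer size $p$ and inner size $p_{\text{in}}$ must be chosen so that both $C_{S'}$ (for $S' \in \{S,T\}$) and any $\iO(1^\secp, 1^{p_{\text{in}}}, C_{S'})$ fit inside a circuit of size $p$, which is possible since $\iO$ blows up circuit size by a fixed polynomial factor, so both parameters remain polynomial in $\secp$; the reduction in the $\Hyb_1 \to \Hyb_2$ step performs $\ell$ additional outer iO computations of size $p$, which accounts for the $\ell \cdot \poly(\secp)$ slack in the adversary size bound $f(\secp) - \ell \cdot \poly(\secp)$; and in the outer $\ell$-step sub-hybrids, the functional equivalence of $C_{S'}$ with $\Obf^\star$ is immediate from $\iO$ correctness, so each step is a legitimate single invocation of iO security.
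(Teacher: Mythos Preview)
Your proposal is correct and follows essentially the same approach as the paper: introduce an extra inner layer of obfuscation so that all $\ell$ outer copies share a single inner object, swap that inner object from $C_S$ to $C_T$ via one invocation of Lemma~\ref{lemma:subspace_hiding}, and then remove the extra layer, with the three transitions contributing $\ell\cdot\epsilon(\secp)$, $s\cdot\epsilon(\secp)$, and $\ell\cdot\epsilon(\secp)$ respectively. Your hybrid decomposition and parameter accounting (including the $\ell\cdot\poly(\secp)$ slack for the reduction's extra work) match the paper's argument essentially step for step.
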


\begin{proof}
    We first observe that as long as the circuit size parameter $1^p$ is sufficiently large, and specifically, larger than the size of an obfuscated version of the plain circuit $C$, then it is indistinguishable to tell whether said circuit is obfuscated under one or two layers of obfuscation. More precisely, due to the perfect correctness of obfuscation, the circuit $C$ and an obfuscated $\Obf_{C} \gets \iO\left( 1^{\secp}, 1^{p}, C \right)$ have the same functionality (for every sample out of the distribution of obfuscated versions of $C$), and thus, as long as $p \geq |\Obf_{C}|$, then the distributions $\Ds_{0} := \{ \Obf_{C} \gets \iO\left( 1^{\secp}, 1^{p}, C \right) \}$ and $\Ds_{1} := \{ \Obf^{2}_{C} \gets \iO\left( 1^{\secp}, 1^{p}, \Ds_{0} \right) \}$ are $\left( f(\secp), \epsilon(\secp) \right)$-indistinguishable.

    An implication of the above is that for a sufficiently large parameter $p$, the distribution
    $$
    \{
    \Obf^{1}_{S}, \cdots , \Obf^{\ell}_{S}
    \;
    :
    \;
    \forall i \in [\ell]
    ,
    \Obf^{i}_{S} \gets \iO\left( 1^{\secp}, 1^{p}, C_{S} \right)
    \}
    $$
    is $\left( f(\secp), \ell \cdot \epsilon(\secp) \right)$-indistinguishable from a distribution where $C_{S}$ is swapped with an obfuscation of it (let us denote this modified distribution with $\Ds_{S}$), and the distribution 
    $$
    \{
    \Obf^{1}_{T}, \cdots , \Obf^{\ell}_{T}
    \;
    :
    \;
    T \gets \mathcal{S}_{S}, 
    \forall i \in [\ell]
    ,
    \Obf^{i}_{T} \gets \iO\left( 1^{\secp}, 1^{p}, C_{T} \right)
    \}
    $$
    is $\left( f(\secp), \ell \cdot \epsilon(\secp) \right)$-indistinguishable from a distribution where $C_{T}$ is swapped with an obfuscation of it (let us denote this modified distribution with $\Ds_{T}$). To complete our proof we will show that $\Ds_{S}$ and $\Ds_{T}$ are appropriately indistinguishable, and will get our proof by transitivity of computational distance. 

    The indistinguishability between $\Ds_{S}$ and $\Ds_{T}$ follows almost readily from the the subspace hiding Lemma \ref{lemma:subspace_hiding}: One can consider the reduction $\AdvB$ that gets a sample $\Obf$ which is either from $\Ds_{0} := \{ \Obf_{S} \gets \iO\left( 1^{\secp}, 1^{p}, C_{S} \right) \}$ or from $\Ds_{1} := \{ \Obf_{T} \gets \iO\left( 1^{\secp}, 1^{p}, C_{T} \right) \}$ for an appropriately random superspace $T$ of $S$. $\AdvB$ then generates $\ell$ i.i.d obfuscations $\{ \Obf^{(1)}, \cdots, \Obf^{(\ell)} \}$ of the (already obfuscated) circuit $\Obf$ and executes $\Adv$ on the $\ell$ obfuscations. One can see that when the input sample $\Obf$ for $\AdvB$ came from $\Ds_{0}$ then the output sample of the reduction comes from the distribution $\Ds_{S}$, and when the input sample $\Obf$ for $\AdvB$ came from $\Ds_{1}$ then the output sample of the reduction comes from the distribution $\Ds_{T}$. Since the reduction executes in complexity $\ell \cdot \poly(\secp)$, this means that 
    $$
    \Ds_{S}
    \approx_{\left( f(\secp) - \ell \cdot \poly(\secp), \; s \cdot \epsilon(\secp) \right)}
    \Ds_{T}
    \enspace .
    $$
    To conclude, by transitivity of computational indistinguishability, we get that the two distributions in our Lemma's statement are $\left( f(\secp) - \ell \cdot \poly(\secp), \; \left( 2 \cdot \ell + s \right) \cdot \epsilon(\secp) \right)$-indistinguishable, as needed.
\end{proof}

A corollary which follows by combining the above Lemma \ref{lemma:iterated_subspace_hiding}, with a standard hybrid argument on the first lemma \ref{lemma:subspace_hiding} is as follows.

\begin{corollary} \label{corollary:subspace_hiding_main_corollary}
Let $k, r, s \in \Nat$ such that $r + s \leq k$ and let $S \subseteq \bbZ_{2}^{k}$ a subspace of dimension $r$. Let $\Ss_{s}$ the uniform distribution over subspaces $T$ of dimension $r + s$ such that $S \subseteq T \subseteq \bbZ_{2}^{k}$. For any subspace $S' \subseteq \bbZ_{2}^{k}$ let $C_{S'}$ some canonical classical circuit that checks membership in $S'$, say be Gaussian elimination. Let $\iO$ an indistinguishability obfuscation scheme that is $\left( f(\secp), \epsilon(\secp) \right)$-secure, and assume that $\left( f(\secp), \epsilon(\secp) \right)$-secure injective one-way functions exist. 

Then, for every security parameter $\secp$ such that $\secp \leq k - r - s$ and sufficiently large $p := p(\secp)$ polynomial in the security parameter, we have the following indistinguishability,
$$
\{
\Obf_{T_{1}}, \cdots , \Obf_{T_{\ell}}
\;
:
\;
T_{i} \gets \Ss_{s} \forall i \in [\ell]
,
\Obf_{T_{i}} \gets \iO\left( 1^{\secp}, 1^{p}, C_{T_{i}} \right)
\}
$$
$$
\approx_{\left( f(\secp) - \ell \cdot \poly(\secp), \; \left( 2 \cdot \ell \cdot s \right) \cdot \epsilon(\secp) \right)}
\{
\Obf^{1}_{T}, \cdots , \Obf^{\ell}_{T}
\;
:
\;
T \gets \Ss_{s}, 
\forall i \in [\ell]
,
\Obf^{i}_{T} \gets \iO\left( 1^{\secp}, 1^{p}, C_{T} \right)
\}
\enspace .
$$
\end{corollary}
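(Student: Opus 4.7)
The plan is a standard hybrid argument that bridges the single-copy subspace hiding lemma (Lemma~\ref{lemma:subspace_hiding}) with the shared-$T$ iterated version (Lemma~\ref{lemma:iterated_subspace_hiding}). I will define a sequence of $\ell+2$ hybrid distributions: $\Hyb_0$ is the left-hand-side distribution with $\ell$ independently sampled $T_i \gets \Ss_s$, each obfuscated freshly; $\Hyb_i$ for $i \in [\ell]$ replaces the first $i$ obfuscations with fresh independent obfuscations of $C_S$ itself, while the remaining $\ell - i$ coordinates still obfuscate fresh independent $T_j$'s; so $\Hyb_\ell$ is $\ell$ independent obfuscations of $C_S$; and $\Hyb_{\ell+1}$ is the right-hand-side distribution (a single $T \gets \Ss_s$ obfuscated $\ell$ times).

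For the step from $\Hyb_{i-1}$ to $\Hyb_i$, I construct a reduction that receives a single challenge $\Obf_\star$ which is either $\iO(1^\secp,1^p,C_S)$ or $\iO(1^\secp,1^p,C_T)$ for $T \gets \Ss_s$, places $\Obf_\star$ into the $i$-th slot, and simulates the remaining $\ell - 1$ slots on its own by freshly sampling $S$-obfuscations (for $j < i$) and fresh independent $T_j$-obfuscations (for $j > i$). Running the distinguisher $\Adv$ on the resulting tuple perfectly simulates $\Hyb_{i-1}$ or $\Hyb_i$ depending on the challenge bit. The reduction uses $\ell \cdot \poly(\secp)$ extra computation, so by Lemma~\ref{lemma:subspace_hiding} the two adjacent hybrids are $(f(\secp) - \ell \cdot \poly(\secp),\, s \cdot \epsilon(\secp))$-indistinguishable. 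Summing over $i \in [\ell]$ gives advantage at most $\ell \cdot s \cdot \epsilon(\secp)$ between $\Hyb_0$ and $\Hyb_\ell$.

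For the final step from $\Hyb_\ell$ to $\Hyb_{\ell+1}$, I apply Lemma~\ref{lemma:iterated_subspace_hiding} directly, which gives $(f(\secp) - \ell \cdot \poly(\secp),\, (2\ell + s)\cdot \epsilon(\secp))$-indistinguishability. Combining the two bounds by the triangle inequality yields total distinguishing advantage
\[
\ell \cdot s \cdot \epsilon(\secp) + (2\ell + s) \cdot \epsilon(\secp) \;\leq\; (2\ell \cdot s) \cdot \epsilon(\secp)
\]
for $\ell, s \geq 2$, which matches the claimed bound. The reduction complexity remains $\ell \cdot \poly(\secp)$, so the resulting distinguisher-size slack $f(\secp) - \ell \cdot \poly(\secp)$ is preserved throughout.

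I do not expect a serious obstacle here; the only thing to be careful about is that the reduction in each hybrid step is genuinely able to sample the ``side'' obfuscations without knowing the challenge bit, which is immediate since the side distributions in $\Hyb_{i-1}$ and $\Hyb_i$ differ only at coordinate $i$ and all other coordinates are sampled independently of $S$, $T_i$. Verifying the quantitative security loss is the main bookkeeping step, and the $2\ell s \cdot \epsilon$ target bound is met with room to spare.
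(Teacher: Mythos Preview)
Your proposal is correct and matches the paper's intended proof exactly: the paper states only that the corollary ``follows by combining the above Lemma~\ref{lemma:iterated_subspace_hiding}, with a standard hybrid argument on the first statement~\ref{lemma:subspace_hiding},'' which is precisely your chain $\Hyb_0 \to \cdots \to \Hyb_\ell$ (hybrid on Lemma~\ref{lemma:subspace_hiding}) followed by $\Hyb_\ell \to \Hyb_{\ell+1}$ (Lemma~\ref{lemma:iterated_subspace_hiding}).

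One minor arithmetic slip: your final inequality $\ell s + 2\ell + s \leq 2\ell s$ is equivalent to $(\ell-1)(s-2) \geq 2$, which is \emph{not} true for all $\ell, s \geq 2$ (it fails at $(\ell,s)=(2,2)$ and $(2,3)$). This is harmless for the paper's applications, where both $\ell$ and $s$ are polynomially large in $\lambda$, but you should tighten the side condition accordingly.
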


\paragraph{Improved Results on Hardness of Concentration in Dual of Obfuscated Subspace.}
As part of this work we strengthen the main technical lemma (Lemma 5.1) from \cite{C:Shmueli22}. Roughly speaking, in \cite{C:Shmueli22} it is shown that an adversary $\Adv$ that gets an obfuscation $\Obf_{T}$ for a random superspace $T$ of $S$, and manages to output a (non-zero) vector in the dual $T^{\bot} \setminus \{ \mathbf{0} \}$ with probability $\epsilon$, has to sometimes output vectors in $S^{\bot} \setminus T^{\bot}$, i.e. with probability at least $\Omega\left( \epsilon^{2} \right) / \poly(k)$. Below we strengthen the probability to $\Omega\left( \epsilon \right) / \poly(k)$.

\begin{lemma} [IO Dual Subspace Anti-Concentration] \label{lemma:dual_subspace_concentration}
Let $k, r, s \in \Nat$ such that $r + s \leq k$ and let $S \subseteq \bbZ_{2}^{k}$ a subspace of dimension $r$. Let $\Ss_{s}$ the uniform distribution over subspaces $T$ of dimension $r + s$ such that $S \subseteq T \subseteq \bbZ_{2}^{k}$. For any subspace $S' \subseteq \bbZ_{2}^{k}$ let $C_{S'}$ some canonical classical circuit that checks membership in $S'$, say be Gaussian elimination. Let $\iO$ an indistinguishability obfuscation scheme that is $\left( f(\secp), \frac{1}{f(\secp)} \right)$-secure, and assume that $\left( f(\secp), \frac{1}{f(\secp)} \right)$-secure injective one-way functions exist. 

Let $\secp \in \Nat$ the security parameter such that $\secp \leq k - r - s$ and let $p := p(\secp)$ a sufficiently large polynomial in the security parameter. Denote by $\Oracle_{ \secp, p, s }$ the distribution over obfuscated circuits that samples $T \gets \Ss_{s}$ and then $\Obf_{T} \gets \iO\left( 1^{\secp}, 1^{p}, C_{T} \right)$.

Assume there is a quantum algorithm $\Adv$ of complexity $T_{\Adv}$ such that,
$$
\Pr
\left[
\Adv(\Obf_{T}) \in \left( T^{\bot} \setminus \{ 0 \} \right)
\;
:
\;
\Obf_{T} \gets \Oracle_{ \secp, p, s }
\right]
\geq \epsilon
\enspace .
$$

Also, denote $t := k - r - s$, $\ell := \frac{ k\left( t + 1 \right) }{ \epsilon }$ and assume (1) $\frac{t \cdot \frac{1}{\epsilon}}{2^{ s - t }} \leq o(1)$ and (2) $\frac{ \ell \cdot \left( k^3 + \poly(\secp) + T_{\Adv} \right) }{f(\secp)} \leq o(1)$.

Then, it is necessarily the case that 
$$
\Pr
\left[
\Adv\left( \Obf_{T} \right) \in \left( S^{\bot} \setminus T^{\bot} \right)
\;
:
\;
\Obf_{T} \gets \Oracle_{ \secp, p, s }
\right]
\geq
\frac{\epsilon}{16 \cdot k \cdot (t + 1)}
\enspace .
$$
\end{lemma}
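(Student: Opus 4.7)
The plan is to mirror the proof of Lemma~\ref{lemma:unconditional_dual_subspace_anti_concentration}, replacing the information-theoretic subspace-hiding step with its cryptographic analog from Corollary~\ref{corollary:subspace_hiding_main_corollary}. First I would introduce the same reduction $\AdvB$: on input $\ell := k(t+1)/\epsilon$ obfuscated circuits $\Obf^{(1)}, \ldots, \Obf^{(\ell)}$, it runs $\Adv$ on each to obtain output vectors $u_1, \ldots, u_\ell$, filters to keep only those lying in $S^\perp$, and returns the $\mathbb{F}_2$-dimension $D$ of their span. Computing $D$ is done by Gaussian elimination on $\ell$ vectors in $\bbZ_2^k$, costing $O(\ell \cdot k^3)$, so the overall runtime of $\AdvB$ is $\ell \cdot (T_\Adv + O(k^3)) + \poly(\secp)$.

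Second, I would analyze $\AdvB$ on the ``independent samples'' distribution $\Ds_1$ where each $\Obf^{(i)}$ is an independent obfuscation of $C_{T_i}$ for a freshly drawn $T_i \gets \Ss_s$. Here the bucket argument from the proof of Lemma~\ref{lemma:unconditional_dual_subspace_anti_concentration} transfers verbatim, since the argument only depends on the marginal distribution of $T_i$ and on the success probability of $\Adv$ on a single obfuscated sample: partitioning $[\ell]$ into $t+1$ blocks of size $k/\epsilon$ and lower-bounding each per-trial success probability by $\epsilon/4$ via the assumption $t/(\epsilon \cdot 2^{s-t}) \le o(1)$, Chernoff gives $\Pr_{\Ds_1}[D \ge t+1] \ge 1 - (t+1) e^{-\Omega(k)} \ge 1 - e^{-\Omega(k)}$.

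Third, I would switch to the ``repeated obfuscation'' distribution $\Ds_2$ in which a single $T \gets \Ss_s$ is sampled and $\ell$ independent obfuscations $\Obf^i_T$ are produced. By Corollary~\ref{corollary:subspace_hiding_main_corollary}, $\Ds_1$ and $\Ds_2$ are $(f(\secp) - \ell \cdot \poly(\secp),\, (2\ell s)/f(\secp))$-indistinguishable, and because $\AdvB$'s runtime and the distinguishing advantage are both dominated by the assumption $\ell(k^3 + \poly(\secp) + T_\Adv)/f(\secp) \le o(1)$, the event $D \ge t+1$ still occurs under $\Ds_2$ with probability at least $\tfrac{1}{2}$. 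Finally, since $\dim(T^\perp) = t$, any sample with $D \ge t+1$ must contain at least one output of $\Adv$ in $S^\perp \setminus T^\perp$; averaging then yields a ``good set'' of superspaces $T$ of density $\ge \tfrac14$ on which the per-trial probability of landing in $S^\perp \setminus T^\perp$ is $\ge 1/(4\ell)$ by union bound, producing the claimed bound $\epsilon/(16 k (t+1))$.

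The main obstacle, and the only place where this proof genuinely differs from the information-theoretic one, is bookkeeping the computational parameters: one has to ensure that the reduction $\AdvB$ — which internally runs $\Adv$ a full $\ell$ times and then performs Gaussian elimination — still fits inside the adversarial budget for Corollary~\ref{corollary:subspace_hiding_main_corollary}, and that the accumulated distinguishing advantage $(2\ell s)/f(\secp)$ is small enough to preserve the $1 - e^{-\Omega(k)}$ success probability after the switch from $\Ds_1$ to $\Ds_2$. Both are absorbed cleanly by hypothesis (2). A minor pitfall to avoid is double-counting randomness: $\AdvB$ is deterministic in $T$ once the obfuscation coins are fixed, so when invoking the averaging argument on the good set one should sample fresh obfuscation randomness per trial, which is what $\Ds_2$ already provides.
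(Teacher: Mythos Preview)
Your proposal is correct and follows essentially the same approach as the paper's proof: the same reduction $\AdvB$, the same bucket analysis on $\Ds_1$, the same switch to $\Ds_2$ via Corollary~\ref{corollary:subspace_hiding_main_corollary}, and the same averaging/union-bound extraction of the $\epsilon/(16k(t+1))$ bound. The paper's proof is indeed just the argument of Lemma~\ref{lemma:unconditional_dual_subspace_anti_concentration} with the indistinguishability step replaced by the cryptographic corollary and the running-time bookkeeping absorbed by hypothesis~(2), exactly as you outline.
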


\begin{proof}
We start with defining the following reduction $\AdvB$, that will use the circuit $\Adv$ as part of its machinery.

\paragraph{The reduction $\AdvB$.}
The input to $\AdvB$ contains $\ell := \frac{ k \cdot \left( t + 1 \right) }{ \epsilon }$ samples of obfuscations $\left( \Obf^{(1)}, \cdots, \Obf^{(\ell)} \right)$, for $t := k - r - s$. Given the $\ell$ obfuscations, execute $\Adv\left( \Obf^{(i)} \right)$ for every $i \in [\ell]$ and obtain $\ell$ vectors $\{ u_{1}, \cdots, u_{\ell} \}$. Then, take only the vectors $\{ v_{1}, \cdots, v_{m} \}$ that are inside $S^{\bot}$, and then compute the dimension of their span, $D := \dim\left( \linspan\left( v_{1}, \cdots, v_{m} \right) \right)$. 
Note that the running time of $\AdvB$ is $\ell \cdot T_{\Adv} + \ell \cdot k^{3}$, where $\ell \cdot T_{\Adv}$ is for producing the $\ell$ outputs of $\Adv$ and $\ell \cdot k^{3}$ is for (naively) executing Gaussian elimination $\ell$ times, to repeatedly check whether the new vector $v_{i}$ adds a dimension i.e., whether it is outside of the span $\linspan\left( v_{1}, \cdots, v_{i - 1} \right)$ of the previous vectors.

\paragraph{Executing $\AdvB$ on the distribution $\Ds_{1}$.}
Consider the following distribution $\Ds_{1}$: Sample $\ell$ i.i.d superspaces $T_{1}, \cdots, T_{\ell}$, and for each of them, send an obfuscation of it: $\Obf_{T_1}, \cdots, \Obf_{T_\ell}$. Let us see what happens when we execute $\AdvB$ on a sample from the distribution $\mathcal{D}_{1}$.

Consider the $\ell$ vectors $\{ u_{1}, \cdots, u_{\ell} \}$ obtained by executing $\Adv$ on each of the input obfuscations. Recall that $\ell := \frac{1}{\epsilon} \cdot k \cdot \left( t + 1 \right)$ and consider a partition of the vectors into $t + 1$ consecutive sequences (or buckets), accordingly, each of length $\frac{1}{\epsilon} \cdot k$. In order to show that the probability for the reduction $\AdvB$ to have $D \geq t + 1$ is high, we show that with high probability, in each bucket $j \in [t + 1]$ there is a vector $u_{i}$ that's inside the corresponding dual $T_{i}^{\bot}$, but such that also the intersection between $T_{i}^{\bot}$ and each of the previous $j - 1$ dual subspaces that were hit by $\Adv$, is only the zero vector $0^{k}$. Note that the last condition indeed implies $D \geq t + 1$.

For every $i \in [\ell]$ we define the probability $p_{i}$. We start with defining it for the indices in the first bucket, and then proceed to define it recursively for the rest of the buckets. For indices $i \in [\frac{1}{\epsilon} \cdot k]$ in the first bucket, $p_{i}$ is the probability that given $\Obf_{T_{i}}$, the output of $\Adv$ is $u_{i} \in \left( T_{i}^{\bot} \setminus \{ 0 \} \right)$, and in such case we define the $i$-th execution as successful. We denote by $T_{(1)}$ the first subspace in the first bucket where a successful execution happens (and define $T_{(1)} := \bot$ if no success happened). For any $i$ inside any bucket $j \in \left( [t + 1] \setminus \{ 1 \} \right)$ that is not the first bucket, we define $p_{i}$ as the probability that (1) $u_{i} \in \left( T_{i}^{\bot} \setminus \{ 0 \} \right)$ and also (2) the intersection between $T_{i}^{\bot}$ and each of the dual subspaces of the previous winning subspaces $T_{(1)}, \cdots, T_{(j - 1)}$, is only $\{ 0^k \}$. That is, $p_{i}$ is the probability that the output of the adversary hits the dual subspace, and also the dual does not have a non-trivial intersection with any of the previous successful duals. Similarly to the first bucket, we denote by $T_{(j)}$ the first subspace in bucket $j$ with a successful execution.

We prove that with high probability, all $t + 1$ buckets have at least one successful execution. To see this, we define the following probability $p'$ which we show lower bounds $p_{i}$, and is defined as follows. First, let $\overline{T}_{1}$, $\cdots$, $\overline{T}_{t}$ any $t$ subspaces, each of dimension $r + s$, thus the duals $\overline{T}_{1}^{\bot}$, $\cdots$, $\overline{T}_{t}^{\bot}$ are such that each has dimension $t$. $p'_{ \left( \overline{T}_{1}, \cdots, \overline{T}_{t} \right) }$ is the probability that (1) when sampling $T^{\bot}$, the intersection of $T^{\bot}$ with each of the $t$ dual subspaces $\overline{T}_{1}^{\bot}$, $\cdots$, $\overline{T}_{t}^{\bot}$ was only the zero vector, and also (2) the output of the adversary $\Adv$ was inside $T^{\bot}$. $p'$ is defined is the minimal probability taken over all possible choices of $t$ subspaces $\overline{T}_{1}$, $\cdots$, $\overline{T}_{t}$. After one verifies that indeed for every $i$ we have $p' \leq p_{i}$, it is sufficient to lower bound $p'$.

\paragraph{Lower bound for the probability $p'$.}
The probability $p'$ is for an event that's defined as the logical AND of two events, and as usual, equals the product between the probability $p'_{0}$ of the first event (the trivial intersection between the subspaces), times the conditional probability $p'_{1}$ of the second event (that $\Adv$ hits a non-zero vector in the dual $T^{\bot}$), conditioned on the first event.

First we lower bound the probability $p'_{0}$ by upper bounding the complement probability, that is, we show that the probability for a non-trivial intersection is small. Consider the random process of choosing a basis for a subspace $T$ and note that it is equivalent to choosing a basis for the dual $T^{\bot}$. The process of choosing a basis for the dual has $t$ steps, and in each step we choose a random vector in $S^{\bot}$ that's outside the span we aggregated so far. Given a dual subspace $\overline{T}^{\bot}$ of dimension $t$, what is the probability for the two subspaces to have only a trivial intersection? It is exactly the sum over $z \in [t]$ (which we think of as the steps for sampling $T^{\bot}$) of the following event: In the $t$-step process of choosing a basis for $T^{\bot}$, index $z$ was the first to cause the subspaces to have a non-zero intersection. Recall that for each $z \in [t]$, the probability that $z$ was such first index to cause an intersection, equals the probability that the $z$-th sampled basis vector for $T^{\bot}$ is a vector that's inside the unified span of $\overline{T}^{\bot}$ and the aggregated span of $T^{\bot}$ so far, after $z - 1$ samples. This amounts to the probability
$$
\sum_{z \in [t]} \frac{ |\overline{T}^{\bot}| \cdot 2^{z - 1} }{ |S^{\bot}| }
=
\sum_{z \in [t]} \frac{ 2^{t} \cdot 2^{z - 1} }{ 2^{k - r} }
=
2^{ -s } \cdot \sum_{z \in \{ 0, 1, \cdots, t - 1 \}} 2^{z}
$$
$$
=
2^{ -s } \cdot \left( 2^{t} - 1 \right)
< 
2^{t - s}
\enspace .
$$
Since the above is an upper bound on the probability for a non-trivial intersection between $T^{\bot}$ and one more single subspace, by union bound, the probability for $T^{\bot}$ to have a non-trivial intersection with at least one of the $t$ subspaces $\overline{T}_{1}^{\bot}$, $\cdots$, $\overline{T}_{t}^{\bot}$ is upper bounded by $t \cdot 2^{t - s}$. This means that $p'_{0} \geq 1 - t \cdot 2^{t - s}$.

The lower bound for the conditional probability $p'_{1}$ is now quite easy: Note that since $\Pr\left[ A | B \right] \geq \Pr\left[ A \right] - \Pr\left[ \lnot B \right]$, letting $A$ the event that $\Adv$ outputs a vector in the dual $T^{\bot}$ and $B$ the event that $T^{\bot}$ has only a trivial intersection with all other $t$ subspaces, we get $p'_{1} \geq \epsilon - t \cdot 2^{t - s}$. By our assumption that $\frac{t \cdot \frac{1}{\epsilon}}{2^{ s - t }} \leq o(1)$, we have $p'_{1} \geq \frac{\epsilon}{2}$. Overall we got $p' := p'_{0} \cdot p'_{1} \geq \left( 1 - t \cdot 2^{t - s} \right) \cdot \frac{\epsilon}{2} > \frac{\epsilon}{4}$.

Finally, to see why we get an overall high probability for $D \geq t + 1$ on a sample from $\Ds_{1}$, observe the following. In each bucket there are $\frac{k}{\epsilon}$ attempts, each succeeds with probability at least $\frac{\epsilon}{4}$ and thus the overall success probability in a bucket is $\geq 1 - e^{-\Omega(k)}$. Accordingly, the probability to succeed at least once in each of the $t + 1$ buckets (and thus to satisfy $D \geq t + 1$) is $\geq 1 - (t + 1)\cdot e^{-\Omega(k)}$, by considering the complement probability and applying union bound. Overall the probability for $D \geq t + 1$ is thus $\geq 1 - e^{ -\Omega(k) }$.

\paragraph{Executing $\AdvB$ on the distribution $\Ds_{2}$.}
Consider a different distribution $\Ds_{2}$: Sample $T$ once, then sample $\ell$ i.i.d. obfuscations of the same circuit $C_{T}$, denoted $\Obf^{(1)}_{T}, \cdots, \Obf^{(\ell)}_{T}$. By Corollary \ref{corollary:subspace_hiding_main_corollary}, 
$$
\mathcal{D}_{1}
\approx_{\left( f(\secp) - \ell \cdot \poly(\secp) , \; \frac{2 \cdot s \cdot \ell}{f(\secp)} \right)}
\mathcal{D}_{2} \enspace .
$$
Recall that the running time of $\AdvB$ is $\ell \cdot T_{\Adv} + \ell \cdot k^{3}$ and by our Lemma's assumptions, the complexity of $\AdvB$ is $\leq f(\secp) - \ell \cdot \poly(\secp)$. Since given a sample oracle from $\Ds_{1}$, the algorithm $\AdvB$ outputs $D \geq t + 1$ with probability $\geq 1 - e^{ -\Omega(k) }$, by the above indistinguishability, whenever we execute $\AdvB$ on a sample from $\Ds_{2}$, then with probability at least $\geq 1 - e^{ -\Omega(k) } - \frac{2 \cdot s \cdot \ell}{f(\secp)} \geq 1 - \frac{4 \cdot s \cdot \ell}{f(\secp)}$ we have $D \geq t + 1$. By our assumption in the Lemma that $O\left( \frac{ s \cdot \ell }{f(\secp)} \right) \leq \frac{1}{2}$, with probability at least $\frac{1}{2}$ we have $D \geq t + 1$ given a sample from $\Ds_{2}$.
By an averaging argument, it follows that with probability at least $\frac{1}{2} \cdot \frac{1}{2} = \frac{1}{4}$ over sampling the superspace $T$, the probability $p_{T}$ for the event where $D \geq t + 1$, is at least $\frac{1}{2} \cdot \frac{1}{2} = \frac{1}{4}$. Let us call this set of superspaces $T$, "the good set" of samples, which by definition has fraction at least $\frac{1}{4}$. Recall two facts: (1) the dimension of $T^{\bot}$ is $t$, (2) The dimension $D$ aggregates vectors inside $S^{\bot}$. The two facts together imply that in the event $D \geq t + 1$, it is necessarily the case that there exists an execution index $i \in [\ell]$ in the reduction $\AdvB$ where $\Adv$ outputs a vector in $\left( S^{\bot} \setminus T^{\bot} \right)$.

For every $T$ inside the good set we thus know that with probability $\frac{1}{4}$, one of the output vectors of $\Adv$ will be in $\left( S^{\bot} \setminus T^{\bot} \right)$. Since these are $\ell$ i.i.d. executions of $\Adv$, by union bound, for every $T$ inside the good set, when we prepare an obfuscation $\Obf_{T}$ of $T$ and execute $\Adv$, we will get a vector in $\left( S^{\bot} \setminus T^{\bot} \right)$ with probability $\frac{1}{4 \cdot \ell}$. We deduce that for a uniformly random $T$, the probability for $\Adv\left( \Obf_{T} \right) \in \left( S^{\bot} \setminus T^{\bot} \right)$ is at least the probability for $\Adv\left( \Obf_{T} \right) \in \left( S^{\bot} \setminus T^{\bot} \right)$ intersecting with the event that $T$ is inside the good set, which in turn is at least
$$
\frac{1}{4} \cdot \frac{1}{4 \cdot \ell}
=
\frac{1}{16 \cdot \ell}
:=
\frac{\epsilon}{16 \cdot k \cdot (t + 1)}
\enspace ,
$$
which finishes our proof.

\end{proof}

\section{One-Shot Signatures Relative to a Classical Oracle} \label{sec:oss_oracle}
In this section we present our construction of non-collapsing collision-resistant hash functions (which imply OSS) with respect to a classical oracle and proof of security. We first describe our scheme in \ref{constr:main}.

\begin{construction} \label{constr:main}
Let $\secp \in \Nat$ the statistical security parameter. Define $s := 16 \cdot \secp$ and let $n, r, k \in \Nat$ such that $r := s \cdot (\secp - 1)$, $n := r + \frac{3}{2} \cdot s$, $k := n$.

Let $\Pi: \{ 0, 1 \}^n \rightarrow \{ 0, 1 \}^n$ be a random permutation and let $F : \{ 0, 1 \}^{r} \rightarrow \{ 0, 1 \}^{k \cdot (n - r + 1)}$ a random function. Let $H(x)$ denote the first $r$ output bits of $\Pi(x)$, and $J(x)$ denote the remaining $n - r$ bits, which are interpreted as a vector in $\Z_{2}^{n - r}$. For each $y \in \{ 0, 1 \}^r$, let $\matA(y) \in \Z_2^{k \times (n-r)}$ be a random matrix with full column-rank, and $\vecB(y) \in \Z_2^k$ be uniformly random, both are generated by the output randomness of $F(y)$.

Then, we let $\Ps : \{ 0, 1 \}^n \rightarrow \left( \{ 0, 1 \}^r \times \Z_2^k \right)$, $\Ps^{-1} : \left( \{ 0, 1 \}^r \times \Z_2^k \right) \rightarrow  \{ 0, 1 \}^n$, $\Ds: \left( \{0,1\}^r \times \Z_2^k \right) \rightarrow \{ 0, 1 \}$ be the following oracles:
\begin{align*}
    \Ps(x) &= \left( \; y \; , \; \matA(y) \cdot J(x) + \vecB(y) \; \right) \text{ where } y = H(x)
    \\
    \Ps^{-1}\left( y, \vecU \right) & =
    \begin{cases}
    \Pi^{-1}\left( y, \vecZ \right) & \exists \: \vecZ \in \bbZ_{2}^{n - r}\text{ such that }\matA(y) \cdot \vecZ + \vecB(y) = \vecU
    \\
    \bot &\text{ else }
    \end{cases}
    \\
    \Ds\left( y, \vecV \right) &=
    \begin{cases}
    1 &\text{ if }\vecV^T \cdot \matA(y) = 0^{n-r} \\
    0 &\text{otherwise}
    \end{cases}
\end{align*}
We will denote the above distribution of oracles by $\Oracle_{n,r,k}$. We define our hash function as $H$, which can be easily computed by querying $\Ps$, considering only the first $r$ bits and discarding the second output.
\end{construction}

Note that since $\matA(y) \in \Z_{2}^{k \times \left( n - r \right)}$ is full column-rank, $\vecZ \in \bbZ_{2}^{n - r}$ is unique if it exists. Thus, $\Ps^{-1}\left( \Ps(x) \right) = x$ and $\Ps^{-1}\left( y, \vecU \right) = \bot$ if $\left( y, \vecU \right)$ is not in the range of $\Ps$. Thus, $\Ps^{-1}$ is the uniquely-defined inverse of the injective function $\Ps$.

\paragraph{Efficient Implementation.}
Given the description of Construction \ref{constr:main}, it is straightforward to implement it, using pseudorandom functions and pseudorandom permutations, the existence of both of which follow from the existence of one-way functions \cite{zhandry2021construct, zhandry2016note}. Specifically, we swap the truly random permutation $\Pi$ with a pseudorandom permutation $\PRP$, and swap the truly random function $F$ with a pseudorandom function $\PRF$. One can easily verify that under the security of $\PRP$, $\PRF$ for quantum queries, the classical efficient construction is computationally indistinguishable from the oracle distribution $\Oracle_{ n, r, k }$ described in Construction \ref{constr:main}.

\paragraph{Non-collapsing.}
We prove that our hash function is non-collapsing.

\begin{proposition} [Non-collapsing of $H$] \label{proposition:non_collapsing}
    The hash function $H$ defined in Construction \ref{constr:main} is (always) non-collapsing, as per Definition \ref{definition:CR_always_NC_hash}.
\end{proposition}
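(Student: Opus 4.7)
The plan is to instantiate the Aaronson--Christiano style hidden-coset verification, using the fact that the oracles $\Ps, \Ps^{-1}, \Ds$ were designed precisely so that this verification goes through. The sampler $\Ss$ simply applies Hadamards to $\ket{0^{n}}$ to produce $\ket{\psi} = 2^{-n/2}\sum_{x\in\{0,1\}^{n}}\ket{x}$ and outputs $(\ket{\psi}, \bot)$. In the $\mathsf{Measure}$ branch, this yields a uniformly random $x \in \{0,1\}^{n}$; in the $\mathsf{PartialMeasure}_{H}$ branch, it yields a uniformly random $y \in \{0,1\}^{r}$ together with the collapsed state $\ket{\psi_y} = 2^{-(n-r)/2}\sum_{x\in H^{-1}(y)}\ket{x}$, supported on the $2^{n-r}$ preimages of $y$.

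The distinguisher, on input register $A$, proceeds in five steps. First, it computes $y$ by querying $\Ps$ with ancillas, copying the first $r$ output bits into a fresh register, uncomputing $\Ps$, and measuring; this is deterministic in both branches, since $H$ is constant on the support of $\ket{\psi_y}$ (and trivially so on $\ket{x}$), so the input register is undisturbed. Next it queries $\Ps$ again to obtain $\ket{x}_A \ket{y}_Y \ket{\matA(y) J(x)+\vecB(y)}_{U}$ on the appropriate support, and then uncomputes register $A$ by querying $\Ps^{-1}$ on $(Y,U)$ into a fresh register, XORing it into $A$, and undoing that $\Ps^{-1}$ query. Finally it applies the QFT to register $U$, measures to obtain $\vecV$, queries $\Ds(y, \vecV)$, and outputs the answer.

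A standard Fourier computation over the coset $\vecB(y) + \colspan(\matA(y))$ shows that in the $\mathsf{PartialMeasure}$ branch, after the uncomputation step register $U$ holds the uniform superposition $2^{-(n-r)/2}\sum_{\zv} \ket{\matA(y)\zv + \vecB(y)}$, whose QFT yields (up to phases) a uniformly random vector in the left kernel $T_y^{\bot} := \{\vecV \in \bbZ_2^{k} : \vecV^{T} \matA(y) = 0\}$; hence $\Ds(y,\vecV) = 1$ with probability $1$. In the $\mathsf{Measure}$ branch, register $U$ holds the single basis state $\ket{\matA(y)J(x)+\vecB(y)}$, so its QFT produces a uniformly random $\vecV \in \bbZ_2^{k}$, and $\Pr[\vecV \in T_y^{\bot}] = 2^{k-(n-r)}/2^{k} = 2^{-(n-r)} = 2^{-3s/2} = 2^{-24\secp}$, which is negligible in $\secp$.

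The distinguishing advantage is therefore $1 - 2^{-24\secp} \geq 1 - \negl(\secp)$, establishing the always-non-collapsing property. There is no real obstacle; the only care needed is that the uncomputation of register $A$ in step three must use the oracle $\Ps^{-1}$ rather than simply inverting the $\Ps$ query from step two, because the intervening measurement of $y$ in step one cannot be uncomputed. This is precisely why $\Ps^{-1}$ is provided as a standalone oracle in the construction, and it is exactly what makes the coset structure in the image of $\Ps$ visible to the verifier in a coherent, disentangled form.
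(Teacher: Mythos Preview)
Your proof is correct and follows essentially the same approach as the paper: sample the uniform superposition, apply $\Ps$ to move into the coset representation, uncompute the preimage register via $\Ps^{-1}$, apply the QFT over $\Z_2^k$, and query $\Ds$; the paper's version omits your separate step-1 measurement of $y$, which is harmless but unnecessary since $y$ already sits in register $Y$ after step 2. One small expository correction to your closing remark: the reason $\Ps^{-1}$ is indispensable is not the step-1 measurement, but that applying the adjoint of the $\Ps$ oracle unitary would uncompute the \emph{output} registers $(Y,U)$ rather than the input register $A$, so a genuine functional inverse oracle is required to erase $A$ while retaining $(Y,U)$.
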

\begin{proof}
    We explain the non-collapsing property of our scheme by describing the algorithms $\left( \Ss_{H}, \Ds_{H} \right)$. The first algorithm $\Ss_{H}$ simply computes a uniform superposition $|+ \rangle^{\otimes n}$ over $n$ qubits (where $n$ is the input size for $H$), and outputs it as $|\psi \rangle$. The second algorithm $\Ds_{H}$, given an unknown $n$-qubit state $|\phi \rangle := \sum_{x \in \bbZ_{2}^{n}} \alpha_{x} \cdot |x\rangle$, acts as follows.
    \begin{enumerate}
        \item \label{oracle_construction_non_collapsing_step_1}
        Execute $\Ps$ in superposition to obtain 
        $$
        \sum_{x \in \bbZ_{2}^{n}} \alpha_{x} \cdot |x\rangle |y_{x}\rangle |\vecU_{x}\rangle
        \enspace .
        $$

        \item \label{oracle_construction_non_collapsing_step_2}
        Execute $\Ps^{-1}$ in superposition to un-compute the input register holding $x$, to obtain
        $$
        \sum_{x \in \bbZ_{2}^{n}} \alpha_{x} \cdot |y_{x}\rangle |\vecU_{x}\rangle
        \enspace .
        $$

        \item \label{oracle_construction_non_collapsing_step_3}
        Execute a $k$-qubit Quantum Fourier Transform over $\bbZ_{2}$ on the rightmost $k$-qubit register (holding the vectors $\vecU_{x}$). This boils down to the execution of parallel Hadamard gates $H^{\otimes k}$, to obtain
        $$
        \sum_{x \in \bbZ_{2}^{n}} \alpha_{x} \cdot |y_{x}\rangle \left( H^{\otimes k} \cdot |\vecU_{x}\rangle \right)
        \enspace .
        $$

        \item \label{oracle_construction_non_collapsing_step_4}
        Execute $\Ds\left( \cdot, \cdot \right)$ in superposition on the state and measure the output bit register. The output of $\Ds_{H}$ is identical to the output of $\Ds(\cdot, \cdot)$.
    \end{enumerate}

    To finish the explanation for non-collapsing we will show that the pair $\left( \Ss_{H}, \Ds_{H} \right)$ gives a distinguishing advantage of $\geq 1 - 2^{-\Omega(\secp)}$, between the cases of full measurement and partial measurement.
    \begin{itemize}
        \item
        In the first case, given $|\psi\rangle := |+\rangle^{\otimes n}$, it is measured in the computational basis and the algorithm $\Ds_{H}$ gets as input some classical $x \in \bbZ_{2}^{n}$. At the end of Step \ref{oracle_construction_non_collapsing_step_2} of $\Ds_{H}$ the state is $|y_{x}\rangle |\vecU_{x}\rangle$. The execution of QFT in the next step will generate a uniform superposition over all elements in $\bbZ_{2}^{k}$, at the end of Step \ref{oracle_construction_non_collapsing_step_3}. The set of elements that are accepted by $\Ds(y_{x}, \cdot)$ (for every classical $y_{x}$) is a coset of dimension $r$ and thus in particular a set of size $2^{r}$. The probability that the $k$-qubit uniform superposition will be accepted by $\Ds(y_{x}, \cdot)$ is $\frac{ 2^{r} }{ 2^{k} } = 2^{-(k - r)} \leq 2^{-\Omega(\secp)}$. It follows that the probability that $\Ds_{H}$ outputs $1$ in the first case is $\leq 2^{-\Omega(\secp)}$.

        \item 
        In the second case, given $|\psi\rangle := |+\rangle^{\otimes n}$ we compute $\Ps$ in superposition and measure a value $y$ on the side. The algorithm $\Ds_{H}$ thus gets as input the state $\sum_{x \in \bbZ_{2}^{n} : H(x) = y} \sqrt{2^{-(n - r)}} \cdot |x\rangle$ for some $y$. One can easily verify that at the end of Step \ref{oracle_construction_non_collapsing_step_2}, the state that $\Ds_{H}$ holds is
        $$
        | y \rangle
        \otimes 
        \left( \sum_{\vecU \in \colspan\left( \matA(y) \right)} \sqrt{ |\colspan\left( \matA(y) \right)|^{-1} } \cdot | \vecU + \vecB(y) \rangle \right)
        \enspace .
        $$
        Next, by standard known properties of QFT applied to a quantum state that is in a uniform superposition over a coset, it follows that at the end of Step \ref{oracle_construction_non_collapsing_step_3}, the state that $\Ds_{H}$ holds is 
        $$
        | y \rangle
        \otimes 
        \left( \sum_{\vecV \in \colspan\left( \matA(y) \right)^{\bot}} \sqrt{ |\colspan\left( \matA(y) \right)^{\bot}|^{-1} } \cdot (-1)^{\langle \vecB(y), \vecV \rangle} \cdot | \vecV \rangle \right)
        \enspace .
        $$
        It follows that the probability that $\Ds_{H}$ outputs $1$ in the second case is $1$.
    \end{itemize}
\end{proof}

\paragraph{Security.}
Most of this section will be devoted to proving security. There are two main steps to our security proof. The first part of the proof will show Theorem \ref{thm:dualtodualfree} (Proved in Sections \ref{subsection:bloating_dual_oracle} and \ref{subsection:simulating_dual_oracle}), which says that an adversary $\Adv$ that manages to find a collision in $H$ given access to $\left( \Ps, \Ps^{-1}, \Ds \right)$ sampled from $\Oracle_{ n, r, k }$, can be transformed to an adversary $\AdvB$ that finds a collision in $H$ in the dual-free setting, i.e., having only access to $\left( \Ps, \Ps^{-1} \right)$ and no access to the dual verification oracle $\Ds$. The second part of security will show Theorems \ref{thm:dualfreetocol} and \ref{theorem:n_to_ell_CPF_collision_resistant} (both proved in Section \ref{subsection:dual_free_to_two_to_one_oracle}), which together, show how a collision finder for the dual-free case can be turned into a collision finder for plain $2$-to-$1$ random functions. We obtain the following main security theorem.

\begin{theorem} [Collision Resistance of $H$] \label{theorem:oracle_main_security}
Let $\Oracle_{ n, r, k }$ the distribution over oracles defined in Construction \ref{constr:main}. Let $\Adv$ an oracle aided $q$-query (computationally unbounded) quantum algorithm. Then,
\[
\Pr
\left[
x_0 \neq x_1
\land 
H(x_0) = H(x_1) 
\;
:
\begin{array}
{rl}
\left( \Ps, \Ps^{-1}, \Ds \right) & \gets \Oracle_{n,r,k} \\
\left( x_0, x_1 \right) & \gets \Adv^{\Ps, \Ps^{-1}, \Ds}
\end{array}\right]
\leq
O\left( \frac{ \secp^{3} \cdot k^{3} \cdot q^{3} }{ 2^{ \secp } } \right)
\enspace .
\]
\end{theorem}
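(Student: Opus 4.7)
The plan is to reduce, through a sequence of oracle transformations, the collision resistance of $H$ given $(\Ps,\Ps^{-1},\Ds)$ to the known collision lower bound for random 2-to-1 functions \cite{JACM:AarShi04,QIC:Zhandry15}, which forces $q = \Omega(2^{\secp/3})$ for any successful collider. Each step either removes structure from the oracles via an information-theoretic indistinguishability (Lemmas \ref{lemma:unconditional_subspace_hiding} and \ref{lemma:unconditional_dual_subspace_anti_concentration}) or rewires the oracle through the random self-reduction sketched in Section \ref{sec:overvieworacle}, which preserves collisions in $H$.

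First I would bloat the dual: replace $\Ds$ by $\Ds'$ that tests membership in a random superspace $T_y^\bot \supseteq S_y^\bot$ of dimension $k-(n-r-s)$, sampled independently for each $y$. By Lemma \ref{lemma:unconditional_subspace_hiding} applied per $y$ (or Corollary \ref{corollary:unconditional_iterated_subspace_hiding}), any $q$-query distinguisher has advantage at most $O(q \cdot s \cdot 2^{-(n-r-s)/2})$. Next, I would simulate $(\Ps,\Ps^{-1},\Ds')$ from a smaller instance $(\overline{\Ps},\overline{\Ps}^{-1})$ on inputs of length $r+s$ with \emph{no} dual oracle, by hardcoding the last $n-r-s$ bits of $x$ into the output vector and letting $\Ds'$ check that those coordinates are zero. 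After an outer random self-reduction this is a correctly distributed sample. A collision $(x,x')$ in the simulated $H$ either projects to a genuine collision for $\overline{H}$ or is a ``bad'' collision with $\overline{x}=\overline{x}'$ and $\vecU - \vecU' \in T_y \setminus S_y$; bad collisions, amplified via independent self-reductions to collect many difference vectors, would distinguish $\Ds$ from $\Ds'$, which is forbidden by Lemma \ref{lemma:unconditional_dual_subspace_anti_concentration}. Hence a good collision is produced with probability at least a $1/\mathrm{poly}(\secp,k)$ fraction of the original success.

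Third, I would eliminate the inverse oracle by embedding \emph{any} coset-partition function $Q:\{0,1\}^{n'}\!\to\!\{0,1\}^{r'}$ into $(\Ps,\Ps^{-1})$: set $\Ps(x) = (Q(x),(x,0^{k-n'}))$, which has a trivial inverse (simply verify $Q(x)=y$) and whose preimage sets are cosets by assumption on $Q$. One more random self-reduction produces the correct distribution, and collisions in the embedded $H$ pull back to collisions in $Q$ using only \emph{forward} queries to $Q$. Finally I would take $Q$ to be the $\ell$-wise parallel application of independent random 2-to-1 functions $H_i:\{0,1\}^{n'/\ell}\!\to\!\{0,1\}^{n'/\ell-1}$: preimage sets are direct sums of $1$-dimensional cosets and are therefore cosets, so $Q$ is a coset-partition function. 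Collision resistance of such $Q$ reduces via parallel repetition to collision resistance of a single shrinking random 2-to-1 function, for which \cite{JACM:AarShi04,QIC:Zhandry15} yield the required $\Omega(2^{\secp}/q^3)$ bound.

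The hardest step will be Step 2, applying Lemma \ref{lemma:unconditional_dual_subspace_anti_concentration} with parameters satisfying the hypothesis $t\cdot\epsilon^{-1}\cdot 2^{-(s-t)} = o(1)$ while also keeping the multiplicative loss $\mathrm{poly}(\secp,k)$ within the final budget; this is what forces the choice $s = \Theta(\secp)$ made in Construction \ref{constr:main} and accounts for the $\secp^3 k^3$ prefactor. A secondary subtlety is that the random self-reduction must simultaneously rerandomize the primal oracles $\Ps,\Ps^{-1}$ and the bloated dual $\Ds'$, which works because the affine output-side permutation $\vecU \mapsto \matC_y\vecU + \vecD_y$ commutes with dual membership testing via the linear map $\vecV \mapsto \vecV^T\matC_y$, ensuring the entire transformed oracle is still a valid sample from the (bloated) support. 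Combining the multiplicative losses across all steps with the $q^3$ lower bound from \cite{JACM:AarShi04,QIC:Zhandry15} yields the stated bound $O(\secp^3 k^3 q^3 / 2^\secp)$.
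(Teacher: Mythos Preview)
Your four-step plan is exactly the paper's route (Section~\ref{sec:overvieworacle}; formally Theorems~\ref{thm:dualtodualfree}, \ref{thm:dualfreetocol}, \ref{theorem:n_to_ell_CPF_collision_resistant}), and the key ideas in Steps~3 and~4 are correct as stated. There is, however, a real technical gap in Steps~1--2 that the paper closes with an ingredient you have omitted: a \emph{small-range distribution} reduction on the map $y\mapsto(\matA_y,\vecB_y)$.

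Your claim that ``Lemma~\ref{lemma:unconditional_subspace_hiding} applied per $y$ (or Corollary~\ref{corollary:unconditional_iterated_subspace_hiding})'' yields advantage $O(qs\cdot 2^{-(n-r-s)/2})$ is unjustified: there are $2^r$ values of $y$, the cited corollary concerns $\ell$ oracles for the \emph{same} subspace, and a naive hybrid over $y$ costs a multiplicative $2^r$, which with $r=16\secp(\secp-1)$ is fatal. More seriously, Lemma~\ref{lemma:unconditional_dual_subspace_anti_concentration} is stated for a single pair $(S,T)$ and its hypothesis~(2), $q\ell^2 s/\sqrt{2^{n-r-s}}=o(1)$ with $\ell=k(t+1)/\epsilon$, is governed by the \emph{per-instance} success probability. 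If you apply it per $y$ without first collapsing the cosets, the per-$y$ probability $\epsilon^{(y)}$ is typically $\sim\epsilon/2^r$, making $\ell$ exponential and violating the hypothesis. The paper (Lemma~\ref{lemma:bloating_dual_oracle}, Hybrids~1 and~5) first replaces $F$ by a small-range function with $R=\poly(q,k,1/\epsilon)$ values, so only $R$ distinct cosets appear; both the subspace-hiding hybrid and the anti-concentration lemma are then applied once per index $i\in[R]$, with $\epsilon^{(i)}\geq\epsilon/(2R)$ on a heavy set $L\subseteq[R]$. This step is what actually makes the accounting close.

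One minor slip: ``bad'' collisions have $\vecU-\vecU'\in T_y$, not $T_y\setminus S_y$ (since $T_y\subseteq S_y$, the latter is empty); the ``good'' collisions you want are those with $\vecU-\vecU'\in S_y\setminus T_y$.
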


\begin{proof}
    Assume towards contradiction that there is an oracle aided quantum algorithm $\Adv$, making $q$ queries, that given a sample oracle $\left( \Ps, \Ps^{-1}, \Ds \right) \gets \Oracle_{ n, r, k }$ outputs a collision $(x_{0}, x_{1})$ in $H$ with probability $\epsilon$, such that $\epsilon \geq \omega\left( \frac{ \secp^{3} \cdot k^{3} \cdot q^{3} }{ 2^{ \secp } } \right)$.

    Note that by our parameter choices in Construction \ref{constr:main} and by our assumption towards contradiction $\epsilon \geq \omega\left( \frac{ \secp^{3} \cdot k^{3} \cdot q^{3} }{ 2^{ \secp } } \right)$, one can verify through calculation that (1) for $s - (n - r - s) := s'$ we have $\frac{ k^3 \cdot q^3 \cdot \frac{ 1 }{ \epsilon^2 } }{ 2^{s'} } \leq o(1)$ and also (2) $\frac{ k^{9} \cdot q^7 \cdot \frac{1}{\epsilon^4} }{ \sqrt{ 2^{n - r - s} } } \leq o(1)$. This means that the conditions of Theorem \ref{thm:dualtodualfree} are satisfied, and it follows there is a $q$-query algorithm $\AdvB$ that gets access only to $\left( \Ps, \Ps^{-1} \right)$, sampled from $\left( \Ps, \Ps^{-1}, \Ds \right) \gets \Oracle_{ r + s, \: r, \: k - (n - r - s) }$ that finds collisions in $H$ with probability $\geq \frac{\epsilon}{2^{6} \cdot k^{2}}$.
    
    By Theorem \ref{thm:dualfreetocol}, it follows there is a $q$-query algorithm $\AdvB'$ that given oracle access to any $Q : \{ 0, 1 \}^{r + s} \rightarrow \{ 0, 1 \}^{r}$ an $\left( r + s, \: r, \: s \right)$-coset partition function (as per Definition \ref{definition:coset_partition_function}), finds a collision in $Q$ with probability $\frac{\epsilon}{2^{6} \cdot k^{2}}$.
    
    The above is in particular true for any distribution $Q \gets \Qs$ over $\left( r + s, \: r, \: s \right)$-coset partition functions. Since our parameter choices in Construction \ref{constr:main} imply $s \: | \: (r + s)$, we can consider the distribution $\Qs$ over $\left( r + s, \: r, \: s \right)$-coset partition functions generated by Theorem \ref{theorem:n_to_ell_CPF_collision_resistant}. It follows by Theorem \ref{theorem:n_to_ell_CPF_collision_resistant} that 
    $$
    \frac{\epsilon}{2^{6} \cdot k^{2}}
    \leq
    O\left( \frac{ s^{3} \cdot q^{3} }{ 2^{ \frac{ (r + s) }{ s } } } \right)
    \enspace ,
    $$
    which in turn implies
    $$
    \epsilon
    \leq
    O\left( \frac{ k^{2} \cdot \secp^{3} \cdot q^{3} }{ 2^{ \secp } } \right)
    \enspace ,
    $$
    in contradiction to $\epsilon \geq \omega\left( \frac{ \secp^{3} \cdot k^{3} \cdot q^{3} }{ 2^{ \secp } } \right)$.
\end{proof}

\subsection{Bloating the Dual} \label{subsection:bloating_dual_oracle}
Let $\Oracle'_{n,r,k,s}$ denote the following distribution over $\Ps,\Ps^{-1},\Ds'$. The primal oracles $\Ps,\Ps^{-1}$ are defined identically to $\Oracle_{n,r,k}$. However, now, for $s \leq n - r$, we let $\matA(y)^{(0)} \in \bbZ_{2}^{s}$ denote the first $s$ columns of $\matA(y) \in \bbZ_{2}^{n - r}$ and $\matA(y)^{(1)} \in \bbZ_{2}^{n - r - s}$ denote the remaining $n-r-s$ columns.
Then, define $\Ds'$ as the oracle:
\[
\Ds'(y, \vecV)=
\begin{cases}
1 & \text{ if }\vecV^{T} \cdot \matA(y)^{(1)} = 0^{n-r-s} \\
0 & \text{otherwise}
\end{cases}
\]
Observe that if $\vv^T \cdot \matA(y) = 0^{n-r}$, then $\vv^T \cdot \matA(y)^{(1)} = 0^{n-r-s}$. Thus $\Ds'$ accepts all points that are accepted by $\Ds$, but also accepts additional points as well, namely those for which $\vv^T \cdot \matA(y)^{(0)} \neq 0^s$ but $\vv^T \cdot \matA(y)^{(1)} = 0^{n-r-s}$. We call this action, of moving from $\Ds$ to a more relaxed oracle $\Ds'$, "bloating the dual".

\begin{lemma} \label{lemma:bloating_dual_oracle}
Suppose there is an oracle aided $q$-query quantum algorithm $\Adv$ such that
\[
\Pr
\left[
\left( y_{0} = y_{1} \right)
\land
\left( x_0 \neq x_1 \right) \; :
\begin{array}{rl}
\left( \Ps, \Ps^{-1}, \Ds \right) & \gets \Oracle_{n,r,k} \\
(x_0, x_1) & \gets \Adv^{\Ps,\Ps^{-1},\Ds} \\
(y_b, \vecU_b) & \gets \Ps(x_b)
\end{array}
\right]
\geq
\epsilon \enspace .
\] 
Also, let $s \leq n - r$ such that (1) for $s - (n - r - s) := s'$ we have $\frac{ k^3 \cdot q^3 \cdot \frac{ 1 }{ \epsilon^2 } }{ 2^{s'} } \leq o(1)$ and (2) $\frac{ k^{9} \cdot q^7 \cdot \frac{1}{\epsilon^4} }{ \sqrt{ 2^{n - r - s} } } \leq o(1)$. Then,
\[
\Pr
\left[
\begin{array}
{rl}
& \left( y_0 = y_1 := y \right) \land \\
& \left( \vecU_0 - \vecU_1 \right) \notin \colspan \left( \matA(y)^{(1)} \right)
\end{array}
\;
:
\begin{array}
{rl}
\left( \Ps,\Ps^{-1},\Ds' \right) & \gets \Oracle'_{n,r,k,s} \\
\left( x_0, x_1 \right) & \gets \Adv^{\Ps,\Ps^{-1},\Ds'} \\
(y_b, \vecU_b) & \gets \Ps(x_b)
\end{array}
\right]
\geq
\frac{ \epsilon }{ 2^{6} \cdot k^{2} }
\enspace .
\]
\end{lemma}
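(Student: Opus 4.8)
The plan is to split the argument into two stages: (i) replace the dual oracle $\Ds$ by the bloated $\Ds'$ while preserving the existence of \emph{some} collision of $H$, and (ii) upgrade an arbitrary collision to a ``good'' one whose output-difference escapes $\colspan(\matA(y)^{(1)})$.

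For stage (i) I would show that $\Ds$ and $\Ds'$ are indistinguishable given $\Ps,\Ps^{-1}$. The key observation is that $\Ds$ tests membership in $S_y^\perp:=\colspan(\matA(y))^\perp$ while $\Ds'$ tests membership in the $s$-dimensions-larger superspace $T_y^\perp:=\colspan(\matA(y)^{(1)})^\perp\supseteq S_y^\perp$, and that $\Ps,\Ps^{-1}$ depend on the underlying randomness only through $\Pi$ and the cosets $\vecB(y)+\colspan(\matA(y))$; conditioned on these, splitting a uniformly random full-column-rank $\matA(y)$ into its first $s$ and last $n-r-s$ columns makes $T_y^\perp$ a uniformly random superspace of $S_y^\perp$, independent of everything the primal oracles leak. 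This is exactly the unconditional subspace-hiding situation of Lemma~\ref{lemma:unconditional_subspace_hiding} and Corollary~\ref{corollary:unconditional_subspace_hiding_main_corollary}: the set of extra accepted points $\bigcup_y\{y\}\times(T_y^\perp\setminus S_y^\perp)$ is a $\poly\cdot 2^{-(n-r-s)}$ fraction, so a single quantum query places at most that much expected amplitude-weight on it \emph{uniformly over how the query distributes across the exponentially many values of $y$}, and a $q$-step hybrid bounds the distinguishing advantage by $\poly\cdot q/\sqrt{2^{n-r-s}}$. Since ``$\Adv$ output a collision of $H$'' is a predicate of $\Adv$'s output together with $\Ps$ alone, and $\Ps$ is unchanged, the success probability transfers; by hypothesis~(2) the advantage is $o(\epsilon)$, so $\Adv$ outputs a collision in the $\Ds'$ world with probability at least $\epsilon/2$.

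For stage (ii), any collision $x_0\neq x_1$ at a value $y$ satisfies $\vecU_0-\vecU_1=\matA(y)\cdot(J(x_0)-J(x_1))\in\colspan(\matA(y))\setminus\{0\}$, and is ``bad'' precisely when this difference lies in $T_y:=\colspan(\matA(y)^{(1)})=(T_y^\perp)^\perp$. Assuming for contradiction that good collisions occur with probability below $\epsilon/(2^6k^2)$, bad collisions occur with probability $\geq\epsilon/2-\epsilon/(2^6k^2)\geq\epsilon/4$. I would then invoke the dual anti-concentration Lemma~\ref{lemma:unconditional_dual_subspace_anti_concentration} with ambient space $\bbZ_2^k$, secret subspace $S_y^\perp$ (dimension $k-n+r$) and random superspace $T_y^\perp$, so that the lemma's parameter $t$ becomes $n-r-s$ and its $s$ becomes our $s$: an algorithm that, given the membership oracle for $T_y^\perp$ — which is exactly $\Ds'$ — outputs a nonzero vector in $(T_y^\perp)^\perp=T_y$ with probability $\geq\epsilon/4$ must also output a vector in $(S_y^\perp)^\perp\setminus(T_y^\perp)^\perp=\colspan(\matA(y))\setminus T_y$, i.e.\ a good collision, with probability at least $\tfrac{\epsilon/4}{16k(t+1)}=\tfrac{\epsilon}{64k(n-r-s+1)}\geq\tfrac{\epsilon}{2^6k^2}$ (using $n-r-s+1\leq k$, which holds since $k=n$), contradicting the assumption. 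Hypotheses~(1) and~(2) of the present lemma are precisely the quantitative conditions $\tfrac{t/\epsilon}{2^{s-t}}\leq o(1)$ and $\tfrac{q\ell^2 s}{\sqrt{2^t}}\leq o(1)$ required there, with $\ell=\Theta(k(t+1)/\epsilon)$, after absorbing the polynomial blow-up coming from carrying the primal oracles and from the re-randomizations.

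The step I expect to be the main obstacle is that Lemma~\ref{lemma:unconditional_dual_subspace_anti_concentration} and the subspace-hiding corollaries are stated for a \emph{single} secret subspace and for an adversary with \emph{no} side oracles, whereas here there are exponentially many independent $T_y^\perp$'s and the primal oracles $\Ps,\Ps^{-1}$, which depend on the $S_y$'s; in particular the naive move of planting a challenge at one fixed $y$ loses an exponential factor. The resolution is to first average over $\Pi$ and over all $(\matA(y),\vecB(y))$ so that $\Ps,\Ps^{-1},\Ds$ become fixed (non-uniform) advice and only the bloating randomness $\{T_y^\perp\}$ is live; the hiding step then needs no hybrid over $y$ because the extra-weight bound is uniform in $y$; and for stage~(ii) one runs the span-dimension argument of Lemma~\ref{lemma:unconditional_dual_subspace_anti_concentration} through the random self-reduction of the oracles — which turns any fixed instance into a fresh, correctly distributed one — so that repeated executions of $\Adv$ produce vectors that can be legitimately compared and a consistently ``bad'' adversary is turned into a distinguisher against the bloating, contradicting stage~(i). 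The remainder is the routine but lengthy bookkeeping of all the loss factors, so as to land exactly the stated $\epsilon/(2^6k^2)$ bound under hypotheses~(1) and~(2).
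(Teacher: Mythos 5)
Your high-level decomposition — first argue indistinguishability of $\Ds$ from $\Ds'$ via subspace hiding, then upgrade ``some collision'' to ``good collision'' via dual anti-concentration — matches the paper's. You also correctly observe that a uniformly random full-column-rank $\matA(y)$, split into first $s$ and last $n-r-s$ columns, makes $T_y^\perp=\colspan(\matA(y)^{(1)})^\perp$ a uniformly random superspace of $S_y^\perp$; this is exactly what the paper formalizes as a separate de-randomization hybrid (its $\Hyb_4$). And you correctly single out the real obstacle: Lemmas~\ref{lemma:unconditional_subspace_hiding} and~\ref{lemma:unconditional_dual_subspace_anti_concentration} treat a \emph{single} hidden subspace, while here there are $2^r$ independent ones and the anti-concentration reduction must compare vectors produced across many runs of $\Adv$.

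The gap is in your proposed resolution of that obstacle. Saying ``average over $\Pi$ and the $(\matA(y),\vecB(y))$ so that only the bloating randomness is live'' and ``run the span-dimension argument through the random self-reduction'' does not explain how the $\ell$ collision vectors produced by repeated runs of $\Adv$ — which in general land at $\ell$ \emph{different} values of $y$, hence live in $\ell$ different spaces $S_y$ — can be collected into a single span whose dimension is meaningful. Conditioning on a single target $y$ to force comparability costs a factor $2^{-r}$, which destroys the bound. The paper's key missing ingredient in your sketch is the \emph{small-range distribution} trick (Theorem A.6 of~\cite{ananth2022pseudorandom}, used in its $\Hyb_1$ and $\Hyb_5$): it replaces the truly random $F$ by one taking only $R=\poly(q,k,1/\epsilon)$ distinct values, so that (a) the subspace-hiding hybrid need only be run $R$ times rather than $2^r$, and (b) the success probability decomposes as $\sum_{i\in[R]}\epsilon^{(i)}$ over finitely many cosets, so that conditioning on index $i$ only costs $1/R$; Lemma~\ref{lemma:unconditional_dual_subspace_anti_concentration} can then be applied \emph{per coset index} $i\in L$ to the induced algorithm that simulates all the other cosets and the primal oracles from its own randomness. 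This is also why the per-run success in the paper is $\epsilon^{(i)}=\Omega(\epsilon^2/(q^3k^2))$ rather than your $\Theta(\epsilon)$, forcing $\ell_i=O(k^4q^3/\epsilon^2)$ and giving rise to the specific shape of hypotheses~(1) and~(2): your ``absorb the polynomial blow-up'' would have to account for exactly this quadratic degradation, which your sketch does not. As a secondary point, your stage-(i) claim that ``no hybrid over $y$ is needed because the extra-weight bound is uniform in $y$'' would require its own one-way-to-hiding-type argument over a random trigger set indexed by exponentially many $y$; it is not an immediate consequence of Lemma~\ref{lemma:unconditional_subspace_hiding}, and the paper avoids it entirely by the same small-range reduction.
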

Note that $\left( \vecU_{0} - \vecU_{1} \right) \notin \colspan\left( \matA(y)^{(1)} \right)$ means in particular that $\vecU_{0}, \vecU_{1}$, and hence $x_0, x_1$, are distinct (this follows because for every $y \in \bbZ_{2}^{r}$, the mapping between preimages $x$ and vectors $\vecU_{x} \in \left( \colspan\left( \matA(y) \right) + \vecB(y) \right)$ is bijective). Thus, the second expression means that $\Adv$ is finding collisions, but these collisions satisfy an even stronger requirement.

\begin{proof}
Assume there is an oracle-aided $q$-query quantum algorithm $\Adv$ that given oracle access to $\left( \Ps,\Ps^{-1},\Ds \right) \gets \Os_{n,r,k}$ outputs a pair $\left( x_{0}, x_{1} \right)$ of $n$-bit strings. Denote by $\epsilon$ the probability that $x_{0}$, $x_{1}$ are both distinct and collide in $H(\cdot)$ (i.e., their $y$-values are identical).
We next define a sequence of hybrid experiments, outputs and success probabilities for them, and explain why the success probability in each consecutive pair is statistically close.

\begin{itemize}
    \item $\Hyb_{0}$: The original execution of $\Adv$.
\end{itemize}
The process $\Hyb_{0}$ is the above execution of $\Adv$ on input oracles $\left( \Ps, \Ps^{-1}, \Ds \right)$. We define the output of the process as $(x_0, x_1)$ and the process execution is considered as successful if $x_{0}$, $x_{1}$ are both distinct and collide in $H(\cdot)$. By definition, the success probability of $\Hyb_{0}$ is $\epsilon$.

\begin{itemize}
    \item $\Hyb_{1}$: Simulating the oracles using only a bounded number of cosets $\left( \matA(y), \vecB(y) \right)$, by using small-range distribution.
\end{itemize}
Consider the function $F$ which samples for every $y \in \bbZ_{2}^{r}$ the i.i.d. coset description $\left( \matA(y), \vecB(y) \right)$. These cosets are then used in all three oracles $\Ps$, $\Ps^{-1}$ and $\Ds$. The difference between the current hybrid and the previous hybrid is that we swap $F$ with $F'$ which is sampled as follows: We set $R := \left( 300 \cdot q^{3} \right) \cdot \frac{ 2^{7} \cdot k^{2} }{ \epsilon }$ and for every $y \in \bbZ_{2}^{r}$ we sample a uniformly random $i_{y} \gets [R]$, then sample for every $i \in [R]$ a coset $\left( \matA_{i} \in \bbZ_{2}^{k \times (n - r)}, \vecB_{i} \in \bbZ_{2}^{k} \right)$ as usual. For $y \in \bbZ_{2}^{r}$ we define $F'(y) := \left( \matA_{i_{y}}, \vecB_{i_{y}} \right)$.

By Theorem A.6 from \cite{ananth2022pseudorandom}, it follows that for every quantum algorithm making at most $q$ queries and tries to distinguish between $F$ and $F'$, the distinguishing advantage is bounded by $\frac{300\cdot q^{3}}{R} < \frac{\epsilon}{8}$, which means in particular that the outputs of this hybrid and the previous one has statistical distance bounded by $\frac{\epsilon}{8}$. It follows in particular that the success probability of the current hybrid is $:= \epsilon_{1} \geq \epsilon - \frac{\epsilon}{8} = \frac{7 \cdot \epsilon}{8}$.

\begin{itemize}
    \item $\Hyb_{2}$: Relaxing dual verification oracle to accept a larger subspace, by information-theoretical subspace hiding.
\end{itemize}
The difference between the current hybrid and the previous hybrid is that in the current hybrid we make the dual verification oracle $\Ds$ more relaxed, and accept more strings. Specifically, recall that as part of sampling our oracles $\left( \Ps, \Ps^{-1}, \Ds \right)$ we sample the function $F'$ and in particular we sample $R$ i.i.d. cosets: $\left( \matA_{i} \in \bbZ_{2}^{k \times (n - r)}, \vecB_{i} \in \bbZ_{2}^{k} \right)_{i \in [R]}$. Recall that in the previous hybrid, given input $\left( y \in \bbZ_{2}^{r}, \vecV \in \bbZ_{2}^{k} \right)$, the oracle $\Ds$ accepts iff $\vecV \in \colspan\left( \matA_{i_y} \right)^{\bot}$. The change we make to the current hybrid is the following and applies only to the dual verification oracle $\Ds$: For each $i \in [R]$, we sample $T^{\bot}_{i}$ which is a uniformly random, $(k - n + r + s)$-dimensional superspace of $\colspan\left( \matA_{i} \right)^{\bot}$ (which has dimension $k - n + r$). When we execute $\Ds$ we check membership in $T^{\bot}_{i}$ rather than in the more restrictive, $(k - n + r)$-dimensional $\colspan\left( \matA \right)^{\bot}$.

By Lemma \ref{lemma:unconditional_subspace_hiding}, due to $k - (k - (n - r)) - s =  n - r - s$, for every $i \in [R]$, changing $\Ds$ to check for membership in $T^{\bot}_{i}$ instead of $\colspan\left( \matA_{i} \right)^{\bot}$, is $O\left( \frac{ q \cdot s }{ \sqrt{ 2^{n - r - s} } } \right)$-indistinguishable, for any $q$-query algorithm. Since we use the above indistinguishability $R$ times, we get $R \cdot O\left( \frac{ q \cdot s }{ \sqrt{ 2^{n - r - s} } } \right)$-indistinguishability. It follows that the success probability of the current hybrid is $:= \epsilon_{2} \geq \epsilon_{1} - R \cdot O\left( \frac{ q \cdot s }{ \sqrt{ 2^{n - r - s} } } \right) \geq \frac{7 \cdot \epsilon}{8} - O\left( \frac{ k^{2} \cdot q^4 \cdot s \cdot \frac{1}{\epsilon} }{ \sqrt{ 2^{n - r - s} } } \right)$, which in turn (by our Lemma's assumptions) is at least $\frac{3 \cdot \epsilon}{4}$.

\begin{itemize}
    \item $\Hyb_{3}$: Asking for the sum of collisions to be outside of $T_{i}$, by using dual-subspace anti-concentration.
\end{itemize}
In the current hybrid we change the success predicate of the experiment. Recall that as part of sampling the oracles in the previous hybrid, we sample $R$ i.i.d. cosets $\left( \matA_{i}, \vecB_{i} \right)_{i \in [R]}$ which are used in all three oracles $\left( \Ps, \Ps^{-1}, \Ds \right)$. We then sample $R$ i.i.d. $(k - n + r + s)$-dimensional superspaces $\left( T_{i}^{\bot} \right)_{i \in [R]}$ of the $R$ corresponding duals $\left( \colspan\left( \matA_{i} \right)^{\bot} \right)_{i \in [R]}$. The change we make to the success predicate is the following: At the end of the execution we get a pair $(x_{0}, x_{1})$ from $\Adv$. We define the process as successful if $y_{0} = y_{1} := y$ and also $\left( \vecU_{0} - \vecU_{1} \right) \notin T_{i_{y}}$, rather than only asking that $x_{0} \neq x_{1}$.

Let $\epsilon_{3}$ be the success probability of the current hybrid and note that $\Adv$ finds collisions with probability $\epsilon_{2}$ in the previous hybrid $\Hyb_{2}$ (and since this hybrid is no different, the same goes for the current hybrid). For every value $i \in [R]$ denote by $\epsilon_{2}^{(i)}$ the probability to find a collision in index $i$, or formally, that $y_{0} = y_{1} := y$, $x_{0} \neq x_{1}$ and also $i_{y} = i$. Observe that in such an event we also have $\left( \vecU_{0} - \vecU_{1} \right) \in S_{i_{y}}$. We deduce $\sum_{i \in [R]} \epsilon_{2}^{(i)} = \epsilon_{2}$. Let $L$ be a subset of indices $i \in [R]$ such that $\epsilon_{2}^{(i)} \geq \frac{\epsilon_{2}}{2 \cdot R}$ and note that $\sum_{i \in L} \epsilon_{2}^{(i)} \geq \frac{ \epsilon_{2} }{ 2 }$. For every value $i \in [R]$ also denote by $\epsilon_{3}^{(i)}$ the probability that $y_{0} = y_{1} := y$, $\left( \vecU_{0} - \vecU_{1} \right) \notin T_{i_{y}}$ and also $i_{y} = i$. We deduce $\sum_{i \in [R]} \epsilon_{3}^{(i)} = \epsilon_{3}$.

We would now like to use Lemma \ref{lemma:unconditional_dual_subspace_anti_concentration}, so we make sure that we satisfy its requirements. Let any $i \in L$, we know that by definition $\epsilon_{2}^{(i)} \geq \frac{\epsilon_{2}}{2 \cdot R}$ and also recall that $\epsilon_{2} \geq \frac{3 \cdot \epsilon}{4}$, $R := \left( 300 \cdot q^{3} \right) \cdot \frac{ 2^{7} \cdot k^{2} }{ \epsilon }$ and thus
$$
\epsilon_{2}^{(i)}
\geq
\frac{\epsilon_{2}}{2 \cdot R}
\geq
\frac{3 \cdot \epsilon}{8} \cdot \frac{1}{R}
\geq 
\Omega
\left(
\frac{ \epsilon^2 }{ q^3 \cdot k^2 }
\right)
\enspace .
$$
Let $s' := s - (n - r - s)$ and for any $i \in L$ let $\ell_{i} := \frac{k^2}{\epsilon_{2}^{(i)}} \leq O\left( \frac{ k^4 \cdot q^3 }{ \epsilon^2 } \right)$. Note that by our Lemma \ref{lemma:bloating_dual_oracle} statement's assumptions, we have (1) $\frac{(n - r - s) \cdot \frac{1}{\epsilon_{2}^{(i)}}}{2^{s'}} \leq o(1)$ and (2) $\frac{ q \cdot \ell_{i}^{2} \cdot s }{ \sqrt{ 2^{n - r - s} } } \leq o(1)$. Since this satisfies Lemma \ref{lemma:unconditional_dual_subspace_anti_concentration}, it follows that for every $i \in L$ we have $\epsilon_{3}^{(i)} \geq \frac{ \epsilon_{2}^{(i)} }{ 16 \cdot k^{2} }$. It follows that
$$
\epsilon_{3}
=
\sum_{i \in [R]} \epsilon_{3}^{(i)}
\geq 
\sum_{i \in L} \epsilon_{3}^{(i)}
\geq
\sum_{i \in L} \frac{ \epsilon_{2}^{(i)} }{ 16 \cdot k^{2} }
\geq
\frac{ \left( \frac{ \epsilon_{2} }{ 2 } \right) }{ 16 \cdot k^{2} }
\geq
\frac{ 3 \cdot \epsilon }{ 2^{7} \cdot k^{2} }
\enspace .
$$

\begin{itemize}
    \item $\Hyb_{4}$: For every $i \in [R]$, de-randomizing $T_{i}$ and defining it as the column span of $\matA_{i}^{(1)} \in \bbZ_{2}^{k \times \left( n - (r + s) \right)}$, the last $n - (r + s)$ columns of the matrix $\matA_{i}$, by using the random permutation $\Pi$ and random function $F$.
\end{itemize}
This hybrid is the same as the previous, with one change: For every $i \in [R]$, after sampling the coset $\left( \matA_{i}, \vecB_{i} \right)$, we will not continue to randomly sample $T^{\bot}_{i}$ (which, previously, was a uniformly random $((k - (n - r)) + s)$-dimensional superspace of $\colspan\left( \matA_{i} \right)^{\bot}$) and simply define $T_{i} := \colspan\left( \matA_{i}^{(1)} \right)$ such that $\matA_{i}^{(1)} \in \bbZ_{2}^{k \times (n - r - s)}$ is defined to be the last $n - r - s$ columns of the matrix $\matA_{i} \in \bbZ_{2}^{k \times (n - r)}$. We will define an intermediate hybrid $\Hyb_{3.1}$ and then explain why $\Hyb_{3} \equiv \Hyb_{3.1} \equiv \Hyb_{4}$.

\begin{itemize}
    \item
    \textbf{Using the random permutation $\Pi$.} 
    For every $i \in [R]$ consider the superspace $T_{i}^{\bot}$, which has $(k - n + r) + s$ dimensions. Accordingly, the dual $T_{i}$ has $k - ((k - n + r) + s)$ dimensions, which equals $n - r - s$. For $i \in [R]$, $j \in [n - r - s]$, denote by $\vecT_{i, j} \in \bbZ_{2}^{k}$ the $j$-th basis vector for the subspace $T_{i}$. Now, for every $i \in [R]$ let $M_{i} \in \bbZ_{2}^{(n - r) \times (n - r)}$ a full-rank matrix that represents the coordinates vectors of the basis vectors of $T_{i}$. Formally, $M_{i}$ is such that the $j$-th column of $M_{i}$, denoted $M_{i, j} \in \bbZ_{2}^{n - r}$, satisfies $\matA_{i} \cdot M_{i, j} = \vecT_{i, j} \in \bbZ_{2}^{k}$.

    In $\Hyb_{3.1}$, we define the permutation $\Gamma$ over $\{ 0, 1 \}^{n}$ defined as follows: For an input $s \in \{ 0, 1 \}^{n}$, it takes the left $r$ bits denoted $y \in \bbZ_{2}^{r}$, computes $i_{y} \in [R]$, then applies matrix multiplication by $M_{i_{y}}$ to the remaining right $n - r$ bits. Observe that since $M_{i}$ is full rank for all $i$, then $\Gamma$ is indeed a permutation. The change we make from $\Hyb_{3}$ to $\Hyb_{3.1}$ is that in the current hybrid we apply $\Gamma$ to the \emph{output} of $\Pi$ inside the execution of a query to $\Ps$, and apply $\Gamma^{-1}$ to the \emph{input} of $\Pi^{-1}$ inside the execution of a query to $\Ps^{-1}$. Note that for a truly random $n$-bit permutation $\Pi$, concatenating any fixed permutation $\Gamma$ like this is statistically equivalent to just computing $\Pi$ and $\Pi^{-1}$, thus the outputs (and in particular success probabilities) between $\Hyb_{3}$ and $\Hyb_{3.1}$ are identical.

    \item
    \textbf{Using the random function $F$.}
    In $\Hyb_{4}$, we stop applying the permutation $\Gamma$ to the output of $\Pi$ (and likewise stop applying $\Gamma^{-1}$ to the input of $\Pi^{-1}$), and also stop sampling $T_{i}^{\bot}$ and simply define it as $\colspan\left( \matA_{i}^{(1)} \right)^{\bot}$. Note that for every choice of $\Pi$, the following two distributions over oracles are statistically equivalent.
    (1) for every $i \in [R]$ sample $\matA_{i}$ uniformly at random, then sample the superspace $T_{i}^{\bot}$ uniformly at random, and then set $\Gamma$ accordingly and concatenate it with $\Pi$ as we did in the above $\Hyb_{3.1}$. (2) Just sample $\matA_{i}$ uniformly at random for every $i \in [R]$, and then set $T_{i} := \colspan\left( \matA_{i}^{(1)} \right)$.
    It follows that due to the fact that $F$ is a random function (or more precisely, because for every $i \in [R]$ we choose a uniformly random coset $\left( \matA_{i}, \vecB_{i} \right)$) then the outputs of $\Hyb_{3.1}$ and $\Hyb_{4}$ are statistically equivalent and in particular the success probability in both cases is the same.
\end{itemize}
It follows that the success probability $\epsilon_{4}$ in $\Hyb_{4}$ equals the success probability $\epsilon_{3}$ in $\Hyb_{3}$.

\begin{itemize}
    \item $\Hyb_{5}$: Moving back to using an exponential number of cosets, by using small-range distribution again.
\end{itemize}
We rewind the process of sampling an $R$-small range distribution version of $F$, and use $F$ as a standard random function. By the same argument for the indistinguishability between $\Hyb_{0}$ and $\Hyb_{1}$, the output of the current process has statistical distance bounded by $\frac{300\cdot q^{3}}{R} = \frac{\epsilon}{2^{7} \cdot k^{2}}$, which means in particular that the outputs of this hybrid and the previous hybrid has statistical distance bounded by $\frac{\epsilon}{2^{7} \cdot k^{2}}$. It follows that the success probability of the current hybrid is
$$
:=
\epsilon_{5}
\geq
\epsilon_{4} - \frac{\epsilon}{2^{7} \cdot k^{2}}
\geq
\frac{ 3 \cdot \epsilon }{ 2^{7} \cdot k^{2} } - \frac{\epsilon}{2^{7} \cdot k^{2}}
=
\frac{ \epsilon }{ 2^{6} \cdot k^{2} }
\enspace .
$$

To conclude, note that the process $\Hyb_{5}$ is exactly the process where $\Adv$ executes on input oracle sampled from $\left( \Ps,\Ps^{-1},\Ds' \right) \gets \Oracle'_{n,r,k,s}$. This finishes our proof.
\end{proof}

\subsection{Simulating the Dual} \label{subsection:simulating_dual_oracle}
In this section we prove the following lemma. 

\begin{lemma} \label{lemma:simulating_dual_oracle}
Suppose there is an oracle aided $q$-query quantum algorithm $\Adv$ such that
\[
\Pr
\left[
\begin{array}{rl}
     & y_0 = y_1 := y , \\
     &\left( \vecU_0 - \vecU_1 \right) \notin \colspan \left( \matA(y)^{(1)} \right)
\end{array}
\;
:
\begin{array}
{rl}
\left( \Ps, \Ps^{-1}, \Ds' \right) & \gets \Oracle'_{n, r, k, s} \\
\left( x_0, x_1 \right) & \gets \Adv^{ \Ps, \Ps^{-1}, \Ds' } \\
(y_b, \vecU_b) & \gets \Ps(x_b)
\end{array}
\right]
\geq
\epsilon \enspace .
\]
Then, there is an oracle aided $q$-query quantum algorithm $\AdvB$ such that
\[
\Pr
\left[
\left( \overline{y}_{0} = \overline{y}_{1} \right) \land \left( \overline{x}_{0} \neq \overline{x}_{1} \right) \; :
\begin{array}{rl}
\left( \overline{\Ps}, \overline{\Ps}^{-1}, \overline{\Ds} \right) & \gets \Oracle_{ r + s, \: r, \: k - (n - r - s) } \\
\left( \overline{x}_{0}, \overline{x}_{1} \right) & \gets \AdvB^{ \overline{\Ps}, \overline{\Ps}^{-1} } \\
\left( \overline{y}_{b}, \overline{\vecU}_b \right) & \gets \overline{\Ps}\left( \overline{x}_{b} \right)
\end{array}
\right]
\geq
\epsilon
\enspace .
\] 
\end{lemma}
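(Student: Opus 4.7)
I will construct $\AdvB$ by embedding the given $(\overline{\Ps}, \overline{\Ps}^{-1})$ from $\Oracle_{r+s,\: r,\: k-(n-r-s)}$ into a perfectly distributed sample of $\Oracle'_{n,r,k,s}$ \emph{without ever querying} $\overline{\Ds}$, then running $\Adv$ on the simulated oracles and post-processing the strong collision it produces into a plain collision for $\overline{H}$. The trick that dispenses with $\overline{\Ds}$ is to engineer the embedding so that the implicit $T_y^\bot$ becomes a \emph{fixed} coordinate subspace, so that $\Ds'$ can be answered locally.

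\textbf{Base embedding.} Parse queries to $\Ps$ as $x = (\overline{x}, \widetilde{x}) \in \{0,1\}^{r+s} \times \{0,1\}^{n-r-s}$ and the vector part of queries to $\Ps^{-1}$ as $\vecU = (\overline{\vecU}, \widetilde{x})$. Define base oracles by $\Ps(x) := (y, (\overline{\vecU}, \widetilde{x}))$ where $(y, \overline{\vecU}) \gets \overline{\Ps}(\overline{x})$; $\Ps^{-1}(y, (\overline{\vecU}, \widetilde{x})) := (\overline{\Ps}^{-1}(y, \overline{\vecU}), \widetilde{x})$, propagating $\bot$; and $\Ds'(y, (\overline{\vecV}, \widetilde{v})) := \mathbf{1}[\widetilde{v} = 0^{n-r-s}]$. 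A direct calculation shows this triple lies in the support of $\Oracle'_{n,r,k,s}$, with implicit data $\Pi(x) = (\overline{\Pi}(\overline{x}), \widetilde{x})$ and
$$
\matA(y) = \begin{pmatrix} \overline{\matA}(y) & 0 \\ 0 & I_{n-r-s} \end{pmatrix}, \qquad \vecB(y) = \begin{pmatrix} \overline{\vecB}(y) \\ 0 \end{pmatrix}.
$$
In particular $T_y = \colspan(\matA(y)^{(1)})$ is exactly the subspace of vectors whose first $k - (n-r-s)$ coordinates vanish, matching the local $\Ds'$.

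\textbf{Self-reduction and extraction.} Next I apply the warm-up self-reduction on top of the base embedding, sampling a random permutation $\Gamma$ on $\{0,1\}^n$ and, for each $y$, a random invertible affine map $L_y(\vecU) = \matC_y \vecU + \vecD_y$. By the warm-up's re-randomization claim (which I will re-verify in the bloated-dual setting), this produces a uniform sample of $\Oracle'_{n,r,k,s}$; each simulated query costs $\AdvB$ at most one query to $(\overline{\Ps}, \overline{\Ps}^{-1})$ and zero to $\overline{\Ds}$, so $\AdvB$ is a $q$-query algorithm. When $\Adv$ returns $(x_0, x_1)$ with $y_0 = y_1 = y$ and $\vecU_0 - \vecU_1 \notin \colspan(\matA(y)^{(1)})$, I undo the self-reduction by setting $\vecU_b' = L_y^{-1}(\vecU_b)$; affine invertibility preserves the condition, so $\vecU_0' - \vecU_1'$ still lies outside the \emph{base} $T_y$ and therefore disagrees in its first $k-(n-r-s)$ coordinates. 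These coordinates are exactly $\overline{\vecU}_0$ and $\overline{\vecU}_1$, so injectivity of $\overline{\Ps}$ gives $\overline{x}_0 \neq \overline{x}_1$, while $\overline{y}_0 = \overline{y}_1 = y$ by construction. $\AdvB$ outputs $(\overline{x}_0, \overline{x}_1)$.

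\textbf{Main obstacle.} The argument is essentially a designed coincidence: the same block structure that makes $T_y^\bot$ a fixed coordinate subspace (thereby eliminating the need for $\overline{\Ds}$ queries) also makes the strong collision condition $\vecU_0 - \vecU_1 \notin \colspan(\matA(y)^{(1)})$, after undoing the self-reduction, equivalent to $\overline{\vecU}_0 \neq \overline{\vecU}_1$, which is precisely what transfers the collision to $\overline{\Ps}$. Once the embedding is pinned down, the remaining work is a small batch of linear-algebra checks: that the base triple sits in the support of $\Oracle'_{n,r,k,s}$; that the warm-up self-reduction, when applied to a \emph{fixed} base point, still yields the correct marginal over $\Oracle'_{n,r,k,s}$ (so $\Adv$'s view is distributionally identical to its view in the lemma's hypothesis); and that $L_y$ preserves the strong condition --- the last being the reason the hypothesis is phrased in terms of $\colspan(\matA(y)^{(1)})$ rather than an arbitrary subspace.
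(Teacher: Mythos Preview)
Your proposal is correct and follows essentially the same approach as the paper: build a base embedding of $(\overline{\Ps},\overline{\Ps}^{-1})$ into the support of $\Oracle'_{n,r,k,s}$ with a block-diagonal $\matA(y)$ so that $\Ds'$ becomes the fixed coordinate test, then re-randomize via $\Gamma$ and the affine maps $L_y$ to obtain a perfectly distributed instance, and finally recover $(\overline{x}_0,\overline{x}_1)$ from $\Gamma(x_b)$ using that $\vecU_0-\vecU_1\notin\colspan(\matA(y)^{(1)})$ forces $\overline{\vecU}_0\neq\overline{\vecU}_1$. The paper presents this with the base embedding and self-reduction folded into a single simulated oracle (and with $\vecD(y)$ acting only on the $\widetilde{x}$ block rather than a full $\vecD_y\in\Z_2^k$), and verifies the distributional identity via two short hybrids rather than a direct re-randomization check, but these are cosmetic differences.
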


\begin{proof}
We first describe the actions of the algorithm $\AdvB$ (which will use the code of $\Adv$ as part of its machinery) and then argue why it breaks collision resistance with the appropriate probability. Given oracle access to $\overline{\Ps}, \overline{\Ps}^{-1}$ which comes from $\left( \overline{\Ps}, \overline{\Ps}^{-1}, \overline{\Ds} \right) \gets \Oracle_{ r + s, \: r, \: k - (n - r - s) }$, the algorithm $\AdvB$ does the following:
\begin{itemize}
    \item
    Sample a random function $F_{\matC}$ that outputs some sufficient (polynomial) amount of random bits on an $r$-bit input, and sample a random $n$-bit permutation $\Gamma$. Define the following oracles.

    \item 
    $\left( \; y \in \bbZ_{2}^{r}, \; \vecU \in \bbZ_{2}^{k} \; \right) \gets \Ps\left( x \in \bbZ_{2}^{n} \right)$:
    \begin{itemize}
        \item
        $\left(
        \; \overline{x} \in \bbZ_{2}^{r + s},
        \; \widetilde{x} \in \bbZ_{2}^{n - r - s} \;
        \right) \gets \Gamma(x)$.

        \item 
        $\left(
        \; y \in \bbZ_{2}^{r},
        \; \overline{\vecU} \in \bbZ_{2}^{k - (n - r - s)} \;
        \right) \gets \overline{\Ps}(\overline{x})$.

        \item 
        $\left(
        \; \matC(y) \in \bbZ_{2}^{ k \times k },
        \; \vecD(y) \in \bbZ_{2}^{n - r - s}
        \right)
        \gets
        F_{\matC}(y)$.

        \item 
        $\vecU
        \gets
        \matC(y) \cdot \left( \begin{array}{c} \overline{\vecU} \\ \widetilde{x} + \vecD(y) \end{array} \right)$.
    \end{itemize}

    \item 
    $\left( \; x \in \bbZ_{2}^{n} \; \right)
    \gets
    \Ps^{-1}\left( \; y \in \bbZ_{2}^{r}, \; \vecU \in \bbZ_{2}^{k} \; \right)$:
    \begin{itemize}
        \item 
        $\left(
        \; \matC(y) \in \bbZ_{2}^{ k \times k },
        \; \vecD(y) \in \bbZ_{2}^{n - r - s}
        \right) \gets F_{\matC}(y)$.

        \item 
        $\left( \begin{array}{c} \overline{\vecU} \\ \widetilde{x} \end{array} \right) \gets \matC(y)^{-1} \cdot \vecU - \left( \begin{array}{c} 0^{k-(n - r - s)} \\ \vecD(y) \end{array} \right)$.

        \item 
        $\left( \; \overline{x} \in \bbZ_{2}^{r + s} \; \right)
        \gets
        \overline{\Ps}^{-1}\left( y, \overline{\vecU} \right)$.

        \item
        $x
        \gets
        \Gamma^{-1}\left( \overline{x}, \widetilde{x} \right)$.
    \end{itemize}

    \item 
    $\Ds'\left( \; y \in \bbZ_{2}^{r}, \; \vecV \in \bbZ_{2}^{k} \; \right) \in \{ 0, 1 \}$:
    \begin{itemize}
        \item 
        $\left(
        \; \matC(y) \in \bbZ_{2}^{ k \times k },
        \; \vecD(y) \in \bbZ_{2}^{n - r - s}
        \right)
        \gets
        F_{\matC}\left( y \right)$.

        \item 
        $\matA^{(1)}(y) := $ last $n - r - s$ columns of $\matC(y)$. 

        \item 
        Output $1$ iff $\vecV^{T} \cdot \matA^{(1)}(y) = \mathbf{0}^{n - r - s}$.
    \end{itemize}
\end{itemize}

The remainder of the reduction is simple: $\AdvB$ executes $\left( x_0, x_1 \right) \gets \Adv^{ 
\Ps, \Ps^{-1}, \Ds' }$ and then $\left( \overline{x}_{b}, \widetilde{x}_{b} \right) \gets \Gamma(x_{b})$ and outputs $\left( \overline{x}_{0}, \overline{x}_{1} \right)$. Assume that the output of $\Adv$ satisfies $y_{0} = y_{1} := y$ and also $\left( \vecU_0 - \vecU_1 \right) \notin \colspan \left( \matA(y)^{(1)} \right)$, and recall that $\matA(y)^{(1)} \in \bbZ_{2}^{k \times (n - r - s)}$ are the last $n - r - s$ columns of the matrix $\matA(y) \in \bbZ_{2}^{k \times (n - r)}$, which is generated by the reduction. We explain why it is necessarily the case that $\overline{x}_{0} \neq \overline{x}_{1}$.

First note that due to how we defined the reduction, $\matA(y) := \matC(y) \cdot \left( \begin{array}{cc} \overline{\matA}(y) & \\ & \Id_{n - r - s} \end{array} \right)$, where $\overline{\matA}(y) \in \bbZ_{2}^{(k - (n - r - s)) \times s}$ is the matrix arising from the oracles $\overline{\Ps}, \overline{\Ps}^{-1}$ and $\Id_{n - r - s} \in \bbZ_{2}^{(n - r - s) \times (n - r - s)}$ is the identity matrix of dimension $n - r - s$. Also note that because $\matC(y)$, $\overline{\matA}(y)$ are full rank then $\matA(y)$ is full rank. Now, since $\left( \vecU_0 - \vecU_1 \right) \notin \colspan \left( \matA(y)^{(1)} \right)$ and since $\matA(y)^{(1)}$ are the last $n - r - s$ columns of $\matA(y)$, it follows that if we consider the coordinates vector $\vecX \in \bbZ_{2}^{n - r}$ of $\left( \vecU_0 - \vecU_1 \right)$ with respect to $\matA(y)$, the first $s$ elements are not $0^{s}$. By linearity of matrix multiplication it follows that if we look at each of the two coordinates vectors $\vecX_{0}$, $\vecX_{1}$ (each has $n - r$ bits) for $\vecU_{0}$, $\vecU_{1}$, respectively, somewhere in the first $s$ bits, they differ.
Now, recall how we obtain the first $s$ bits of $\vecX_{b}$ -- this is exactly by applying $\overline{\Pi}$ (the permutation on $\{ 0, 1 \}^{r + s}$ arising from the oracles $\overline{\Ps}, \overline{\Ps}^{-1}$) to $\overline{x}_{b}$ and taking the last $s$ bits of the output. Since these bits differ in the output of the permutation, then the preimages have to differ, i.e., $\overline{x}_{0} \neq \overline{x}_{1}$.

Define $\epsilon_{\AdvB}$ as the probability that the output of $\Adv$ indeed satisfies $y_{0} = y_{1} := y$ and also $\left( \vecU_0 - \vecU_1 \right) \notin \colspan \left( \matA^{(1)}(y) \right)$, and it remains to give a lower bound for the probability $\epsilon_{\AdvB}$. We do this by a sequence of hybrids, eventually showing that the oracle which $\AdvB$ simulates to $\Adv$ is indistinguishable from an oracle sampled from $\Oracle'_{ n, r, k, s }$. More precisely, each hybrid describes a process, it has an output, and a success predicate on the output.

\begin{itemize}
    \item $\Hyb_{0}$: The above distribution $\left( \Ps, \Ps^{-1}, \Ds' \right) \gets \AdvB^{ 
\overline{\Ps}, \overline{\Ps}^{-1} }$, simulated to the algorithm $\Adv$.
\end{itemize}
The first hybrid is where $\AdvB$ executes $\Adv$ by the simulation described above. The output of the process is the output $(x_{0}, x_{1})$ of $\Adv$. The process execution is considered as successful if $y_{0} = y_{1} := y$ and $\left( \vecU_{0} - \vecU_{1} \right) \notin \colspan\left( 
\matA(y)^{(1)} \right)$.

\begin{itemize}
    \item $\Hyb_{1}$: Not applying the inner permutation $\overline{\Pi}$ (which comes from the oracles $\overline{\Ps}$, $\overline{\Ps}^{-1}$), by using the random permutation $\Gamma$.
\end{itemize}
Let $\overline{\Pi}$ the permutation on $\{ 0, 1 \}^{r + s}$ that's inside $\overline{\Ps}$. In the previous hybrid we apply the $n$-bit permutation $\Gamma$ to the input $x \in \bbZ_{2}^n$ and then proceed to apply the inner permutation $\overline{\Pi}$ to the first (i.e. leftmost) $r + s$ output bits of the first permutation $\Gamma$ (we also apply $\Gamma^{-1}$ to the output of the inverse of the inner permutation, in the inverse oracle $\Ps^{-1}$). The change we make to the current hybrid is that we simply apply only $\Gamma$ and discard the inner permutation and its inverse. Since a random permutation concatenated with any permutation distributes identically to a random permutation, the current hybrid is statistically equivalent to the previous and in particular the output of this process distributes identically to the output of the previous, and so does the success probability.

\begin{itemize}
    \item $\Hyb_{2}$: For every $y \in \bbZ_{2}^{r}$, taking $\matA(y)$ to be the direct output of $F$, by using the randomness of the random function.
\end{itemize}
In order to describe the change between the current and previous hybrid we first recall the structure of the oracles from the previous hybrid: Observe that in the previous hybrid, for every $y \in \bbZ_{2}^{r}$ we defined $\matA(y) := \matC(y) \cdot \left( \begin{array}{cc} \overline{\matA}(y) & \\ & \Id_{n - r - s} \end{array} \right)$, where $\matC(y) \in \bbZ_{2}^{k \times k}$ is the output of $F_{\matC}(y)$ and $\overline{\matA}(y) \in \bbZ_{2}^{(k - (n - r - s)) \times s}$ is the output of the inner random function $\overline{F}$ (which comes from the inside of the oracles $\left( \overline{\Ps}, \overline{\Ps}^{-1} \right)$). In the current hybrid we are going to ignore the inner random function $\overline{F}$, its generated matrix $\overline{\matA}(y)$ and also the pair $\matC(y)$, $\vecD(y)$, sample a fresh random function $F_{\matA}$ at the beginning of the process, and on query $y$ generate $\left( \matA(y), \vecB(y) \right) \gets F_{\matA}(y)$, for $\matA(y) \in \bbZ_{2}^{k \times (n - r)}$, $\vecB(y) \in \bbZ_{2}^{k}$.

To see why the two distributions are indistinguishable, note that the following two ways to sample $\matA(y)$, are statistically equivalent: (1) For every $y \in \bbZ_{2}^{r}$, the matrix $\matA(y)$ is generated by sampling a random full-rank matrix $\matC(y) \in \bbZ_{2}^{k \times k}$ and letting $\matA(y)$ be $\matC(y) \cdot \left( \begin{array}{cc} \overline{\matA}(y) & \\ & \Id_{n - r - s} \end{array} \right)$. (2) For every $y \in \bbZ_{2}^{r}$ just sample a full-rank matrix $\matA(y) \in \bbZ_{2}^{k \times (n - r)}$. Since we are using random functions and in the previous hybrid we are sampling $\matA(y)$ according to (1) and in the current hybrid we are sampling $\matA(y)$ according to (2), the outputs of the two hybrids distribute identically.

\paragraph{Finalizing the reduction.}
Observe that the distribution generated in the above $\Hyb_{2}$ is exactly an oracle sampled from $\Oracle'_{ n, r, k, s }$. From the lemma's assumptions, the success probability for $\Hyb_{2}$ is thus $\epsilon$. Since we also showed that the hybrids have identical success probabilities, it follows that $\epsilon_{\AdvB} = \epsilon$, which finishes our proof.
\end{proof}

We conclude this section by stating the following Theorem, which is obtained as a direct corollary from Lemmas \ref{lemma:bloating_dual_oracle} and \ref{lemma:simulating_dual_oracle}.

\begin{theorem} \label{thm:dualtodualfree}
Suppose there is an oracle aided $q$-query quantum algorithm $\Adv$ such that
\[
\Pr
\left[
x_0 \neq x_1
\land 
H(x_0) = H(x_1) 
\;
:
\begin{array}
{rl}
\left( \Ps, \Ps^{-1}, \Ds \right) & \gets \Oracle_{n,r,k} \\
\left( x_0, x_1 \right) & \gets \Adv^{\Ps, \Ps^{-1}, \Ds}
\end{array}\right]
\geq
\epsilon \enspace .
\]
Also, let $s \leq n - r$ such that (1) for $s - (n - r - s) := s'$ we have $\frac{ k^3 \cdot q^3 \cdot \frac{ 1 }{ \epsilon^2 } }{ 2^{s'} } \leq o(1)$ and (2) $\frac{ k^{9} \cdot q^7 \cdot \frac{1}{\epsilon^4} }{ \sqrt{ 2^{n - r - s} } } \leq o(1)$. Then, there is an oracle aided $q$-query quantum algorithm $\AdvB$ such that
\[
\Pr
\left[
x_0 \neq x_1 \land H(x_0) = H(x_1)
\;
:
\begin{array}{rl}
\left( \Ps, \Ps^{-1}, \Ds \right) \gets \Oracle_{r+s,\: r,\: k-(n - r - s)} \\
(x_0, x_1) \gets \AdvB^{\Ps,\Ps^{-1}}
\end{array}
\right]
\geq
\frac{ \epsilon }{ 2^{6} \cdot k^{2} }
\enspace .
\]
\end{theorem}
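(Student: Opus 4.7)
The plan is to obtain Theorem \ref{thm:dualtodualfree} as a two-step composition of the preceding Lemmas \ref{lemma:bloating_dual_oracle} and \ref{lemma:simulating_dual_oracle}, treating the theorem as the direct concatenation of their reductions.

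First I would apply Lemma \ref{lemma:bloating_dual_oracle} to the hypothesized adversary $\Adv$. By hypothesis, $\Adv$ is a $q$-query quantum algorithm finding collisions $(x_0, x_1)$ in $H$ with probability $\epsilon$ given access to $(\Ps, \Ps^{-1}, \Ds) \gets \Oracle_{n,r,k}$. The parameter conditions in the theorem statement, namely $\frac{k^3 q^3 / \epsilon^2}{2^{s'}} \leq o(1)$ for $s' := s - (n-r-s)$ and $\frac{k^9 q^7 / \epsilon^4}{\sqrt{2^{n-r-s}}} \leq o(1)$, are exactly those required by Lemma \ref{lemma:bloating_dual_oracle}. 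Thus the same algorithm $\Adv$, now run against the dual-bloated oracle distribution $\Oracle'_{n,r,k,s}$, outputs a pair $(x_0, x_1)$ satisfying the stronger condition $y_0 = y_1 := y$ and $(\vecU_0 - \vecU_1) \notin \colspan(\matA(y)^{(1)})$ with probability at least $\frac{\epsilon}{2^6 \cdot k^2}$.

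Next I would feed this ``strong-collision'' $\Adv$ directly into Lemma \ref{lemma:simulating_dual_oracle}. That lemma's hypothesis is precisely the conclusion I just obtained from Lemma \ref{lemma:bloating_dual_oracle} (with the success probability $\frac{\epsilon}{2^6 k^2}$ in place of the generic $\epsilon$ appearing there). Its conclusion gives a $q$-query quantum algorithm $\AdvB$ with only primal and inverse access to $(\overline{\Ps}, \overline{\Ps}^{-1}) \gets \Oracle_{r+s,\, r,\, k-(n-r-s)}$ that produces a collision in $\overline{H}$ with probability at least $\frac{\epsilon}{2^6 \cdot k^2}$. Chaining these two steps establishes the theorem.

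The main thing to verify carefully is that the parameter bookkeeping is consistent across the two lemmas: the conditions on $s$ in the theorem statement match exactly those of Lemma \ref{lemma:bloating_dual_oracle}, Lemma \ref{lemma:simulating_dual_oracle} imposes no additional parameter constraints, and the query count $q$ is preserved (neither lemma inflates the number of queries, since the composition consists of an adversary running with the same oracle interface in a different distribution, and then a simulator with no internal queries). Beyond this check, there is no further obstacle: the proof is a literal two-line deduction from the named lemmas.
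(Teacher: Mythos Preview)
Your proposal is correct and matches the paper's approach exactly: the paper states that Theorem~\ref{thm:dualtodualfree} ``is obtained as a direct corollary from Lemmas~\ref{lemma:bloating_dual_oracle} and~\ref{lemma:simulating_dual_oracle},'' and your two-step chaining (bloat the dual via Lemma~\ref{lemma:bloating_dual_oracle} to get the strong-collision guarantee with probability $\epsilon/(2^6 k^2)$, then eliminate the dual via Lemma~\ref{lemma:simulating_dual_oracle} which preserves the success probability) is precisely that corollary. Your parameter and query-count bookkeeping is also accurate.
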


\subsection{Hardness of the Dual-free Case from 2-to-1 Collision-Resistance} \label{subsection:dual_free_to_two_to_one_oracle}
We start with defining coset partition functions, which are an object we will use in order to show that collision finding in the dual-free case is at least as hard as finding collisions in $2$-to-$1$ random functions.

\begin{definition} [Coset Partition Functions] \label{definition:coset_partition_function}
For $n, \ell \in \Nat$ such that $\ell \leq n$ we say a function $Q:\{0,1\}^n \rightarrow \{0,1\}^m$ is a $(n,m,\ell)$-\emph{coset partition function} if, for each $y$ in the image of $Q$, the pre-image set $Q^{-1}(y)$ has size $2^\ell$ and is a coset of a linear space of dimension $\ell$. We allow different pre-image sets to be cosets of different linear spaces.
\end{definition}

\paragraph{From Dual-free to Coset Partition Functions.} 
We next show that finding collisions in the dual-free case is no easier than finding collisions in \emph{worst-case} coset partition functions. 

\begin{theorem} \label{thm:dualfreetocol}
Let $k \geq n \geq r$. Suppose there is an oracle aided $q$-query quantum algorithm $\Adv$ such that
\[
\Pr
\left[
\left( y_{0} = y_{1} \right) \land \left( x_{0} \neq x_{1} \right) \; :
\begin{array}{rl}
\left( \Ps, \Ps^{-1}, \Ds \right) & \gets \Oracle_{ n, r, k } \\
\left( x_{0}, x_{1} \right) & \gets \Adv^{ \Ps, \Ps^{-1} } \\
\left( y_{b}, \vecU_{b} \right) & \gets \Ps(x_b)
\end{array}
\right]
\geq
\epsilon
\enspace .
\] 
Then there is an oracle aided $q$-query quantum algorithm $\AdvB$ that given any $( n, r, n - r )$-coset partition function $Q$, satisfies
\[
\Pr
\left[
\left( Q(w_0) = Q(w_1) \right) \land \left( w_0 \neq w_1 \right)
\;
:
\;
\left( w_0, w_1 \right) \gets \AdvB^{Q}
\right]
\geq
\epsilon
\enspace .
\]
\end{theorem}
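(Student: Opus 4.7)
The plan is for $\AdvB$ to simulate a properly distributed sample of $(\Ps, \Ps^{-1})$ from $\Oracle_{n,r,k}$ to $\Adv$ using only forward queries to $Q$, run $\Adv$ on this simulation, and translate any collision in the simulated $H$ back into a collision in $Q$. \textbf{Step 1 (structured embedding).} First I would embed $Q$ into the oracles via $\Ps_0(x) := \left( Q(x), \left( x, 0^{k-n} \right) \right) \in \{0,1\}^r \times \bbZ_2^k$. Writing each preimage set as $Q^{-1}(y) = \{ \matE_y \cdot \sv + \cv_y : \sv \in \bbZ_2^{n-r} \}$ for some full-column-rank $\matE_y \in \bbZ_2^{n \times (n-r)}$ and $\cv_y \in \bbZ_2^n$, the oracle $\Ps_0$ has the Construction~\ref{constr:main} form with implicit $\matA_y = \begin{pmatrix} \matE_y \\ \mathbf{0} \end{pmatrix}$, $\vecB_y = \begin{pmatrix} \cv_y \\ 0^{k-n} \end{pmatrix}$, and implicit permutation $\Pi(x) = (Q(x), J(x))$, where $J(x) := \sv$ is the coset coordinate of $x$ under $\matE_{Q(x)}$. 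The inverse $\Ps_0^{-1}(y, \vecU)$ is implemented by parsing $\vecU = (x', 0^{k-n})$ (returning $\bot$ if the trailing bits are nonzero), querying $Q(x')$, and outputting $x'$ if and only if $Q(x') = y$. Each call to $\Ps_0$ or $\Ps_0^{-1}$ thus uses a single forward query to $Q$, and crucially no inverse queries.

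\textbf{Step 2 (random self-reduction).} To correct the distribution, $\AdvB$ samples a uniformly random permutation $\Gamma$ on $\{0,1\}^n$, and for each $y \in \{0,1\}^r$ samples an independent uniformly random invertible $\matC_y \in \bbZ_2^{k \times k}$ and uniformly random $\vecD_y \in \bbZ_2^k$. It then exposes to $\Adv$ the oracles $\widetilde{\Ps}(x) := (y, \matC_y \cdot \vecU + \vecD_y)$ with $(y, \vecU) := \Ps_0(\Gamma(x))$, together with the symmetrically-defined $\widetilde{\Ps}^{-1}$, exactly as in the warm-up self-reduction. The implicit permutation becomes $\Pi \circ \Gamma$, which is uniform on $S_{2^n}$ because $\Gamma$ is uniform and left composition by any fixed permutation is a bijection on $S_{2^n}$. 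The implicit coset matrix becomes $\matC_y \cdot \matA_y$, which is uniform over full-column-rank $k \times (n-r)$ matrices since $\mathrm{GL}_k(\bbZ_2)$ acts transitively on this set and $\matC_y$ is uniform (orbit-stabilizer). The implicit offset $\matC_y \cdot \vecB_y + \vecD_y$ is uniform in $\bbZ_2^k$ thanks to the additive shift by $\vecD_y$. Hence $(\widetilde{\Ps}, \widetilde{\Ps}^{-1})$ is distributed exactly as the first two components of $\Oracle_{n,r,k}$, independently of the underlying $Q$.

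\textbf{Step 3 (extracting the collision).} $\AdvB$ now runs $(x_0, x_1) \gets \Adv^{\widetilde{\Ps}, \widetilde{\Ps}^{-1}}$, simulating each of $\Adv$'s $q$ quantum oracle queries via one coherent forward query to $Q$, and outputs $(w_0, w_1) := (\Gamma(x_0), \Gamma(x_1))$. Whenever $\Adv$ succeeds -- that is, $(y_b, \vecU_b) = \widetilde{\Ps}(x_b)$ with $y_0 = y_1$ and $x_0 \neq x_1$ -- unwinding Step 2 gives $y_b = Q(\Gamma(x_b)) = Q(w_b)$, so $Q(w_0) = Q(w_1)$, and $w_0 \neq w_1$ because $\Gamma$ is a bijection. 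This yields the claimed success probability $\epsilon$ using $q$ forward queries to $Q$. The only mildly delicate point is the distributional claim in Step 2 -- specifically, that a uniformly random invertible matrix times a fixed full-column-rank matrix is uniform over full-column-rank matrices -- but this is a standard orbit-stabilizer observation that is already implicit in the warm-up, so the proof meets no real obstacle once the embedding in Step 1 is in place.
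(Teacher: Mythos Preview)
Your proposal is correct and takes essentially the same approach as the paper. The paper's proof directly defines $\Ps(x) = (Q(\Gamma(x)), \matC_y \cdot \Gamma(x) + \vecD_y)$ with $\matC_y \in \bbZ_2^{k\times n}$ random full-column-rank, which is exactly your two steps collapsed into one (your $\matC_y \in \mathrm{GL}_k(\bbZ_2)$ applied to the zero-padded vector uses only its first $n$ columns, and these are uniform over full-column-rank $k\times n$ matrices); the distributional argument, the query accounting, and the collision extraction via $\Gamma$ all match.
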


\begin{proof}
$\AdvB$ works as follows. Given oracle access to some $(n, r, n - r)$-coset partition function $Q$, it chooses a random permutation $\Gamma : \{0,1\}^n \rightarrow \{0,1\}^n$, and for each $y$, it chooses a random full-column-rank matrix $\matC_y \in \bbZ_{2}^{k \times n}$ (which is possible since we assume $k \geq n$) and random vector $\vecD_{y} \in \bbZ_{2}^k$. It then runs $\Adv$, simulating the oracles $\Ps,\Ps^{-1}$ as follows:

\paragraph{The oracle $\Ps(x)$:}
\begin{itemize}
    \item
    $y \gets Q(\Gamma(x))$.
    
    \item
    Output $\left( y,\matC_y \cdot \Gamma(x) + \vecD_y \right)$.
\end{itemize}
We now claim that $\Ps$ is correctly distributed. To do so, we will first define an augmented function $Q' : \{0,1\}^n \rightarrow \{0,1\}^n$. On input $\vecZ$, the $n$-bit output of $Q'(\vecZ)$ consists of two parts. The first $r$ bits are set to $y = Q(\vecZ)$. The preimage set $Q^{-1}(y)$ is then a coset, which can be described as the set $\{ \overline{\matA}_y \cdot \vecR + \overline{\vecB}_y \}$ as $\vecR$ ranges over $\Z_2^{n-r}$ (where $\overline{\matA}_y$, $\overline{\vecB}_{y}$ are both unknown to the reduction algorithm $\AdvB$). Here, $\overline{\matA}_y \in \Z_2^{n\times (n-r)}$ has full column-rank and $\overline{\vecB}\in\Z_2^n$. Define the function $\overline{J}(\vecZ)$ that outputs the unique vector in $\Z_2^{n-r}$ such that $\vecZ = \overline{\matA}_y \cdot \overline{J}\left( \vecZ \right) + \overline{\vecB}_y$. Then define $Q'\left( \vecZ \right) = \left( Q\left( \vecZ \right) , \overline{J}\left( \vecZ \right) \right)$. Note that $Q'$ is not efficiently computable without knowing $\overline{\matA}_y$ ,$\overline{\vecB}_y$, but here we will not need it to be. Intuitively, the reason that we do not need $\overline{J}(\cdot)$ to be efficiently computable is because whenever we simulate the oracles $\left( \Ps, \Ps^{-1}, \Ds \right)$, the value $J(\cdot)$ is never output, it is only the connector between preimages $x$ and vectors $\vecU_{x}$ in the coset $\colspan\left( \matA(y) \right) + \vecB(y)$. Notice that $Q'$ is a function from $\Z_2^{n}$ to $\Z_2^n$, and it is moreover a permutation with $\left( Q' \right)^{-1}\left( y, \vecR \right) = \overline{\matA}_y \cdot \vecR + \overline{\vecB}_y$.

Observe that $\AdvB$'s simulation of $\Ps$ is implicitly setting the following parameters
\begin{align*}
\Pi(x) & = Q'(\Gamma(x)) \; , & H(x) & = Q(\Gamma(x)) \; , & J(x)=\overline{J}(\Gamma(x)) \; , \\
\matA_y  & =\matC_y\cdot\overline{\matA}_y \; , & \vecB_y & =\matC_y\cdot\overline{\vecB}_y+\vecD_y \enspace .
\end{align*}
Thus, we must check that these quantities have the correct distribution. Indeed, for every $Q$, which is in turn defines $\left( \overline{\matA}_y, \overline{\vecB}_y \right)_{ y \in \{ 0, 1 \}^{r} }$, the function $Q'(x)$ is a permutation: Given $y \in \{ 0, 1 \}^{r}$, $\vecR \in \bbZ_{2}^{n - r}$, one can recover $\vecZ \in \{ 0, 1 \}^{n}$ as $\vecZ = \overline{\matA}_y \cdot \vecR + \overline{\vecB}_y$. Hence $\Pi$ is a permutation since it is the composition of two permutations. Moreover, since one of the two permutations ($\Gamma$) is uniformly random, so is $\Pi$. 

Now we look at the distribution of $\matA_y,\vecB_y$. Recall that $\overline{\matA}_y \in \Z_2^{n\times (n-r)}$ is a full-column-rank matrix, and $\matC_y \in \Z_2^{k\times n}$ is a \emph{random} full-column-rank matrix. Thus, $\matA_y=\matC_y \cdot \overline{\matA}_y \in \Z_2^{k \times (n-r)}$ is also a random full-column-rank matrix.

Then we have that $\vecB_y=\matC_y\cdot\overline{\vecB}_y+\vecD_y$ where $\vecD_y$ is random, meaning $\vecB_y$ is random. Thus, $\Ps$ has an identical distribution to that arising from $\Os_{n,r,k}$. 

\paragraph{The oracle $\Ps^{-1}(y,\vecU)$:}
\begin{itemize}
    \item
    $x \gets
    \begin{cases}
    \Gamma^{-1}\left( \vecW \right)
    &\text{ $\exists \vecW \in \bbZ_{2}^{n}$ such that $\matC_y \cdot \vecW + \vecD_y = \vecU$} \\
    \bot
    &\text{ if no such $\vecW$ exists}
    \end{cases}$
    \item
    Output $\begin{cases}
    x &\text{ if $x \neq \bot$ and $Q\left( \Gamma(x) \right) = y$ } \\ \bot &\text{ if $x=\bot$ or $Q(\Gamma(x))\neq y$ }
    \end{cases}$
\end{itemize}

Observe that $\Ps^{-1}\left( \Ps(x) \right) = x$, and for all pairs $\left( y \in \{ 0, 1 \}^{r}, \vecU \in \bbZ_{2}^{k} \right)$ that are not in the image of $\Ps$, we have $\Ps^{-1}\left( y, \vecU \right) = \bot$. Thus, $\Ps^{-1}$ is the uniquely-defined inverse of $\Ps$. Thus, since the distribution of $\Ps$ simulated by $\AdvB$ exactly matches the distribution arising from $\Os_{n,r,k}$, the same is true of the pairs $\left( \Ps, \Ps^{-1} \right)$.

\paragraph{Finishing touches.}
Thus we saw that for every input $(n, r, n - r)$-coset partition function $Q$, the algorithm $\AdvB$ perfectly simulates the view of $\Adv$, which consists of the pair of oracles $\Ps,\Ps^{-1}$ that distribute according to $\Os_{n,r,k}$. Hence, with probability $\epsilon$, the algorithm $\Adv$ will produce a collision $x_0\neq x_1$ such that $H(x_0)=H(x_1)$. It remains to explain what $\AdvB$ does in order to obtain a collision in $Q$, for every collision in the simulated $H$. Given $(x_{0}, x_{1})$, the reduction $\AdvB$ will then compute and output $\left( w_0 = \Gamma(x_0) , w_1 = \Gamma(x_1) \right)$. Observe that if $x_0\neq x_1$, then $w_0\neq w_1$ since $\Gamma$ is a permutation. Moreover, if $H(x_0)=H(x_1)$, then 
$$
Q(w_0) = Q\left( \Gamma(x_0) \right) = H(x_0) = H(x_1) = Q\left( \Gamma(x_1) \right) = Q(w_1)
\enspace .
$$
Hence, with probability at least $\epsilon$, $\AdvB$ will output a collision for $Q$. Notice that for each query that $\Adv$ makes to $\Ps(\cdot)$, the reduction $\AdvB$ needs to make exactly one query to $Q$ and the same goes for the inverse $\Ps^{-1}(\cdot)$. Thus $\AdvB$ makes exactly $q$ queries to $Q$. This completes the proof.
\end{proof}

\paragraph{Collision-resistant Coset Partition Functions.}
We now show how to construct a collision resistant coset partition functions relative to an oracle. Our main theorem for this subsection is the following.

\begin{theorem} \label{theorem:n_to_ell_CPF_collision_resistant}
For any $n, \ell \in \Nat$ such that $\ell \: | \: n$, there exists a distribution over $\left( n, n-\ell, \ell \right)$-coset partition functions $H$, such that any algorithm making $q$ queries to $H$ can only find collisions in $H$ with probability at most $O\left( \frac{ \ell^{3} \cdot q^{3} }{ 2^{ \frac{ n }{ \ell } } } \right)$.
\end{theorem}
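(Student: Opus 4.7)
The plan is to instantiate $Q$ as the $\ell$-wise parallel composition of i.i.d.\ random $2$-to-$1$ functions, as sketched informally in Section~\ref{sec:overvieworacle}. Set $n':=n/\ell$ (an integer since $\ell\mid n$), and sample $H_1,\dots,H_\ell$ independently and uniformly from the set of all $2$-to-$1$ functions $\{0,1\}^{n'}\to\{0,1\}^{n'-1}$. Define $Q:\{0,1\}^n\to\{0,1\}^{n-\ell}$ by
\[
Q(x_1\Vert\cdots\Vert x_\ell)\;=\;H_1(x_1)\Vert\cdots\Vert H_\ell(x_\ell),
\]
where each $x_i\in\{0,1\}^{n'}$. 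First I would verify the coset-partition structure: for each $i$ and each $y_i\in\{0,1\}^{n'-1}$ in the image, $H_i^{-1}(y_i)=\{a_i,b_i\}$ for some $a_i\ne b_i$, which equals the $1$-dimensional coset $a_i+\linspan\{a_i\oplus b_i\}$. Hence $Q^{-1}(y_1\Vert\cdots\Vert y_\ell)=H_1^{-1}(y_1)\times\cdots\times H_\ell^{-1}(y_\ell)$ is the direct product of $\ell$ one-dimensional cosets, i.e.\ an $\ell$-dimensional affine coset in $\{0,1\}^n$ of size $2^\ell$. Thus $Q$ is an $(n,n-\ell,\ell)$-coset partition function as required by Definition~\ref{definition:coset_partition_function}.

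For collision resistance, the plan is to give a reduction from $Q$-collisions to single-function $2$-to-$1$ collisions, and then invoke the known tight quantum collision lower bound. Suppose $\Adv$ makes $q$ queries to $Q$ and outputs a collision with probability $\epsilon$. A reduction $\AdvB$, given oracle access to a random $2$-to-$1$ function $H^\star:\{0,1\}^{n'}\to\{0,1\}^{n'-1}$, samples $i^\star\in[\ell]$ uniformly and locally samples the remaining $\ell-1$ random $2$-to-$1$ functions $\{H_i\}_{i\ne i^\star}$; it then plants $H^\star$ in slot $i^\star$ and simulates each query to $Q$ with one query to $H^\star$ and $\ell-1$ local evaluations. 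Since any collision in $Q$ requires $x_i\ne x_i'$ for at least one coordinate $i$, and $i^\star$ is independent of $\Adv$'s view, the coordinate $i^\star$ witnesses a collision with probability $\geq 1/\ell$, so $\AdvB$ outputs a collision in $H^\star$ with probability $\geq\epsilon/\ell$ using $q$ queries. Applying the tight quantum collision lower bound for random $2$-to-$1$ functions~\cite{JACM:AarShi04,QIC:Zhandry15} on domain $2^{n'}$ then yields $\epsilon/\ell\leq O(q^3/2^{n'})=O(q^3/2^{n/\ell})$, giving $\epsilon\leq O(\ell\cdot q^3/2^{n/\ell})\leq O(\ell^3\cdot q^3/2^{n/\ell})$, which matches the stated bound.

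The conceptually non-trivial step is pinpointing the \emph{right} object to reduce to: one is tempted to treat $Q$ itself as a random $2^\ell$-to-$1$ function and apply Zhandry's bound directly, but $Q$ has highly non-uniform structure inherited from its product form, so this route is not obviously valid. The reduction above sidesteps this by exploiting precisely the independence across coordinates to isolate a single random $H_{i^\star}$ that already lies in the regime covered by known lower bounds; the $1/\ell$ loss from guessing $i^\star$ is harmless given the slack in the theorem's bound. All remaining ingredients (perfect simulation of $Q$-queries using one $H^\star$-query, preservation of the quantum query count, and the direct-sum-of-cosets identity) are routine.
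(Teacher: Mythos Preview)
Your construction and reduction are correct, and they follow the same blueprint as the paper: take the $\ell$-fold parallel repetition of a random $2$-to-$1$ function $\{0,1\}^{n'}\to\{0,1\}^{n'-1}$, observe that direct sums of $1$-dimensional cosets are $\ell$-dimensional cosets, and reduce collision-finding in $Q$ to collision-finding in a single component. The one variation is that you use \emph{independent} $H_1,\dots,H_\ell$ and plant $H^\star$ at a random coordinate (paying a factor $1/\ell$ in success probability but keeping the query count at $q$), whereas the paper repeats the \emph{same} $H$ in all $\ell$ slots and simulates each $Q$-query with $\ell$ queries to $H$ (keeping the success probability but paying $\ell q$ queries). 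Your route actually yields the sharper bound $O(\ell\cdot q^3/2^{n/\ell})$ versus the paper's $O(\ell^3\cdot q^3/2^{n/\ell})$; the theorem's statement absorbs the slack either way.

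One point to tighten: you invoke ``the tight quantum collision lower bound for random $2$-to-$1$ functions'' citing~\cite{JACM:AarShi04,QIC:Zhandry15}, but~\cite{JACM:AarShi04} requires the range to be at least as large as the domain, which fails for $H^\star:\{0,1\}^{n'}\to\{0,1\}^{n'-1}$, and~\cite{QIC:Zhandry15} treats random \emph{functions}, not random $2$-to-$1$ functions. The paper closes this gap explicitly in Lemma~\ref{lemma:2_to_1_collision_resistant}: extend $H^\star$ to a random permutation $\Pi$ on $\{0,1\}^{n'}$ (by appending a random distinguishing bit $J$), use the $O(q^3/2^{n'})$ indistinguishability of random permutations from random functions~\cite{QIC:Zhandry15}, and then apply the random-function collision bound. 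This is a short and routine argument, but it is a genuine step your writeup skips.
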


\begin{proof}
We will start with a much weaker goal of constructing distributions of collision-resistant 2-to-1 functions, that are shrinking by a single bit.

\begin{lemma} \label{lemma:2_to_1_collision_resistant}
A random 2-to-1 function $H:\{0,1\}^n\rightarrow\{0,1\}^{n-1}$ is collision resistant given quantum queries to $H$. In particular, any quantum algorithm making $q$ queries has a $O(q^3/2^n)$ probability of producing a collision.
\end{lemma}

\begin{proof}
The original work to lower-bound the quantum query complexity of collision resistance was~\cite{JACM:AarShi04}. That work proves that random 2-to-1 functions are collision-resistant, provided that the range is at least as large as the domain. They need the large range since their proof works via showing that the function is indistinguishable from a 1-to-1 function. But of course, a 1-to-1 function must have a range at least as large as domain. This is not quite good enough for us, as we insist on our $H$ ``losing'' one bit.

Instead, we first point out that $H$ can be extended into a permutation as follows. First let $J:\{0,1\}^n\rightarrow\{0,1\}$ be an arbitrary function which for any collision $(x_0,x_1)$ of $H$, assigned random but distinct values to $J(x_0)$ and $J(x_1)$. Since $H$ is 2-to-1, such a $J$ only needs a 1-bit range. Then $\Pi(x)=(H(x),J(x))$ is a permutation, and if $H$ is a random 2-to-1 function, then $\Pi$ is a random permutation.

We will argue that $H$ is collision-resistant \emph{even given queries to $\Pi$} (but not $\Pi^{-1}$). Note that this is potentially stronger than giving access to $H$, as an algorithm can always ignore $J$. Suppose there is an oracle aided algorithm $\Adv$ which given oracle access to $\Pi$, finds a collision in $H$ (that is, a collision in the first $n - 1$ output bits of $\Pi$) with probability $\epsilon$.

To argue for the collision resistance given $\Pi$, we recall that for every $q$-query quantum algorithm having access to $\Pi$ (but not its inverse), $\Pi$ being a random permutation is $O\left( \frac{q^3}{2^n} \right)$-indistinguishable from $\Pi$ being a random function~\cite{QIC:Zhandry15}. So now let $H$ be the first $n-1$ bits of a random function $\Pi : \{ 0, 1 \}^{n} \rightarrow \{ 0, 1 \}^{n}$, and $J$ be the last bit. Then we have that $\Adv^\Pi$ still finds a collision in this $H$ with probability at least $\epsilon-O\left( \frac{q^3}{2^n} \right)$.

But now we observe that $H$ and $J$ are simply just independent random functions, and $J$ can be simulated without making any queries to $H$ (this can even be made efficient, but that is irrelevant since we only care about query counts). Thus, we obtain an algorithm $\Bs^H$ which finds a collision in a random function $H:\{0,1\}^n\rightarrow\{0,1\}^{n-1}$ with probability at least $\epsilon - O\left( \frac{q^3}{2^n} \right)$. But we know that random functions are collision resistant, regardless of the relationship between domain and range. In particular, the probability of finding a collision in such random $H$ is at most $O\left( \frac{q^3}{2^{n - 1}} \right) = O\left( \frac{q^3}{2^n} \right)$ by \cite{QIC:Zhandry15}. Thus, we can bound $\epsilon \leq O\left( \frac{q^3}{2^n} \right)$.
\end{proof}

\paragraph{Extending to coset partition functions.}
Observe that a 2-to-1 function is trivially a coset partition function. Indeed, since the pre-image set always is a pair $\{ x_0, x_1 \}$, which is a coset of the $1$-dimensional linear space $\{ 0, x_0 \oplus x_1 \} \subseteq \bbZ_{2}^{n}$ over the field $\bbZ_2$. To conclude what we saw so far, a random $2$-to-$1$ function $H : \{ 0, 1 \}^{n} \rightarrow \{ 0, 1 \}^{n - 1}$ (that is shrinking by one bit) is collision resistant by Lemma \ref{lemma:2_to_1_collision_resistant}, and is also a $(n, n - 1, 1)$ coset partition function, which proves Theorem \ref{theorem:n_to_ell_CPF_collision_resistant} for the case $\ell = 1$.

We extend to (an almost) general $\ell$ as follows. We can let $H^\ell : \{ 0, 1 \}^{ n \cdot \ell } \rightarrow \{ 0, 1 \}^{\left( n - 1 \right) \cdot \ell}$ be the function such that
$$
    H^\ell\left( x_1, \cdots , x_\ell \right) := \left( H(x_1), \cdots, H(x_\ell) \right) \enspace .
$$
It is straightforward that any collision for $H^\ell$ immediately gives a collision for $H$. Moreover, we can simulate any query to $H^\ell$ given $\ell$ queries to $H$. Thus, any algorithm making $q$ queries to $H^\ell$ can only find collisions in $H^\ell$ with probability at most $O\left( \frac{ \ell^{3} \cdot q^{3} }{ 2^n } \right)$.

Lastly, parallel repetition preserves coset partitions, since the pre-image sets of $H^{\ell}$ are just the direct sums of $\ell$ of the pre-image sets of $H$. Thus, we obtain the theorem by replacing $n \cdot \ell$ with $n$.
\end{proof}

\section{Permutable PRPs and Applications}\label{sec:prps}
In this section we develop a new concept of pseudo-random permutations, that will be useful for our obfuscation-based construction and may also be of independent interest. Quite roughly, our notion of a permutation will allow us to compose the permutation with another (fixed) permutation, while maintaining indistinguishability between the cases for whether the permutation was applied or not.

\begin{definition} \label{def:pprp}
Let $G = \{ G_N \}_{N \in \Nat}$ be a collection where each $G_N$ is a set of permutations over $[N]$. An output-permutable PRP (OP-PRP) for $G$ is a tuple of algorithms $\left( \prp, \prp^{-1}, \permute, \eval, \eval^{-1} \right)$ with the following properties.
\begin{itemize}
    \item {\bf Efficient Permutations: } For any key $k\in\{0,1\}^\lambda$ and any desired ``block size'' $N$, $\prp(k,\cdot)$ is an efficiently computable permutation on $[N]$ with $\prp^{-1}(k,\cdot)$ being its efficiently computable inverse.
    
    \item {\bf Output Permuting: } $\permute\left( k, \Gamma, c \right)$ is a deterministic polynomial-time procedure which takes as input a key $k\in\{0,1\}^\lambda$, the circuit description of a permutation $\Gamma$ in $G_N$, and a bit $c$. It outputs a permuted key $k^{\Gamma,c}$. 

    \item {\bf Output Permuted Correctness: } For all $\lambda \in \Nat$, $k\in\{0,1\}^\lambda$, $c\in\{0,1\}$, $\Gamma\in G_N$ and $x,z\in[N]$, 
    \begin{align*}
    \eval\left( k^{\Gamma,c}, x \right) & =
    \begin{cases}
    \prp(k,x) &\text{ if } c=0
    \\
    \Gamma\left( \prp(k,x) \right) & \text{ if } c=1
    \end{cases}
    \\
    \eval^{-1}\left( k^{\Gamma,c}, z \right) & =
    \begin{cases}
    \prp^{-1}(k,z) & \text{ if } c=0
    \\
    \prp^{-1}(k,\Gamma^{-1}(z))) & \text{ if } c=1
    \end{cases}
    \end{align*}
    We call $k^{\Gamma,c}$ a permuted key.
    
    \item {\bf Security:} For any interactive quantum polynomial-time adversary $\As$, there exists a negligible function $\epsilon(\lambda)$ such that the following experiment with $\As$ outputs 1 with probability at most $\frac{1}{2} + \epsilon\left( \lambda \right)$:
    \begin{itemize}
        \item
        $\As\left( 1^\lambda \right)$ produces a block size $N$ (in binary) and the description of a permutation $\Gamma \in G_N$.
        
        \item
        The experiment chooses a random $k \gets \{0,1\}^\lambda$ and a random bit $c\in\{0,1\}$. It gives $k^{\Gamma,c}\gets\permute(k,\Gamma,c)$ to $\As$.
        
        \item
        $\As$ produces a guess $c'$ for $c$. The experiment outputs 1 if $c'=c$.
    \end{itemize}
\end{itemize}
\end{definition}

\paragraph{On the relations between parameters.}
Observe that $\lambda, N \in \Nat$ can be arbitrary, which means in particular that we can fix the block size $N$ and vary the security parameter $\lambda$, or alternatively fix $\lambda$ and vary $N$. This lets us construct permutable PRPs for small domains with high security, for example. We will often overload notation, and use the same symbol for both $\prp$ and $\eval$ (and corresponding symbols for $\prp^{-1}$ and $\eval^{-1}$), when clear from context whether we are using a permuted or normal key. In this case, an OP-PRP would be a triplet $\left( \prp, \prp^{-1}, \permute \right)$.

\paragraph{Sub-exponential Security.}
For arbitrary functions $f_{0}, f_{1} : \Nat \rightarrow \Nat$, we say that an OP-PRP is $\left( f_{0}, \frac{1}{f_{1}} \right)$-secure if in the above the security part of the definition, we ask that the indistinguishability holds for every adversary of size $\leq f_{0}(\secp)$ and we swap $\epsilon(\secp)$ with $\frac{1}{ f_{1}(\secp) }$. Concretely, a sub-exponentially secure OP-PRP scheme would be one such that there exists a positive real constant $\delta > 0$ such that the scheme is $\left( 2^{\lambda^{\delta}}, \frac{1}{2^{\lambda^{\delta}}} \right)$-secure.

\paragraph{From output-permutable to arbitrarily permutable.}
We say that a PRP is an \emph{input permutable} (IP-) PRP if we apply $\Gamma$ to the inputs rather than the outputs of $\Pi$ (and accordingly apply $\Gamma^{-1}$ to the output, and not the input, of $\Pi^{-1}$). Note that by exchanging the roles of $\prp$ and $\prp^{-1}$ (and likewise $\eval$ and $\eval^{-1}$), we can turn any OP-PRP into an IP-PRP and vice versa. Also, for a PRP of the form $\Pi_{ \mathsf{out} }\left( k_{ \mathsf{out} }, \: \Pi_{ \mathsf{in} }\left( k_{ \mathsf{in} }, \: \cdot \: \right) \right)$, that's the composition of an IP-PRP $\Pi_{ \mathsf{in} }$ and an OP-PRP $\Pi_{ \mathsf{out} }$, both properties are simultaneously satisfied. We simply call such a PRP a permutable PRP (P-PRP).

\paragraph{Decomposable Permutations.} In this work the class $G$ of permutations that we will be interested in, is the class of \emph{decomposable permutations}, defined next.

\begin{definition} [Decomposable Permutations]
Let $N \in \Nat$, let $\Gamma$ a permutation on $[N]$ and let $T, s : \Nat \rightarrow \Nat$. We say that $\Gamma$ is $\left( T(N), s(N) \right)$-decomposable if there exists a sequence of permutations $\Gamma_0, \Gamma_1, \cdots, \Gamma_{T(N)}$ such that:
\begin{itemize}
    \item $\Gamma_0$ is the identity.
    \item $\Gamma_{T(N)} = \Gamma$.
    \item Each $\Gamma_i,\Gamma_i^{-1}$ has circuit size at most $s(N)$.
    \item For each $i$, either (1) $\Gamma_i = \Gamma_{i-1}$ or (2) there exists a $z_i \in [N]$ such that $\Gamma_i = \Gamma_{i-1} \circ \neighborswap{z_i}$. Here $\neighborswap{z_i}$ is the neighbor-swap permutation for $z_{i}$, which swaps between $z_{i}$ and $\left( z_{i} + 1 \: \text{mod } N \right)$, and acts as the identity on all other elements in $[N]$.
\end{itemize}
In the uniform setting, we will additionally ask that there is a uniform polynomial-time (quantum) algorithm which given the description of $\Gamma$ and $i$ constructs both $z_i$ and the circuits for $\Gamma_i,\Gamma_i^{-1}$.
\end{definition}

\paragraph{Examples of Decomposable Families of Permutations.}
Whenever we ask that the circuit size parameter $s(N)$ is polynomial (in $\log(N)$), we do not know whether any efficiently computable (in both directions) permutation $\Pi$, $\Pi^{-1}$ is decomposable. Intuitively, the reason is that while any permutation can be written as a concatenation of neighbor swaps, it may very well be the case that somewhere along the (likely exponentially-long) sequence of permutations $\Gamma_{i}$, $\Gamma^{-1}_{i}$, some of them will not have an efficient circuit implementation. This leads to our Question \ref{question:decompose} regarding the ability to efficiently decompose permutations.
Despite our lack of complete understanding of the decomposability of permutations, we mention some examples of decomposable families of permutations.

\begin{itemize}
    \item
    \textbf{Naive composition of decomposable permutations.}
    Let $\Gamma$ that can be decomposed into a polynomial-length sequence $\Gamma = \Gamma^1 \circ \cdots\circ \Gamma^r$ such that for every $i \in [r]$, the permutation $\Gamma^i$ is $(T,s)$-decomposable. Then $\Gamma$ is $\left( rT, rs \right)$-decomposable.

    \item
    \textbf{Linear cycles.} 
    Linear cycles $\left( j \; \; j+1 \; \; j+2 \; \; \cdots \; \; \ell - 1 \; \; \ell \right)$ swapping $j$ and $j+1$, then $j+1$ and $j+2$ and eventually $\ell - 1$ and $\ell$. Equivalently, $j$ goes to position $\ell$ and all other elements in the range are subtracted by $1$.
    Linear cycles as well as their inverses, are $\left( N, \polylog(N) \right)$-decomposable. To see this one can consider a straightforward decomposition of the cycle into neighbor swaps, and furthermore the efficient implementation of each of the intermediate permutations is given by a circuit that simply check if $z$ is in the range of the cycle, if so, decrements by $1 \text{ (mod $N$)}$ or if the element is $j$, sends it to position $\ell$.
    
    \item
    \textbf{Transpositions.}
    Transpositions $\left( j\;\;\ell \right)$, or (non-neighboring) swaps, are $\left( O(N), \polylog(N) \right)$-decomposable using the decomposition
    $$
    \left( j \; \; \ell \right)
    =
    \left( j \; \; j+1 \; \; j+2 \; \; \cdots \; \; \ell-1 \right)
    \left( \ell-1 \; \; \ell \right)
    \left( j \; \; j+1 \; \; j+2 \; \; \cdots \; \; \ell-1 \right)^{-1}
    $$
    $$
    =
    \left( j \; \; j+1 \; \; j+2 \; \; \cdots \; \; \ell-1 \; \; \ell \right)
    \left( j \; \; j+1 \; \; j+2 \; \; \cdots \; \; \ell-1 \right)^{-1}
    \enspace ,
    $$
    To see the $\left( O(N), \polylog(N) \right)$ decomposition, we use the above rule of composition of decomposable permutations, for the two decomposable permutations $\left( j \; \; j+1 \; \; j+2 \; \; \cdots \; \; \ell-1 \; \; \ell \right)$ and $\left( j \; \; j+1 \; \; j+2 \; \; \cdots \; \; \ell-1 \right)^{-1}$.

    \item 
    \textbf{Permutations that are decomposable to transpositions, rather than neighbor swaps.}
    Let $\Gamma$ a permutation on $[N]$ that's $\left( T, s \right)$-decomposable, but to transpositions rather than neighbor swaps. Specifically, in the $T$-length sequence of permutations that $\Gamma$ decomposes to (and each of such permutation $\Gamma_{i}$, $\Gamma_{i}^{-1}$ has implementation of complexity $\leq s$), each consecutive pair is either identical or differs in one transposition. Then, $\Gamma$ is $\left( T \cdot O(N), s + \polylog(N) \right)$-decomposable into neighbor swaps.

    \item
    \textbf{Scalar addition.}
    $x \mapsto x + s \bmod N$ is $\left( N, \polylog(N) \right)$-decomposable, as adding $s$ is decomposable into a linear chain of transpositions:
    $$
    \left( 0 \;\; s \right) \circ \left( 1 \;\; 1 + s \right) \circ \cdots \left( N - 2 - s \;\; N - 2 \right) \circ \left( N - 1 - s \;\; N - 1 \right)
    \enspace ,
    $$
    where each point in the chain is efficiently computable (and invertible), because it effectively moves all elements in some (efficiently computable) domain $s$ steps forward, and the largest $s$ elements in the domain become the smallest $s$ elements in the domain, in an order-preserving manner. Thus scalar addition is $\left( N, \polylog(N) \right)$-decomposable. Similarly, vector addition $\vecX \mapsto \vecX + \vecS \bmod \bbZ_{N}^{n}$ is thus $\left( N \cdot n, \polylog(N) \cdot n \right)$-decomposable, by performing the addition per coordinate, and deducing decomposability by naive composition of decomposable permutations.

    \item
    \textbf{Scalar multiplication.}
    $x \mapsto ax\bmod N$ for any polynomial $a$ which has an inverse in $N$ is $(N,\polylog(N))$-decomposable, though this seems to require a bit of work. Under the Extended Riemann Hypothesis (ERH), all $a$ are $\left( N^2, \polylog(N) \right)$-decomposable. If discrete logarithms mod $N$ had small circuits, we could use the conjugation example above to reduce multiplication to addition, decomposing multiplication by any $a$. However, since discrete logarithms are presumably classically hard, we have to do something else. In Remark~\ref{rem:multdecompose} following the proof of Theorem~\ref{thm:contr-merge}, we explain how to decompose multiplications by small $a$, or more generally any $a\in\Z_N^*$ that is generated by small integers. Assuming ERH, all $a$ are generated by small integers~\cite{Bach90}, giving a conditional decomposability for all $a$.
    
    \item
    \textbf{Involutions.}
    Involutions are permutations where $\Gamma \circ \Gamma$ is the identity. Involutions that are computable by circuits of size $s$ are $\left( O(N^2), s + \polylog(N) \right)$-decomposable.
    We will decompose $\Gamma$ into a $\left( N, s \right)$-decomposition of transpositions, which will imply our wanted decomposition to neighbor swaps. Intuitively, we will visualize $\Gamma$ as the applications of disjoint transpositions, and then the way we are going to decompose $\Gamma$ is by adding each of the (possibly exponentially many) transpositions, only when both of its elements are smaller than some index. Formally,
    $$
    \Gamma_i(x) :=
    \begin{cases}
    \Gamma(x) & \text{ if } x \leq i \text{ and } \Gamma(x) \leq i
    \\
    x & \text{ otherwise }
    \end{cases}
    \enspace .
    $$
    Each permutation in the sequence is of complexity $\leq s + \polylog(N)$, we have $\Gamma_{0}$ is the identity and $\Gamma_{N} := \Gamma$. Also, $\Gamma_i = \Gamma_{i-1}$ if $\Gamma(i) \geq i$ and otherwise $\Gamma_i(x) = \Gamma_{i-1} \circ (i \;\; \Gamma(i))$.
    
    \item
    \textbf{Affine transformations.}
    Affine permutations $\xv \mapsto \Am \cdot \xv + \bv \bmod r$ where $\Am \in \Z_r^{n \times n}$ is invertible and $\bv\in\Z_r^n$, are $\left( O(r^n\times n^2), \poly(r,n) \right)$-decomposable. Note that the circuit size depends polynomially on $r$; we can improve this $\left( O(r^{2n} \times n^2), \poly(\log r,n) \right)$-decomposability under ERH. We leave it as an interesting open question to handle general $r$ unconditionally. We observe that we can handle addition by $\bv$ as above. To multiply by $\Am$, we decompose $\Am$ into $O(n^2)$ elementary row operations. Row Swaps are involutions and Row Sums are controlled additions, which are both decomposable by the above results. Finally, since $r$ is small, we can handle Scalar Multiplications by the above.

    \item
    \textbf{Conjugations.}
    If $\Gamma=\Lambda^{-1}\circ\Gamma'\circ\Lambda$ where $\Lambda,\Lambda^{-1}$ are permutations with circuits of size $U$ and $\Gamma'$ is $(T,S)$-decomposable, then $\Gamma$ is $(T,O(S+U))$-decomposable, by simply conjugating the decomposition of $\Gamma$ by $\Lambda$. Note that $\Lambda$ does \emph{not} need to be decomposable.

    \item
    \textbf{Applying a decomposable permutation to a subset of the bits.}
    If $N = N_0 \cdot N_1$, and $\Gamma_0$ is a $(T,s)$-decomposable permutation on domain $[N_0]$, we can extend it to a permutation $\Gamma$ with domain $[N] \cong [N_0] \times [N_1]$, that applies $\Gamma_{0}$ to $[N_{0}]$ and the identity to $[N_{1}]$. Then $\Gamma$ is also $(T,s)$-decomposable.
    
    \item
    \textbf{Applying a conditional decomposable permutation.}
    If $N = N_0 \cdot N_1$, and $\Gamma_0$ is a $(T,s)$-decomposable permutation on domain $[N_0]$, we can extend it to a permutation $\Gamma$ with domain $[N] \cong [N_0] \times [N_1]$ where $\Gamma$ applies $\Gamma_{0}$ to $[N_0]$ conditioned on some target value $v \in [N_{1}]$, and the identity otherwise. Then $\Gamma$ is $(T, s + \polylog(N))$-decomposable.

    \item
    \textbf{Applying a controlled decomposable permutation.}
    If $N = N_0 \cdot N_1$, and for every $v \in [N_{1}]$, $\Gamma_{v}$ is a $(T,s)$-decomposable permutation on domain $[N_0]$, we can extend to a permutation $\Gamma$ with domain $[N] \cong [N_0] \times [N_1]$ where $\Gamma$ applies $\Gamma_{v}$ to $[N_0]$ conditioned on the element in $[N_1]$ being $v$. Then $\Gamma$ is $(T \cdot N_1, s + \polylog(N))$-decomposable.
    
    \item
    \textbf{Injective functions (and permutations) with an ancilla.}
    We do not know how to generically decompose an arbitrary permutation $\Gamma$, though we can do it if we are willing to use ancilla bits. Slightly more generally then permutations with an ancilla, we will show how to compose with an efficiently computable and invertible injective function $\Gamma:[N] \rightarrow [M]$ that computes with an ancilla (this in particular implies for efficiently computable and invertible permutations).
    
    Suppose $\Gamma:[N]\rightarrow [M]$ is an injective function such that (1) $\Gamma$ is computable in size $s$ and (2) $\Gamma^{-1} : [M] \rightarrow [N]$ is computable in size $s$ (and on elements in $[M]$ that are not in the image of $\Gamma$ the output can be arbitrary in $[N]$). We construct a permutation $\Gamma'$ on $\left( [N] \times [M] \right) \cong [N\cdot M]$ that is $\left( O\left( \left( N\cdot M \right)^2 \right), s\times \polylog\left( N\cdot M \right) \right)$-decomposable, such that 
    $$
    \forall x \in [N] : \Gamma'\left( x, 0_{[M]} \right)
    =
    \left( \Gamma\left( x \right) \in [M] , 0_{[N]} \right)
    \enspace ,
    $$
    and for cases where the second input is not $0_{[M]}$, the permutation $\Gamma'$ may act arbitrarily.

    We show how to decompose $\Gamma'$ into $2$ controlled scalar addition permutations, and $1$ involution (which are all decomposable, as we know by now), which will make $\Gamma'$ also appropriately decomposable. Let 
    \begin{align*}
        \Gamma_1 & \left( x, y \right) := \left( x , y + \Gamma(x) \bmod M \right) \enspace , 
        \\
        \Gamma_{2} & \left( x, y \right) := \left( x - \Gamma^{-1}(y) \bmod N, y \right) \enspace , 
        \\
        \Gamma_{3} & \left( x, y \right) := \left( y, x \right) \enspace , 
    \end{align*}
    Then, let $\Gamma' = \Gamma_3 \circ \Gamma_2 \circ \Gamma_1$. Since each of $\Gamma_1, \Gamma_2, \Gamma_3$ are involutions (since our arithmetic is over $\bbZ_{2}$) then the composed $\Gamma'$ is accordingly decomposable using the involution case.
    
\end{itemize}

Our main theorem of this section is the following:
\begin{theorem} \label{thm:op-prp}
Let $T$ be any exponential function and $p$ any polynomial. Assuming the existence of sub-exponentially-secure one-way functions and sub-exponentially-secure iO, there exists an OP-PRP for the class of $(T,p)$-decomposable permutations. Moreover, the OP-PRP is itself $(T,p)$ decomposable.
\end{theorem}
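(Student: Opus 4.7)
\medskip
\noindent\textbf{Proof Proposal.} The plan is to build the OP-PRP in two layers. First, I would construct a ``base'' PRP $\prp_{\mathsf{base}}$ that is only permutable for \emph{neighbor swaps} $\neighborswap{z}$. Then, to handle an arbitrary $(T,p)$-decomposable $\Gamma$, I would obfuscate $\prp_{\mathsf{base}}$ using iO and reduce indistinguishability of the permuted circuit to a sequence of neighbor-swap steps along the decomposition.

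For the base construction, I would follow the Granboulan--Pornin small-domain PRP of Figure~\ref{fig:prp-constr}: recursively split the domain $[N]$ into two halves, apply $\prp_{\mathsf{base}}(k_0,\cdot)$ and $\prp_{\mathsf{base}}(k_1,\cdot)$ to each half, and interleave them via an order-preserving $\merge(k_2,\cdot)$ realized through a tally tree. The random choices for each node of the tally tree are derived from a puncturable PRF applied to the node label, so keys are short and evaluation is efficient. The permuted key $k^{\neighborswap{z},c}$ is produced by $\permute$ as follows: trace down the recursion to locate the leaves corresponding to $z, z{+}1$; if they land in different halves, puncture the PRF at the nodes on both root-to-leaf paths plus their siblings, and if $c=1$ swap the bit values at the two leaves (Figure~\ref{fig:puncturing}); if they land in the same half, recursively permute that half instead (Figure~\ref{fig:prp-permute}). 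Correctness is immediate from the order-preserving property of the merge, and security reduces to showing that after puncturing, the hypergeometric distributions induced at the two leaves are statistically identical on the bits $(0,1)$ vs.\ $(1,0)$, so swapping them is undetectable even given the punctured key. This is the main technical calculation, and combining it with the P-PRF security along each recursion level gives security of $\prp_{\mathsf{base}}$ for neighbor swaps.

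To bootstrap from neighbor swaps to the full class of $(T,p)$-decomposable permutations, I would define $\prp(k,\cdot) := \iO(1^\lambda,1^s, \prp_{\mathsf{base}}(k,\cdot))$ (and similarly for $\prp^{-1}$), with $\permute(k,\Gamma,c)$ returning the iO of the circuit that computes $\Gamma^c \circ \prp_{\mathsf{base}}(k,\cdot)$ together with $\Gamma$ hard-coded (and analogously for the inverse), padded to a sufficient size. For security, given $\Gamma = \tau_T \circ \cdots \circ \tau_1$ decomposed into neighbor swaps with efficient partial products $\Gamma_t, \Gamma_t^{-1}$ of size $\leq p$, I would run a hybrid over $t$. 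In hybrid $t$, the obfuscated program computes $\Gamma_t(\prp_{\mathsf{base}}(k,\cdot))$; moving from hybrid $t{-}1$ to $t$ amounts to composing one more neighbor swap $\tau_t$, which by the base OP-PRP security combined with iO security is indistinguishable. Because there are $T$ hybrids and $T$ is exponential, each primitive must withstand sub-exponential distinguishing advantage, justifying the sub-exponential assumptions. The decomposability of $\Gamma$ is exactly what makes each intermediate circuit small enough to be obfuscated at fixed padding $s = \poly(\lambda, \log N, p)$.

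The main obstacle will be the neighbor-swap security proof for $\prp_{\mathsf{base}}$: I need the punctured tally tree argument to yield a truly statistical indistinguishability at the swapped leaves, which requires carefully tracking conditional hypergeometric distributions under the parent tallies, and I must propagate this guarantee through the recursion on both halves without blowing up the security loss beyond what sub-exponential P-PRF security can absorb. Finally, for the ``moreover'' clause, I would observe that the composed construction, as a program parametrized by the base key and the hard-coded $\Gamma$, itself decomposes along the same hybrid sequence with partial products of polynomial size, so the resulting OP-PRP is itself $(T,p)$-decomposable, which is what we will need later to compose permutable PRPs in the OSS construction.
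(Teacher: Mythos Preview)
Your proposal follows essentially the same two-layer approach as the paper: build an ONS-PRP from the Granboulan--Pornin recursive merge with tally trees driven by a puncturable PRF, then bootstrap to general $(T,p)$-decomposable $\Gamma$ via $\iO$ and a length-$T$ hybrid over the decomposition. The handling of the two cases (same half vs.\ different halves), the puncturing strategy, and the hypergeometric indistinguishability argument are all as in the paper.

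Two small points where your write-up diverges from the paper. First, the paper keeps $\prp,\prp^{-1}$ identical to the base PRP and only uses $\iO$ inside $\permute$; you additionally wrap $\prp$ itself in $\iO$. This is harmless for correctness and security, but the paper's choice makes the ``moreover'' clause immediate: since $\prp,\prp^{-1}$ are literally the base algorithms, decomposability is inherited directly. Second, your justification of decomposability (``decomposes along the same hybrid sequence'') conflates the security hybrid with the structural decomposition of $\prp(k,\cdot)$ as a permutation. What is actually needed is an explicit decomposition of the base PRP into neighbor swaps with small intermediate circuits; the paper does this by first decomposing each merge via a sequence of tally trees $T_{r,i}$ that slide one ``1'' leaf at a time into place (giving $(O(N^2),\polylog N)$-decomposability of the merge), and then recursing through the $O(\log N)$ levels to get $(O(N^4),\polylog N)$-decomposability of the full PRP. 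You should supply this argument rather than appeal to the hybrid chain.
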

This theorem will be proved in Sections~\ref{sec:onsmerges},~\ref{sec:onsprp}, and~\ref{sec:decomposableprp}. Before proving it, however, we will give some example applications.

\subsection{How to use OP-PRPs with Indisitnguishability Obfuscation}\label{sec:useopprp}

Here, we explain how OP-PRPs are useful for constructions involving indistinguishability obfuscation.

\paragraph{Composing with fixed permutations.} Consider a program $P^{O,O^{-1}}$ which makes queries to an oracle $O$. We show the following:
\begin{lemma} \label{lemma:op_prp_compose}
Let $\Gamma$ be a permutation, and let $\left( \prp, \prp^{-1}, \permute \right)$ be an OP-PRP for a class of permutations which includes $\Gamma$. Then for a sufficiently large polynomial $s$, $\iO\left( 1^\lambda, 1^s, P^{\prp\left( k, \cdot \right), \prp^{-1}\left( k, \cdot \right)} \right)$ is computationally indistinguishable from $\iO\left( 1^\lambda, 1^s, P^{\Gamma\left( \prp\left(k, \cdot \right) \right), \prp^{-1}\left( k, \Gamma^{-1}\left( \cdot \right) \right)} \right)$, where $k \gets \{ 0, 1 \}^\lambda$ is uniformly random.
\end{lemma}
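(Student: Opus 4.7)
The plan is to prove the lemma via a standard four-step hybrid argument that alternates between applications of iO security (for functionally equivalent program rewrites) and OP-PRP security (for swapping the permuting bit $c$).

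First, I would pick the polynomial bound $s$ large enough so that all intermediate programs described below fit within it; concretely, $s$ needs to upper bound the size of $P$ when its oracle is instantiated by $\eval(k^{\Gamma,c},\cdot)$ and $\eval^{-1}(k^{\Gamma,c},\cdot)$ for either $c\in\{0,1\}$, which is polynomial in $|P|$, $\lambda$ and the circuit size of $\Gamma$. Then define the hybrids:
\begin{itemize}
\item $\Hyb_0$: $\iO\bigl(1^\lambda,1^s, P^{\prp(k,\cdot),\,\prp^{-1}(k,\cdot)}\bigr)$.
\item $\Hyb_1$: Sample $k^{\Gamma,0}\gets\permute(k,\Gamma,0)$ and output $\iO\bigl(1^\lambda,1^s, P^{\eval(k^{\Gamma,0},\cdot),\,\eval^{-1}(k^{\Gamma,0},\cdot)}\bigr)$.
\item $\Hyb_2$: Sample $k^{\Gamma,1}\gets\permute(k,\Gamma,1)$ and output $\iO\bigl(1^\lambda,1^s, P^{\eval(k^{\Gamma,1},\cdot),\,\eval^{-1}(k^{\Gamma,1},\cdot)}\bigr)$.
\item $\Hyb_3$: $\iO\bigl(1^\lambda,1^s, P^{\Gamma(\prp(k,\cdot)),\,\prp^{-1}(k,\Gamma^{-1}(\cdot))}\bigr)$.
\end{itemize}

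The transitions $\Hyb_0\to\Hyb_1$ and $\Hyb_2\to\Hyb_3$ are justified by iO security: by the output-permuted correctness of the OP-PRP, the two programs being obfuscated in each pair compute the exact same function on every input (for $c=0$ the permuted key computes $\prp(k,\cdot)$ and its inverse, and for $c=1$ it computes $\Gamma\circ\prp(k,\cdot)$ and its inverse). The transition $\Hyb_1\to\Hyb_2$ is where OP-PRP security enters: any efficient distinguisher $\As$ separating $\Hyb_1$ from $\Hyb_2$ yields an OP-PRP adversary $\Bs$ that, on input $k^{\Gamma,c}$, constructs the corresponding program, obfuscates it via $\iO$, runs $\As$, and returns its guess. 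Because $\Bs$ runs in polynomial time and the input to $\As$ is distributed exactly as in $\Hyb_{1+c}$, $\Bs$'s distinguishing advantage equals $\As$'s, contradicting OP-PRP security.

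The main potential obstacle is the functional equivalence check in the iO steps: I need to verify that the circuits in $\Hyb_0$ and $\Hyb_1$ (and similarly in $\Hyb_2$ and $\Hyb_3$) not only agree as functions but are also of size at most $s$, so that iO indistinguishability applies. Both issues dissolve by choosing $s$ appropriately and by directly invoking the \emph{Output Permuted Correctness} clause of Definition~\ref{def:pprp}, which guarantees that $\eval(k^{\Gamma,c},\cdot)$ and the stated compositions with $\Gamma$ agree on every input of the domain, not merely with overwhelming probability. Chaining the three indistinguishabilities via the triangle inequality then gives $\Hyb_0\approx\Hyb_3$, which is exactly the statement of the lemma.
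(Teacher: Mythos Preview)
Your proposal is correct and matches the paper's own proof essentially line-for-line: the same four hybrids, with $\Hyb_0\to\Hyb_1$ and $\Hyb_2\to\Hyb_3$ handled by iO security via Output Permuted Correctness, and $\Hyb_1\to\Hyb_2$ handled by OP-PRP security. Your additional remarks about choosing $s$ large enough and the explicit reduction $\Bs$ are fine elaborations but add nothing beyond what the paper's proof already implicitly uses.
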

In other words, we can compose $\prp(k,\cdot)$ with any fixed permutation $\Gamma$ applied to the output of $\prp$. This is analogous to the oracle case, where composing a random permutation with any fixed permutation gives a random permutation.
\begin{proof}
We prove security through a sequence of hybrids:

\vspace{2mm}
\noindent
$\Hyb_{0}$:
Here, the adversary is given $\iO(1^\lambda,1^s,P^{O,O^{-1}})$ where $O(\cdot) = \prp(k,\cdot)$ and $O^{-1}(\cdot) = \prp^{-1}(k,\cdot)$.

\vspace{2mm}
\noindent
$\Hyb_{1}$:
Now we sample $k^{\Gamma,0}\gets\permute(k,\Gamma,0)$ and switch to $O(\cdot)=\prp(k^{\Gamma,0},\cdot)$ and $O^{-1}(\cdot)=\prp^{-1}(k^{\Gamma,0},\cdot)$. By the correctness of the permuted key $k^{\Gamma,0}$, $O,O^{-1}$, and hence $P^{O,O^{-1}}$, is unchanged by this modification. Therefore, as long as $s$ is larger than the maximum size of $P^{O,O^{-1}}$ in Hybrids 0 and 1, by $\iO$ security the two hybrids are indistinguishable.

\vspace{2mm}
\noindent
$\Hyb_{2}$:
Now we switch to $k^{\Gamma,1}\gets\permute(k,\Gamma,1)$ and set $O(\cdot)=\prp(k^{\Gamma,1},\cdot)$ and $O^{-1}(\cdot)=\prp^{-1}(k^{\Gamma,1},\cdot)$. Indistinguishability from Hybrid 1 follows from OP-PRP security.

\vspace{2mm}
\noindent
$\Hyb_{3}$:
Now we move to $O(\cdot)=\Gamma(\prp(k,\cdot))$ and $O^{-1}(\cdot)=\prp^{-1}(k,\Gamma^{-1}(\cdot))$. Observe that these oracles $O,O^{-1}$ are functionally identical to those in Hybrid 2, and therefore so is the program $P$. Thus, indistinguishability from Hybrid 2 follows from $\iO$ security.

Thus, we have that Hybrids 0 and 3 are indistinguishable, proving Lemma~\ref{lemma:op_prp_compose}.
\end{proof}

\paragraph{Trapdoor permutations from iO and one-way functions.} Here, we show that obfuscating an OP-PRP gives a trapdoor permutation.

\begin{construction} \label{constr:trapdoorperm}
Let $\left( \prp, \prp^{-1} \right)$ be a PRP, and $\iO$ an indistinguishability obfuscator. Then define the trapdoor permutation $\left( \gen, F, F^{-1} \right)$ as:
\begin{itemize}
    \item
    $\gen\left( 1^\lambda \right)$: Sample $k \gets \{0,1\}^\lambda$. Let $P : \{0,1\}^\lambda \rightarrow \{0,1\}^\lambda$ defined as $P(x) := \prp(k,x)$, setting the block size $n = \lambda$. Let $s$ be a sufficiently large function of $\lambda$. Output $\pk = \hat{P} \gets \iO\left( 1^\lambda, 1^s, P \right)$ and $\sk = k$.
    
    \item
    $F\left( \pk, x \right)$: Interpret $\pk$ as a program $\hat{P}$, and output $\hat{P}\left( x \right)$.
    
    \item
    $F^{-1}\left( \sk, y \right)$: Interpret $\sk$ as a key $k$, and output $\prp^{-1}\left( k, y \right)$.
\end{itemize}
\end{construction}

\begin{theorem} [Trapdoor One-Way Permutations from iO and One-Way Functions] \label{theorem:owp_from_io}
Assume the existence of one-way functions. Assume $\left( \prp, \prp^{-1}, \permute \right)$ is an OP-PRP for some class that includes all transpositions, and $\iO$ is a secure iO. Then, if we instantiate Construction~\ref{constr:trapdoorperm} with $\left( \prp, \prp^{-1}, \permute \right)$ as the PRP, we get a secure trapdoor permutation.
\end{theorem}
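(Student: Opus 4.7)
The plan is to follow the outline given in Section~\ref{sec:overviewprps}, tracking across a sequence of hybrids the two events $E^* = [\text{adversary outputs } x^*]$ and $E' = [\text{adversary outputs } x']$, where $x^* := \prp^{-1}(k, y^*)$ is the true preimage of the challenge and $x' := \prp^{-1}(k, y')$ is a ``decoy'' preimage for a freshly sampled uniform $y' \gets \{0,1\}^\lambda$. I begin in Hybrid~0 with the real experiment, so $\Pr[\text{win}] = \Pr[E^*]$ because $\pk(x) = y^*$ holds iff $x = x^*$. In Hybrid~1, I replace $\pk$ by $\iO(1^\lambda, 1^s, J_{y'})$, where $J_{y'}$ agrees with $\prp(k,\cdot)$ everywhere except that it outputs $y^*$ on the single point $x'$. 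This is an immediate invocation of Lemma~\ref{lem:iopuncture} with $P = \prp(k,\cdot)$, $P' \equiv y^*$, and range function $R = \prp(k,\cdot)$ whose range is $\{0,1\}^\lambda$ and hence exponential; indistinguishability of Hybrids~0 and~1 preserves $\Pr[E^*]$.

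Next I move to a permuted-key representation. Setting $\Gamma = \transposition{y^*}{y'}$, which lies in the OP-PRP's supported class because transpositions are decomposable as noted in Section~\ref{sec:overviewprps}, and sampling $k^{\Gamma,0} \gets \permute(k,\Gamma,0)$, I replace $\prp(k,\cdot)$ inside the program by $\eval(k^{\Gamma,0},\cdot)$ to obtain Hybrid~2; by OP-PRP correctness the two evaluations are pointwise identical, so iO gives indistinguishability. In Hybrid~3 I flip the permute-bit, using $k^{\Gamma,1}$ instead, where $\eval(k^{\Gamma,1},x) = \Gamma(\prp(k,x))$. A direct case analysis confirms that the resulting program is still functionally equivalent to $J_{y'}$: at $x^*$ the trigger now fires because $\Gamma(\prp(k,x^*)) = \Gamma(y^*) = y'$, at $x'$ the direct output is $\Gamma(\prp(k,x')) = \Gamma(y') = y^*$, and $\Gamma$ is the identity on every other $\prp$ output. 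The transition is justified by the OP-PRP reduction: the reduction samples $y^*, y'$, submits $\Gamma = \transposition{y^*}{y'}$ with block size $N = 2^\lambda$ to the OP-PRP challenger, receives $k^{\Gamma,c}$, builds the obfuscated program, runs the trapdoor-OWP adversary on $(\pk, y^*)$ to obtain $x$, and outputs $1$ iff $x = \eval^{-1}(k^{\Gamma,c}, y^*)$. Since $\eval^{-1}(k^{\Gamma,0}, y^*) = x^*$ while $\eval^{-1}(k^{\Gamma,1}, y^*) = \prp^{-1}(k, \Gamma^{-1}(y^*)) = \prp^{-1}(k, y') = x'$, OP-PRP security gives $\Pr[E^* \text{ in Hyb~2}] \approx \Pr[E' \text{ in Hyb~3}]$; combined with iO-preservation of $\Pr[E']$ across the three functionally-equivalent Hybrids~1--3, I conclude $\Pr[E^* \text{ in Hyb~1}] \approx \Pr[E' \text{ in Hyb~1}]$.

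The proof finishes by bounding $\Pr[E' \text{ in Hyb~1}]$ via the ``moreover'' clause of Lemma~\ref{lem:iopuncture}, which asserts that $y'$ is computationally unpredictable given the description of $P = \prp(k,\cdot)$ and $\iO(1^\lambda, 1^s, J_{y'})$. My predictor takes these inputs, samples its own $y^*$ uniformly, runs the adversary on $(\iO(J_{y'}), y^*)$ to obtain $x$, and outputs $\prp(k,x)$; whenever the adversary outputs $x'$ the predictor wins with value $y'$, so $\Pr[E']$ must be negligible, and chaining the hybrids gives $\Pr[\text{win}] = \Pr[E^* \text{ in Hyb~0}] \leq \negl(\lambda)$. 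The main obstacle, already flagged in the overview, is engineering the OP-PRP reduction in the passage from Hybrid~2 to Hybrid~3 so that the hidden bit $c$ it is guessing lines up precisely with the two events $E^*$ and $E'$ of interest, without the reduction ever needing $k$ or $\prp^{-1}(k,\cdot)$ directly; the inverse-evaluation trick $\eval^{-1}(k^{\Gamma,c}, y^*)$ accomplishes exactly this, and verifying the pointwise equivalence of the three programs across Hybrids~1--3 (particularly that the trigger threshold $y'$, rather than $y^*$, correctly activates at $x^*$ after the output transposition) is the other place where the argument requires care.
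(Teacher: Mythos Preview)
Your proof is correct and follows essentially the same approach as the paper: both use Lemma~\ref{lem:iopuncture} to introduce a random trigger at $y'$, invoke OP-PRP security for the transposition $\transposition{y^*}{y'}$ to swap the roles of $x^*$ and $x'$ while leaving the obfuscated program functionally unchanged, and finish via the unpredictability clause of Lemma~\ref{lem:iopuncture}. The only difference is presentational---the paper carries a single ``win condition'' that it redefines at each hybrid, whereas you track the two fixed events $E^*$ and $E'$ and make the swap explicit via the $\eval^{-1}(k^{\Gamma,c},y^*)$ test in your OP-PRP reduction (a detail the paper leaves implicit); one small wording issue is that in your final predictor $y^*$ should be taken as known (it is part of the fixed program $P'\equiv y^*$ in the lemma's setup) rather than freshly ``sampled,'' but this does not affect correctness.
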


\begin{proof}We need to show that there is no algorithm $\As$ which, given $\hat{P}\gets\iO(1^\lambda,1^s,P)$ and a random $y^*\gets\{0,1\}^\lambda$, outputs $x^*$ such that $P(x^*)=y^*$. Assume toward contradiction that there is such an $\As$ with success probability $\epsilon$. We will show that $\epsilon$ is negligible through a sequence of hybrid experiments.

\vspace{2mm}
\noindent
$\Hyb_{0}$:
Here, $\As$ is given $\hat{P}\gets\iO(1^\lambda,1^s,P)$ and $y^*$, where $y^*\gets\{0,1\}^\lambda$ and $P(\cdot)=\prp(k,\cdot)$ for a random key $k\gets\{0,1\}^\lambda$. $\As$ wins if it outputs $x$ such that $\prp(k,x)=y^*$, which by assumption is with non-negligible probability.

\vspace{2mm}
\noindent
$\Hyb_{1}$:
Now we additionally sample a random $y'\gets\{0,1\}^{\lambda}$, and switch to $\hat{P}\gets\iO(1^\lambda,1^s,P_1)$ where 
\[
P_1(x) =
\begin{cases}
y^*
& \text{ if } \prp\left( k, x \right) = y' \\
\prp\left( k, x \right)
& \text{ otherwise }
\end{cases}
\enspace .
\]
$\As$ still wins if it outputs $x$ such that $\prp(k,x) = y^*$. 
Indistinguishability from Hybrid 0 follows from Lemma~\ref{lem:iopuncture}. 

\vspace{2mm}
\noindent
$\Hyb_{2}$:
Now we sample $k^*\gets k^{\transposition{y^*}{y'},0}\gets\permute(k,\transposition{y^*}{y'},0)$, change $\hat{P}\gets\iO(1^\lambda,1^s,P_2)$, where
\[
P_2(x) =
\begin{cases}
y^*
& \text{ if } \prp\left( k^*, x \right) = y' \\
\prp\left( k^*, x \right)
& \text{ otherwise }
\end{cases}
\enspace .
\]
Now, we switch to $\As$ winning if it produces an $x$ such that $\prp(k^*,x)=y^*$. Since the programs $\prp(k,\cdot)$ and $\prp(k^{\transposition{y^*}{y'},0},\cdot):=\prp(k^*,\cdot)$ are functionally equivalent, the obfuscations of the programs $P_1$ and $P_2$ are indistinguishable by the security of the outer $\iO$. Also, the success condition of the adversary is functionally equivalent. Overall, indistinguishability from Hybrid 1 follows from $\iO$ security.

\vspace{2mm}
\noindent
$\Hyb_{3}$:
Now we switch to sampling $k^*=k^{\transposition{y^*}{y'},1}\gets\permute\left( k, \transposition{y^*}{y'}, 1 \right)$ and use this key in the program $P_2$ and also for the win condition. Indistinguishability from Hybrid 2 follows from OP-PRP security. Now observe that the win condition is $\prp(k^{\transposition{y^*}{y'},1},x)=y^*$, which is equivalent to $\prp(k,x)=y'$.

\vspace{2mm}
\noindent
$\Hyb_{4}$:
Now we switch back to giving $\As$ the program $\hat{P}\gets\iO(1^\lambda,1^s,P_1)$, but keeping the winning condition as $\prp(k,x))=y'$. Observe that switching from $k^*=k^{\transposition{y}{y'},0}$ to $k^*=k^{\transposition{y}{y'},1}$ in $P_2$ actually did not change the functionality at all: the change permuted the values of $y^*$ and $y'$ in the output of $\prp$, but since both values cause $P_2$ to output $y^*$, swapping them does not change the functionality. Therefore, the functionality of $P_2$ using either permuted key remains equivalent to $P_1$. Thus, $\As$ outputs $x$ such that $\prp(k,x)=y'$ with non-negligible probability.

Now, observe that if we additionally give the adversary $k$, it can compute $y'=\prp(k,x)$ for itself, with non-negligible probability. Thus, we obtain an adversary $\Bs$ which is given the description of $P$ (namely, the key $k$) and $\iO(1^\lambda,1^s,P_1)$, and guesses $y'$ with non-negligible probability. But this contradicts the computational unpredictability guarantee from Lemma~\ref{lem:iopuncture}. Hence the original advantage of $\As$ in inverting the one-way permutation must be negligible.
\end{proof}

\paragraph{Fixed sparse triggers.} Lemma~\ref{lem:iopuncture} allows for ``puncturing'' a program if a certain random trigger is hit, in which case the program may behave completely differently from the original program. Here, we show that the trigger can even be \emph{fixed}, as long as it is appropriately scrambled by OP-PRPs. Concretely, consider the program $P(x)$ with hard-coded OP-PRP keys $k_0, k_1$ that works as follows (and is also described in Figure \ref{figure:sparsetrigger}):
\begin{enumerate}
    \item
    Apply some polynomial-sized circuit $P_0\left( x \right)$, obtaining a pair $x_1, w_1$.
    
    \item
    Run $\prp\left( k_0, x_1 \right)$, and parse the output as $\left( x_2, w_2 \right)$.
    
    \item \label{step:p1}
    Apply some polynomial-sized circuit $P_1\left( w_1, w_2 \right)$, obtaining $w_3, w_4$.
    
    \item
    Then run $\prp^{-1}\left( k_1, \left( x_2, w_3 \right) \right)$, obtaining $x_3$.
    
    \item
    Finally feed $x_3, w_4$ into some polynomial-sized circuit $P_2$, and output the result.
\end{enumerate}
The key structural property of the program $P$ is that $x_2$ is the output of $\prp\left( k_0, \cdot \right)$ and is fed into $\prp^{-1}\left( k_1, \cdot \right)$ without modification and without affecting any other part of the circuit.

Now consider a different program $P'(x)$, which is identical except that we modify Step~\ref{step:p1} into Step~\ref{step:p1}' by embedding a trigger:
\begin{enumerate}
    \item[\ref{step:p1}'.] Apply some polynomial-sized circuit $R(x_2)$ with single-bit outputs. If $R\left( x_2 \right) = 0$, then let $\left( w_3, w_4 \right) \gets P_1\left( w_1, w_2 \right)$ as in program $P$. However, if $R\left( x_2 \right) = 1$, we instead let $\left( w_3, w_4 \right) \gets P_1'\left( w_1, w_2 \right)$, for some different polynomial-sized circuit $P_1'$.
\end{enumerate}

\paragraph{Extending the above template to a pair of circuits instead of a single circuit.}
In the above, the program $P$ is defined as a function of the programs $P_{0}$, $P_{1}$, $P_{2}$ and the program $P'$ is defined as a function of the programs $P_{0}$, $P_{1}$, $P_{2}$ in addition to the programs $R$, $P_1'$.

We next extend the above template in a way that the single program $P$ is extended to a pair of programs $P$, $Q$, and the single program $P'$ is extended to a pair of programs $P'$, $Q'$, in the following way:
\begin{itemize}
    \item
    Let $Q_{0}$, $Q_{1}$, $Q_{2}$, $P_{0}$, $P_{1}$, $P_{2}$ arbitrary programs. 
    The program $Q$ is defined as a function of $Q_{0}$, $Q_{1}$, $Q_{2}$ in the exact same way that $P$ is defined as a function of $P_{0}$, $P_{1}$, $P_{2}$. The programs $P, Q$ are sampled for the \emph{same} pair of keys $k_{0}$, $k_{1}$, with the only change that the roles of the keys are reversed between $P$ and $Q$.

    \item
    Let $R$, $Q'_1$, $P'_1$ some arbitrary programs, and let $Q_{0}$, $Q_{1}$, $Q_{2}$, $P_{0}$, $P_{1}$, $P_{2}$ the programs defining the above $P$, $Q$.
    The program $Q'$ is defined as a function of $Q_{0}$, $Q_{1}$, $Q_{2}$, $R$, $Q_1'$ in the same way that the program $P'$ is defined as a function of the programs $P_{0}$, $P_{1}$, $P_{2}$, $R$, $P_{1}'$. 
\end{itemize}
Thus, we can consider a sampling mechanism that either samples the pair $P$, $Q$ together (by having the programs $P_{0}$, $P_{1}$, $P_{2}$, $Q_{0}$, $Q_{1}$, $Q_{2}$ fixed and sampling the keys $k_{0}$, $k_{1}$) or samples the pair $P'$, $Q'$ together (by having the programs $P_{0}$, $P_{1}$, $P_{2}$, $Q_{0}$, $Q_{1}$, $Q_{2}$ fixed and also the programs $R$, $P_{1}'$, $Q_{1}'$ fixed, and sampling the keys $k_{0}$, $k_{1}$). We will show that obfuscating the un-primed and primed versions give computationally indistinguishable programs, for certain choices of $R$. In the following, we will interpret $x_2$ as an element in $[0,N)$ for some $N \in \Nat$.

\begin{lemma} [Permuted Fixed Sparse Triggers] \label{lemma:intervaltrigger}
Let $P_{0}$, $P_{1}$, $P_{2}$, $Q_{0}$, $Q_{1}$, $Q_{2}$, $R$, $P_{1}'$, $Q_{1}'$ arbitrary programs and let $P$, $Q$ (or $P'$, $Q'$) the derived pair of programs, as the template above. Assume that $R(x_2)$ outputs 1 if and only if $x_2 \in [a,b)$, for some integers $0 \leq a < b < N$ such that the ratio $\frac{N}{b - a}$ is exponential in $\lambda \in \Nat$, the security parameter. Assuming $\left( \prp, \prp^{-1}, \permute \right)$ the permutation used inside the programs, is an OP-PRP for any class that includes all involutions and $\iO$ is a secure indistinguishability obfuscator, for random choices of the keys $k_0, k_1$, for a sufficiently large polynomial $s$, we have that the obfuscated pair $\iO\left( 1^\lambda, 1^s, P \right) ,\iO\left( 1^\lambda, 1^s, Q \right)$ is computationally indistinguishable from the obfuscated pair $\iO\left( 1^\lambda, 1^s, P' \right), \iO\left( 1^\lambda, 1^s, Q' \right)$.
\end{lemma}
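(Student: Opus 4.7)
The plan is to route the fixed trigger on $[a,b)$ to a uniformly random arithmetic progression, and then erase the latter using the sparse random trigger lemma. Set $M := \lceil N/(b-a) \rceil$, which is exponential in $\lambda$ by assumption, and define $R'(x_2) := x_2 \bmod M$, so that for any $y \in [M]$ the fiber $R'^{-1}(y)$ is an arithmetic progression $\mathrm{AP}_y \subset [N]$ of size at least $b-a$. Sample $y$ uniformly from $[M]$; with overwhelming probability $\mathrm{AP}_y$ is disjoint from $[a,b)$, in which case the involution $\Gamma$ on $[N]$ that pairs $a+i$ with $y+iM$ for $i=0,\dots,b-a-1$ and acts as the identity elsewhere is well-defined. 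Extend $\Gamma$ to $\tilde\Gamma$ on the output space of $\prp(k_0,\cdot)$ by acting via $\Gamma$ on the $x_2$-component and as the identity on $w_2$. By the decomposability of involutions together with the ``applying a decomposable permutation to a subset of the bits'' example, $\tilde\Gamma$ is $(O(N^2),\polylog N)$-decomposable and lies in the class for which the assumed OP-PRP is secure.

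The hybrid sequence runs as follows. Hybrid $0$ is $(\iO(P'),\iO(Q'))$. Hybrid $1$ applies Lemma~\ref{lemma:op_prp_compose} twice, once for the key $k_0$ and once for $k_1$, replacing $\prp(k_b,\cdot)$ by $\tilde\Gamma\circ\prp(k_b,\cdot)$ and $\prp^{-1}(k_b,\cdot)$ by $\prp^{-1}(k_b,\tilde\Gamma^{-1}(\cdot))$ consistently across both $P'$ and $Q'$. Within each program the two copies of $\tilde\Gamma$ sandwich the trigger check and otherwise cancel, because by assumption $P_1, P_1', Q_1, Q_1'$ consume only $(w_1,w_2)$ and not $x_2$. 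Consequently, the trigger condition $x_2\in[a,b)$ gets re-expressed as $\prp(k_b,x_1)_{\mathrm{first}}\in\mathrm{AP}_y$, i.e.\ $R'(x_2)=y$. Hybrid $2$ is Hybrid $1$ rewritten without the OP-PRP composition, as the unmodified pair $(P,Q)$ augmented with a single shared trigger on the event $R'(x_2)=y$; this rewriting preserves the functionality exactly, so iO security closes the gap with sufficiently large padding. Hybrid $3=(\iO(P),\iO(Q))$ is finally reached by a joint variant of Lemma~\ref{lem:iopuncture} that removes the same random trigger value $y$ from the two programs simultaneously; the joint variant follows by running the PRG-based hybrid chain of Lemma~\ref{lem:iopuncture} in parallel on $P$ and $Q$, since $y$ is never output by either program and the PRG seed remains hidden throughout.

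The main obstacle is the very fact that the trigger location is \emph{fixed} rather than random: iO alone cannot switch between ``trigger on $[a,b)$'' and ``no trigger'' because these two programs are not functionally equivalent. The OP-PRP composition step is precisely what bridges this gap, converting the fixed trigger into a random sparse trigger that Lemma~\ref{lem:iopuncture} can handle. The delicacy lies in choosing an involution $\tilde\Gamma$ that (i) is decomposable with polynomial-size intermediate circuits so that the iO-hybrid proof of Lemma~\ref{lemma:op_prp_compose} runs in polynomial time per step, (ii) acts on an exponentially small sub-domain so that disjointness of $\mathrm{AP}_y$ and $[a,b)$ holds with overwhelming probability, and (iii) composes cleanly with itself around the middle computation so that the $\tilde\Gamma$/$\tilde\Gamma^{-1}$ pair is invisible outside the trigger; all three requirements are met by our choice. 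The secondary technicality, formulating the joint version of Lemma~\ref{lem:iopuncture} for the pair $(P,Q)$, is routine since the lemma's proof treats the two programs independently once the shared PRG pre-image is in place.
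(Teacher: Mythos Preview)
Your approach is the same as the paper's at the structural level: use the OP-PRP composition lemma to conjugate by an involution that swaps the fixed trigger set $[a,b)$ with a random sparse set, then invoke the random-sparse-trigger lemma. The paper runs the hybrids in the opposite direction (from $(P,Q)$ to $(P',Q')$) and even uses essentially the same ``joint'' version of Lemma~\ref{lem:iopuncture} you describe, so on that level the two proofs coincide.

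There is, however, a genuine gap in your choice of random set. You take $R'(x_2)=x_2\bmod M$ and claim that with overwhelming probability $\mathrm{AP}_y$ is disjoint from $[a,b)$ (and implicitly that it has $\geq b-a$ elements so that the pairing $a+i\leftrightarrow y+iM$ lands in $[N]$). Both conditions fail for $\Theta(b-a)$ values of $y\in[M]$, so the failure probability is $\Theta((b-a)/M)\approx (b-a)^2/N$. The lemma only assumes $N/(b-a)$ is exponential; it does \emph{not} force $(b-a)^2/N$ to be negligible. For instance $N=2^{\lambda^2}$ and $b-a=2^{\lambda^2-\lambda}$ satisfy the hypothesis yet make essentially every $y$ bad.

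The paper sidesteps this by taking the fibers to be \emph{intervals}: set $R'(x_2)=\lfloor x_2/(b-a)\rfloor$, so the random set is $[y(b-a),(y+1)(b-a))$. Then every fiber has exactly $b-a$ elements, and at most two values of $y$ cause overlap with $[a,b)$, regardless of how large $b-a$ is. With this single change your hybrid chain becomes correct and is then essentially identical to the paper's proof.
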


\begin{proof}
We prove indistinguishability through a sequence of hybrids. 

\vspace{2mm}
\noindent
$\Hyb_{0}$:
Here, we obfuscate $P,Q$.

\vspace{2mm}
\noindent
$\Hyb_{1}$:
Here, we choose a random $y\in[0,\ceil{\frac{N}{b - a}})$, and obfuscate the programs $P',Q'$, but where we replace the relation $R$ in both programs with the relation $R'$ where $R'(x_2)$ outputs 1 if and only if $x_2\in [y(b-a),(y+1)(b-a))$. This is the same as saying that $(x_2-[x_2\bmod (b-a)])\;/\;(b-a) = y$. Observe that the range of $(x_2-[x_2\bmod (b-a)])\;/\;(b-a)$ is contained in $\lceil N/(b-a)\rceil$. Indistinguishability of Hybrid 0 and Hybrid 1 follows from the indistinguishability guarantee of Lemma~\ref{lem:iopuncture} and the fact that $y$ is uniform in an exponential-sized domain.

\vspace{2mm}
\noindent
$\Hyb_{2}$:
Here, we switch to a random $y\in[0,\lfloor N/(b-a)\rfloor)$. Since $N/(b-a)$ is exponential, this is a negligible change in the distribution of $y$, hence Hybrid 1 and Hybrid 2 are indistinguishable.

\vspace{2mm}
\noindent
$\Hyb_{3}$:
Let $\pi$ be the involution on $x_2$, which exchanges the ranges $[a,b)$ and $[y(b-a),(y+1)(b-a))$. Notice that these intervals have the same size, and the latter interval is contained in $[0,N)$ since $y\leq N/(b-a)-1$. We can easily extend $\pi$ to be an involution mapping $(x_2,w_2)\mapsto (\pi(x_2),w_2)$ or $(x_2,w_3)\mapsto (\pi(x_2),w_3)$.

Now instead of obfuscating the program $P'$, we switch to obfuscating the program $P''$ which replaces $\prp(k_0,\cdot)$ with $\pi(\prp(k_0,\cdot))$ and $\prp^{-1}(k_1,\cdot)$ with $\prp^{-1}(k_1,\pi^{-1}(\cdot))$ (still using the relation $R'$). We likewise switch from $Q'$ to the analogous program $Q''$. Since this is just composing the PRP applications with the fixed involution $\pi$, Hybrids 1 and 2 are indistinguishable by Lemma~\ref{lemma:op_prp_compose}.

Observe that in $P''$ we now we apply $\pi$ to the output of $\prp(k_0,\cdot)$ and to the input to $\prp^{-1}(k_1,\cdot)$ (and the analogous statements for $Q''$). Thus, the two applications of $\pi$ cancel out, \emph{except} that the trigger is checked \emph{between} applications of $\pi$. Since $\pi$ exchanges the roles of $[a,b)$ and $[y(b-a),(y+1)(b-a))$, if were to test the output of $\prp(k_0,\cdot)$ itself, the trigger value would in fact be the interval $[a,b)$.

\vspace{2mm}
\noindent
$\Hyb_{4}$:
Now we switch to obfuscating $P',Q'$ without $\pi$ but with the correct relation $R$. This is functionally equivalent to our modified $P'',Q''$, since the permuted keys changes the trigger (when interpreted as an output of $\prp(k_0,\cdot)$ or $\prp(k_1,\cdot)$) to be $[a,b)$. Thus, by $\iO$ security, Hybrid 2 and Hybrid 3 are computationally indistinguishable. This completes the proof of Lemma~\ref{lemma:intervaltrigger}.
\end{proof}

\subsection{ONS-Merges}\label{sec:onsmerges}
We now gradually build up to our proof of Theorem~\ref{thm:op-prp}. Here, we start from a seemingly much weaker object called an Output Neighbor Swap Merge (ONS-Merge).

A neighbor swap is a permutation which exchanges some $j$ with $j+1$ and otherwise is the identity. In cycles notation, a neighbor swap would be written as $\neighborswap{j}$. 

A Merge is a permutation with the added correctness requirement. The domain $[N]$ is interpreted as pairs $(b,x)$ for $b\in\{0,1\}$ and $x\in N_b$, where $N_0+N_1=N$. The range $[N]$ remains $[N]$. We let $L=\{(0,x)\}\cong [N_0]$ and $R=\{(1,x)\}\cong [N_1]$. A Merge is then a permutation preserves the ordering of elements in $L$, and also preserves the ordering of elements in $R$. An Output Neighbor Swap Merge can be thought of as an OP-PRP for the simple class of neighbor swaps. However, many such swaps will actually break the strong order-preserving property of the merge, and hence will be illegal. We therefore need to modify the definition (both correctness and security) to account for this.

\begin{definition}\label{def:ais-merge} An Output Neighbor Swap (ONS-) \emph{Merge} is at tuple of five algorithms $(\merge$, $\merge^{-1}$, $\permute$, $\eval$, $\eval^{-1})$ with the following properties:
\begin{itemize}
    \item {\bf Efficient Permutations: } For any key $k\in\{0,1\}^\lambda$, any desired block-sizes $N_0,N_1$, $\merge(k,\cdot)$ is an efficiently computable permutation on $[N=N_0+N_1]$ with with $\merge^{-1}(k,\cdot)$ being its efficiently computable inverse.
    \item {\bf Order-Preserving:} For any key $k\in\{0,1\}^\lambda$, any block sizes $N_0,N_1$, and any two inputs $x_0<x_1\in [N_0]$ (resp. $x_0<x_1\in [N_1]$), then $\merge(k,(0,x_0))<\merge(k,(0,x_1))$ (resp. $\merge(k,(1,x_0))<\merge(k,(1,x_1))$). If $b_0\neq b_1$, there is no restriction on the ordering of $\merge(k,(b_0,x_0))$ and $\merge(k,(b_1,x_1))$
    \item {\bf Output Neighbor Swapping:} $\permute(k,\Gamma,c)$ is a deterministic polynomial-time procedure which takes as input a key $k\in\{0,1\}^\lambda$, a neighbor swap $\neighborswap{z}$, and a bit $c$. If $\merge^{-1}(k,z)=(b_0,x_0)$ and $\merge^{-1}(k,z+1)=(b_1,x_1)$ with $b_0\neq b_1$, it outputs a swapped key $k^{\neighborswap{z},c}$. Otherwise if $b_0=b_1$, the input is considered illegal and the output is $\bot$.
    \item {\bf Output Swapping Correctness:} For all $\lambda\in\Z$, $k\in\{0,1\}^\lambda$, all \emph{legal} neighbor swaps $\neighborswap{z}$, and all $x,z'\in [N]$, \begin{align*}\eval(k^{\neighborswap{z},c},x)&=\begin{cases}\prp(k,x)&\text{ if }c=0\\\neighborswap{z}\circ (\prp(k,x))&\text{ if }c=1\end{cases}\\
    \eval^{-1}(k^{\neighborswap{z},c},z')&=\begin{cases}\prp^{-1}(k,z')&\text{ if }c=0\\\prp^{-1}(k,\neighborswap{z}(z')))&\text{ if }c=1\end{cases}\end{align*}
    \item {\bf Security:} For any interactive QPT adversary $\As$, there exists a negligible function function $\epsilon(\lambda)$ such that the following experiment with $\As$ outputs 1 with probability at most $1/2+\epsilon(\lambda)$:
    \begin{itemize}
        \item $\As(1^\lambda)$ chooses a neighbor swap $\neighborswap{z}$ for $z\in[N-1]$.
        \item The experiment chooses a random $k\gets\{0,1\}^\lambda$ and a random bit $c\in\{0,1\}$. It computes $(b_0,x_0)\gets\merge^{-1}(k,z)$ and $(b_1,x_1)\gets\merge^{-1}(k,z+1)$. It checks that $b_0\neq b_1$; if $b_0=b_1$ the experiment immediately aborts and returns a random bit. If $b_0\neq b_1$, it returns $k^{\neighborswap{z},b}\gets\permute(k,\neighborswap{z},c)$ to $\As$. 
        \item $\As$ produces a guess $c'$ for $c$. The experiment outputs 1 if $c'=c$.
    \end{itemize}
\end{itemize}
\end{definition}
Note that the abort condition is necessary: in these cases, permuting the output by $\neighborswap{z}$ actually reverses the order of two strings belonging to the same set $L$ or $R$. But this breaks the order-preserving property of the permuted key, which allows for easy distinguishing. However, if $\merge^{-1}(k,z)$ and $\merge^{-1}(k,z+1)$ have are in different sets $L$ and $R$, then the order between them is arbitrary, and so we can hope that the permuted keys are indistinguishable.

\paragraph{Hypergeometric Distribution.} Let $D_{N,t,s}$ the following distribution: let $U$ be an arbitrary subset of $[N]$ of size $s$. Choose a random set $V$ of $[N]$ of size exactly $t$, and output the number of elements in $V \cap U$. This distribution is known as the hypergeometric distribution and can be efficiently sampled. More specifically, there is a sequence of functions $D_{N,t,s}^\kappa$ with domain $\{0,1\}^\kappa$ such that $D_{N,t,s}^\kappa(r)$ for random coins $r$ approximates a distribution that is $O(N\times 2^{-\kappa})$-close to $D_{N,t,s}$.

\paragraph{Tally Trees.} In order to describe our construction, we introduce the notion of a tally tree. Consider a merge $\merge$. Assign to each range element $z$ a bit $b$ indicating the first bit of the pre-image of $z$. Thus, we obtain a sequence $V$ of $N$ bits, which determines the images of the sets $L$ and $R$. The number of 0's is exactly $N_0$ and the number of 1's is exactly $N_1$. Observe that $V$ is in bijection with the merge $\merge$. This is because once you choose the set of images of $L$ (resp. $R$), the actual mapping from $L$ (resp. $R$) to those images is fixed by the ordering. 

Now, notice that $V$ alone does not actually allow for \emph{efficient} computation of $\merge$ (nor $\merge^{-1}$), since to determine the image of, say, $(0,x)$, you would need to find the position in $V$ of the $x$-th 0. But this presumably requires scanning exponentially-many bits in $V$ to find the right location.

We can, however, speed this up by supplying more information, which is exactly the \emph{tally tree}. A tally tree $T$ is a binary tree with $N$ leaves, one for each element of the range $N$. We will think of $T$ as having a fixed topology that depends on $N$, with the goal of making $T$ shallow. We will associate two quantities to each node $z$. The first is $s(z)$, which si the number of leaves of the subtree rooted at $z$. $s(z)$ is solely a function of the topology of $T$ and will just be used for notational convenience.

The second value is $v(z)$, which is the total of all $V_u$ values for all leaves $u$ in the subtree rooted at $z$. Equivalently, $v(z)=V_z$ for all leaves, and $v(p)$ for any internal node is equal to the sum $v(p)=v(u)+v(w)$ where $u,w$ are the left and right children of $p$. Observe that for the root $\varepsilon$, $v(\varepsilon)=N_1$. 

Observe that we can equivalently sample a tally tree in reverse, starting from the root. We start by setting $v(\varepsilon)=N_1$. Then suppose we have set $v(p)=t$ for a node $p$ with left child $u$ and right child $w$. Setting $v(p)=t$ stipulates that among the $s(p)$ leaves of the tree rooted at $p$, $t$ of them are set to $1$. But as we haven't set any of the descendents of $p$ yet, just the total of them, the distribution over the positions of those $t$ 1's in the subtree rooted at $p$ is uniformly random. We then have that $v(u)$ is exactly distributed according to $D_{s(p),t,s(u)}$. We then set $v(w)=v(p)-v(u)$.

Notice that by sampling from appropriate hypergeometric distributions, we can sample the nodes of $T$ in basically any order. For example, let $C$ be a \emph{cover} of $T$, meaning a set of nodes whose subtrees are disjoint and jointly include all of the leaves of $T$. We can sample $v(u)$ for all $u\in C$ in any order as follows. Let $N'$ denote the portion of the domain yet to be determined (initially $N'=N$) and $S$ the number of 1's remaining to allocate (initially $S=N_1$). Then in any order, we choose an element $u\in C$, sample $v(u)\gets D_{N',s(u),S}$, and then update $N'\mapsto N'-s(u),S\mapsto S-v(u)$. Notice that setting $v(u)$ for $u\in C$ determines $v(u')$ for all $u'$ that are ``above'' the cover. Observe that through this process, the distribution of $v(u)$ for any $u\in C$ depends on the total $\sum_{u'}v(u')$ of all $v(u')$ sampled so far, but is otherwise independent of the actual values $v(u')$.

Given a tally tree $T$, we evaluate $(b,x)\gets\merge^{-1}(z)$ as follows. First set $b=V_z$ by looking the value stored at the leaf labeled $z$. Now by the ordering property of a merge, $x$ is just a count of the number of $z'<z$ with $\merge^{-1}(z)$ having the first bit $b$. We cannot directly count such $z'$ in $V$ (since there will be exponentially-many), but we can instead use the internal nodes of the tree $T$. Namely, let $U_L=\{u\}$ be the set of nodes that are \emph{left} siblings of nodes on the path from root to $z$. Then $x=\sum_{u\in U_L} v(u)$. Observe that this process only visits a single path from root to leaf and its siblings, and therefore only $O(d)$ nodes where $d$ is the depth of the tree.

To evaluate $\merge(x)$ given $T$, we simply do a binary search, exploiting the ordered property and our ability to compute $\merge^{-1}(x)$.

\paragraph{Our Construction.} To give our construction, we will show how to implicitly generate a tally tree, which will then generate a merge. Rather than build the tally tree from the leafs toward the roots, our construction use the top-down generation, but pseudorandonly generate the values in the tree.

\begin{construction}\label{constr:merge}Let $\prf:\{0,1\}^\lambda\times\{0,1\}^*\rightarrow\{0,1\}^\kappa$, and suppose it has a puncturing algorithm $\punc$. Given a key $k\in\{0,1\}^\lambda$, the tally tree $T$ is implicitly defined as follows. We deterministically choose a topology that has depth $O(\log N)$, which determines $s(z)$ for all nodes $z$. We set $v(\varepsilon)=N_1$. Then we recursively internal nodes as follows. For a node $u$ that is a left child of some node $p$ (which already has $s(p)$ defined), define $v(u)=D^\kappa_{s(p),v(p),s(u)}(\prf(k,p))$. We then let $v(w)$ for the right child $w$ to be $v(p)-v(u)$. Note that these values are not explicitly computed, but rather left implicitly determined by $k$.

To permute a key $k$ according to a legal swap $\neighborswap{z}$ and bit $c$, collect into a set $H$ all nodes in the paths from root to $z$ or $z+1$, together with all the siblings of nodes on this path. Here, $H$ stands for ``hard-coded''. Let $S$ denote the nodes just on the paths, excluding $z,z+1$. Here, $S$ stands for ``punctured'' nodes. Assume without loss of generality that $v(z)=0$ and $v(z+1)=1$. Let $\overline{k}^S\gets\punc\left( k, S \right)$. Then do the following:
\begin{itemize}
    \item If $c=0$, output $k^{\neighborswap{z},0}=(\overline{k}^S,\{(u,v(u))\}_{u\in H}$.
    \item If $c=1$, output $k^{\neighborswap{z},1}=(\overline{k}^S,\{(u,v'(u))\}_{u\in H}$ where \[
    v'(u)=\begin{cases}v(u)+1&\text{ if $u$ is an ancestor of $z$ but not $z+1$}\\
    v(u)-1&\text{ if $u$ is an ancestor of $z+1$ but not $z$}\\
    v(u)&\text{ otherwise}
    \end{cases}
    \]
\end{itemize}
To evaluate $v(u)$ for some node $u$, we first look up if there is a pair $(u,v)\in H$, and if so produce the value $v$. Otherwise, we can use $\overline{k}^S$ to compute the remaining nodes.
\end{construction}

\begin{theorem}\label{thm:contr-merge} Suppose $(\prf,\punc)$ is a secure puncturable PRF. Assume $\kappa\geq \lambda+\log N$. Then the protocol given in Construction~\ref{constr:merge} is an ONS Merge. In particular, if $(\prf,\punc)$ is $\epsilon$-secure, then Construction~\ref{constr:merge} is $O(\epsilon+N^2\times 2^{-\kappa})$-secure. Moreover, the permutations $\merge(k,\cdot),\merge^{-1}(k,\cdot)$ are $(O(N^2),\polylog(N))$-decomposable.
\end{theorem}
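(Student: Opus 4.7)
The plan is to verify correctness directly from the construction, then prove security via a hybrid argument combined with a symmetry argument on the hypergeometric distribution, and finally sketch decomposability.

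\textbf{Correctness.} Efficient evaluation of $\merge(k,\cdot)$ and $\merge^{-1}(k,\cdot)$ follows because the deterministic tree topology has depth $O(\log N)$, so any evaluation only visits $O(\log N)$ nodes on a root-to-leaf path and their siblings, each requiring a single $\prf$ call. Order-preservation is immediate from the binary-search form of merge evaluation. The main check is output-swapping correctness: the modified values $v'(u)$ are precisely the tallies of the bit string $V'\in\{0,1\}^N$ obtained from $V$ by swapping positions $V_z$ and $V_{z+1}$ (ancestors of $z$ only gain one $1$-count, ancestors of $z+1$ only lose one, common ancestors are unchanged). This bit-swap on $V$ realizes exactly $\neighborswap{z}\circ\merge(k,\cdot)$, since order-preservation within each half forces the preimages of positions $z$ and $z+1$ to be exchanged.

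\textbf{Security.} I would proceed through the following hybrids.
\begin{itemize}
    \item $\Hyb_0$: the real game outputting $k^{\neighborswap{z},c}$ for a random bit $c$.
    \item $\Hyb_1$: replace $\prf(k,u)$ for $u\in S$ with uniformly random strings. The adversary sees only the punctured key $\overline{k}^S$ and the hard-coded values $\{v^{(\star)}(u)\}_{u\in H}$ (where $v^{(\star)}$ is $v$ or $v'$), and the only use of $\prf$ at punctured points is inside the hard-coded values, so indistinguishability follows up to $\epsilon$ by puncturable-PRF security.
    \item $\Hyb_2$: replace each approximate hypergeometric sample $D_{N,t,s}^\kappa$ that enters a hard-coded value with a true hypergeometric sample. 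The tally tree has $O(\log N)$ levels and $|H|=O(\log N)$, so at most $O(\log N)$ such replacements occur, each introducing $O(N\cdot 2^{-\kappa})$ statistical error and giving a total $O(N^2\cdot 2^{-\kappa})$ loss.
\end{itemize}
The central symmetry argument then operates in $\Hyb_2$. Conditioned on legality ($V_z\neq V_{z+1}$), the hard-coded values $\{v(u)\}_{u\in H}$ in the $c=0$ case are distributed as the tallies of a uniformly random $V\in\{0,1\}^N$ with exactly $N_1$ ones, further conditioned on $V_z\neq V_{z+1}$. The bijection $\sigma$ that exchanges $V_z$ and $V_{z+1}$ is a measure-preserving involution on this support, and it transforms the tallies exactly according to $v\mapsto v'$. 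Since $\sigma$ swaps the two equally-likely legality sub-events $(V_z,V_{z+1})\in\{(0,1),(1,0)\}$, averaging over these sub-events yields identical joint distributions for the $c=0$ view (namely $v$) and the $c=1$ view (namely $v'$). Combining with the hybrid losses gives the claimed advantage bound $O(\epsilon+N^2\cdot 2^{-\kappa})$.

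\textbf{Decomposability.} To show $\merge(k,\cdot)$ is $(O(N^2),\polylog(N))$-decomposable, I would use a divide-and-conquer sort of the bit string $V$. Reduce the merge on $[N]$ to (i) recursive merges on the two halves $[N_0]$ and $[N_1]$, and (ii) an interleaving step that sorts the top-level bits of $V$ into place using $O(N_0\cdot N_1)$ adjacent-bit swaps; each such bit-swap in $V$ corresponds to an output neighbor-swap on the merge. This yields total count $T(N)=2T(N/2)+O(N^2)=O(N^2)$ neighbor swaps. Each intermediate permutation has the form $\merge_{V^{(i)}}$ for an intermediate bit sequence $V^{(i)}$ that differs from $V$ only inside the currently-active interleaving window; this window and its sort progress can be described with $O(\log N)$ bits, letting $\merge_{V^{(i)}}$ be evaluated by calling $\merge(k,\cdot)$ and applying a local correction of size $\polylog(N)$. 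The main obstacle is ensuring that such compact intermediate representations chain correctly across the recursion levels, and the natural divide-and-conquer order above is the candidate I would verify.
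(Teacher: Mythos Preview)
Your correctness and security arguments are essentially the paper's. The paper arranges security as six one-directional hybrids (from $c=0$ to $c=1$ and then reversing the PRF/hypergeometric replacements) rather than your bidirectional argument with random $c$, but the content is identical; in particular the paper's key step (Hybrid~2 to Hybrid~3) is exactly your measure-preserving involution, phrased as ``sample the cover $C$ starting with $z,z{+}1$; conditioned on legality these two leaf values are uniformly random distinct bits, and the rest of the cover depends only on their sum $v(z)+v(z{+}1)=1$.'' One small arithmetic slip: $O(\log N)$ replacements of error $O(N\cdot 2^{-\kappa})$ give $O(N\log N\cdot 2^{-\kappa})$, not $O(N^2\cdot 2^{-\kappa})$; the stated bound is of course still valid but looser.

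For decomposability, your divide-and-conquer sketch is genuinely different from the paper's argument and, as you note yourself, the intermediate-representation bookkeeping across recursion levels is not worked out. The paper avoids this entirely with a direct ``selection sort'' on the bit string $V$: define $T_{r,i}$ to be the tally tree whose leftmost $r{-}1$ ones are already in their correct (target) positions, whose rightmost $N_1{-}r$ ones are pushed to the far right, and whose $r$-th one sits at position $N-(N_1{-}r)-i$. Then $T_{r,i}\to T_{r,i+1}$ is a single neighbor swap, $T_{1,0}$ is the identity tree, and $T_{N_1+1,0}=T$, for $O(N\cdot N_1)=O(N^2)$ total swaps. Each $T_{r,i}$ is succinctly representable because the leaves split into three contiguous regions---``already matches $T$'', ``contains at most one wandering $1$ at a known position'', and ``all ones''---each of which has tallies computable in $\polylog(N)$. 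This sidesteps the chaining problem you flagged and is what you should use.
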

\begin{proof}We first argue decomposability. Consider a tally tree $T$. The ``identity'' tally tree $T_0$ simply has all the leaves of value 1 as the rightmost leaves. Define $T_{r,i}$ as the tree which has the first (left-most) $r-1$ 1's in the leaves in the correct position, the right-most $N_1-r$ pushed all the way to the right, and the $r$-th 1 at position $N-(N_1-r)-i$. Let $i_r$ be such that the $N-(N_1-r)-i_r$ is the correct position of the $r$th 1. The tally trees $T_{r,i}$ can all be succinctly represented: let $C_r$ be the cover of the leaves containing the left-most $r-1$ 1's. Then $v(u)$ for $u$ descendant from $C_r$ is determined exactly as in $T$. Let $C_r'$ be a cover for the $N_1-r$ leaves that have all the right-most 1's. Then any $u$ descendant from $C_r'$ has $v(u)=s(u)$. Finally, let $C_r''$ be a cover for the remaining nodes. Since  there is at most 1 leaf descendant from $C_r''$ that contains a 1, we can easily compute $v(u)$ for any $u$ descendant from $C_r''$, by simply deciding whether that 1 is a descandant of $u$.

Then $T_0=T_{1,0}$, $T_{r,i_r}=T_{r+1,0}$, and $T_{N_1+1,0}=T$. Going from $T_{r,i}$ to $T_{r,i+1}$ corresponds to a neighbor swap. This gives a sequence of length $O(N^2)$ connecting $T_0$ to $T$, showing that $T$ is $(O(N^2),\polylog(N))$-decomposable.

We now prove security through a sequence of hybrids.

\vspace{2mm}
\noindent
$\Hyb_{0}$:
Here, the adversary sees $k^{\neighborswap{z},0}=(\overline{k}^S,\{(u,v_0(u))\}_{u\in S})$ where $v_0(u)$ is defined as $v_0(u)=v(u)$, which for the left child $u$ of a parent $p$ is equal to $D^\kappa_{s(p),v(p),s(u)}(\prf(k,p))$. Observe that we only need to consider the left children, as the right children are determined by their parent and sibling. The hybrid outputs a random bit at rejects if $v_0(z)=v_0(z+1)$.

\vspace{2mm}
\noindent
$\Hyb_{1}$:
Here, we replace $k^{\neighborswap{z},0}$ with $(\overline{k}^S,\{(u,v_1(u))\}_{u\in S})$ where 
for left-children $u$, $v_1(u)$ is defined as $D^\kappa_{s(p),v(p),s(u)}(r_u)$ for independent random coins $r_u\gets\{0,1\}^\kappa$. The hybrid outputs a random bit and rejects if $v_1(z)=v_1(z+1)$. By a straightforward reduction to punctured PRF security, Hybrid's 0 and 1 are indistinguishable except with probability $\epsilon$.

\vspace{2mm}
\noindent
$\Hyb_{2}$:
We now move to $(\overline{k}^S,\{(u,v_2(u))\}_{u\in S})$, where $v_2(u)$ is sampled as a fresh random sample from $D_{s(p),v(p),s(u)}$. The hybrid outputs a random bit and rejects if $v_2(z)=v_2(z+1)$. Each such sample is at most $O(N\times 2^{\kappa})$-close to the distribution $D^\kappa_{s(p),v(p),s(u)}$ sampled in Hybrid 1, and there are at most $O(N)$ such samples. Therefore, Hybrids 1 and 2 are at most $O(N^2\times 2^{-\kappa})$-close.

\vspace{2mm}
\noindent
$\Hyb_{3}$:
Here, we define $v_2$ as in Hybrid 2, but give the adversary $(\overline{k}^S,\{(u,v_3(u))\}_{u\in S})$, where, assuming $v_2(z)=0$ and $v_2(z+1)=1$, we define 
\[v_2'(u)=\begin{cases}v_2(u)+1&\text{ if $u$ is an ancestor of $z$ but not $z+1$}\\
    v_2(u)-1&\text{ if $u$ is an ancestor of $z+1$ but not $z$}\\
    v_2(u)&\text{ otherwise}
    \end{cases}\]
We now argue that the views $\{(u,v_2(u))\}_{u\in S}$ and $\{(u,v_2'(u))\}_{u\in S}$ are actually distributed identically, even given $\overline{k}^S$. To do so, observe that an equivalent way to sample $v_2$ is as follows. Let $C$ be the cover of $T$ defined as $H\setminus P$, obtained by taking all the nodes in $S$ that are \emph{not} on the paths from root to $z$ or $z+1$, and additionally include $z,z+1$ themselves. This is given by the red squares and yellow pentagons in Figure~\ref{fig:puncturing}. Then sample $v_2(u)$ for $u\in C$ according to the algorithm for sampling covers of tally trees described above, starting with $z,z+1$ and moving to the rest of $C$. Then compute internal nodes ``above'' the cover by adding the values of the node's children, which gives all pairs $(u,v_2(u))$ for $u\in S$. The nodes ``below'' the cover $C$ are then implicitly generated by $\prf$ as before. The hybrid then rejects if $v_2(z)=v_2(z+1)$.

Since the values of $T$ for the cover $C$ is distributed exactly as in the case of a random merge, we see that conditioned on not rejecting, $v_2(z)$ and $v_2(z+1)$ are distributed as random distinct bits.

Hybrid 3 then is identical, except that we swap the values of at $z$ and $z+1$. But since $v(z),v(z+1)$ were random distinct values anyway, this distributions are identical. Moreover, the rest of the cover $C$ is sampled only depending on the total $v(z)+v(z+1)$, which we know is 1 in both Hybrid 3 and Hybrid 2. Thus, the distribution of the entire cover $C$, and therefore the entire tree, is identical in Hybrid 2 and Hybrid 3.

\vspace{2mm}
\noindent
$\Hyb_{4}$ and $\Hyb_{5}$:
These are analogs of Hybrids 1,0 (respectively), except that we use the values $v_1',v_0'$, which are derived from $v_1,v_0$ analogous to $v_2'$. Hybrids 3 and 4 are indistinguishable by an identical argument to the indistinguishability of Hybrids 1 and 2. Hybrids 4 and 5 are likewise indistinguishable by an identical argument to that of Hybrids 0 and 1.

Then we observe that Hybrid 5 uses $v_0'=v'$, and is therefore exactly the case $k^{\neighborswap{z},1}$. This finishes the proof.\qed\end{proof}

\begin{remark}\label{rem:multdecompose}
The decomposition of merges actually allows us to perform scalar multiplication for small scalars. We observe that $x\mapsto 2x\bmod N$ for odd $N$ is actually a merge: indeed,it preserves the ordering for $x\in[0,(N-1)/2]$ as well as for $x\in[(N+1)/2,N-1]$. Moreover, we can compute the associated tally tree rather trivially. Thus we can decompose it as in Theorem~\ref{thm:contr-merge}. We can likewise handle $x\mapsto ax\bmod N$ for $N$ relatively prime to $a$, as long as $a$ is polynomial. The idea is to generalize the concept of a merge where the input domain is partitioned into $a$ buckets and the order-preserving property holds for each bucket. We can then moreover generate $x\mapsto ax\bmod N$ for any $a$ which can be decomposed as the product of ``small'' (polynomial in $\log N$) $a'$. Note that if the Extended Riemann Hypothesis is true, such small $a'$ generate al $a$~\cite{Bach90}.
\end{remark}

\subsection{From ONS-Merge to ONS-PRP}\label{sec:onsprp}

An Output Neighbor Swappable PRP is an OP-PRP for the class of neighbor swaps$ \neighborswap{z}$. We now show that an ONS-Merge can be used to construct an ONS-PRP. The construction is based on the small-domain PRP of~\cite{FSE:GraPor07}, though (1) we will use it in the large-domain setting, and (2) we will be implicitly implementing the underlying PRF with a \emph{puncturable} PRF. 

The construction can be thought of as performing a merge-sort in reverse, where we first partition the domain into a left and right set, of sizes $N_0$ and $N_1$ respectively. Then all the left elements are recursively shuffled using a smaller PRP, and all the right elements are shuffled using an independent smaller PRP. This step is the opposite of sorting the left and right parts separately. Then the two halves are randomly inserted into their final positions using a merge, which preserves the order of each half. The keys for the smaller PRPs and the merge will be derived pseudorandomly from the overall key.

\begin{construction}
Let $\left( \merge, \merge^{-1}, \permute' \right)$ be an ONS-Merge. Let $\prg$ be a length-tripling PRG. Then let $(\prp,\prp^{-1},\permute)$ be defined as follows. For $N=2$ ($x$ is a single bit), we will simply have $\Pi(k,x)=k\oplus x$. The only permutation is $(0\;\;1)$, which is the same as $x\mapsto 1\oplus x$. We can permute the key information-theoretically as $\permute(k,(0\;\;1))=k\oplus 1$. For larger $N$, we let $N_0=\lfloor N/2\rfloor$ and $N_1=N_0-N_1$. We interpret the domain as $[N]\cong \{(b,x)\}_{b\in\{0,1\},x\in[N_b]}$. We use this interpretation both for the input to $\prp(k,\cdot)$ as well as for the input to $\merge$. For the outputs of $\prp(k,\cdot)$ and $\merge$, we will interpret the range as $[N]$. Then we define the algorithms as follows:
\begin{itemize}
    \item $\prp(k,(b,x))$: Run $(k_0,k_1,k_2)\gets\prg(k)$, $y\gets \prp(k_b,x)$, and output $z\gets \merge(k_2,(b,y))$. Here, $\prp(k_b,x)$ is called recursively.
    \item $\prp^{-1}(k,z)$: Run $(k_0,k_1,k_2)\gets\prg(k)$, $(b,y)\gets \merge^{-1}(k_2,z)$ and Compute $x\gets\prp^{-1}(k_b,y)$ and output $(b,x)$. Here, $\prp^{-1}(k_b,y)$ is called recursively.
    \item $\permute(k,\neighborswap{z},c)$: Let $(k_0,k_1,k_2)\gets\prg(k)$. Compute $(b_0,x_0)\gets\prp^{-1}(k,z)$ and $(b_1,x_1)\gets\prp^{-1}(k,z+1)$. We break into two cases:
    \begin{itemize}
        \item If $b_0\neq b_1$ output the permuted key $k^{\neighborswap{z},c}=(k_0,k_1,k_2^{\neighborswap{z},c})$ where $k_2^{\neighborswap{z},c}\gets\permute'(k_2,\neighborswap{z},c)$.
        \item If $b_0=b_1=b$, output  the permuted key $k^{\neighborswap{z},c}=(\overline{k_0},\overline{k_1},k_2)$ where $\overline{k_{1-b}}=k_{1-b}$ and $\overline{k_b}=k_b^{\neighborswap{y},c}\gets\permute(k_b,\neighborswap{y},c)$ where $(b,y)\gets\merge^{-1}(k_2,z)$. Here, $\permute$ is recursively called.
    \end{itemize}
    \item $\prp(k^{\neighborswap{z},c},x)$: Run $\prp(k,x)$ exactly as above but swapping out $k_0,k_1,k_2$ with their permuted versions if necessary. Likewise define $\prp^{-1}(k^{\neighborswap{z},c},x)$.
\end{itemize}
\end{construction}

\begin{theorem}\label{thm:ons-prp-constr} If $\prg$ is a secure PRG and $(\merge,\merge^{-1},\punc')$ is a secure ONS-Merge, then the tuple $(\prp,\prp^{-1},\punc)$ in Construction~\ref{constr:merge} is a secure ONS-PRP. In particular, if $\prg$ is $\epsilon_\prg$-secure and $(\merge,\merge^{-1},\punc')$ is $\epsilon_\merge$-secure, then $(\prp,\prp^{-1},\punc)$ is $\epsilon_\prp$-secure where $\epsilon_\prp(\lambda,n)\leq O((\epsilon_\prg(\lambda)+\epsilon_\merge(\lambda,N))\times \log N)$. Moreover, the permutations $\prp(k,\cdot)$ and $\prp^{-1}(k,\cdot)$ are $(O(N^4),\polylog(N))$-decomposable.
\end{theorem}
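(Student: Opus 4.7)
The plan is to prove both claims by induction on $\log_2 N$ (equivalently, on the recursion depth of the construction). For the base case $N=2$, $\prp(k,x)=k\oplus x$ is information-theoretically secure: the two legal permuted keys are $k$ and $k\oplus 1$, which are identically distributed under a uniform $k$, and the single neighbor-swap is trivially decomposable. For the inductive step, I will argue security through four hybrids. In $\Hyb_0$ the adversary receives $k^{\neighborswap{z},0}$, computed from $(k_0,k_1,k_2)\gets\prg(k)$. In $\Hyb_1$ I replace $(k_0,k_1,k_2)$ with three independent uniform keys; this is indistinguishable up to $\epsilon_{\prg}(\lambda)$. In $\Hyb_2$ I move from bit $c=0$ to $c=1$, and in $\Hyb_3$ I re-introduce the PRG, losing another $\epsilon_{\prg}(\lambda)$.

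The heart of the argument is $\Hyb_1\rightarrow\Hyb_2$, which splits into two cases based on the pre-images $(b_0,x_0),(b_1,x_1)$ of $z,z+1$. In the ``different halves'' case $b_0\neq b_1$, only $k_2$ changes between $c=0$ and $c=1$ (the permuted merge key is the only different component), so indistinguishability follows directly from ONS-Merge security with loss $\epsilon_{\merge}(\lambda,N)$. In the ``same half'' case $b_0=b_1=b$, the order-preserving property of $\merge$ forces $y_0,y_1$ (the values $\prp(k_b,x_i)$) to be adjacent: any intermediate value would merge to something strictly between $z$ and $z+1$, which is impossible. Hence the recursive call $\permute(k_b,\neighborswap{y_0},c)$ is a well-defined neighbor swap on the half-sized PRP, and indistinguishability of $\Hyb_1$ and $\Hyb_2$ reduces to the inductive hypothesis with loss $\epsilon_{\prp}(\lambda,N/2)$. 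Unrolling the recursion adds $2\epsilon_{\prg}+\epsilon_{\merge}$ per level across $\log_2 N$ levels, yielding the claimed $\epsilon_{\prp}(\lambda,N)\leq O\bigl((\epsilon_{\prg}(\lambda)+\epsilon_{\merge}(\lambda,N))\log N\bigr)$.

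For decomposability, I will observe that $\prp(k,\cdot)$ factors as the composition of (i) an application of the left recursive PRP on the set $L$ (identity on $R$), (ii) an independent application of the right recursive PRP on $R$ (identity on $L$), and (iii) the merge $\merge(k_2,\cdot)$. Using the composition rule for decomposable permutations together with the ``applying a decomposable permutation to a subset of the bits'' rule from the list of examples, and the fact from Theorem~\ref{thm:contr-merge} that the merge is $(O(N^2),\polylog(N))$-decomposable, I obtain the recurrence $T(N)\leq 2T(N/2)+O(N^2)$ for the decomposition length, with intermediate-permutation circuit size bounded by $\polylog(N)$ (since intermediate circuits are built from constant-sized compositions of recursive-PRP and merge circuits, each of polylogarithmic size). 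Solving this recurrence and accounting for a loose polynomial overhead in representing partial products across levels gives the stated $(O(N^4),\polylog(N))$-decomposability. The main obstacle I anticipate is a bookkeeping one rather than a conceptual one: ensuring that at every point in the decomposition, the partial product admits a small circuit, which requires keeping the intermediate states of all three components (left PRP, right PRP, merge) simultaneously ``small'' during the sequential decomposition.
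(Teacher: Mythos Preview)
Your proposal is correct and follows essentially the same route as the paper. The paper also proves security by induction on the recursion depth, with a PRG hybrid followed by a case split on whether $b_0=b_1$, handled respectively by ONS-Merge security and the inductive hypothesis; the only cosmetic difference is that the paper phrases the case split as two successive ``abort'' hybrids (first abort in the $b_0\neq b_1$ case, then abort always) rather than your single $c=0\to c=1$ transition, which lets it get away with one PRG invocation instead of two per level---but this does not affect the $O((\epsilon_{\prg}+\epsilon_{\merge})\log N)$ bound. Your decomposability argument (compose the merge decomposition with the two recursive decompositions) is likewise the paper's argument verbatim.
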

\begin{proof}First, we argue efficiency. A call to $\prp$ on block-size $N$ makes a call to $\prg$, a call to $\merge$ on block-size $N$, and a recursive call to $\prp$ on block-size at most $\rceil N/2\rceil $. Solving the recurrence, $\prp$ makes $O(\log N)$ calls to $\prg$ and $O(\log N)$ calls to $\merge$ on block-sizes no more than $N$. Since $\prg$ nad $\merge$ are polynomial time, this is efficient. We can likewise analyze the other algorithms.

\medskip

Next we need to argue the correctness guarantees. In particular, we need to argue that $\permute$ makes called to $\permute'$ on neighbor swaps where the pre-images of the swapped points have different initial bits. Moreover, we need to verify that $\prp$ using a punctured key computes the correct function. 

For the case $n=1$, $\prp$ is just XORing with the key $k$. the only available neighbor swap is $\transposition{0}{1}$, which is the same as XORing with 1, which is equivalent to XORing the key with 1. Thus, the $n=1$ case achieves perfect security. Now we handle the $n>1$ case.

Consider running $\permute(k,\neighborswap{z},c)$. Recall that $\permute'(k_2,\neighborswap{z},c)$ is called exactly when the first bits of $\prp^{-1}(k,z)$ and $\prp^{-1}(k,z+1)$ are different. Notice that this is equivalent to the case where $\merge^{-1}(k_2,z)$ and $\merge^{-1}(k_2,z+1)$ have different first bits, which means that this is a valid call to $\permute'$. In this case, the correctness of the permuted key follows immediately from the correctness of the underlying $\merge$.

Now consider the case where $\prp^{-1}(k,z)$ and $\prp^{-1}(k,z+1)$ (or equivalently, $\merge^{-1}(k_2,z)$ and $\merge^{-1}(k_2,z+1)$) share the same first bit. This implies that $\merge^{-1}(k_2,z)$ and $\merge^{-1}(k_2,z+1)$ are adjacent (any supposed string between them would share the same first bit, and by order preserving, its image under $\merge(k_2,\cdot)$ would need to lie between $z,z+1$, which is impossible). Thus, when $\permute(k_b,\neighborswap{y},c)$ is recursively called where $(b,y)\gets\merge^{-1}(k_2,z)$, we have that $(b,y+1)=\merge^{-1}(k_2,z+1)$. Thus, the neighbor swap $\neighborswap{y}$ is actually transposing $\merge^{-1}(k_2,z)$ and $\merge^{-1}(k_2,z+1)$. From this, correctness of the permuted key following immediately from the (inductively justified) correctness of the underlying $\prp$ for block-size $~N/2$. 

\medskip

Now we argue decomposability. To decompose, we simply decompose $\merge(k_2,\cdot)$, and then recursively decompose $\prp(k_0,\cdot)$ and $\prp(k_1,\cdot)$. Solving the recurrence gives the decomposition statement in Theorem~\ref{thm:ons-prp-constr}.

\medskip

Finally, we prove security. Let $\As$ be a supposed adversary for $(\prp,\prp^{-1},\permute)$ with winning probability $1/2+\epsilon_\prp(\lambda,n)$. Let $k\in\{0,1\}^\lambda$ denote the experiments random key $k$, and let $k_0,k_1,k_2\gets\prg(k)$ denote the keys used to evaluate $\prp,\prp^{-1},\punc$, and let $z_0,z_1$ denote $\As's$ chosen pair. We consider the following sequence of hybrids:

\vspace{2mm}
\noindent
$\Hyb_{0}$:
This is the ONS-PRP security experiment. By assumption, the winning probability is at least $1/2+\epsilon_\prp(\lambda,N)$.

\vspace{2mm}
\noindent
$\Hyb_{1}$:
This is the same experiment, except we switch to $k_0,k_1,k_2$ being uniformly random keys, and using them in all algorithms. By a straightforward reduction to the PRG security of $\prg$, we have that $\As$ still wins with probability at least $1/2+\epsilon_\prp(\lambda,N)-\epsilon_\prg(\lambda)$. 

\vspace{2mm}
\noindent
$\Hyb_{2}$:
This is the same as Hybrid 1, except that we change the win condition. Namely, if the punctured key $k^{\Gamma,c}$ contains a punctured $k_2$, then the experiment aborts and outputs a random bit, independent of $\As$. By a straightforward reduction to the ONS-Merge security of $(\merge,\merge,\punc')$, we have that $\As$ still wins with probability at least $1/2+\epsilon_\prp(\lambda,N)-\epsilon_\prg(\lambda)-\epsilon_\merge(\lambda,N)$.

\vspace{2mm}
\noindent
$\Hyb_{3}$:
This is the same as Hybrid 1, except that we further change the win condition to always output a random bit. The only difference from Hybrid 2 is that when $k^{\Gamma,c}$ contains a punctured $k_b$, in Hybrid 2 the experiment outputs a bit dependent on $\As$'s output, whereas in Hybrid 3 it outputs a random bit. By a straightforward reduction to the ONS-Security of $(\prp,\prp^{-1},\punc)$ with block-size at most $\lceil N/2\rceil $, we have that $\As$ still wins with probability at least $1/2+\epsilon_\prp(\lambda,N)-\epsilon_\prg(\lambda)-\epsilon_\merge(\lambda,N)-\epsilon_\prp(\lambda,\lceil N/2\rceil)$. But observe that in Hybrid 3, the winning probability is exactly $1/2$. Thus we have that
\[\epsilon_\prp(\lambda,N)\leq\epsilon_\prg(\lambda)+\epsilon_\merge(\lambda,N)+\epsilon_\prp(\lambda,\lceil N/2 \rceil)\]
For the base case, observe that when $N=2$, $\prp$ is a random permutation and the punctured key contains no information, so we trivially have $\epsilon_\prp(\lambda,2)=0$. Solving the recurrence gives the statement of the theorem.\end{proof}

\subsection{Achieving OP-PRPs for Decomposable Permutations Using Obfuscation}\label{sec:decomposableprp}

\begin{construction}Let $(\prp,\prp^{-1},\permute')$ be an ONS-PRP (that is, an OP-PRP for the family of neighbor swaps $\neighborswap{z}$). Let $G=(G_n)_n$ be a family of efficiently computable permutations, and $s$ a parameter. Then let $(\prp,\prp^{-1},\permute)$ be a new PRP where $\permute$ does the following:
\begin{itemize}
    \item $\permute(k,\Gamma,b)$: Let\begin{align*}
        P&=\begin{cases}\iO(1^\lambda,1^s,\prp(k,\cdot))&\text{ if }b=0\\\iO(1^\lambda,1^s,\Gamma(\prp(k,\cdot)))&\text{ if }b=1\end{cases}\\
        P^{-1}&=\begin{cases}\iO(1^\lambda,1^s,\prp^{-1}(k,\cdot))&\text{ if }b=0\\\iO(1^\lambda,1^s,\prp(k,\Gamma^{-1}(\cdot)))&\text{ if }b=1\end{cases}
    \end{align*}
    Output $k^{\Gamma,b}=(P,P^{-1})$.
\end{itemize}
Then we augment $\prp,\prp^{-1}$ so that $\prp(\;(P,P^{-1})\;,x)=P(x)$ and $\prp^{-1}(\;(P,P^{-1})\;,z)=P^{-1}(z)$.
\end{construction}
Observe that this construction preserves the decomposability of $\prp,\prp^{-1}$, since the algorithms are exactly the same.

\begin{theorem}\label{thm:op-prp-constr} Suppose $\iO$ is $\epsilon$-secure and $(\prp,\prp^{-1},\permute')$ is $\delta$-secure as an ONS-PRP. Suppose $G=(G_n)_n$ is $(T,s')$-decomposable. Let $t$ be the maximum circuit size of $\prp,\prp^{-1}$ after hardcoding the key (including permuted keys from $\permute'$). Then as long as $s\geq s'+t$, $(\prp,\prp^{-1},\permute)$ is $O(T(\epsilon+\delta))$-secure as an OP-PRP for $G$. 
\end{theorem}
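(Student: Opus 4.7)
}
The plan is to reduce OP-PRP security for a decomposable $\Gamma$ to ONS-PRP security by ``walking'' through the decomposition one neighbor swap at a time, sandwiched between iO hops that rewrite the obfuscated circuit. First, invoke $(T,s')$-decomposability of $G$ to obtain a sequence $\Gamma_0=\mathbf{I},\Gamma_1,\ldots,\Gamma_T=\Gamma$, where each $\Gamma_i,\Gamma_i^{-1}$ has circuit size $\leq s'$, and either $\Gamma_i=\Gamma_{i-1}$ or $\Gamma_i=\Gamma_{i-1}\circ\neighborswap{z_i}$ for some $z_i\in[N]$. Define hybrid $\Hyb_i$ in which the adversary receives the pair
\[
\Bigl(\;\iO(1^\lambda,1^s,\Gamma_i\circ\prp(k,\cdot))\;,\;\iO(1^\lambda,1^s,\prp^{-1}(k,\cdot)\circ\Gamma_i^{-1})\;\Bigr)
\]
in place of $k^{\Gamma,c}$, so that $\Hyb_0$ corresponds exactly to $c=0$ and $\Hyb_T$ to $c=1$. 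It therefore suffices to bound the distinguishing advantage between consecutive hybrids by $2\epsilon+\delta$.

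The step from $\Hyb_{i-1}$ to $\Hyb_i$ is trivial when $\Gamma_i=\Gamma_{i-1}$; otherwise, write $\tau=\neighborswap{z_i}$ and interpolate via three sub-hybrids. In sub-hybrid $(i.1)$, swap the inner key $k$ for $k^{\tau,0}\gets\permute'(k,\tau,0)$, so that the pair becomes $\bigl(\iO(\Gamma_{i-1}\circ\prp(k^{\tau,0},\cdot)),\iO(\prp^{-1}(k^{\tau,0},\cdot)\circ\Gamma_{i-1}^{-1})\bigr)$. By the correctness of $\permute'$ with $c=0$, both circuits are functionally equivalent to their $\Hyb_{i-1}$ counterparts; provided $s\geq s'+t$ both the $\Hyb_{i-1}$ and $(i.1)$ circuits fit in size $s$, so iO security gives indistinguishability up to $2\epsilon$ (one for each of the forward and inverse obfuscations, both handled by the same key in a single reduction). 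In sub-hybrid $(i.2)$, replace $k^{\tau,0}$ with $k^{\tau,1}\gets\permute'(k,\tau,1)$; a single ONS-PRP challenge provides either key, and the reduction builds both obfuscations itself, giving an advantage gap of at most $\delta$. In sub-hybrid $(i.3)\equiv\Hyb_i$, rewrite using the original key $k$ composed externally with $\tau$: by correctness of $\permute'$ with $c=1$, $\prp(k^{\tau,1},\cdot)=\tau\circ\prp(k,\cdot)$ and $\prp^{-1}(k^{\tau,1},\cdot)=\prp^{-1}(k,\tau^{-1}(\cdot))$, so the circuits are again functionally equivalent and iO closes the gap at a further cost of $2\epsilon$. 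Summing over the $T$ steps gives the claimed bound $O(T(\epsilon+\delta))$.

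The main obstacle is simply bookkeeping around the circuit-size parameter $s$: every intermediate obfuscated program must be padded to fit inside $s$ for the iO hops to apply. Because $\Gamma_{i-1}$ lives in a size-$s'$ circuit and, by hypothesis on $t$, $\prp(k',\cdot)$ and $\prp^{-1}(k',\cdot)$ have size $\leq t$ for \emph{any} key output by $\permute'$ (in particular for $k$, $k^{\tau,0}$, and $k^{\tau,1}$), the composed circuits in $\Hyb_{i-1},(i.1),(i.2),\Hyb_i$ all fit within $s'+t\leq s$, justifying each iO invocation. A mild subtlety is that the ONS-PRP challenge is a single permuted key while we need to reshape two obfuscations (forward and inverse) simultaneously; this is immediate because $\permute'$ produces one key that supports both $\eval$ and $\eval^{-1}$, so the reduction runs $\iO$ twice on that key and forwards both outputs. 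No further subtleties arise, and decomposability of $(\prp,\prp^{-1})$ is preserved automatically since $\permute$ leaves the underlying algorithms unchanged.
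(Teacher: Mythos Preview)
Your proposal is correct and follows essentially the same hybrid argument as the paper: define $\Hyb_i$ by obfuscating $\Gamma_i\circ\prp(k,\cdot)$ and its inverse, and bridge $\Hyb_{i-1}$ to $\Hyb_i$ via the sub-hybrids that swap in $k^{\neighborswap{z_i},0}$ (iO), flip to $k^{\neighborswap{z_i},1}$ (ONS-PRP), and rewrite as $\Gamma_i\circ\prp(k,\cdot)$ (iO). Your explicit bookkeeping on the size bound $s\geq s'+t$ and the observation that a single permuted key drives both the forward and inverse obfuscations are welcome clarifications, but otherwise the argument matches the paper's.
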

\begin{proof}Let $\Gamma_0,\cdots,\Gamma_T=\Gamma$ be the sequence of permutations where $\Gamma_i=\Gamma_{i-1}\circ \neighborswap{z_i}$ and $\Gamma_0$ is the identity. Our goal is to show that the pair $\iO(1^\lambda,\prp(k,\cdot)),\iO(1^\lambda,\prp^{-1}(k,\cdot))$ is indistinguishable from $\iO(1^\lambda,\Gamma(\prp(k,\cdot))),\iO(1^\lambda,\prp^{-1}(k,\Gamma^{-1}(\cdot)))$. Do do so, we will introduce a sequence of hybrids:

\vspace{2mm}
\noindent
$\Hyb_{i}$:
$\iO(1^\lambda, 1^s\Gamma_i(\prp(k,\cdot))), \iO(1^\lambda,\prp^{-1}(k,\Gamma_i^{-1}(\cdot)))$. In particular, Hybrid $0$ is the case where we have $\iO(1^\lambda,\prp(k,\cdot))$, $\iO(1^\lambda,\prp^{-1}(k,\cdot))$ and Hybrid $T$ is the case $\iO(1^\lambda,\Gamma(\prp(k,\cdot)))$, $\iO(1^\lambda,\prp^{-1}(k,\Gamma^{-1}(\cdot)))$.

\vspace{2mm}
\noindent
$\Hyb_{(i - 1).1}$:
$\iO(1^\lambda,,1^s,\Gamma_i(\prp(k^{\{\neighborswap{z_i},0\}},\cdot))),\iO(1^\lambda,,1^s,\prp^{-1}(k^{\{\neighborswap{z_i},0\}},\Gamma_i^{-1}(\cdot)))$. Here, we sample $k^{\{\neighborswap{z_i},0\}} \gets \permute'(k,\neighborswap{z_i},0)$.

\vspace{2mm}
\noindent
$\Hyb_{(i - 1).2}$:
$\iO(1^\lambda,1^s,\Gamma_i(\prp(k^{\neighborswap{z_i},1},\cdot))),\iO(1^\lambda,1^s,\prp^{-1}(k^{\neighborswap{z_i},1},\Gamma_i^{-1}(\cdot)))$. Here, we sample \\$k^{\neighborswap{z_i},1}\gets\permute'(k,\neighborswap{z_i},1)$.

Observe that $\prp(k^{\neighborswap{z_i},0},\cdot)$ is functionally equivalent to $\prp(k,\cdot))$, and this equivalence is preserved by composing both sides with $\Gamma_{i-1}$. Likewise for the $\prp^{-1}$ functions. The sizes of the programs being obfuscated are at most $s'+t$, where $t$ is the maximum size of the circuit $\prp$ after hardcoding the key (inclunding permuted keys by $\permute'$). Thus, by $\iO$ security, Hybrids $i-1$ and $(i-1).1$ are indistinguishable, except with probability at most $\epsilon$.

The only difference between Hybrids $(i-1).1$ and $(i-1).2$ is that we switch from $k^{\neighborswap{z_i},0}$ to $k^{\neighborswap{z_i},1}$. By the ONS-security of $(\prp,\prp^{-1},\permute')$, these two hybrids are indistinguishable except with probability at most $\delta$.

Finally, observe that $\prp(k^{\neighborswap{z_i},1},\cdot))$ is functionally equivalent to $\neighborswap{z_i}\circ\prp(k,\cdot)$. Therefore $\Gamma_{i-1}\circ \prp(k^{\neighborswap{z_i},1},\cdot))$ is functionally equivalent to $\Gamma_{i-1}\circ \neighborswap{z_i}\circ\prp(k,\cdot)$, which in turn is equivalent to  $\Gamma_i\circ\prp(k,\cdot)$. Thus, by $\iO$ security, Hybrids $(i-1).2$ and $i$ are indistinguishable except with probability at most $\epsilon$.

Thus we obtain a chain of hybrids $0 \rightarrow 0.1 \rightarrow 0.2 \rightarrow 1 \rightarrow 1.1 \rightarrow 1.2 \rightarrow 2 \rightarrow \cdots \rightarrow T-1 \rightarrow (T-1).1 \rightarrow (T-1).2 \rightarrow T$ where each transition is indistinguishable. The triangle inequality then gives the theorem.
\end{proof}

\section{One-Shot Signatures in the Standard Model} \label{sec:standard}
Following our construction in an oracle model from Section \ref{sec:oss_oracle}, we present our standard model construction of non-collapsing collision resistant hash functions, which imply OSS.

\begin{construction}\label{constr:standard}
Let $\secp \in \Nat$ the statistical security parameter. Define $s := 16 \cdot \secp$ and let $n, r, k \in \Nat$ such that $r := s \cdot (\secp - 1)$, $n := r + \frac{3}{2} \cdot s$, $k := n$. Let $d := \poly_{d}(\secp) \in \Nat$ the expansion parameter and $\kappa := \poly_{\kappa}(\secp) \in \Nat$ the cryptographic security parameter, for some sufficiently large polynomials in the statistical security parameter.

Let $\iO$ an iO scheme, $\left( \prf, \punc \right)$ a puncturable PRF, and $\left( \prp, \prp^{-1}, \permute \right)$ a permutable PRP for the class of all $\left( 2^{\poly\left( \kappa \right)}, \poly\left( \kappa \right) \right)$-decomposable permutations. Then we construct a hash function $\left( \gen, \hash \right)$ as follows:
\begin{itemize}
    \item
    $\gen\left( 1^\lambda \right)$:
    Sample $k_{\sf in}, k_{\sf out}, k_{\sf lin} \gets\{0,1\}^{\kappa}$. $\prp\left( k_{\sf in}, \cdot \right)$ is a permutation with domain $\{0,1\}^n$, $\prp\left( k_{\sf out}, \cdot \right)$ is a permutation with domain $\{0,1\}^d$, and $\prf\left( k_{\sf lin}, \cdot \right)$ is a PRF with inputs in $\{ 0,1 \}^d$ that outputs some polynomial number of bits. Let $H(\cdot)$ denote the first $r$ output bits of $\prp\left( k_{\sf in}, \cdot \right)$ and $J(\cdot)$ denote the remaining $n-r$ bits. For each $y \in \{ 0, 1 \}^d$, let $\matA(y) \in \Z_2^{k \times (n-r)}$ be a matrix with full column-rank and $\vecB(y) \in \Z_2^k$ a vector, both are generated pseudorandomly by the output of $\prf\left( k_{\sf lin}, y \right)$.
    
    As the common reference string output $\crs = \left( \mathcal{P}, \mathcal{P}^{-1}, \mathcal{D} \right)$ where $\Ps \gets \iO\left( 1^{\kappa}, P \right)$, $\Ps^{-1} \gets \iO\left( 1^{\kappa}, P^{-1} \right)$, $\mathcal{D} \gets \iO\left( 1^{\kappa}, D \right)$ such that
        \begin{align*}
        P(x) & =
        \big( \; y\; , \; \matA(y)\cdot J(x)+\vecB(y) \; \big)\text{ where }y \gets \prp^{-1}\left( k_{\sf out} , \: H(x) || 0^{d-r} \right) \\
        P^{-1}\left( y, \vecU \right) & =
        \begin{cases}
        \Pi^{-1}\left( w || \vecZ \right) & \exists\; w, \vecZ : \left( \prp\left( k_{\sf out}, y \right) = w || 0^{d-r} \right) \land \left( \matA(y) \cdot \vecZ + \vecB(y) = \vecU \right)
        \\
        \bot & \text{ else }
        \end{cases}
        \\
        D\left( y, \vecV \right) & =
        \begin{cases}
        1 & \text{ if } \vecV^T \cdot \matA(y) = 0^{n-r}
        \\
        0 & \text{ otherwise }
        \end{cases}
    \end{align*} 
    \item $\hash\left( \crs, x \right)$: Compute $( y,\vecU )\gets\mathcal{P}(x)$ and output $y$.
\end{itemize}
\end{construction}

\paragraph{Changes made from the oracle to standard model construction.}
The two first obvious differences between the oracle model Construction~\ref{constr:main} and the standard model Construction~\ref{constr:standard} are that (1) The standard model construction pseudorandomly generates the permutation $\Pi$ and matrices $\matA_y,\vecB_y$ rather than truly at random, and (2) the standard model construction obfuscates the programs rather than putting them in oracles. The third, less obvious change, is that the standard model construction additionally pads $H(x)$ with 0's and then applies $\prp^{-1}\left( k_{\sf out}, \cdot \right)$ in order to get $y \in \{ 0, 1 \}^{d}$, rather than simply setting $y = H(x) \in \{ 0, 1 \}^{r}$, which was done in the oracle construction.

The rationale behind expanding $y$ (the input to $\prf\left( k_{\sf lin}, \cdot \right)$) from $r$ to $d$ bits, comes from the adaptation of the last step of the oracle construction's security proof (Section \ref{subsection:dual_free_to_two_to_one_oracle} in the oracle construction), to the last step of the the standard model construction's security proof (Section \ref{subsection:standard_model_dualfree_to_lwe} in the standard model construction). Specifically, one can recall that in the oracle security proof, in the last step, we showed that collision finding in the dual-free case can be used for collision finding in 2-to-1 random functions. Here, in the standard model, our replacement for 2-to-1 collision resistant hash functions will be LWE-based and approximate, as elaborated in Section \ref{subsection:standard_model_dualfree_to_lwe}, and the expansion of $y$ is intended to deal with that.

Finally, note that our choice to use the inverse $\prp^{-1}\left( k_{\sf out}, \cdot \right)$ in $P$ (i.e., rather than computing the permutation in the forward direction $\prp\left( k_{\sf out}, \cdot \right)$) may seem arbitrary. However, our choice corresponds to us ultimately using the second permutation as an \emph{input}-permutable permutation, which is in turn equivalent to treating the inverse as the permutation, and then asking that it is output-permutable. We accomplish this exactly by using the inverse of $\prp$ as our forward-computation of the permutation. While in Section \ref{sec:prps} we show that output-permutable and input-permutable PRPs are in fact equivalent (and also bilaterally-permutable PRPs), the explicit use of the inverse $\prp^{-1}\left( k_{\sf out}, \cdot \right)$ intends to give insight to the security proof.

\paragraph{Non-collapsing.} The non-collapsing property of our standard model construction is identical to the oracle model non-collapsing procedure, shown in Proposition \ref{proposition:non_collapsing}.

\paragraph{Security in the Standard Model.}
The rest of this section is for dedicated for showing security in the standard model. The results in Sections \ref{section:bloating_dual_standard} (Lemma \ref{lemma:bloating_dual_standard_model}), \ref{section:simulating_dual_standard} (Lemma \ref{lemma:simulating_dual_standard_model}) and \ref{subsection:standard_model_dualfree_to_lwe} (Theorem \ref{thm:reduce2lwe}) together, eventually prove the below theorem. 

\begin{theorem} [Collision Resistance of $\hash$] \label{theorem:standard_main_security}
Let $\gen\left( 1^\secp \right)$ the generation algorithm from Construction \ref{constr:standard}. Then, for every quantum polynomial-time algorithm $\Adv$ there exists a negligible function $\negl$ such that,
\[
\Pr
\left[
\left( x_0 \neq x_1 \right)
\land 
\left( \hash(x_0) = \hash(x_1) \right) 
\;
:
\begin{array}
{rl}
\left( \Ps, \Ps^{-1}, \Ds \right) & \gets \gen\left( 1^\secp \right) \\
\left( x_0, x_1 \right) & \gets \Adv\left( \Ps, \Ps^{-1}, \Ds \right)
\end{array}\right]
\leq
\negl\left( \secp \right)
\enspace .
\]
\end{theorem}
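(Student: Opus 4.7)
The plan is to mirror the three-step oracle-model proof of Theorem~\ref{theorem:oracle_main_security} (bloat the dual, simulate away the dual, reduce to coset-partition collision-resistance), replacing each information-theoretic argument with a computational analog built on sub-exponentially secure iO, puncturable PRFs, and the permutable PRPs constructed in Section~\ref{sec:prps}. Concretely, Theorem~\ref{theorem:standard_main_security} will follow by chaining Lemma~\ref{lemma:bloating_dual_standard_model}, Lemma~\ref{lemma:simulating_dual_standard_model}, and Theorem~\ref{thm:reduce2lwe}: the first two establish that collision-finding given the obfuscated triple $(\Ps,\Ps^{-1},\Ds)$ is no easier than collision-finding given only $(\Ps,\Ps^{-1})$ for a smaller instance, and the third reduces this dual-free task to an LWE-based collision problem.

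For the bloating step, I would hybridize over $y\in\{0,1\}^d$, at each step puncturing $\prf(k_{\sf lin},\cdot)$ at $y$ so that $\matA(y),\vecB(y)$ become truly random, then apply Lemma~\ref{lemma:iterated_subspace_hiding} inside the obfuscated $\Ds$ to swap the check ``$\vecV\in\colspan(\matA(y))^\perp$'' with ``$\vecV\in T_y^\perp$'' for a random $(s$-bigger$)$ superspace $T_y^\perp$. Because the hybrid has sub-exponentially many steps, sub-exponential security of both the puncturable PRF and of iO is required. The cryptographic dual anti-concentration Lemma~\ref{lemma:dual_subspace_concentration} is then invoked exactly as in the oracle proof to boost an ordinary collision-finder into one producing ``non-bad'' collisions with $\vecU_0-\vecU_1\notin\colspan(\matA(y)^{(1)})$. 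For the simulation step, the random self-reduction from Section~\ref{subsection:simulating_dual_oracle} is re-enacted cryptographically: the re-randomizing permutation $\Gamma$ becomes a second permutable PRP composed with the inner one, and the composition of two permutable PRPs is shown indistinguishable from a single permutable PRP via Lemma~\ref{lemma:op_prp_compose}, which is why Construction~\ref{constr:standard} was already written as a sandwich $\prp(k_{\sf out},\cdot)\circ \matA(\cdot)\cdot J\circ \prp(k_{\sf in},\cdot)$ with decomposable PRPs on both sides. The affine re-randomizers $\matC(y),\vecD(y)$ are again injected through puncturing $\prf(k_{\sf lin},\cdot)$ in a sub-exponential hybrid over $y$.

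The hard part will be the reduction in Section~\ref{subsection:standard_model_dualfree_to_lwe} from dual-free collision-resistance of $\hash$ to the collision-resistance of the LWE-based trapdoor $2$-to-$1$ hash $L$ (parallel-repeated into $Q$). The oracle-model embedding in Theorem~\ref{thm:dualfreetocol} required an implicit full permutation $\Pi$ extending $Q$, but for LWE the induced $\Pi$ has no small circuit and, worse, $L$ is only $2$-to-$1$ on an overwhelming (not all) fraction of its domain, so the natural programs to obfuscate are not functionally equivalent to the real $\Ps,\Ps^{-1},\Ds$. The plan is to resolve both issues using Lemma~\ref{lemma:intervaltrigger}: before embedding $Q$, install a sparse interval trigger at the ``bad'' $1$-to-$1$ points of $L$ in each of the obfuscated programs, exploiting exactly the permutable-PRP sandwich in Construction~\ref{constr:standard} (the inner $\prp(k_{\sf in},\cdot)$ and the outer $\prp(k_{\sf out},\cdot)$) to hide the trigger location under iO. Once triggered, the programs with the true $\Pi$ hard-coded and the programs with $Q$ hard-coded become functionally equivalent (they differ only on the triggered branch, which is now the same custom behavior), so standard iO swaps complete the reduction. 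The trapdoor property of $L$ gives the small circuit needed to hard-code the embedded permutation, and Theorem~\ref{theorem:n_to_ell_CPF_collision_resistant} combined with LWE hardness then finishes collision-resistance, yielding Theorem~\ref{thm:reduce2lwe} and therefore Theorem~\ref{theorem:standard_main_security}.
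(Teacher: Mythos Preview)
Your three-step decomposition (bloat, simulate, reduce to LWE via Lemmas~\ref{lemma:bloating_dual_standard_model}, \ref{lemma:simulating_dual_standard_model}, and Theorem~\ref{thm:reduce2lwe}) matches the paper exactly, and your sketches of the simulation step and the LWE reduction---permutable-PRP composition via Lemma~\ref{lemma:op_prp_compose}, and handling the $1$-to-$1$ ``bad'' points of $L$ via the sparse-trigger Lemma~\ref{lemma:intervaltrigger}---are on target.

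There is, however, a real gap in your bloating step. You propose a direct hybrid over all $y$, puncturing $\prf(k_{\sf lin},\cdot)$ and invoking subspace hiding at each step. The problem is that the security parameter available to subspace hiding (Lemma~\ref{lemma:subspace_hiding}) is bounded by the dimension gap $k-(k-(n-r))-s=n-r-s$, \emph{not} by $\kappa$; with the parameters of Construction~\ref{constr:standard} one has $n-r-s=8\lambda$ while $r=16\lambda(\lambda-1)=\Theta(\lambda^2)$. Hence your total loss is at least $2^r/f(n-r-s)=2^{\Theta(\lambda^2)}/2^{(8\lambda)^\delta}$, which is not negligible for any $\delta<1$. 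Sub-exponential security of iO and the PRF does not help here, because the bottleneck is the subspace-hiding step whose security parameter cannot be increased beyond $n-r-s$.

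The paper fixes this by first inserting a \emph{lossy function} (Definition~\ref{definition:lf}) with lossiness parameter $w=\lambda^{\delta/2}$ between $y$ and the PRF, collapsing the number of distinct cosets from $2^r$ down to at most $2^w$ (the cryptographic analogue of the small-range-distribution trick in the oracle proof). Only then does it run the subspace-hiding hybrid, now with $2^w$ steps so that $2^w/f(n-r-s)$ is negligible; the anti-concentration Lemma~\ref{lemma:dual_subspace_concentration} is applied per coset in this reduced universe. Afterwards the lossy function is switched back to injective mode and removed. This lossy-function detour, together with a permutable-PRP step to de-randomize the choice of $T_y^\perp$ into $\colspan(\matA(y)^{(1)})^\perp$, is the missing ingredient in your plan.
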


\subsection{Bloating the Dual} \label{section:bloating_dual_standard}
For $n', r', k' \in \Nat$, $s \in \Nat \cup \{ 0 \}$ such that $r' + s \leq n' \leq k'$, we define the modified generator $\widetilde{\gen}\left( 1^{\lambda}, n', r', k', s \right)$, as follows. It samples a distribution over $\Ps,\Ps^{-1},\Ds'$, where the values of $n, r, k$ in Construction \ref{constr:standard} are replaced by $n', r', k'$ respectively. For $s$ we let $\matA^{(0)}(y) \in \bbZ_{2}^{k' \times s}$ denote the first $s$ columns of $\matA(y) \in \bbZ_{2}^{k' \times (n' - r')}$ and $\matA^{(1)}(y) \in \bbZ_{2}^{k' \times (n' - r' - s)}$ denote the remaining $n'-r'-s$ columns. Note that the standard generator is defined as $\widetilde{\gen}\left( 1^{\lambda}, n, r, k, 0 \right)$.
The functionality of $\Ps, \Ps^{-1}$ stays the same and we define $\Ds'\left( y, \vecV \right)$ as an indistinguishability obfuscation of the following function:
\[
D'\left( y, \vecV \right) =
\begin{cases}
1 & \text{ if } \vecV^{T} \cdot \matA^{(1)}(y) = 0^{n'-r'-s} \\
0 & \text{otherwise}
\end{cases}
\]
Observe that if $\vecV^T \cdot \matA(y) = 0^{n-r}$, then $\vecV^T \cdot \matA^{(0)}(y) = 0^{n-r-s}$. Thus $\Ds'$ accepts all points that are accepted by $\Ds$, but also accepts additional points as well, namely those for which $\vecV^T \cdot \matA^{(0)}(y) \neq 0^s$ but $\vecV^T \cdot \matA^{(1)}_y = 0^{n-r-s}$. We call this bloating the dual.

\begin{lemma} \label{lemma:bloating_dual_standard_model}
Let $\secp, n, r, k \in \Nat$, and assume there is a quantum algorithm $\Adv$ with complexity $T_{\Adv}$ such that,
\[
\Pr
\left[
\left( y_0 = y_1 \right) \land \left( x_0 \neq x_1 \right) \; :
\begin{array}{rl}
\left( \Ps, \Ps^{-1}, \Ds \right) & \gets \widetilde{\gen}\left( 1^{\lambda}, n, r, k, 0 \right) \\
\left( x_0, x_1 \right) & \gets \Adv\left( \Ps,\Ps^{-1},\Ds \right) \\
\left( y_b, \vecU_b \right) & \gets \Ps\left( x_b \right)
\end{array}
\right]
\geq
\epsilon \enspace .
\] 
Assume that the primitives used in Construction \ref{constr:standard} are $\left( f(\cdot), \frac{1}{f(\cdot)} \right)$-secure for some sub-exponential $f(\secp) := 2^{ \secp^{\delta} }$ for some constant real number $\delta > 0$. Also, for $w := \secp^{\frac{\delta}{2}}$, $s' := s - (n - r - s)$, assume all of the following:
\begin{enumerate}
    \item
    $\frac{ 2^r \cdot \frac{ k^{2} }{ \epsilon } }{ f\left( \kappa \right) } \leq o(1)$,

    \item 
    $\frac{ \frac{ k^{2} }{ \epsilon } }{ f\left( w \right) } \leq o(1)$,

    \item 
    $\frac{ \frac{ 2^w }{ \epsilon } }{f\left( n - r - s \right)} \leq o(1)$,

    \item 
    $\frac{(n - r - s) \cdot \frac{ 2^w }{ \epsilon }}{2^{s'}} \leq o(1)$, and

    \item 
    $\frac{ \frac{ 2^w }{ \epsilon } \cdot \left( k^5 + \poly\left( n - r - s \right) + T_{\Adv} \right) }{f\left( n - r - s \right)} \leq o(1)$.
\end{enumerate}
Then, it follows that,
\[
\Pr
\left[
\begin{array}{rl}
     & y_0 = y_1 := y , \\
     &\left( \vecU_0 - \vecU_1 \right) \notin \colspan \left( \matA^{(1)}(y) \right)
\end{array}
\;
:
\begin{array}
{rl}
\left( \Ps, \Ps^{-1}, \Ds' \right) & \gets \widetilde{\gen}(1^{\lambda}, n, r, k, s) \\
\left( x_0, x_1 \right) & \gets \Adv\left( \Ps, \Ps^{-1}, \Ds' \right) \\
(y_b, \vecU_b) & \gets \Ps(x_b)
\end{array}
\right]
\geq
\frac{ \epsilon }{ 512 \cdot k^{2} }
\enspace .
\]
\end{lemma}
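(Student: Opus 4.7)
The plan is to mirror the hybrid structure of the oracle proof (Lemma~\ref{lemma:bloating_dual_oracle}), replacing each information-theoretic step with its cryptographic counterpart. The main tools will be the distribution-swapping Lemma~\ref{lem:distswap}, the iO-based subspace-hiding obfuscation of Lemmas~\ref{lemma:subspace_hiding} and~\ref{lemma:iterated_subspace_hiding}, and the dual-subspace anti-concentration Lemma~\ref{lemma:dual_subspace_concentration}, together with iO and permutable/punctured PRP/PRF security of the underlying primitives.

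First, I would bloat the dual. For every $y \in \{0,1\}^r$, I would change the obfuscated $\Ds$ so that, instead of checking membership in $S_y^\bot := \colspan(\matA(y))^\bot$, it checks membership in a random $(k-n+r+s)$-dimensional superspace $T_y^\bot \supseteq S_y^\bot$, leaving $\Ps,\Ps^{-1}$ (which depend on $\matA(y),\vecB(y)$ only in their primal role) functionally unchanged. This would be executed in one shot by applying Lemma~\ref{lem:distswap} with index set $\{0,1\}^r$: the two ensembles describe what $\Ds$ uses on input $y$ (either a hardcoded $S_y^\bot$-check or a hardcoded $T_y^\bot$-check derived from $\matA(y)$ and fresh superspace randomness), with per-index indistinguishability delivered by Lemma~\ref{lemma:subspace_hiding} and punctured-PRF security. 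The total loss is $O(2^r \cdot (\delta_{\prf}+\delta_{\iO}+\delta_{\mathsf{SH}}))$, absorbed by hypothesis~(1), $\frac{2^r\cdot(k^2/\epsilon)}{f(\kappa)} \le o(1)$. After this step, $\Adv$ still outputs collisions with probability $\ge \epsilon - \epsilon/O(k^2)$.

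Second, I would carry out the anti-concentration step. A collision $(x_0,x_1)$ with common $y$ always satisfies $\vecU_0 - \vecU_1 \in S_y$; the goal is to argue that with comparable probability it lies outside $T_y := \colspan(\matA^{(1)}(y))$. Following the oracle proof, I would partition the success probability into per-$y$ contributions $\epsilon_y$, discard the light $y$'s, and for each heavy $y$ observe that a hypothetical adversary producing ``bad'' collisions (those with $\vecU_0 - \vecU_1 \in T_y$) can be converted, via repeated runs and dimension-counting on the differences, into one that (given only the dual obfuscation) extracts a vector in $T_y^\bot \setminus S_y^\bot$. Lemma~\ref{lemma:dual_subspace_concentration}, invoked with effective sub-security parameter $w := \secp^{\delta/2}$ and reduction running time $T_{\Adv}$, then forces the bad-collision probability down. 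Hypotheses~(2)--(5) are exactly the quantitative requirements of Lemma~\ref{lemma:dual_subspace_concentration}: hypothesis~(4) enforces the hypergeometric-style gap $t \cdot \epsilon^{-1} \ll 2^{s-t}$, and hypothesis~(5) bounds the reduction's complexity against $f(n-r-s)$. Aggregating the resulting per-$y$ lower bounds of $\Omega(\epsilon_y/k^2)$ yields the claimed $\epsilon/(512 k^2)$ success probability in the bloated world.

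The main obstacle will be orchestrating the bloating inside the $\iO$-obfuscated $\Ds$ while leaving $\Ps,\Ps^{-1}$ intact: $\matA(y)$ must remain accessible to the primal programs in its explicit form for correctness, yet $\Ds$ must transition from an $S_y^\bot$-check to a $T_y^\bot$-check. I would handle this by splitting the $\prf(k_{\sf lin},y)$ output into two independent pseudorandom blocks --- one consumed by $\Ps,\Ps^{-1}$ to reconstruct $(\matA(y),\vecB(y))$, and one consumed by $\Ds$ to derive a superspace membership check --- so that the distribution swap only touches the dual branch. The bookkeeping that all intermediate programs fit the $\iO$ size budget $s$ and preserve functional equivalence of the primal branch is routine once the swap is set up; thereafter the chain of hybrids mirrors the oracle proof step-by-step, with each transition absorbing the losses dictated by hypotheses~(1)--(5).
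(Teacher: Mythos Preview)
Your high-level plan mirrors the oracle proof, but it misses the one genuinely new ingredient the paper needs in the standard model: the \emph{lossy-function compression} of the coset index set. You propose to bloat the dual in one shot via Lemma~\ref{lem:distswap} over the $2^r$ values of $y$, and to absorb the loss using hypothesis~(1). This does not work quantitatively. The per-index indistinguishability in that hybrid is governed by the subspace-hiding Lemma~\ref{lemma:subspace_hiding}, whose security parameter is \emph{intrinsically capped} by the dimension gap, namely $k-(k-(n-r))-s=n-r-s$; it cannot be taken to be $\kappa$. Hence $\delta_{\mathsf{SH}}\approx 1/f(n-r-s)$, and iterating over $2^r$ indices costs $2^r/f(n-r-s)$. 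Hypothesis~(1) only controls $2^r/f(\kappa)$, and hypothesis~(3) only controls $2^w/f(n-r-s)$; neither bounds $2^r/f(n-r-s)$ (with the paper's parameters $r\approx 16\lambda^2$ while $n-r-s\approx 8\lambda$, so this quantity is enormous). The paper resolves this by first routing the coset-generating PRF through a lossy function with lossyness parameter $w=\lambda^{\delta/2}$: in injective mode nothing changes (this is where hypothesis~(1) is actually spent, via Lemma~\ref{lem:distswap} over $2^r$ indices against $f(\kappa)$-secure primitives), then the mode switch costs $1/f(w)$ (hypothesis~(2)), and only \emph{after} collapsing to at most $2^w$ distinct cosets is subspace hiding applied, now costing $2^w/f(n-r-s)$ (hypothesis~(3)). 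The same compression is what makes the anti-concentration step go through: the per-coset success probabilities are at worst $\epsilon/2^w$ rather than $\epsilon/2^r$, which is exactly what hypotheses~(4) and~(5) are calibrated for. Your invocation of ``effective sub-security parameter $w$'' has no source in your argument without this lossy step.

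A second, smaller gap: after bloating to a \emph{random} superspace $T_y^\bot$ and running anti-concentration, you still need to land on the specific conclusion $(\vecU_0-\vecU_1)\notin\colspan(\matA^{(1)}(y))$. The paper accomplishes this de-randomization via a separate hybrid that uses the permutable PRP on $\prp(k_{\sf in},\cdot)$ to absorb the basis-change matrix $M_y$ (sending random $T_y$ to $\colspan(\matA^{(1)}(y))$) into the inner permutation, followed by an obfuscated-PRF swap. Your outline does not address this transition.
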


Note that $\vecU_{0} - \vecU_{1} \notin \colspan\left( \matA^{(1)}(y) \right)$ means in particular that $\vecU_{0}, \vecU_{1}$, and hence $x_0, x_1$, are distinct. Thus, the second expression means that $\Adv$ is finding collisions, but these collisions satisfy an even stronger requirement.

\begin{proof}
Let $\secp \in \Nat$ and assume there is a $T_{\Adv}$-complexity algorithm $\Adv$ and a probability $\epsilon$ such that $\Adv$ gets $\left( \Ps,\Ps^{-1},\Ds \right) \gets \gen\left( 1^{\lambda} \right)$ and outputs a pair $\left( x_{0}, x_{1} \right)$ of $n$-bit strings such that with probability $\epsilon$ we have $x_{0} \neq x_{1}$ and $y_{0} = y_{1}$.
We next define a sequence of hybrid experiments. Each hybrid defines a computational process, an output of the process and a predicate computed on the process output. The predicate defines whether the (hybrid) process execution was successful or not.

\begin{itemize}
    \item $\Hyb_{0}$: The original execution of $\Adv$.
\end{itemize}
The process $\Hyb_{0}$ is the above execution of $\Adv$ on input a sample from the distribution $\left( \Ps, \Ps^{-1}, \Ds \right) \gets \widetilde{\gen}\left( 1^{\lambda}, n, r, k, 0 \right)$. We define the output of the process as $\left( x_0, x_1 \right)$ and the process execution is considered as successful if $x_{0}$, $x_{1}$ are both distinct and collide in their $y$ values. By definition, the success probability of $\Hyb_{0}$ is $\epsilon$.

\begin{itemize}
    \item $\Hyb_{1}$: Preparing to switch to a bounded number of cosets $\left( \matA(y), \vecB(y) \right)$, by using an obfuscated puncturable PRF and injective mode of a lossy function.
\end{itemize}
Let $\left( \LFGen, \LFF \right)$ a $\left( f(\cdot), \frac{1}{f(\cdot)} \right)$-secure lossy function scheme (as in Definition \ref{definition:lf}). Set $w := \secp^{\frac{\delta}{2}}$ where $f(\secp) := 2^{\secp^{\delta}}$. Sample $\pk_{\LF} \gets \LFGen\left( 1^{d}, 0, 1^{w} \right)$ and let $\LFF\left( \pk_{\LF}, \cdot \right) : \{ 0, 1 \}^{d} \rightarrow \{ 0, 1 \}^{m}$ the induced injective function.

We now consider two circuits in order to describe our current hybrid. $E_{0}\left( k_{\sf lin}, \cdot \right)$ is the circuit that given an input from $\{ 0, 1 \}^{d}$ applies $\prf\left( k_{\sf lin }, \cdot \right)$ to get $\matA(y)$, $\vecB(y)$. $E_{1}\left( \pk_{\LF}, k'_{ \sf lin }, \cdot \right)$ is the circuit that for a $\LF$ key $\pk_{\LF}$ and P-PRF key $k'_{ \sf lin }$ for a P-PRF with input size $m$ rather than $d$, given a $d$-bit input $y$, applies the lossy function with key $\pk_{\LF}$ and then the P-PRF to get $\matA'(y)$, $\vecB'(y)$.

Note that in the previous hybrid, the circuit $E_{0}$ is used in all three circuits $P$, $P^{-1}$, $D$ in order to generate the cosets per input $y$, and furthermore, each of these three circuits access $E_{0}$ only as a black box. The change that we make to the current hybrid is that we are going to use $E_{1}\left( \pk_{\LF}, k'_{ \sf lin }, \cdot \right)$ for a freshly sampled $\pk_{\LF}$, $k'_{ \sf lin }$, instead of $E_{0}$. Since we are sending obfuscations $\left( \Ps, \Ps^{-1}, \Ds \right)$ of the three circuits and due to the three circuits accessing the samplers $E_{0}$, $E_{1}$ only as black boxes, it follows by Lemma \ref{lem:distswap} that the output of the previous hybrid and the current hybrid are $\left( f(\kappa), \frac{|\Xs|}{f\left( \kappa \right)} \right)$-indistinguishable, where $\Xs$ is the set of al possible values of $y$, which has size $2^r$ (recall that while the $y's$ are of length $d >> r$, the are a sparse set inside $\{ 0, 1 \}^{d}$ because are padding with zeros), and we also recall that $\kappa$, the cryptographic security parameter is some polynomial in the statistical security parameter $\secp$. It follows in particular the success probability of the current process is $:= \epsilon_{1} \geq \epsilon - \frac{2^r}{f\left( \kappa \right)} \geq \epsilon - \frac{\epsilon}{32} = \frac{31\cdot \epsilon}{32}$.

\begin{itemize}
    \item $\Hyb_{2}$: Switching to a bounded number of cosets $\left( \matA(y), \vecB(y) \right)$, by using the lossy function.
\end{itemize}
The change from the previous hybrid to the current hybrid is that we are going to sample a lossy key $\pk_{\LF}^{1} \gets \LFGen\left( 1^{d}, 1, 1^{w} \right)$ and use it inside $E_{1}$ from the previous hybrid, instead of using an injective key $\pk_{\LF}^{0} \gets \LFGen\left( 1^{d}, 0, 1^{w} \right)$, which was used in the previous hybrid. Note that in this hybrid, there are at most $2^w$ cosets (that is, some different values of $y$ will have the same coset), by the correctness of the lossy function scheme. By the security of the lossy function scheme, the output of this hybrid is $\left( f\left( w \right), \frac{1}{ f\left( w \right) } \right)$-indistinguishable from the previous hybrid. It follows in particular the success probability of the current process is
$$
:= \epsilon_{2}
\geq \epsilon_{1} - \frac{1}{ f\left( w \right) }
\geq \frac{31\cdot \epsilon}{32} - \frac{1}{ f\left( w \right) }
\geq \frac{30\cdot \epsilon}{32}
= \frac{15 \cdot \epsilon}{16} \enspace .
$$

\begin{itemize}
    \item $\Hyb_{3}$: Using obfuscated (instead of plain) circuits for membership checks inside $D$, by using the security of the obfuscator that obfuscates $D$.
\end{itemize}
Recall that in the previous hybrid, the circuit $D$ executes as follows: $D\left( y, \vecV \right)$ computes $\matA(y)$ by access to $\prf\left( k_{\sf lin},\cdot \right)$, and then checks membership in the dual of $\colspan\left( \matA(y) \right) := S_{y}$. In the current hybrid we make the following change to $D$: We sample an additional P-PRF key $k_{S}$ at the beginning of the hybrid. Given $\matA(y)$, we apply $\prf\left( k_{S}, \cdot \right)$ to obtain pseudorandomness and generate an obfuscation $\Obf_{S_{y}^{\bot}} \gets \iO\left( 1^{\kappa}, S_{y}^{\bot} \right)$ of the circuit that checks membership inside $S_{y}^{\bot}$. The circuit $D$ decides on the membership check of $\vecV$ using the obfuscated circuit $\Obf_{S_{y}^{\bot}}$ instead of the plain circuit $S_{y}^{\bot}$.

Note that by the correctness of the obfuscation scheme (specifically, the inner obfuscation scheme that was used to obfuscate the circuit $S^{\bot}_{y}$, for every $y$), we did not change the functionality of $D$. It follows by the security of the iO that obfuscates $D$, that the obfuscations between the two cases are $\left( f\left( \kappa \right), \frac{1}{f(\kappa)} \right)$-indistinguishable, and thus the same can be said of the outputs of the hybrids. It follows in particular the success probability of the current process is
$$
:= \epsilon_{3}
\geq \epsilon_{2} - \frac{1}{ f\left( \kappa \right) }
\geq \frac{15 \cdot \epsilon}{16} - \frac{\epsilon}{32}
= \frac{29 \cdot \epsilon}{32} \enspace .
$$

\begin{itemize}
    \item $\Hyb_{4}$: Relaxing dual verification inside $D$ to accept a larger subspace $T^{\bot}_{y}$ for every $y$, by using an obfuscated puncturable PRF over subspace hiding.
\end{itemize}
We now consider two circuits in order to describe our current hybrid, both of which are used only inside the circuit $D$. The first circuit $E_{S}\left( k_{S}, \cdot \right)$ is the circuit that given an input $\matA(y)$ applies $\prf\left( k_{S}, \cdot \right)$ to obtain pseudorandomness and generate an obfuscation $\Obf_{S_{y}^{\bot}} \gets \iO\left( 1^{\kappa}, S_{y}^{\bot} \right)$. The second circuit $E_{T}\left( k_{T}, \cdot \right)$ is the circuit that given an input $\matA(y)$ applies $\prf\left( k_{T}, \cdot \right)$ to obtain pseudorandomness for two things: (1) to sample $T_{y}^{\bot} \subseteq \{ 0, 1 \}^{k}$ a superspace that contains $S^{\bot}_{y}$ and has $s$ more dimensions (recall that $S_{y}$ has $n - r$ dimensions, $S^{\bot}_{y}$ has $k - (n - r)$ dimensions and thus $T_{y}^{\bot}$ has $k - (n - r) + s$ dimensions), and (2) to generate an obfuscation $\Obf_{T_{y}^{\bot}} \gets \iO\left( 1^{\kappa}, T_{y}^{\bot} \right)$. 

For concreteness, the way we use the randomness from $\prf\left( k_{T}, \cdot \right)$ is by generating a pseudorandom full rank matrix $M_{y} \in \bbZ_{2}^{(n - r) \times (n - r)}$, multiplying by $\matA(y) \in \bbZ_{2}^{k \times (n - r)}$ to get $\overline{\matA}(y) := \matA(y) \cdot M_{y} \in \bbZ_{2}^{k \times (n - r)}$. We then take the last $n - r - s$ columns of $\overline{\matA}(y)$, denote this sub-matrix by $\overline{\matA}^{(1)} \in \bbZ_{2}^{k \times (n - r - s)}$ and define $\colspan\left( \overline{\matA}^{(1)} \right)^{\bot} := T_{y}^{\bot}$.

Let $\Xs$ the set of all possible cosets that arise from the scheme, which has size $\leq 2^w$ by the lossy function. We would first like to note the indistinguishability, per coset $i \in \Xs$, between $\Obf_{S_{i}^{\bot}}$ and $\Obf_{T_{i}^{\bot}}$, for a known $S_{i}$ and uniformly random appropriate superspace $T_{i}$. Specifically, we look at the indistinguishability for truly random bits for sampling $T$ and for sampling the obfuscation in either of the cases. For this, we would like to use Lemma \ref{lemma:subspace_hiding}. We would like to make sure the lemma's requirements are met so we note the dimensions with $'$. Note that in our case, $k$, $k - n + r$ and $s$, take the role of $k'$, $r'$ and $s'$ respectively, and thus we get $\left( f\left( n - r - s \right), \frac{1}{f\left( n - r - s \right)} \right)$-indistinguishability. 

Note that in the previous hybrid, the circuit $E_{S}$ is used in $D$, and furthermore, the access of $D$ to $E_{S}$ is only as a black box. The change that we make to the current hybrid is that we are going to use $E_{T}\left( k_{ T }, \cdot \right)$ for a freshly sampled P-PRF key $k_{T}$ instead of $E_{S}$ (which is also sampled for a freshly sampled P-PRF key $k_{S}$). Since we are sending an obfuscation $\Ds$ of $D$ and due to $D$ accessing the samplers $E_{S}$, $E_{T}$ only as black boxes, it follows by Lemma \ref{lem:distswap} that the output of the previous hybrid and the current hybrid are $\left( f\left( n - r - s \right), \frac{|\Xs|}{f\left( n - r - s \right)} \right)$-indistinguishable, where $\Xs$ is the set of all cosets, which has size $\leq 2^w$ by the lossy function. It follows in particular that the success probability of the current process is
$$
:= \epsilon_{4}
\geq \epsilon_{3} - \frac{2^w}{f\left( n - r - s \right)}
\geq \frac{29 \cdot \epsilon}{32} - \frac{\epsilon}{32}
= \frac{28\cdot \epsilon}{32}
\enspace .
$$

\begin{itemize}
    \item $\Hyb_{5}$: Asking for sum of collisions to be outside of $T_{y}$, by an obfuscated puncturable PRF over dual-subspace anti-concentration.
\end{itemize}
This hybrid is the same as the previous in terms of execution, but we change the definition of a successful execution, that is, we change the predicate computed on the output of the process. We still ask that $\left( y_0 = y_1 := y \right)$, but instead of only asking the second requirement to be $\left( x_0 \neq x_1 \right)$, we ask for a stronger condition: $\left( \vecU_{0} - \vecU_{1} \right) \in \left( S_{y} \setminus T_{y} \right)$. Note that we are not going to need to be able to efficiently check for the success of the condition, but we'll prove that it happens with a good probability nonetheless.

Let $\epsilon_{5}$ be the success probability of the current hybrid and note that $\Adv$ finds collisions with probability $\epsilon_{4}$ in the previous hybrid $\Hyb_{4}$ (and since this hybrid is no different, the same goes for the current hybrid). Let $\Xs \subseteq \{ 0, 1 \}^{m}$ the image of the lossy function $\LFF\left( \pk_{\LF}, \cdot \right)$ which we use to map our images $y$ to cosets $\left( \matA(y), \vecB(y) \right)$, that is, there are $|\Xs|$ cosets and by the lossyness we know that $|\Xs| \leq 2^w$. For every value $\vecX \in \Xs$ denote by $\epsilon_{4}^{(\vecX)}$ the probability to find a collision on value $\vecX$, or formally, that $y_{0} = y_{1} := y$, $x_{0} \neq x_{1}$ and $\vecX = \LFF\left( \pk_{\LF}, y \right)$. We deduce $\sum_{ \vecX \in \Xs } \epsilon_{4}^{(\vecX)} = \epsilon_{4}$. Let $L$ be a subset of $\Xs$ such that $\epsilon_{4}^{(\vecX)} \geq \frac{\epsilon_{4}}{2 \cdot |\Xs|}$ and note that $\sum_{\vecX \in L} \epsilon_{4}^{(\vecX)} \geq \frac{ \epsilon_{4} }{ 2 }$. We further define $\epsilon_{5}^{(\vecX)}$ as the probability to find a strong (as in the notion of $\epsilon_{5}$) collision on value $\vecX$, or formally, that $y_{0} = y_{1} := y$, $\left( \vecU_{0} - \vecU_{1} \right) \in \left( S_{y} \setminus T_{y} \right)$ and $\vecX = \LFF\left( \pk_{\LF}, y \right)$. Note that $S_{y}$, $T_{y}$ are really functions of $\vecX$ rather than of $y$, so $\left( \vecU_{0} - \vecU_{1} \right) \in \left( S_{\vecX} \setminus T_{\vecX} \right)$ and also observe that $\sum_{\vecX \in \Xs} \epsilon_{5}^{(\vecX)} = \epsilon_{5}$.

We would now like to use Lemma \ref{lemma:dual_subspace_concentration}, so we make sure that we satisfy its requirements. Let any $\vecX \in L$, we know that by definition $\epsilon_{4}^{(\vecX)} \geq \frac{\epsilon_{4}}{2 \cdot |\Xs|}$ and also recall that $\epsilon_{4} \geq \frac{28\cdot \epsilon}{32}$, $|\Xs| \leq 2^w$ and thus 
$$
\epsilon_{4}^{(\vecX)}
\geq
\frac{\epsilon_{4}}{2 \cdot |\Xs|}
\geq
\frac{28\cdot \epsilon}{64} \cdot \frac{1}{2^w}
\geq 
\Omega
\left(
\frac{ \epsilon }{ 2^w }
\right)
\enspace .
$$
Let $s' := s - (n - r - s)$ and for any $\vecX \in L$ let $\ell_{\vecX} := \frac{k^2}{\epsilon_{4}^{(\vecX)}} \leq O\left( \frac{ k^2 \cdot 2^w }{ \epsilon } \right)$. Note that by our Lemma \ref{lemma:bloating_dual_standard_model} statement's assumptions, we have (1) $\frac{(n - r - s) \cdot \frac{1}{\epsilon_{4}^{(\vecX)}}}{2^{s'}} \leq o(1)$ and (2) $\frac{ \ell_{\vecX} \cdot \left( k^3 + \poly\left( n - r - s \right) + T_{\Adv} \right) }{f\left( n - r - s \right)} \leq o(1)$. Since this satisfies Lemma \ref{lemma:dual_subspace_concentration}, it follows that for every $\vecX \in L$ we have $\epsilon_{5}^{(\vecX)} \geq \frac{ \epsilon_{4}^{(\vecX)} }{ 16 \cdot k^{2} } - \frac{1}{f(\kappa)}$, because for each $\vecX \in \Xs$, in order to use Lemma \ref{lemma:dual_subspace_concentration}, we need the randomness for the experiment to be genuinely random, which will necessitate us to invoke the security of the iO and puncturable PRF, which incurs the loss of $\frac{1}{f(\kappa)}$. It follows that
$$
\epsilon_{5}
=
\sum_{\vecX \in \Xs} \epsilon_{5}^{(\vecX)}
\geq 
\sum_{\vecX \in L} \epsilon_{5}^{(\vecX)}
\geq
\sum_{\vecX \in L} \frac{ \epsilon_{4}^{(\vecX)} }{ 16 \cdot k^{2} }
-
\frac{|\Xs|}{f(\kappa)}
\geq
\frac{ \left( \frac{ \epsilon_{4} }{ 2 } \right) }{ 16 \cdot k^{2} }
-
\frac{|\Xs|}{f(\kappa)}
\geq
\frac{ \left( \frac{ 28 \cdot \epsilon }{ 64 } \right) }{ 16 \cdot k^{2} }
-
\frac{|\Xs|}{f(\kappa)}
\geq
\frac{ \epsilon }{ 64 \cdot k^{2} }
\enspace .
$$

\begin{itemize}
    \item $\Hyb_{6}$: For every $y$, de-randomizing $T_{y}$ and defining it as the column span of $\matA(y)^{(1)} \in \bbZ_{2}^{k \times \left( n - r - s \right)}$, the last $n - r - s$ columns of the matrix $\matA(y)$, by using permutable PRPs, security of iO and an obfuscated puncturable PRF.
\end{itemize}
This hybrid is the same as the previous, with the following change. Recall how we compute dual membership check inside the circuit $D$: Given $y$, we compute $\vecX = \LFF\left( \pk_{\LF}, y \right)$ we take the P-PRF key $k_{\sf in}$ and compute $\matA(\vecX) = \matA(y)$, then use the additional P-PRF key $k_{T}$ to generate the random superspace $T_{y}^{\bot}$ of $S_{y}^{\bot}$ and additional randomness, and use both to generate an obfuscation $\Obf_{ T_{y}^{\bot} } \gets \iO\left( 1^{\kappa}, T_{y}^{\bot} \right)$. Also recall that the subspace $T_{y}^{\bot}$ is generated as follows: Generate the matrix $M_{y}$ using the P-PRF, then multiply it by $\matA(y)$ to get $\overline{\matA}(y)$, and then take the last $n - r - s$ columns of it as a basis for $T_{y}$, and $T_{y}^{\bot}$ is defined as the dual of that space.
In the current hybrid we will not sample $M_{y}$ and simply define $\overline{\matA}_{y} := \matA_{y}$. This means that $T_{y}$ is defined to be the column span of $\matA(y)^{(1)}$. We will also not obfuscate the membership check for $T_{y}^{\bot}$ inside the circuit $D$, and check membership by using a basis of $T_{y}^{\bot}$ in the plain. We next define hybrid experiments (which go in the direction from the previous hybrid $\Hyb_{5}$ to the current hybrid $\Hyb_{6}$) and explain why each consecutive pair are appropriately indistinguishable. 

\begin{itemize}
    \item
    \textbf{Using the security of the permutable PRP.}
    Assume we sample all components of our scheme, excluding the key $k_{\sf in}$ for the initial permutation $\Pi$ on $\{ 0, 1 \}^{n}$, and define the following permutation $\Gamma$ on $\{ 0, 1 \}^{n}$. Denote by $h$ the first $r$ bits of the input and by $j$ the last $n - r$ bits of the input to the permutation $\Gamma$. Recall that in the circuits $P, P^{-1}, D$, the value $y$, the coset $\matA(y), \vecB(y)$, the superspace $T_{y}^{\bot}$ and the associated matrix $M_{y} \in \bbZ_{2}^{ (n - r) \times (n - r) }$, are all computed as a function of $r$ bits, which in the construction take the role of $H(x)$. In fact, all of the above variables can be written as a function of $h \in \{ 0, 1 \}^{r}$ instead of as a function of $y$. The permutation $\Gamma$ takes $h$ and computes $M_{h}$ (the matrix for computing $T_{h}$ from the matrix $\matA(h)$), interprets $j$ as a vector in $\bbZ_{2}^{n - r}$, and applies $M_{h}$ to $j$. Note that this change means we are taking the puncturable PRF key $k_{T}$ which is used to sample $T_{y}^{\bot}$ (and formally, is used to sample the pseudorandomness for generating the matrix $M_{y}$ and for obfuscating membership check for $T_{y}^{\bot}$), which previously only existed inside the circuit $D$, and putting it also inside the circuits $P$, $P^{-1}$, because this key is needed in order to compute the permutation $\Gamma$.

    Recall the examples of decomposable permutations in Section \ref{sec:prps}. For every value $h \in \{ 0, 1 \}^{r}$ observe that multiplication by $M_{h}$ is a permutation (and moreover an affine permutation, which is decomposable efficiently).
    Then, $\Gamma$ is a controlled permutation as described in the examples of decomposable permutations in Section \ref{sec:prps}, which is controlled on decomposable permutations. Overall, we deduce that $\Gamma$ is a $\left( 2^{\poly(\secp)}, \poly(\secp) \right)$-decomposable permutation. We use the permutable PRP $\Pi$, and switch to a setting where we use the key $k_{\sf in}^{\Gamma}$ that applies $\Gamma$ to the output of $\Pi$ (and $\Gamma^{-1}$ to the input of $\Pi$), instead of just applying $\Pi$ and its inverse. By the security of the permutable PRP (Lemma \ref{lemma:op_prp_compose}), this change is $\left( f(\kappa), \frac{1}{f(\kappa)} \right)$-indistinguishable. 

    \item
    \textbf{Using the security of outside iO.}
    For every $h \in \bbZ_{2}^{r}$, we make the following change to $P, P^{-1}$. Instead of composing the permutation $\Gamma$ to the output of $\Pi$, it applies $\Pi$ as it is. However, when generating the matrix $\matA(y)$, it multiplies by $M_{y}$ to get $\overline{\matA}\left( y \right) := \matA\left( y \right) \cdot M_{y}$. An additional change we will make not to $P, P^{-1}$ but to the circuit $D$, is that we will not obfuscate the membership check circuit for $T_{y}^{\bot}$ and simply use its available basis. One can observe that we did not change the functionality of $P, P^{-1}, D$ in any of the above changes and thus by the security of the indistinguishability obfuscation that obfuscates $\Ps \gets \iO\left( 1^{\kappa}, P \right)$, $\Ps^{-1} \gets \iO\left( 1^{\kappa}, P^{-1} \right)$, $\Ds \gets \iO\left( 1^{\kappa}, D \right)$, this change is $\left( f(\kappa), \frac{1}{f(\kappa)} \right)$-indistinguishable. 

    \item
    \textbf{Using an obfuscated puncturable PRF argument over the choice of the matrix $\matA(y)$, for every $y$.}
    Note that after we did the last step, our process for generating the cosets, in all tree circuits $P, P^{-1}, D$ is the following: Given $y$ we compute $\vecX$ with the lossy function, and then apply two different puncturable PRFs (that is, with i.i.d keys, $k_{\sf lin}$ and $k_{T}$) to obtain $\left( \matA(\vecX), \vecB(\vecX) \right)$ and the matrix $M_{\vecX} \in \bbZ_{2}^{ (n - r) \times (n - r) }$. We then multiply to get the matrix that we are actually using, i.e., $\overline{\matA}(\vecX) := \matA(\vecX) \cdot M_{\vecX}$. The change we next make is to have one fresh puncturable PRF key $k'_{\sf lin}$ and use it to generate the coset $\left( \matA(\vecX), \vecB(\vecX) \right)$, without further generating the matrix $M_{\vecX}$.

    By a standard argument using an obfuscated puncturable PRF like we used numerous times in this proof (i.e., we use lemma \ref{lem:distswap} and the fact that when using real randomness, the two ways to sample $\matA(y)$ are statistically equivalent), we get that this change is $\left( f(\kappa), 2^w \cdot \frac{1}{f(\kappa)} \right)$-indistinguishable.
\end{itemize}

Observe that after the last change we are exactly in the setting of $\Hyb_{6}$, and we started in $\Hyb_{5}$. It follows in particular that the success probability of the current process is
$$
:= \epsilon_{6}
\geq \epsilon_{5} - \frac{2^w}{f\left( \kappa \right)}
\geq \frac{ \epsilon }{ 64 \cdot k^{2} } - \frac{ \epsilon }{ 128 \cdot k^{2} }
= \frac{ \epsilon }{ 128 \cdot k^{2} }
\enspace .
$$

\begin{itemize}
    \item $\Hyb_{7}$: Going back to using $2^r$ cosets rather than $\leq 2^w$, by moving from lossy mode to injective mode in the lossy function.
\end{itemize}
We make the exact same change we made between $\Hyb_{1}$ to $\Hyb_{2}$, but in the opposite direction. That is we sample an injective key $\pk_{\LF}^{0} \gets \LFGen\left( 1^{d}, 0, 1^{w} \right)$ and use it instead of the previous lossy key $\pk_{\LF}^{1} \gets \LFGen\left( 1^{d}, 1, 1^{w} \right)$, which is used in the previous hybrid. By the exact same argument (which relies on the security of the lossy function), the output of this hybrid is $\left( f(w), \frac{1}{f(w)} \right)$-indistinguishable. It follows in particular the success probability of the current process is
$$
:= \epsilon_{7}
\geq \epsilon_{6} - \frac{1}{f\left( w \right)}
\geq \frac{ \epsilon }{ 128 \cdot k^{2} } - \frac{ \epsilon }{ 256 \cdot k^{2} }
= \frac{ \epsilon }{ 256 \cdot k^{2} }
\enspace .
$$

\begin{itemize}
    \item $\Hyb_{8}$: Stop using the lossy function, by an obfuscated puncturable PRF.
\end{itemize}
We make the exact same change we made between $\Hyb_{0}$ to $\Hyb_{1}$, but in the opposite direction. That is, we drop the lossy function $\LFF$ altogether and apply the PRF to $y$ directly and not to $\vecX$, the output of the lossy function on input $y$. By the exact same argument (which relies on Lemma \ref{lem:distswap}), the output of this hybrid is $\left( f(\kappa), \frac{|\Xs|}{f\left( \kappa \right)} \right)$-indistinguishable, where $\Xs$ is the set of al possible values of $y$, which has size $2^r$ (recall that while the $y's$ are of length $d >> r$, the are a sparse set inside $\{ 0, 1 \}^{d}$ because are padding with zeros). It follows in particular the success probability of the current process is
$$
:= \epsilon_{8}
\geq \epsilon_{7} - \frac{2^r}{f\left( \kappa \right)}
\geq \frac{ \epsilon }{ 256 \cdot k^{2} } - \frac{ \epsilon }{ 512 \cdot k^{2} }
= \frac{ \epsilon }{ 512 \cdot k^{2} }
\enspace .
$$

To conclude, note that the generated obfuscations in the final hybrid $\Hyb_{8}$ form exactly the distribution $\left( \Ps, \Ps^{-1}, \Ds' \right)\gets \widetilde{\gen}\left( 1^{\secp}, n, r, k, s \right)$. This finishes our proof.
\end{proof}

\subsection{Simulating the Dual} \label{section:simulating_dual_standard}
Our next step is to show that an adversary which has access to the dual-free setting can simulate the CRS for an adversary in the restricted setting, where the dual verification check is bloated.

\begin{lemma} \label{lemma:simulating_dual_standard_model}
Let $\secp, n, r, k \in \Nat$ and assume there is a quantum algorithm $\Adv$ running in time $T_{\Adv}$ such that,
\[
\Pr
\left[
\begin{array}{rl}
     & y_0 = y_1 := y , \\
     &\left( \vecU_0 - \vecU_1 \right) \notin \colspan \left( \matA^{(1)}(y) \right)
\end{array}
\;
:
\begin{array}
{rl}
\left( \Ps, \Ps^{-1}, \Ds' \right) & \gets \widetilde{\gen}(1^{\lambda}, n, r, k, s) \\
\left( x_0, x_1 \right) & \gets \Adv\left( \Ps, \Ps^{-1}, \Ds' \right) \\
(y_b, \vecU_b) & \gets \Ps(x_b)
\end{array}
\right]
\geq
\epsilon \enspace .
\]
Assume that the primitives used in Construction \ref{constr:standard} are $\left( f(\cdot), \frac{1}{f(\cdot)} \right)$-secure, and assume $\frac{ \frac{2^r}{\epsilon} }{f\left( \kappa \right)} \leq o(1)$. Then, there is a quantum algorithm $\AdvB$ running in time $T_{\Adv} + \poly\left( \secp \right)$ such that
\[
\Pr
\left[
\left( \overline{y}_{0} = \overline{y}_{1} \right) \land \left( \overline{x}_{0} \neq \overline{x}_{1} \right) \; :
\begin{array}{rl}
\left( \overline{\Ps}, \overline{\Ps}^{-1}, \overline{\Ds} \right) & \gets \widetilde{\gen}\left( 1^{\lambda}, r + s, r, k - (n - r - s), 0 \right) \\
\left( \overline{x}_{0}, \overline{x}_{1} \right) & \gets \AdvB\left( \overline{\Ps}, \overline{\Ps}^{-1} \right) \\
\left( \overline{y}_{b}, \overline{\vecU}_b \right) & \gets \overline{\Ps}(x_b)
\end{array}
\right]
\geq
\frac{\epsilon}{2}
\enspace .
\] 
\end{lemma}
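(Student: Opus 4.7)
The plan is to lift the oracle-model argument in Lemma \ref{lemma:simulating_dual_oracle} to the standard model, replacing the truly random outer permutation and random function by cryptographic analogues and wrapping everything inside indistinguishability obfuscation. Concretely, $\AdvB$ samples a fresh permutable PRP key $k_\Gamma$ for a permutation $\Gamma$ on $\{0,1\}^n$ and a puncturable PRF key $k_C$ used to generate a full-rank matrix $\matC(y)\in\bbZ_2^{k\times k}$ and vector $\vecD(y)\in\bbZ_2^{n-r-s}$ for each $y$. It then writes down circuits $P, P^{-1}, D'$ that mirror the reduction's template: on input $x$, apply $\Gamma(x)$, split the result as $(\overline{x}, \widetilde{x})\in\{0,1\}^{r+s}\times\{0,1\}^{n-r-s}$, call $\overline{\Ps}(\overline{x})$ to get $(y, \overline{\vecU})$, and return $(y, \matC(y)\cdot (\overline{\vecU}, \widetilde{x}+\vecD(y)))$; $P^{-1}$ and $D'$ are constructed analogously, with $D'$ checking that $\vecV^T$ kills the last $n-r-s$ columns of $\matC(y)$. $\AdvB$ obfuscates each circuit using $\iO$, runs $\Adv$ on the triple, and on output $(x_0,x_1)$ outputs $\left(\overline{x}_0,\overline{x}_1\right)$ where $\overline{x}_b$ is the first $r+s$ bits of $\Gamma(x_b)$. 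The oracle-case argument shows that whenever $\Adv$ returns a strong collision (with $(\vecU_0-\vecU_1)\notin\colspan(\matA^{(1)}(y))$), the first $s$ coordinates of its implicit coefficient vectors differ, which forces $\overline{x}_0\neq\overline{x}_1$ together with $\overline{y}_0=\overline{y}_1$.

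The analysis proceeds by a short sequence of hybrids bringing the distribution simulated by $\AdvB$ to $\widetilde{\gen}(1^\lambda,n,r,k,s)$. The first transition uses Lemma \ref{lemma:op_prp_compose} to replace the composition of the outer permutable PRP $\Gamma$ with the inner permutable PRP $\overline{\Pi}$ (embedded inside $\overline{\Ps}$ and lifted to $\{0,1\}^n$ by the ancilla pattern $\overline{\Pi}\otimes\mathbf{I}_{n-r-s}$) by a single fresh permutable PRP $\Pi$; here we rely on the fact that the lifted inner permutation is decomposable via the ``permutations with ancillas'' entry of Figure \ref{fig:decomposable}, so it falls within the class for which our OP-PRP is secure. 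The second transition uses the standard hybrid-over-$y$ argument of Lemma \ref{lem:distswap}, applied to the puncturable PRF that computes $(\matC(y),\vecD(y))$, to replace the pseudorandom composition $(\matC(y)\overline{\matA}(y),\; \matC(y)\overline{\vecB}(y)+\vecD(y))$ by a freshly pseudorandom coset $(\matA(y),\vecB(y))$ of exactly the same distribution; since for each fixed $y$ the two sampling procedures are statistically equivalent, the only loss is the per-$y$ puncturing cost, summed over the $2^r$ possible hash outputs, which is $O(2^r/f(\kappa)) \leq o(\epsilon)$ by hypothesis. The final distribution is exactly $\widetilde{\gen}(1^\lambda,n,r,k,s)$, so $\Adv$ wins with probability at least $\epsilon - o(\epsilon) \geq \epsilon/2$, whence $\AdvB$ wins the dual-free collision game with the same probability.

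The main obstacle is Step 1: the oracle proof enjoys the luxury that composing any fixed outer permutation with a truly random inner permutation yields a truly random permutation, so the reduction is distribution-preserving for free. In the standard model we must instead exhibit the lifted inner permutation (applying $\overline{\Pi}$ only to the first $r+s$ coordinates) as a member of a decomposable class so that Lemma \ref{lemma:op_prp_compose} applies, and we must arrange the circuits so that this lifted permutation actually appears as a fixed outer layer on top of a permutable PRP sitting inside $\overline{\Ps}$. The ancilla construction of Figure \ref{fig:decomposable} provides the needed decomposability, but it must be checked that this structural requirement is compatible with the embedding of $\overline{\Ps}$ as a black box inside $\AdvB$'s obfuscated program. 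A secondary subtlety is bookkeeping the security loss across the $2^r$-sized hybrid in Step 2; this is exactly the reason the lemma statement already requires $2^r/(\epsilon f(\kappa))=o(1)$.
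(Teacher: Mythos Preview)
Your proposal is correct and follows the paper's proof essentially step for step: the same reduction $\AdvB$, then absorb the inner PRP via Lemma~\ref{lemma:op_prp_compose}, switch the coset sampler via Lemma~\ref{lem:distswap} with loss $O(2^r/f(\kappa))$, and (the paper adds one more hybrid you left implicit) discard the remaining inner obfuscation wrapper by outer-$\iO$ security so that only $\overline{\prp}_{\sf out}$ survives as the second PRP.

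Two small corrections. First, the decomposability of the lift $\overline{\Pi}\otimes\mathbf{I}_{n-r-s}$ is not justified by the ``permutations with ancillas'' entry of Figure~\ref{fig:decomposable}: that entry produces a permutation whose behavior is specified only on zero-ancilla inputs, which would not suffice here since $\widetilde{x}$ ranges over all of $\{0,1\}^{n-r-s}$. The correct route is Theorem~\ref{thm:op-prp}, which guarantees the OP-PRP $\overline{\Pi}$ is itself $(2^{\poly(\lambda)},\poly(\lambda))$-decomposable, combined with the ``applying a decomposable permutation to a subset of the bits'' closure. Second, your black-box worry dissolves once you note that the hybrids are distributional rather than something $\AdvB$ must carry out: one first uses outer-$\iO$ security to replace the obfuscated subroutine $\overline{\Ps}$ by its plaintext circuit $\overline{P}$ (functionally equivalent), after which the inner permutation is an explicit fixed layer composed with $\prp(k_\Gamma,\cdot)$ and Lemma~\ref{lemma:op_prp_compose} applies directly.
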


\begin{proof}
We first describe the actions of the algorithm $\AdvB$ (which will use the code of $\Adv$ as part of its machinery) and then argue why it breaks collision resistance with the appropriate probability. Given $\overline{\Ps}, \overline{\Ps}^{-1}$ which comes from $\left( \overline{\Ps}, \overline{\Ps}^{-1}, \overline{\Ds} \right) \gets \widetilde{\gen}\left( 1^{\lambda}, r + s, r, k - (n - r - s), 0 \right)$, the algorithm $\AdvB$ does the following:
\begin{itemize}
    \item
    Sample a P-PRF key $k_{\matC}$ that outputs some sufficient (polynomial) amount of random bits on an $d$-bit input, and sample a permutable PRP key $k_{\Gamma}$ for a PRP on domain $\{ 0, 1 \}^{n}$. Define the following circuits.

    \item 
    $\left( \; y \in \bbZ_{2}^{d}, \; \vecU \in \bbZ_{2}^{k} \; \right) \gets P\left( x \in \bbZ_{2}^{n} \right)$:
    \begin{itemize}
        \item
        $\left(
        \; \overline{x} \in \bbZ_{2}^{r + s},
        \; \widetilde{x} \in \bbZ_{2}^{n - r - s} \;
        \right) \gets \Pi\left( k_{\Gamma}, x \right)$.

        \item 
        $\left(
        \; y \in \bbZ_{2}^{d},
        \; \overline{\vecU} \in \bbZ_{2}^{k - (n - r - s)} \;
        \right) \gets \overline{\Ps}(\overline{x})$.

        \item 
        $\left(
        \; \matC(y) \in \bbZ_{2}^{ k \times k },
        \; \vecD(y) \in \bbZ_{2}^{ n - r - s }
        \right)
        \gets
        \prf\left( k_{\matC}, y \right)$.

        \item 
        $\vecU
        \gets
        \matC(y) \cdot \left( \begin{array}{c} \overline{\vecU} \\ \widetilde{x} + \vecD(y) \end{array} \right)$.
    \end{itemize}

    \item 
    $\left( \; x \in \bbZ_{2}^{n} \; \right)
    \gets
    P^{-1}\left( \; y \in \bbZ_{2}^{d}, \; \vecU \in \bbZ_{2}^{k} \; \right)$:
    \begin{itemize}
        \item 
        $\left(
        \; \matC(y) \in \bbZ_{2}^{ k \times k },
        \; \vecD(y) \in \bbZ_{2}^{n - r - s}
        \right) \gets \prf\left( k_{\matC}, y \right)$.

        \item 
        $\left( \begin{array}{c} \overline{\vecU} \\ \widetilde{x} \end{array} \right) \gets \matC(y)^{-1} \cdot \vecU - \left( \begin{array}{c} 0^{k-(n - r - s)} \\ \vecD(y) \end{array} \right)$.

        \item 
        $\left( \; \overline{x} \in \bbZ_{2}^{r + s} \; \right)
        \gets
        \overline{\Ps}^{-1}\left( y, \overline{\vecU} \right)$.

        \item
        $x
        \gets
        \Pi^{-1}\left( k_{\Gamma}, \left( \overline{x}, \widetilde{x} \right) \right)$.
    \end{itemize}

    \item 
    $D'\left( \; y \in \bbZ_{2}^{d}, \; \vecV \in \bbZ_{2}^{k} \; \right) \in \{ 0, 1 \}$:
    \begin{itemize}
        \item 
        $\left(
        \; \matC(y) \in \bbZ_{2}^{ k \times k },
        \; \vecD(y) \in \bbZ_{2}^{n - r - s}
        \right)
        \gets
        \prf\left( k_{\matC}, y \right)$.

        \item 
        $\matA^{(1)}(y) := $ last $n - r - s$ columns of $\matC(y)$. 

        \item 
        Output $1$ iff $\vecV^{T} \cdot \matA^{(1)}(y) = \mathbf{0}^{n - r - s}$.
    \end{itemize}

    \item 
    Use indistinguishability obfuscation in order to generate the input for $\Adv$: $\Ps \gets \iO\left( 1^{\kappa}, P \right)$, $\Ps^{-1} \gets \iO\left( 1^{\kappa}, P^{-1} \right)$, $\Ds' \gets \iO\left( 1^{\kappa}, D' \right)$.
\end{itemize}

The remainder of the reduction is simple: $\AdvB$ executes $\left( x_0, x_1 \right) \gets \Adv\left( \Ps, \Ps^{-1}, \Ds' \right)$ and then $\left( \overline{x}_{b}, \widetilde{x}_{b} \right) \gets \Pi\left( k_{\Gamma}, x_{b} \right)$ and outputs $\left( \overline{x}_{0}, \overline{x}_{1} \right)$. Assume that the output of $\Adv$ satisfies $y_{0} = y_{1} := y$ and also $\left( \vecU_0 - \vecU_1 \right) \notin \colspan \left( \matA(y)^{(1)} \right)$, and recall that $\matA(y)^{(1)} \in \bbZ_{2}^{k \times (n - r - s)}$ are the last $n - r - s$ columns of the matrix $\matA(y) \in \bbZ_{2}^{k \times (n - r)}$, which is generated by the reduction. We explain why it is necessarily the case that $\overline{x}_{0} \neq \overline{x}_{1}$.

First note that due to how we defined the reduction, $\matA(y) := \matC(y) \cdot \left( \begin{array}{cc} \overline{\matA}(y) & \\ & \Id_{n - r - s} \end{array} \right)$, where $\overline{\matA}(y) \in \bbZ_{2}^{(k - (n - r - s)) \times s}$ is the matrix arising from $\overline{\Ps}, \overline{\Ps}^{-1}$ and $\Id_{n - r - s} \in \bbZ_{2}^{(n - r - s) \times (n - r - s)}$ is the identity matrix of dimension $n - r - s$. Also note that because $\matC(y)$, $\overline{\matA}(y)$ are full rank then $\matA(y)$ is full rank. Now, since $\left( \vecU_0 - \vecU_1 \right) \notin \colspan \left( \matA(y)^{(1)} \right)$ and since $\matA(y)^{(1)}$ are the last $n - r - s$ columns of $\matA(y)$, it follows that if we consider the coordinates vector $\vecX \in \bbZ_{2}^{n - r}$ of $\left( \vecU_0 - \vecU_1 \right)$ with respect to $\matA(y)$, the first $s$ elements are not $0^{s}$. By linearity of matrix multiplication it follows that if we look at each of the two coordinates vectors $\vecX_{0}$, $\vecX_{1}$ (each has $n - r$ bits) for $\vecU_{0}$, $\vecU_{1}$, respectively, somewhere in the first $s$ bits, they differ.
Now, recall how we obtain the first $s$ bits of $\vecX_{b}$ -- this is exactly by applying $\overline{\Pi}$ (the permutation on $\{ 0, 1 \}^{r + s}$ arising from $\overline{\Ps}, \overline{\Ps}^{-1}$) to $\overline{x}_{b}$ and taking the last $s$ bits of the output. Since these bits differ in the output of the permutation, then the preimages have to differ, i.e., $\overline{x}_{0} \neq \overline{x}_{1}$.

Define $\epsilon_{\AdvB}$ as the probability that the output of $\Adv$ indeed satisfies $y_{0} = y_{1} := y$ and also $\left( \vecU_0 - \vecU_1 \right) \notin \colspan \left( \matA^{(1)}(y) \right)$, and it remains to give a lower bound for the probability $\epsilon_{\AdvB}$. We do this by a sequence of hybrids, eventually showing that the view which $\AdvB$ simulates to $\Adv$ is computationally indistinguishable from a sample from $\widetilde{\gen}\left( 1^{\lambda}, n, r, k, s \right)$. More precisely, each hybrid describes a process, it has an output, and a success predicate on the output.

\begin{itemize}
    \item $\Hyb_{0}$: The above distribution $\left( \Ps, \Ps^{-1}, \Ds' \right) \gets \AdvB\left( \overline{\Ps}, \overline{\Ps}^{-1} \right)$, simulated to the algorithm $\Adv$.
\end{itemize}
The first distribution is defined in the reduction above. The output of the process is the output $(x_{0}, x_{1})$ of $\Adv$. The process execution is considered as successful if $y_{0} = y_{1} := y$ and $\left( \vecU_{0} - \vecU_{1} \right) \notin \colspan\left( \matA(y)^{(1)} \right)$.

\begin{itemize}
    \item $\Hyb_{1}$: Not applying the inner permutation $\overline{\prp}_{\sf in}$ (which comes from the circuits $\overline{\Ps}$, $\overline{\Ps}^{-1}$), by using the security of an obfuscated permutable PRP.
\end{itemize}
Let $\overline{\prp}_{\sf in}$ the (first) permutable PRP that's inside $\overline{\Ps}$ (which is the obfuscation of the circuit $\overline{P}$). In the previous hybrid we apply the $n$-bit permutable PRP $\prp\left( k_{\Gamma}, \cdot \right)$ to the input $x \in \bbZ_{2}^n$ and then proceed to apply the inner permutation $\overline{\prp}_{\sf in}\left( \overline{k}_{\sf in}, \cdot \right)$ to the first (i.e. leftmost) $r + s$ output bits of the first permutation $\prp\left( k_{\Gamma}, \cdot \right)$. The change we make to the current hybrid is that we simply apply only $\prp\left( k_{\Gamma}, \cdot \right)$.

Recall two details: (1) By Theorem \ref{thm:op-prp}, the inner permutable PRP $\overline{\prp}_{\sf in }\left( \overline{k}_{\sf in}, \cdot \right)$ is in and of itself $\left( 2^{\poly(\secp)}, \poly(\secp) \right)$-decomposable, and (2) the circuits $P$, $P^{-1}$ which apply the permutations are both obfuscated by iO to be generate the obfuscations $\Ps$, $\Ps^{-1}$. We can treat it as a fixed permutation that acts on the output of the permutation $\Pi\left( k_{\Gamma}, \cdot \right)$ and thus it follows by Lemma \ref{lemma:op_prp_compose} that the current and previous hybrids are computationally indistinguishable, with indistinguishability $\frac{1}{f(\kappa)}$.

\begin{itemize}
    \item $\Hyb_{2}$: For every $y \in \bbZ_{2}^{d}$, taking $\matA(y)$ to be the direct output of the PRF $\prf$, by using an obfuscated punctured PRF.
\end{itemize}
In order to describe the change between the current and previous hybrid we first recall the structure of the circuits from the previous hybrid: In the previous hybrid, for every $y \in \bbZ_{2}^{r}$ we defined $\matA(y) := \matC(y) \cdot \left( \begin{array}{cc} \overline{\matA}(y) & \\ & \Id_{n - r - s} \end{array} \right)$, where $\matC(y) \in \bbZ_{2}^{k \times k}$ comes from the output $\prf\left( k_{\matC}, y \right)$ and $\overline{\matA}(y) \in \bbZ_{2}^{(k - (n - r - s)) \times s}$ is the output of the inner PRF $\overline{\prf}\left( \overline{k}_{\sf lin} \right)$ (which in turn comes from the inside of $\left( \overline{\Ps}, \overline{\Ps}^{-1} \right)$). In the current hybrid we are going to ignore the PRFs $\prf\left( k_{\matC}, y \right)$ and $\overline{\prf}\left( \overline{k}_{\sf lin} \right)$ and their generated values $\matC(y)$, $\vecD(y)$ and $\overline{\matA}(y)$ and instead, sample a fresh key $k_{\matA}$, and on query $y$ generate $\matA(y) \gets \prf\left( k_{\matA}, y \right)$, for $\matA(y) \in \bbZ_{2}^{k \times (n - r)}$.

First, note that the following two ways to sample $\matA(y)$, are statistically equivalent for every $y$: (1) The matrix $\matA(y)$ is generated by sampling a random full-rank matrix $\matC(y) \in \bbZ_{2}^{k \times k}$ and letting $\matA(y)$ be $\matC(y) \cdot \left( \begin{array}{cc} \overline{\matA}(y) & \\ & \Id_{n - r - s} \end{array} \right)$. (2) For every $y \in \bbZ_{2}^{r}$ just sample a full-rank matrix $\matA(y) \in \bbZ_{2}^{k \times (n - r)}$. This means that when truly random bits are used for generating $\matA(y)$ in the two cases, the distributions are statistically equivalent.

To see why the two distributions are computationally indistinguishable, a different description of the previous hybrid can be given as follows: We can consider a sampler $E_{0}$ that for every $y \in \{ 0, 1 \}^d$ samples $\matA(y)$  according to the first algorithm, and another sampler $E_{1}$ that samples $\matA(y)$ according to the second algorithm, and we know that for every $y$ (and recall there are $2^r$ actual values of $y$ which can appear as the output, and not $2^d$) the outputs of $E_{0}$ and $E_{1}$ are statistically indistinguishable. 

Since there are $2^{r}$ valid values for $y$, by Lemma \ref{lem:distswap}, the current hybrid is computationally indistinguishable from the previous, with indistinguishability $\frac{2^{r}}{f(\kappa)}$.

\begin{itemize}
    \item $\Hyb_{3}$: Discarding the inner obfuscations $\left( \overline{\Ps}, \overline{\Ps}^{-1} \right)$ completely, by using the security of the outer obfuscator.
\end{itemize}
The change between the current hybrid and the previous is that in the current hybrid we generate the circuits $P, P^{-1}, D'$ without using $\left( \overline{\Ps}, \overline{\Ps}^{-1} \right)$ at all. Note that this is possible, since in the previous hybrid, we moved to a circuit that did not use access to the circuits $\left( \overline{\Ps}, \overline{\Ps}^{-1} \right)$ any longer during the execution of any of the three circuits $P, P^{-1}, D'$, except from using the second permutation $\overline{\prp}_{\sf out}$, which acts on $\{ 0, 1 \}^d$ and does not need to act from inside the inner obfuscations $\left( \overline{\Ps}, \overline{\Ps}^{-1} \right)$ any more. This means that we can technically move the application of the inner permutation $\overline{\prp}_{\sf out}$ "outside of the inner circuits $\left( \overline{\Ps}, \overline{\Ps}^{-1} \right)$" and the functionality of the circuits $P, P^{-1}, D'$ did not change between the current and the previous hybrids, and thus, by the security of the indistinguishability obfuscator that obfuscates the three circuits, the current hybrid is computationally indistinguishable from the previous one, with indistinguishability of $\frac{1}{f(\kappa)}$.

\paragraph{Finalizing the reduction.}
Finally, observe that the distribution generated in the above $\Hyb_{3}$ is exactly a sample from $\widetilde{\gen}\left( 1^{\lambda}, n, r, k, s \right)$. Also observe that the outputs of $\Hyb_{0}$ and $\Hyb_{3}$ are $O\left( \frac{2^{r}}{f(\kappa)} \right)$-computationally indistinguishable. Recall that by the lemma's assumptions, with probability $\epsilon$, on a sample from $\widetilde{\gen}\left( 1^{\lambda}, n, r, k, s \right)$, the algorithm $\Adv$ outputs a pair $(x_{0}, x_{1})$ of $n$-bit strings such that $y_{0} = y_{1} := y$ and also $\left( \vecU_0 - \vecU_1 \right) \notin \colspan \left( \matA^{(1)}(y) \right)$. It follows that the probability for the same event when the input to $\Adv$ is generated by $\Hyb_{0}$, is at least $\epsilon - O\left( \frac{2^{r}}{f(\kappa)} \right) \geq \frac{\epsilon}{2}$, which finishes our proof.
\end{proof}

\subsection{Hardness of the Dual-free Case from LWE and iO} \label{subsection:standard_model_dualfree_to_lwe}
Here, we explain how to prove the collision resistance of the dual-free case (where the adversary sees $\Ps,\Ps^{-1}$, but not $\Ds$). This will follow the blueprint used in Section~\ref{subsection:dual_free_to_two_to_one_oracle} to reduce the dual-free case to a standard collision-resistance problem (which we recommend going over, before reading the following standard model security reduction). However, a few technical challenges arise. Some of these are due to needing certain steps to be efficient, where they are naively inefficient in Section~\ref{subsection:dual_free_to_two_to_one_oracle}. Another issue is that the underlying 2-to-1 function we use based on LWE~\cite{FOCS:BCMVV18} is not uniformly 2-to-1, but rather has some points that have no collisions and are only $1$-to-$1$. By careful arguments using the novel permutable PRPs as a tool, we are nevertheless able to resolve these issues and prove security. In the following section all arithmetic is implicitly modulo $2$, unless explicitly noted otherwise (and then the arithmetic modulo or lack thereof will be specified).

\paragraph{The LWE-based hash function $\hashL$.}
Here, we recall an \emph{approximate} 2-to-1 function $\hashL$ based on LWE which is a simplified version of the \emph{noisy claw-free trapdoor function} developed in~\cite{FOCS:BCMVV18}. Let $u,v,\sigma,B,\overline{B},q$ be parameters with the relationships described in Equation~\ref{eq:lweparams}.

\begin{align}\label{eq:lweparams}
\sigma&=u^{\Omega(1)} , &
\overline{B}&=\sigma\times u^{\Omega(1)} , \\
B&\geq \overline{B}\times u^{\omega(1)} , &
q&\geq B\times u^{\Omega(1)}\nonumber , \\
v&\geq \Omega(u\log q) , &
\exists&\delta \in (0, 1) : \frac{q}{\sigma} \leq 2^{u^\delta} \enspace .
\nonumber
\end{align}

The keys for the hash function have the form $\pk = \left( \matB, \vecC \right)$, where $\matB \gets \Z_q^{v \times u}$ and $\vecC \gets \matB \cdot \vecS + \vecE \bmod q$ where $\vecS \gets \Z_q^u$ and the entries of $\vecE \in \Z_q^v$ are i.i.d. sampled from discrete Gaussians of width $\sigma$, which in turn are guaranteed (w.h.p) to have entries in $( -\overline{B}, \overline{B} ]$.

We define the function $\hashL\left( \pk, \cdot \right) : \bbZ_q^u \times (-B,B]^v \times \{0,1\} \rightarrow \bbZ_{q}^{v}$ as follows.
$$
\hashL\left( \left( \matB \in \bbZ_{q}^{v \times u}, \vecC \in \bbZ_{q}^{v} \right), \left( \vecT \in \bbZ_{q}^{u}, \vecF \in (-B,B]^v \right), b \in \{ 0, 1 \} \right) =  \matB \cdot \vecT + \vecF + b\cdot \vecC \bmod q
\enspace .
$$
By choosing $B,q$ to be powers of 2, we can map the domain and range to bit-strings.

\paragraph{$\hashL$ is collision resistant.}
Observe that $\hashL\left( \pk, \cdot \right)$ is collision resistant, under the LWE assumption. This is because (with probability exponentially close to $1$ in the security parameter, over sampling $\pk$) a collision is always of the form $\left( \vecT'_0, 0 \right)$, $\left( \vecT'_1, 1 \right)$, and subtraction reveals $\vecS' = \vecT_0 - \vecT_1$, which breaks the assumption that search LWE with sub-exponential modulus/noise ratio is hard for quantum polynomial-time algorithms.

\paragraph{$\hashL$ is approximately $2$-to-$1$.}
Observe that with overwhelming probability over sampling $\pk$, the function $\hashL\left( \pk, \cdot \right)$ is \emph{almost} 2-to-1. Formally, with overwhelming probability over sampling $\vecE$, $\matB$ we have that for any two different lattice points $\matB \cdot \vecU$, $\matB \cdot \vecV$, the sets $\matB \cdot \vecU + (-B, B]^{v} \bmod q$, $\matB \cdot \vecV + (-B, B]^{v} \bmod q$ are non-intersecting (this follows because $v \geq \Omega\left( u\log q \right)$ is known to imply that the lattice is sparse with high probability). This in turn implies that for any point in the set $S := \{ \matB \cdot \vecU + ( -(B - \| \vecE \|), (B - \| \vecE \|) )^{v} \: | \: \vecU \in \bbZ_{q}^{u} \}$, the only two points in the domain of $\hashL$ that reach it are $\left( \vecT', 0 \right)$, $\left( \vecT' - \vecS', 1 \right)$. Also, due to our parameter choices, (1) From $\overline{B} \geq \sigma\times u^{\Omega(1)}$ it follows that with high probability over sampling $\vecE$, the norm $\| \vecE \|$ is bounded by $\overline{B}$ and (2) From $B \geq \overline{B}\times u^{\omega(1)}$ it follows that $\overline{B}$ (and thus also $\| \vecE \|$) is negligible compared to $B$, which makes the set $S$ be an overwhelming fraction of the entire range $\{ \matB \cdot \vecU + ( -B, B ]^{v} \: | \: \vecU \in \bbZ_{q}^{u} \}$.

\paragraph{$\hashL$ has a trapdoor to find preimage sets.}
Similarly to the noisy trapdoor claw-free functions from \cite{FOCS:BCMVV18}, the function $\hashL$ has a trapdoor that can invert it. Specifically, it is known how to efficiently sample a pair $\left( \pk, \td \right)$ where $\pk$ is statistically indistinguishable from an honestly sampled public key for $\hashL$, and given $y$ an element in the image of $\hashL\left( \pk, \cdot \right)$ and $\td$, it is possible to efficiently compute the preimage set of $y$ (which, as we know by now, is of size either $1$ or $2$).

\begin{theorem} \label{thm:reduce2lwe}
Let $n, r, k \in \Nat$ such that $r < n \leq k$ and also $\frac{n}{n - r}$ is an integer (which implies that $\frac{r}{n - r}$ is also an integer). Suppose the hardness of LWE against quantum polynomial-time algorithms, for the parameters as set in Equation~\ref{eq:lweparams}. Note that this in particular implies the collision resistance of the LWE-based function $\hashL$ defined above.

Then, for every quantum polynomial-time algorithm $\Adv$ there exists a negligible function $\negl$, such that the probability to get $\left( \Ps,\Ps^{-1} \right)$ sampled from $\left( \Ps,\Ps^{-1}, \Ds \right) \gets \widetilde{\gen}\left( 1^\secp, n, r, k, 0 \right)$, and then output a collision in $\hash$ (the function derived from $\Ps$), is $\negl\left( \frac{n}{n - r} \right)$.
\end{theorem}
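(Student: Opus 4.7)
The plan is to adapt the oracle reductions from Theorem \ref{thm:dualfreetocol} and Theorem \ref{theorem:n_to_ell_CPF_collision_resistant} to the standard model, using indistinguishability obfuscation and permutable PRPs in place of truly random functions and random self-reducibility. Set $\ell := n-r$ and let $Q:\{0,1\}^n\to\{0,1\}^r$ be the $\ell$-wise parallel repetition of the LWE-based hash $\hashL$. On an overwhelming fraction of its domain $Q$ is exactly $2$-to-$1$ with preimage sets that are $(n-r)$-dimensional cosets, so away from a sparse ``bad'' set $\mathsf{Bad}\subset\{0,1\}^n$ of $1$-to-$1$ inputs, $Q$ is an $(n,r,n-r)$-coset partition function as in Definition \ref{definition:coset_partition_function}. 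Because $\hashL$ admits a trapdoor that is statistically hidden by its public key, I would first switch to trapdoor keys; this lets me efficiently compute an extension $\Pi_Q:\{0,1\}^n\to\{0,1\}^n$ whose first $r$ output bits equal $Q$, on the $2$-to-$1$ part by choosing the label bit and on $\mathsf{Bad}$ arbitrarily. The trapdoor is essential because $\Pi_Q$ will be hardcoded into obfuscated programs in later hybrids.

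Next, I would run a hybrid chain that embeds $\Pi_Q$ and an oracle-style coset description $\matA(y),\vecB(y)$ (following the reduction inside the proof of Theorem \ref{thm:dualfreetocol}) into $\Ps$ and $\Ps^{-1}$. Using Lemma \ref{lemma:op_prp_compose}, I would compose $\prp(k_{\sf in},\cdot)$ and $\prp^{-1}(k_{\sf out},\cdot)$ with appropriate fixed decomposable permutations to reshape their outputs into the layout needed once $Q$ is embedded. Punctured-programming hybrids via Lemma \ref{lem:distswap} would then replace the pseudorandom $\matA(y),\vecB(y)$ by the specific ones dictated by the parallel-repetition structure of $Q$. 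The final step would swap $\prp(k_{\sf in},\cdot)$ for $\Pi_Q$ via $\iO$ security, after which a collision in $\hash$ produced by the adversary unfolds, by undoing the fixed shuffling permutation as in the proof of Theorem \ref{thm:dualfreetocol}, into a collision in $Q$, and hence in $\hashL$ via Theorem \ref{theorem:n_to_ell_CPF_collision_resistant}, contradicting the (quantum) hardness of LWE for the chosen parameters.

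The hard part is that $\prp(k_{\sf in},\cdot)$ and $\Pi_Q$ disagree on $\mathsf{Bad}$, so the final $\iO$ swap cannot be applied directly because the two programs are not functionally equivalent. The plan to overcome this is to first install a \emph{fixed sparse trigger} at $\mathsf{Bad}$ using Lemma \ref{lemma:intervaltrigger}; this is precisely why Construction \ref{constr:standard} sandwiches the coset-computation between two permutable PRPs, with $\prp(k_{\sf in},\cdot)$ playing the role of $k_0$ and $\prp^{-1}(k_{\sf out},\cdot)$ playing the role of $k_1$ in the lemma. After composing the inner PRP with an additional fixed decomposable permutation that aligns the image of $\mathsf{Bad}$ with a contiguous, sub-exponentially-dense interval, Lemma \ref{lemma:intervaltrigger} applies and the trigger can be installed undetectably; the trigger-branch then absorbs the mismatch between $\prp(k_{\sf in},\cdot)$ and $\Pi_Q$ on $\mathsf{Bad}$, letting the $\iO$ swap go through. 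I expect the most delicate bookkeeping to be verifying that the bad set of $\hashL^{\otimes \ell}$ really does admit such an interval-aligned representation (so Lemma \ref{lemma:intervaltrigger} is applicable), and controlling the accumulated security loss from the interval-trigger and punctured-programming hybrids against the sub-exponential LWE modulus-to-noise ratio.
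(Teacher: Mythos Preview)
Your overall shape is right—embed the $(n-r)$-fold repetition $Q$ of $\hashL$, use its trapdoor in the hybrids, and handle the 1-to-1 points via Lemma~\ref{lemma:intervaltrigger}—but there is a genuine gap in your treatment of the range of $Q$. You write $Q:\{0,1\}^n\to\{0,1\}^r$ and propose to complete it to a permutation $\Pi_Q:\{0,1\}^n\to\{0,1\}^n$ that you then swap in for $\prp(k_{\sf in},\cdot)$. But the LWE function $\hashL$ is \emph{expanding}: its output lives in $\Z_q^v$, which has $v\log q$ bits, strictly more than the $u\log q + v\log(2B)+1$ input bits (since $q\geq B\cdot u^{\Omega(1)}$). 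Hence $Q$ maps $\{0,1\}^n$ into $\{0,1\}^\ell$ with $\ell=(n-r)\cdot v\log q>n$, and there is no permutation $\Pi_Q$ on $\{0,1\}^n$ whose ``first $r$ bits equal $Q$''. This is exactly why Construction~\ref{constr:standard} pads $H(x)$ to $d\gg n$ bits before applying the second PRP, and your plan to absorb everything into the first PRP cannot work as stated.

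The paper's fix is not to build $\Pi_Q$ at all, but to keep $Q$ sandwiched between the two PRPs and peel it off in layers. Concretely, the reduction first writes $P(x)$ as: apply $\prp(k_{\sf in},\cdot)$ to get $\vecW$; apply $Q$ to get $\vecA$; pad and apply $\prp^{-1}(k_{\sf out},\cdot)$ to get $y$; output $(y,\matC_y\vecW+\vecD_y)$. The trapdoor is used (Hybrids~1--2) to replace $\matC_y\vecW+\vecD_y$ by $\matA_y\vecZ+\vecB_y$ with $\vecZ$ the last $n-r$ bits of $\vecW$. Then $Q$ is rewritten (Hybrid~3) as a four-layer circuit: (1) a permutation $\pi$ on $\{0,1\}^n$ that adds $(\vecS_i,\vecE_i)$ controlled on $b_i$, reducing $\vecF_i\bmod(-B,B]$; (2) computation of an $(n-r)$-bit correction signal recording where the reduction wrapped; (3)--(4) two \emph{injective expanding} maps $\tau,\gamma$ that undo the wrap and compute $\matB_i\vecT_i''+\vecF_i''$. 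Layer~(1) is decomposable and absorbed into $\prp(k_{\sf in},\cdot)$ via Lemma~\ref{lemma:op_prp_compose}; layers~(3),(4) are absorbed into $\prp^{-1}(k_{\sf out},\cdot)$ on the larger $d$-bit domain via the ``injective function with ancilla'' decomposability (this is where the expansion goes). What remains is exactly layer~(2), which is already an interval test on $\vecF_i'$ sandwiched between two PRPs—so Lemma~\ref{lemma:intervaltrigger} applies directly, with no extra alignment step needed. Your proposal needs this split of $Q$ across \emph{both} PRPs rather than trying to push it entirely into the first one.
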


\begin{proof}
Suppose there exists an adversary $\As$ which, given $\left( \Ps,\Ps^{-1} \right)$ as sampled from $\left( \Ps,\Ps^{-1}, \Ds \right) \gets \widetilde{\gen}\left( 1^\secp, n, r, k, 0 \right)$ (recall that $\Adv$ gets the sampled circuits without the dual $\Ds$), finds a collision in the associated function $\hash(x) := \prp^{-1}\left( k_{\sf out}, \; H(x) || 0^{d-r} \right)$ with non-negligible probability $\epsilon$. We will describe an adversary $\AdvB$ that violates the quantum hardness of LWE, using the adversary $\Adv$.

\paragraph{The collision resistant, approximate coset partition, trapdoor hash $\hashQ$.}
We will use the above function $\hashL$ in order to construct the function $\hashQ$.
We start with defining an instance of $\hashL$: We choose $u, q, B, v$ in a way which satisfies Equation~\ref{eq:lweparams} and also such that $2 \cdot q^u \cdot \left( 2B \right)^v = 2^{n/(n-r)}$. In other words, this makes the domain of the hash function $\hashL$ be exactly $\frac{n}{n - r}$ bits, and its output size be $v \cdot \log_{2}(q)$ bits (recall we assume that $\frac{n}{n - r}$ is an integer and we set $q$ to be a power of $2$). There is one more requirement for our parameters: We take $d$ (recall that $d$ is the number of bits that the second permutation $\Pi\left( k_{\sf out}, \cdot \right)$ acts on) to be at least the sum of the sizes (in bits) of the input and output of $\hashL$, multiplied by $(n - r)$, that is,
$$
\left( n - r \right) \cdot
\left(
\frac{n}{n - r}
+
v \cdot \log(q)
\right)
=
n
+
\left( n - r \right) \cdot v \cdot \log(q)
\leq
d
\enspace .
$$

We set $\hashQ$ to be a $(n - r)$-parallel repetition of $\hashL$, namely we sample $n - r$ i.i.d. public keys $\pk_{1}, \cdots, \pk_{n - r}$ for $\hashL$, which form the key $\pk$ for $\hashQ$, and compute the $n - r$ outputs of $\hashL$ given $n - r$ inputs.\footnote{The construction of $\hashQ$ from $\hashL$ is almost identical to the oracle construction of a coset partition function from Theorem \ref{theorem:n_to_ell_CPF_collision_resistant}, where we construct a collision-resistant coset partition function from a random $2$-to-$1$ function.}  It follows that $\hashQ$ maps $n$ input bits to $\left( n - r \right)\cdot v \cdot \log_{2}\left( q \right)$ output bits, and furthermore by our setting of parameters, $d$ is at least the sum of sizes (in bits) of the input and output sizes of $\hashQ$ -- this will come in handy later in the proof.

As a further specification of the format for computing $\hashQ$:, the input $\vecW \in \{ 0, 1 \}^{n}$ to $\hashQ$ is given in two parts: The first, vector part, consists of $r$ bits and is written $\left( \vecW_{1}, \cdots, \vecW_{n - r} \right)$ (note $\frac{r}{n - r} = \frac{n}{n - r} - 1$) where for each $i \in [n - r]$ we have $\vecW_{i} \in \{ 0, 1 \}^{\frac{r}{n - r}}$, and the second, coordinates part, consists of $n - r$ bits $\left( b_{1}, \cdots, b_{n - r} \right)$. For computing $\hashQ$, for every $i \in [n - r]$ we take $\vecW_{i}$ and $b_{i}$ and apply $\hashL\left( \pk_{i}, \cdot \right)$. We then take the output $\vecA_{i} \in \{ 0, 1 \}^{v \cdot \log(q)}$ and set it as the $i$-th packet in the output $\vecA \in \{ 0, 1 \}^{(n - r) \cdot v \cdot \log(q)}$ of $\hashQ$.

Note the following details about $\hashQ$:
\begin{itemize}
    \item
    \textbf{Collision resistance.} A collision in $\hashQ$ with input size $n$ constitutes a collision in $\hashL$ with input size $\frac{n}{n - r}$. It follows that based on the quantum hardness of LWE, for every quantum polynomial time adversary there is a negligible function $\negl$ such that the probability to find a collision in $\hashQ$ is bounded by $\negl\left( \frac{n}{n - r} \right)$.

    \item 
    \textbf{Approximate coset partition.} Due to $\hashL$ being approximately 2-to-1 and $\hashQ$ being a parallel repetition of it, it follows that $\hashQ$ is an approximate coset partition function (Definition \ref{definition:coset_partition_function}). Specifically, we know that for an overwhelming fraction of the outputs of $\hashL$, the preimage set is a coset of dimension $1$, thus it follows that for an overwhelming fraction of the outputs of $\hashQ$, the preimage set is a coset of dimension $n - r$.

    \item 
    \textbf{Existence of a (secret) trapdoor that allows efficient computation of coset description.}
    The concatenation of the trapdoors $\left( \td_{1}, \cdots, \td_{n - r} \right)$ of the $n - r$ instances of $\hashL$ gives us a trapdoor $\td$ for $\hashQ$ in the following sense: Given an element $\vecA$ in the image of $\hashQ$ and the trapdoor, one can compute a coset description of the preimage set of $\vecA$.
    
    To see this, recall that the output of $\hashQ$ is given by the concatenation of the $n - r$ outputs of $\hashL$, and for each of them, the number of preimages is either $1$ or $2$, and furthermore, these can be computed by the independent trapdoors for each of the instances of $\hashL$. Eventually, the computed coset description of $\vecA$ is given as follows: For each $i \in [n - r]$ and $b \in \{ 0, 1 \}$, let $\vecW_{i, b} \in \{ 0, 1 \}^{\frac{n}{n - r}}$ the preimage of the $i$-th part $\vecA_{i} \in \{ 0, 1 \}^{v \cdot \log\left( q \right)}$ (of $\vecA \in \{ 0, 1 \}^{(n - r) \cdot v \cdot \log_{2}\left( q \right)}$), in the function $\hashL\left( \pk_{i}, \cdot \right)$, such that it ends with $b$. If there is no such preimage that ends with $b$ we set $\vecW_{i, b} = 0^{\frac{n}{n - r}}$. Note that while it isn't necessarily the case that there are two images (one for each $b \in \{ 0, 1 \}$), it is guaranteed that at least one of them always exists.
    \begin{itemize}
        \item The coset shift is set to
        $$
        \overline{\vecB}_{\vecA}
        :=
        \sum_{i \in [n - r]} \vecW'_{i, 0}
        \enspace ,
        $$
        where the vector $\vecW'_{i, 0} \in \bbZ_{2}^{n}$ is derived from $\vecW_{i, 0} \in \bbZ_{2}^{\frac{n}{n - r}}$ by using the previously described input format to $\hashQ$. Namely, we split $\vecW'_{i, 0} \in \bbZ_{2}^{n}$ into two parts, first part of $r$ bits and a second part of $n - r$ bits. We put the first $\frac{n}{n - r} - 1 = \frac{r}{n - r}$ bits of $\vecW_{i, 0}$ as the $i$-th packet in the first, $r$-bit part of $\vecW'_{i, 0}$, and the remaining bit of $\vecW_{i, 0}$ we set as the $i$-th bit in the second, $n - r$-bit part of $\vecW'_{i, 0}$, and the rest of the bits of $\vecW'_{i, 0}$ are set to $0$.

        \item 
        As for the subspace associated with the coset, denote by $\overline{\matA}_{\vecA} \in \bbZ_{2}^{n \times n - r}$ the subspace basis, and for $i \in [n - r]$ we set the $i$-th column of the basis to be $\vecW'_{i, 0} + \vecW'_{i, 1}$, where for $b \in \{ 0, 1 \}$, the vector $\vecW'_{i, b} \in \{ 0, 1 \}^{n}$ is derived from $\vecW_{i, b} \in \{ 0, 1 \}^{ \frac{n}{n - r} }$ in the same way described above when computing the coset shift.
    \end{itemize}

    \item 
    \textbf{Public efficient computation of coordinate vector for preimage element.}
    $\hashQ$ allows anyone to efficiently compute coordinate vectors of preimages with respect to its corresponding cosets. Given any input $\vecW \in \{ 0, 1 \}^{n}$ to $\hashQ$, one can take the last $n - r$ bits of $\vecW$ to obtain a coordinates vector $\vecZ \in \bbZ_{2}^{n - r}$. Then, if one further gets access to the coset description $\left( \overline{\matA}_{\vecA}, \overline{\vecB}_{\vecA} \right)$, we have the equality 
    $$
    \vecW
    =
    \overline{\matA}_{\vecA} \cdot \vecZ + \overline{\vecB}_{\vecA} \enspace .
    $$
    In that sense, the vector $\vecZ$ which is efficiently computable from the preimage $\vecW$ is the coordinates vector of $\vecW$ with respect to the coset $\left( \overline{\matA}_{\vecA}, \overline{\vecB}_{\vecA} \right)$.
\end{itemize}

Now, given our definition of the hash function $\hashQ$ and its above special properties, we are ready to describe our reduction.

\paragraph{The reduction $\AdvB$ from collision finding in $\hashQ$ to collision finding in $\hash$.}
The following reduction is an adaptation of the oracle reduction from Theorem \ref{thm:dualfreetocol} to the standard model, using LWE, obfuscation and permutable PRPs.
Given $\pk = \left( \pk_{1}, \cdots, \pk_{n - r} \right)$ a public key for $\hashQ$ (where $\pk_{i}$ is the i.i.d. sampled public key for $\hashL$), the reduction $\AdvB$ samples permutable PRP keys $k_{\sf in}$, $k_{\sf out}$ and a puncturable PRF key $k'_{\sf lin}$. Denote by $\ell := \left( n - r \right) \cdot v \cdot \log_{2}(q)$ the output size of $\hashQ$, and we next define the functions $P$, $P^{-1}$ which we then obfuscate to get the circuits $\Ps$, $\Ps^{-1}$, which we will feed to $\As$.

\begin{itemize}
    \item $P\left( x \in \{ 0, 1 \}^n \right)$:
    \begin{enumerate}
        \item 
        $\vecW \gets \Pi\left( k_{\sf in}, x \right)$,
    
        \item
        $\vecA \gets \hashQ\left( \pk, \vecW \right)$,

        \item 
        $y \gets \Pi^{-1}\left( k_{\sf out}, \left( \vecA || 0^{d - \ell} \right) \right)$,

        \item 
        $\left( \matC_y \in \bbZ_{2}^{k \times n}, \vecD_y \in \bbZ_{2}^{k} \right) \gets \prf\left( k'_{\sf lin}, y \right)$,

        \item 
        $\vecU \gets \matC_{y} \cdot \vecW + \vecD_{y}$,

        \item 
        Output $\left( y \in \{ 0, 1 \}^{d}, \vecU \in \bbZ_{2}^{k} \right)$.
    \end{enumerate}

    \item $P^{-1}\left( y \in \{ 0, 1 \}^{d}, \vecU \in \bbZ_{2}^{k} \right)$:
    \begin{enumerate}
        \item
        $x \gets
        \begin{cases}
        \Pi^{-1}\left( k_{\sf in}, \vecW \right)
        &\text{ $\exists \vecW \in \bbZ_{2}^{n}$ such that $\matC_y \cdot \vecW + \vecD_y = \vecU$} \\
        \bot
        &\text{ if no such $\vecW$ exists}
        \end{cases}$
        
        \item
        Output
        $\begin{cases}
        x &\text{ if $x \neq \bot$ and $y = \Pi^{-1}\left( k_{\sf out}, \left( \vecA || 0^{d - \ell} \right) \right)$, for  $\vecA \gets \hashQ\left( \pk, \vecW \right)$ } \\
        \bot &\text{ otherwise }
        \end{cases}$
    \end{enumerate}

\end{itemize}

$\AdvB$ obfuscates the two circuits to get $\Ps$, $\Ps^{-1}$ and executes $\left( x_{0}, x_{1} \right) \gets \Adv\left( \Ps, \Ps^{-1} \right)$. The output of $\AdvB$ is $\left( \vecW_{0} := \Pi\left( k_{\sf in}, x_{0} \right), \vecW_{1} := \Pi\left( k_{\sf in}, x_{1} \right) \right)$ as a collision in $\hashQ$. Let $\epsilon_{\AdvB}$ the probability that $\AdvB$ outputs a collision in $\hashQ$ and we will show that if $\epsilon$ is non-negligible then so is $\epsilon_{\AdvB}$. We next define a sequence of hybrid experiments. Each hybrid defines a computational process, an output of the process and a predicate computed on the process output. The predicate defines whether the (hybrid) process execution was successful or not.

\begin{itemize}
    \item
    $\Hyb_{0}$: The original execution of the reduction $\AdvB$. 
\end{itemize}
Here, $\pk$ is sampled as a public key for the function $\hashQ$ and we execute $\left( \vecW_{0}, \vecW_{1} \right) \gets \AdvB\left( \pk \right)$. The output of this hybrid is $\left( \vecW_{0}, \vecW_{1} \right)$ and the process is defined as successful if the pair constitutes a collision in $\hashQ\left( \pk, \cdot \right)$. By definition, the success probability of this hybrid is $\epsilon_{0} := \epsilon_{\AdvB}$. The program $P$ in this hybrid is described in Figure \ref{figure:hyb_0}.

\begin{figure}
\centering\includegraphics[width=12cm]{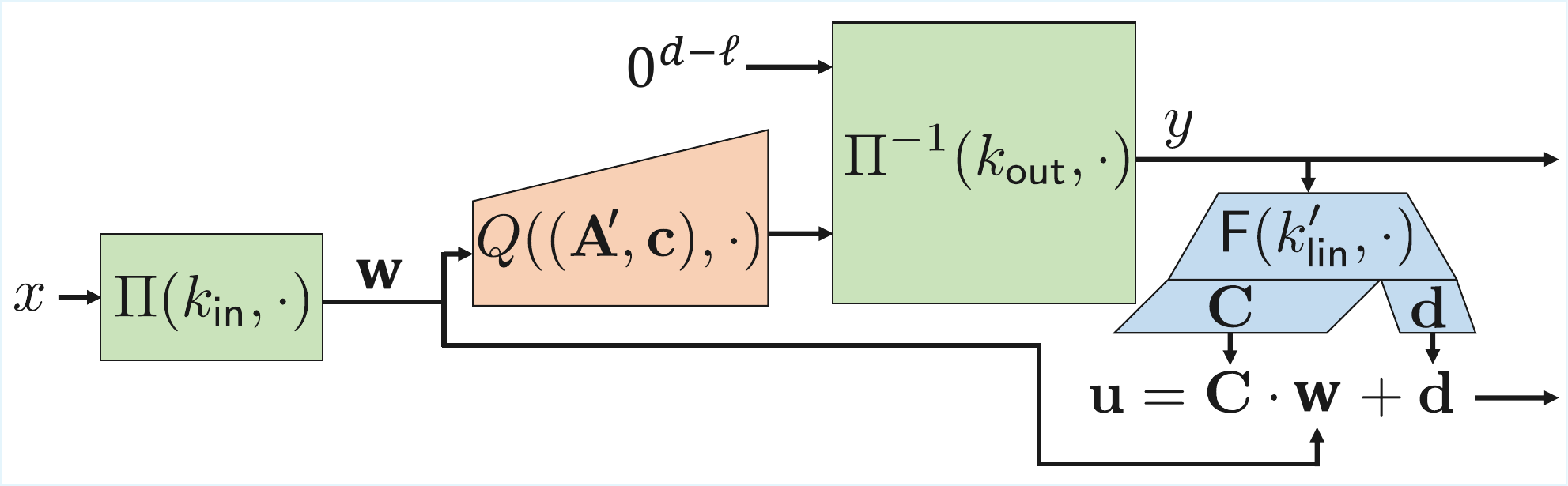}
\caption{\label{figure:hyb_0}
The obfuscated program $P$ in Hybrid 0. We simply apply $\hashQ$ to the output $\vecW$ of $\prp\left( k_{\sf in}, \cdot \right)$. The output $\vecU$ is set to a pseudorandomly-generated affine function of $\vecW$.} 
\end{figure}

\begin{itemize}
    \item
    $\Hyb_{1}$: Generating $\vecU$ from a coordinates vector $\vecZ$ and the coset $\left( \overline{\matA}_{\vecA}, \overline{\vecB}_{\vecA} \right)$ instead of the preimage $\vecW$, by using the trapdoor of $\hashQ$ and the security of the iO. Also, changing back to asking for collisions in the original $\Ps$.
\end{itemize}
Here, the public key $\pk$ for $\hashQ$ is sampled together with a trapdoor $\td$. The change we make to the following hybrid is this: In the previous hybrid, we took $\vecA$ and computed $y \in \{ 0, 1 \}^{d}$, then computed $\left( \matC_y, \vecD_y \right) \gets \prf\left( k'_{\sf lin}, y \right)$ and set $\vecU \gets \matC_{y} \cdot \vecW + \vecD_{y}$. Here, we compute $\left( \matC_y, \vecD_y \right)$ all the same, but instead of using $\vecW$ to compute $\vecU$, we use the coordinates vector $\vecZ$ of $\vecW$, the output $y$ and the trapdoor $\td$. In other words, we will maintain the same functionality but move to a circuit structure that more resembles the original distribution $P$, $P^{-1}$, by (1) having a circuit that given $y \in \{ 0, 1 \}^{d}$ computes the pseudorandom coset description $\left( \matA_{y}, \vecB_{y} \right)$, and the vector $\vecU$ is generated by $\matA_{y} \cdot \vecZ + \vecB_{y}$ where $\vecZ$ will be the last $n - r$ output bits of the first permutation $\Pi\left( k_{\sf in}, \cdot \right)$.

Specifically, recall that given an image $\vecA \in \{ 0, 1 \}^{(n - r) \cdot v \cdot \log(q)}$ of $\hashQ\left( \pk, \cdot \right)$ and the trapdoor $\td$, we can efficiently compute the coset $\left( \overline{\matA}_{\vecA}, \overline{\vecB}_{\vecA} \right)$ and also, given an input $\vecW$ we can compute $\vecZ$ the coordinates vector of $\vecW$ with respect to the coset of the output $\vecA$. 
Given $y$ we use the key $k'_{\sf lin}$ to compute $\vecA$ and then $\td$ to compute $\left( \overline{\matA}_{\vecA}, \overline{\vecB}_{\vecA} \right)$. Now, we take $\vecZ$ the last $n - r$ bits of $\vecW$ and set $\vecU \gets \matA_{y} \cdot \vecZ + \vecD_{y}$ for $\matA_{y} := \matC_{y} \cdot \overline{\matA}_{\vecA}$, $\vecD_{y} := \vecD_{y} + \matC_{y} \cdot \overline{\vecB}_{\vecA}$.

The sampling of $\pk$ with a trapdoor $\td$ is statistically indistinguishable from sampling $\pk$ without one, and furthermore by the correctness of the trapdoor, the functionality of the circuits did not change. Thus, the circuits given to $\Adv$ in $\Hyb_{0}$ are computationally indistinguishable by the security of the iO that obfuscates the circuits $P$, $P^{-1}$. It follows in particular that the success probability of the current process is $:= \epsilon_{1}$ such that $\epsilon_{\AdvB} \geq \epsilon_{1} - \negl\left( \secp \right)$.

Another change that we make to this hybrid the definition of successful execution: Instead of asking for collisions in $\hashQ$, we ask for collisions in the original $\Ps$. Since we move back and forth between collisions in $\hashQ$ and $\Ps$ using a permutation, any collision in $\hashQ$ can be translated into a collision in $\Ps$. Formally the reduction $\AdvB$ changes minimally: After executing $\Adv$ and obtaining $\left( x_{0}, x_{1} \right)$ we do not apply $\Pi\left( k_{\sf in}, \cdot \right)$ in the end to obtain $\vecW_{0}$, $\vecW_{1}$. The success probability still satisfies $\epsilon_{\AdvB} \geq \epsilon_{1} - \negl\left( \secp \right)$.

\begin{itemize}
    \item
    $\Hyb_{2}$: Discarding the trapdoor $\td$ and computing the coset $\left( \matA_{y}, \vecB_{y} \right)$ as a function of $y$ alone, using an obfuscated puncturable PRF.
\end{itemize}
The change we will make to the following hybrid is to the circuit that samples the coset $\left( \matA_{y}, \vecB_{y} \right)$. Specifically, instead of sampling the coset through the process from previous hybrid (setting $\matA_{y} := \matC_{y} \cdot \overline{\matA}_{\vecA}$, $\vecD_{y} := \vecD_{y} + \matC_{y} \cdot \overline{\vecB}_{\vecA}$ for the pair $\left( \overline{\matA}_{\vecA}, \overline{\vecB}_{\vecB} \right)$ arising from the trapdoor $\td$ and the pseudorandomly generated $\left( \matC_y, \vecD_y \right) \gets \prf\left( k'_{\sf lin}, y \right)$), we just sample a fresh puncturable PRF key $k_{\sf lin}$ and sample the coset description from scratch $\left( \matA_y \in \bbZ_{2}^{k \times (n - r)}, \vecB_y \in \bbZ_{2}^{k} \right) \gets \prf\left( k_{\sf lin}, y \right)$.

Note that in the first distribution (arising from the sampling process of $\Hyb_{1}$), the matrix $\overline{\matA}_{\vecA}$ is full rank and thus for a truly random $\left( \matC_y, \vecD_y \right)$, the pair $\left( \matA_y, \vecB_{y} \right)$ is a truly random coset. To see why the two distributions are computationally indistinguishable, a different description of the previous hybrid can be given as follows: We can consider a sampler $E_{0}$ that for every $y \in \{ 0, 1 \}^d$ samples $\matA(y)$  according to the first algorithm, and another sampler $E_{1}$ that samples $\matA(y)$ according to the second algorithm, and we know that for every $y$ (and recall there are at most $2^\ell$ actual values of $y$ which can appear as the output, and not $2^d$) the outputs of $E_{0}$ and $E_{1}$ are statistically equivalent. 

Since there are $\leq 2^{\ell}$ valid values for $y$, by Lemma \ref{lem:distswap}, the current hybrid is computationally indistinguishable from the previous, with indistinguishability $\frac{2^{\ell}}{f(\kappa)}$. It follows in particular that the success probability of the current process is $:= \epsilon_{2}$ such that $\epsilon_{\AdvB} \geq \epsilon_{2} - \negl\left( \secp \right)$. Note that once we moved to this hybrid, the program $P$ in the current hybrid applies the original circuit $P$ from the Construction \ref{constr:standard}, only applying $\hashQ$ in the middle, between the two permutations $\Pi\left( k_{\sf in}, \cdot \right)$ and $\Pi\left( k_{\sf out}, \cdot \right)$, this is described in Figure \ref{figure:hyb_2}. The rest of our hybrids are intended to get rid of $\hashQ$ and gradually embed it into the obfuscated circuits.

\begin{figure}
\centering\includegraphics[width=12cm]{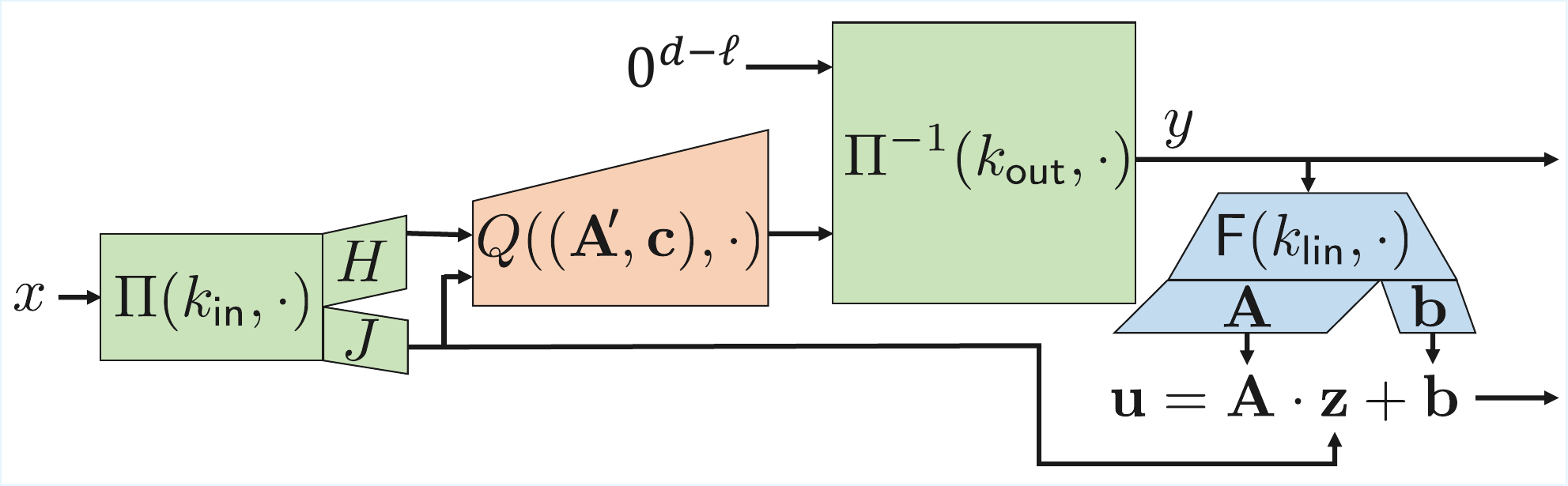}
\caption{\label{figure:hyb_2}
The obfuscated program $P$ in Hybrid 2. The output $\vecW$ of $\prp\left( k_{\sf in}, \cdot \right)$ is still the input to $\hashQ$. However, instead of generating the output vector $\vecU$ as a function of $\vecW$, it is a pseudorandomly-generated affine function of $\vecZ$, the last $n - r$ bits of $\vecW$.}
\end{figure}

\begin{itemize}
    \item
    $\Hyb_{3}$: Describing a circuit $C$ with equivalent functionality to $\hashQ$ and switching to using it instead, by using the security of iO.
\end{itemize}
We next describe a circuit $C : \{ 0, 1 \}^{n} \rightarrow \{ 0, 1 \}^{\ell}$ with identical functionality (but different circuit) than $\hashQ$. The goal of using $C$ is that (1) we can switch to it securely by the security of the iO that obfuscates the circuits $P$, $P^{-1}$, and (2) the structure of $C$ will be described in steps, such that each step is securely removable. Specifically, by the security of an obfuscated permutable PRP (in some instances using the security of the first PRP $\Pi\left( k_{\sf in}, \cdot \right)$, and sometimes the second one $\Pi\left( k_{\sf out}, \cdot \right)$) we will be able to provably remove it from the insides of the programs $P$, $P^{-1}$, using these steps. Recall that the only difference between the current hybrid and the original distribution over $P, P^{-1}$ (from Construction \ref{constr:standard}) is the circuit $C$ in between the permutations $\Pi\left( k_{\sf in}, \cdot \right)$ and $\Pi\left( k_{\sf out}, \cdot \right)$, and thus, once we remove all of its parts indistinguishably, our proof will be complete.

Recall that the original $\hashQ$ is a $(n - r)$-parallel repetition of $\hashL : \{ 0, 1 \}^{\frac{n}{n - r}} \rightarrow \{ 0, 1 \}^{v \cdot \log(q)}$, where for each $\hashL_{i}\left( \pk_{i}, \cdot \right) := L_{i}\left( \left( \matB_{i}, \vecC_{i} \right), \cdot \right)$ we have $\vecC_{i} = \matB_{i} \cdot \vecS_{i} + \vecE_{i}$ for the secret vectors $\vecS_{i} \in \bbZ_{q}^{u}$, $\vecE_{i} \in ( -B, B ]^{v}$. As we no longer rely on the security of $\hashL$ (nor $\hashQ$) for the remainder of the hybrids, we can use this knowledge in the following circuit $C$, which has the following 4 steps.
\begin{enumerate}
    \item
    \textbf{Permutation $\pi$, correctly acts on almost all inputs.}
    We apply a permutation $\pi$ which acts on $\{ 0, 1 \}^{n}$, and for each of the $n - r$ slices (each of which takes $\frac{n}{n - r}$ bits) acts as follows:
    \begin{align*}
    \left( \vecW_i \in \{ 0, 1 \}^{\frac{r}{n - r}}, b_i \in \{ 0, 1 \} \right)
    &:=
    \left( \vecT_i \in \bbZ_{q}^{u}, \vecF_i \in ( -B, B ]^{v}, b_i \in \{ 0, 1 \} \right)
    \\
    &\mapsto
    \left(
    \vecT_i' = \vecT_i + b_i \cdot \vecS_{i} \bmod q ,
    \; \vecF_i' = \vecF_{i} + b_i \cdot \vecE_{i} \bmod (-B,B],
    \; b_{i}
    \right)
    \\
    &:=
    \left(
    \vecW'_i \in \{ 0, 1 \}^{\frac{r}{n - r}},
    \; b_{i}
    \right)
    \enspace ,
    \end{align*}
    where $h \bmod (-B,B]$ means the unique integer $h' \in ( -B, B ]$ such that $h-h'$ is a multiple of $2B$. Thanks to the fact that we sum $\bmod \; (-B,B]$ in $\vecF'_{i}$ we can say that $\pi$ is indeed a permutation, but observe that for an exponentially small fraction of the inputs (specifically, inputs such that $\vecF_i' = \vecF_{i} + b_i \cdot \vecE_{i} \bmod \; (-B,B]$ does not equal $\vecF_{i} + b_i \cdot \vecE_{i} \bmod \; q$), we are losing information in $\vecF'_{i}$ with respect to the larger range $q$. Our next two steps of $C$ are intended to correct this loss of information.

    \item 
    \textbf{Computing correction signal for faulty inputs.}
    This step maps $n$ bits to $n$ bits in the following way, which sometimes erases information.
    Given $n$ input bits $\vecW \in \{ 0, 1 \}^n$ first we allocate $n - r$ new bits $w \in \{ 0, 1 \}^{n - r}$, initially set to value $0^{n - r}$. Intuitively, for $i \in [n - r]$ the bit $w_{i}$ records whether or not the adding of $\vecE_{i}$ to $\vecF_{i}$ modulo $(-B,B]$ from the previous step caused an out-of-bounds vector. Formally, we set $w_{i}$ to be $1$ iff both $b_{i} = 1$ and $\left( \vecF_{i}' - \vecE_{i} \bmod q \right) \notin ( -B, B ]^{v}$ (we call such instances of $\vecF_{i}$ that satisfy this condition, "bad"). The output of $C$ is $\left( \left( w_{1}, \cdots, w_{n - r} \right), \left( \vecW'_{1}, \cdots, \vecW'_{n - r} \right) \right)$. From hereon the circuit $C$ disregards $\vecZ := \left( b_{1}, \cdots, b_{n - r} \right)$ the last $n - r$ bits of $\vecW$.

    \item
    \textbf{Injective function $\tau$, correcting the faulty inputs from $\pi$.} 
    Let $\widetilde{B} := 3\cdot \left( B + \overline{B} \right)$ and recall $n := (n - r) + (n - r) \cdot \left( u \cdot \log(q) + v \cdot \log\left( 2\cdot B \right) \right)$.
    This part of the computation injectively maps $n$ input bits into $(n - r) \cdot \left( u \cdot \log(q) + v \cdot \log\left( 2\cdot \widetilde{B} \right) \right) > n$ output bits.
    $\tau$ computes in slices and for every $i \in [n - r]$ the input is $\left( w_{i}, \vecW'_{i} \right)$ where $\vecW'_{i} := \left( \vecT'_{i} \in \bbZ_{q}^{u}, \vecF'_{i} \in [-B, B) \right)$ and the output is $\left( \vecT''_{i} \in \bbZ_{q}^{u}, \vecF''_{i} \in [-\widetilde{B}, \widetilde{B}) \right)$ (discarding the bit $w_{i}$, but enlarging the range of $\vecF'_{i}$). Ideally, to compute the correction, all we need to do is
    $$
    \left( w_{i}, \vecF'_i \right)
    \mapsto
    \vecF_i'' = \left( \vecF'_{i} - w_{i}\cdot \vecE_{i} \bmod (-B,B] \right) + w_{i}\cdot \vecE_{i} \bmod ( -(B + \overline{B}), (B + \overline{B}) ]
    \enspace ,
    $$
    discarding the input bit $w_{i}$. However one can see that this will sometimes lose information and not be injective. Specifically, observe that only when $w_{1} = 1$ and it is \emph{not} the case that $\left( \vecF_{i}' - \vecE_{i} \bmod (-(B + \overline{B}), (B + \overline{B})] \right) \notin ( -B, B ]^{v}$, we lose information. So, we simply treat these inputs differently, by slightly artificially enlarging the output range of the function.
    
    Formally, for every $i \in [n - r]$ the function $\tau$ computes
    \begin{itemize}
        \item If $w_{i} = 0$ or $\left( \vecF_{i}' - \vecE_{i} \bmod (-(B + \overline{B}), (B + \overline{B})] \right) \notin ( -B, B ]^{v}$ then
        \begin{align*}
        &\left( w_{i} \in \{ 0, 1 \}, \vecT'_i \in \bbZ_{q}^{u}, \vecF'_i \in ( -B, B ]^{v} \right)
        \\
        &\mapsto
        \left(
        \vecT_i'' = \vecT'_i ,
        \; \vecF_i'' = \left( \vecF'_{i} - w_{i}\cdot \vecE_{i} \bmod (-B,B] \right) + w_{i}\cdot \vecE_{i} \bmod ( -(B + \overline{B}), (B + \overline{B}) ]
        \right)
        \enspace .
        \end{align*}

        \item Else, let $\mathbf{shift} \in ( -\widetilde{B}, \widetilde{B} ]^{v}$ the vector such that each of its coordinates has the same value $2\cdot \left( B + \overline{B} \right)$. Then, compute
        \begin{align*}
        &\left( w_{i} \in \{ 0, 1 \}, \vecT'_i \in \bbZ_{q}^{u}, \vecF'_i \in ( -B, B ]^{v} \right)
        \\
        &\mapsto
        \left(
        \vecT_i'' = \vecT'_i ,
        \; \vecF_i'' = \vecF'_{i} + \mathbf{shift} \bmod
        (-\widetilde{B}, \widetilde{B}]
        \right)
        \enspace .
        \end{align*}
    \end{itemize}
    
    One can verify that $\tau$ is indeed injective and efficiently invertible: If $\vecF''_{i} \in ( -B, B ]^{v}$ we know $w_{i} = 0$ and $\vecF'_{i} = \vecF''_{i}$. Otherwise if $\vecF''_{i} \in ( -(B + \overline{B}), (B + \overline{B}) ]^{v}$, we know that $w_{i} = 1$ and $\vecF'_{i} = \vecF''_{i} - \vecE_{i} \bmod ( -B, B ]^{v}$. Otherwise, if $\vecF''_{i} \in ( -\widetilde{B}, \widetilde{B} ]^{v}$, we know that $w_{i} = 1$ and $\vecF'_{i} = \vecF''_{i} - \mathbf{shift} \bmod ( -B, B ]^{v}$.

    One can verify that in terms of correctness, at the end of this step in $C$, we have 
    $$
    \vecT''_{i} = \vecT_{i} + b_{i} \cdot \vecS_{i} \bmod q
    , \enspace 
    \vecF''_{i} = \vecF_{i} + b_{i} \cdot \vecE_{i} \bmod q
    \enspace .
    $$

    \item 
    \textbf{Injective function $\gamma$ computing the final output.}
    The last step of $C$ injectively maps $(n - r) \cdot \left( u \cdot \log(q) + v \cdot \log\left( 2\cdot \widetilde{B} \right) \right)$ bits to $(n - r) \cdot v \cdot \log\left( q \right)$. The input is $\left( \vecW''_{1}, \cdots, \vecW''_{n - r} \right)$ and for each $i \in [n - r]$ we have $\vecW''_{i} := \left( \vecT''_{i} \in \bbZ_{q}^{u}, \vecF''_{i} \in ( -\widetilde{B}, \widetilde{B} ]^{v} \right)$ and the output is
    \begin{align*}
    \matB_{i} \cdot \vecT''_{i} + & \vecF''_{i} \bmod q
    \\&=
    \matB_{i} \cdot \vecT_{i} + \vecF_{i} + b_{i} \cdot \left( \matB_{i} \cdot \vecS_{i} + \vecE_{i} \right) \bmod q
    \\&:=
    \matB_{i} \cdot \vecT_{i} + \vecF_{i} + b_{i} \cdot \vecC_{i} \bmod q
    \\&:=
    \hashL\left( \left( \matB_{i}, \vecC_{i} \right), \left( \vecT_{i}, \vecF_{i}, b_{i} \right) \right)
    \enspace ,
    \end{align*}
    which implies the correct computation of the original circuit $\hashQ$.
\end{enumerate}

The circuit $C$ replaces $\hashQ$ in a specific way: We connect the output $\vecW$ of the permutation $\Pi\left( k_{\sf in}, \cdot \right)$ into $C$ (all $n$ bits, including $\vecZ$), then the output $\vecA \in \bbZ_{q}^{v}$ is padded with zeros and goes into $\Pi^{-1}\left( k_{\sf out}, \cdot \right)$ to get the value $y \in \{ 0, 1 \}^{d}$ as before, and the value $\vecZ$, which was not changed during the computation $C$, goes as usual to be a coordinate vector for the coset $\left( \matA_{y}, \vecB_{y} \right)$. The circuit $P$ in the current hybrid is described in Figure \ref{figure:circ_C}. One can verify that when $C$ is inserted to replace $\hashQ$ as described, the functionality of $P$, $P^{-1}$ does not change. It follows that the change of the obfuscated circuits $\Ps$, $\Ps^{-1}$ is indistinguishable by the security of the iO. It follows in particular that the success probability of the current process is $:= \epsilon_{3}$ such that $\epsilon_{\AdvB} \geq \epsilon_{3} - \negl\left( \secp \right)$.

\begin{figure}
\centering\includegraphics[width=12cm]{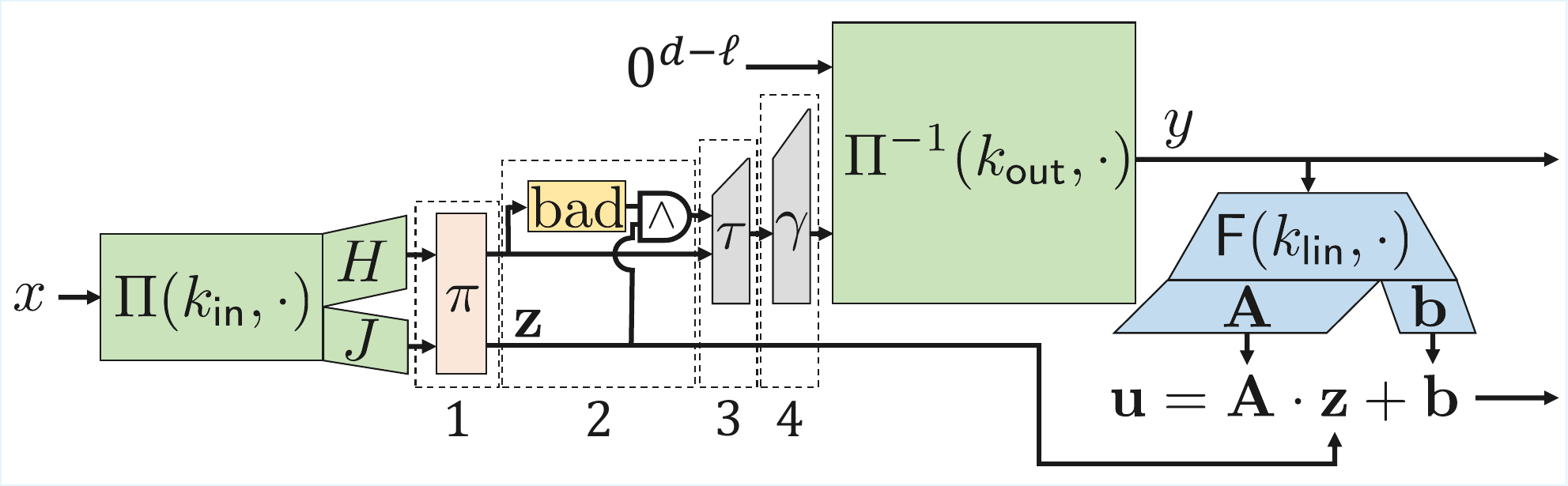}
\caption{\label{figure:circ_C}
The obfuscated program $P$ in Hybrid $3$. We insert the program $C$ instead of the original circuit for $\hashQ$, which was used in Hybrid $2$. There are $4$ steps to the computation of $C$: $1$ is the permutation $\pi$, $2$ is the correction signal, $3$ and $4$ are the injective functions $\tau$ and $\gamma$, respectively. The $4$ steps of the computation of $C$, presented in the picture, are formally described in Hybrid $3$.}
\end{figure}

\begin{itemize}
    \item
    $\Hyb_{4}$: Eliminating layers $1, 3$ and $4$ from the circuit $C$ (and keeping only layer $2$), by using permutable PRPs and the security of iO.
\end{itemize}
In this hybrid we will execute only layer $2$ of $C$ instead of the entire circuit $C$ (recall that layer $2$ computes the correction bits). To indistinguishably remove layers $1$, $3$, $4$ from $C$ we argue the following:
\begin{itemize}
    \item
    Both permutations $\Pi\left( k_{\sf in}, \cdot \right)$, $\Pi\left( k_{\sf out}, \cdot \right)$ are output-permutable, which means that $\Pi^{-1}\left( k_{\sf out}, \cdot \right)$ is input-permutable. One can verify that it thus only remains to explain why the permutation $\pi$ (from layer $1$ of $C$) is composable with $\Pi\left( k_{\sf in}, \cdot \right)$ and why the injective functions $\tau$, $\gamma$ (from layers $3$, $4$ in $C$, respectively) are composable with $\Pi^{-1}\left( k_{\sf out}, \cdot \right)$. This will invoke Lemma~\ref{lemma:op_prp_compose} which uses both the fact that permutations (or permutations that are extensions of injective functions that use ancillas of zeros, as explained in Section \ref{sec:prps}) are decomposable and the fact that the entire circuit is obfuscated under iO.

    \item 
    $\pi$ can be described as a controlled permutation, that under the controlled value (i.e., the value of $\vecZ \in \bbZ_{2}^{n - r}$) applies a (modular) addition of a vector. Since a modular addition of a vector is decomposable and $\pi$ is a controlled version of a decomposable permutation it is overall decomposable (all of these arguments are explained at the beginning of Section \ref{sec:prps}). Both functions $\tau$ and $\gamma$ are injective as explained as part of the description of $C$. Also, the input to $\Pi^{-1}\left( k_{\sf out}, \cdot \right)$ has size $d$ and is padded with zeros. Due to our parameter choices for $d$, one can verify that $d$ is at least as large as the sum of sizes for the input and output for both functions $\tau$ and $\gamma$. This means again that these functions are composable with the permutation.
\end{itemize}

The above means that we can indeed move to obfuscating only the computation of layer $2$ instead of the entirety of $C$. It follows in particular that the success probability of the current process is $:= \epsilon_{4}$ such that $\epsilon_{\AdvB} \geq \epsilon_{4} - \negl\left( \secp \right)$.

\begin{itemize}
    \item
    $\Hyb_{5}$: Eliminating the remaining layer $2$ from the circuit $C$, by using fixed permuted sparse triggers and the security of iO.
\end{itemize}
In this hybrid we will eliminate the final layer $2$ of the circuit $C$. Note that once we do this, the obfuscated programs that are given to $\Adv$ are identical to the ones given to it in the dual-free setting. To indistinguishably remove the final part of $C$ let us observe why it matches the setting of Lemma \ref{lemma:intervaltrigger}, which is also described in Figure \ref{figure:sparsetrigger}. Referring to the notations in Lemma \ref{lemma:intervaltrigger}, in our setting we do not have the circuits $P_0$ and $P_2$ and also there is no wire $w_1$. Recall that in previous hybrids , in our circuit we denoted by $\left( \vecW_{1}, \cdots, \vecW_{n - r} \right)$ the first $r$ output bits of $\Pi\left( k_{\sf in}, \cdot \right)$ and by $\vecZ \in \bbZ_{2}^{n - r}$ the remaining bits. Going back to matching to the setting of Lemma \ref{lemma:intervaltrigger}: The role of $x_2$ is taken by $\left( \vecW_{1}, \cdots, \vecW_{n - r} \right)$, and the role of $w_{2}$ is taken by $\vecZ$. Now we will consider two cases for the middle program $P_1$ (which will accordingly represent the two hybrids $\Hyb_{4}$ and $\Hyb_{5}$).

In one instance, the program in the middle is $P'_{1}$ and given $w_{2} := \vecZ$ and $x_{2} := \left( \vecW_{1}, \cdots, \vecW_{n - r} \right)$ uses the trigger program $R$ on $x_{2}$, which acts as follows: For every $i \in [n - r]$, if $\vecW_{i}$ contains a "bad" $\vecF_{i}$, then $R$ is triggered, outputs $1$ and then bit $i$ of $w_{3}$ is set to be the $i$-th bit of $w_{2}$. $w_{4}$ is always set to be $w_{2} := \vecZ$ regardless. Note that this exactly describes our process in $\Hyb_{5}$, as $w_{4}$ continues to be a coordinates vector for the coset $\left( \matA_{y}, \vecB_{y} \right)$ and $w_{3}$ continues as the signal for correction (which, in the original circuit $C$, was used in the function $\tau$).

In a different instance, the program in the middle is $P_{1}$ and given $w_{2} := \vecZ$ always outputs $w_{3} = 0^{n - r}$ (and similarly to the previous process, $w_{4}$ is always set to be $w_{2} := \vecZ$). Note that this exactly describes our $\Hyb_{4}$. The program $P^{-1}$ is changed analogously, just like the program $Q$ in Lemma \ref{lemma:intervaltrigger} is changed to $Q'$.

The above means that we can indeed move to obfuscating the original circuits $P, P^{-1}$ from the dual-free setting of Construction \ref{constr:standard}, and indistinguishably remove the remaining layer $2$ of $C$. It follows in particular that the success probability of the current process is $:= \epsilon_{5}$ such that $\epsilon_{\AdvB} \geq \epsilon_{5} - \negl\left( \secp \right)$. To conclude the proof, note that we assume that $\Adv$ succeeds in finding a collision with a non-negligible probability, i.e., $\epsilon_{5}$ is non-negligible, which makes $\epsilon_{\AdvB}$ non-negligible, which in turn implies that the quantum polynomial time algorithm $\AdvB$ breaks the collision resistance of $\hashQ$ (and thus search LWE), in contradiction.

\end{proof}

\fi

\ifllncs\else
\bibliographystyle{alpha}
\bibliography{bib,abbrev0,crypto}
\fi

\end{document}